\documentclass[12pt]{llncs}
%
%
%

%
%
\usepackage[neveradjust]{paralist}

\usepackage{times}
\usepackage{geometry}

\usepackage{amsmath}
\usepackage{amssymb}
\usepackage{amsfonts}
\usepackage{graphicx}
\usepackage{subfigure}
\usepackage{xcolor}

\usepackage{url}
\usepackage{hyperref}
\definecolor{darkblue}{rgb}{0,0,.5}
\hypersetup{colorlinks=true, breaklinks=true, linkcolor=darkblue, menucolor=darkblue, urlcolor=darkblue, citecolor=darkblue}

\usepackage[ruled,vlined,linesnumberedhidden]{algorithm2e}

\geometry{letterpaper}
\setlength{\algomargin}{0.4em}


\renewcommand{\epsilon}{\varepsilon}


 



\newcommand{\diff}[1]{\frac{\partial}{\partial #1}}

\newcommand{\OPT}{\mathrm{OPT}}

\usepackage[neveradjust]{paralist}

\usepackage{times}
\usepackage{geometry}

\usepackage{amsmath}
\usepackage{amssymb}
\usepackage{amsfonts}
\usepackage{graphicx}
\usepackage{subfigure}
\usepackage{xcolor}

\pagestyle{plain}


\renewcommand{\epsilon}{\varepsilon}

 
\newtheorem{theor}{Theorem}
 \newtheorem{lem}[theor]{Lemma}

 \newtheorem{cor}[theor]{Corollary}
 \newtheorem{fa}[theor]{Fact}
\setcounter{theor}{0}


\begin{document}

\title{Satisfiability thresholds beyond $k-$XORSAT}
\author{Andreas Goerdt, Lutz Falke}
\institute{ Technische Universit\"at Chemnitz,
Fakult\"at f\"ur Informatik \\
Stra{\ss}e der Nationen 62, 09107 Chemnitz, Germany \\
\email{ \{goerdt, falu\}@informatik.tu-chemnitz.de ,  \\
http://www.tu-chemnitz.de/informatik/TI/}}

\maketitle

\begin{abstract} We consider random systems of equations 
$x_1+ \dots  + x_k=a,$ $0 \le a \le 2$ which are interpreted as 
equations modulo $3.$ We show for $k \ge 15$ that the satisfiability threshold 
of such systems occurs  where the $2-$core has density 
$1.$ We show a similar result for random uniquely extendible constraints
over $4$ elements. Our results extend previous results of 
Dubois/Mandler for equations $\mod 2$ and $k=3$ and Connamacher/Molloy for
uniquely extendible constraints over a domain of  $4$ elements with $k=3$ arguments. 

Our proof technique is based on variance calculations, using a 
technique introduced Dubois/Mandler.  
However, several additional observations (of independent interest) are necessary.   
\end{abstract}

\section{Introcuction} 

\subsection{Contribution}
Often constraints are equations of the type $f(x_1, \dots , x_k)=a$ 
where $a$ is an element 
of the domain considered and $f$ 
is a $k-$ary  function on this 
domain, for example addition of $k$ elements. 
Given a formula, which is a conjunction of $m$   constraints over $n$ variables
we want to find a solution. 
It is natural to assume that $f$ has the property:
Given $k-1$ arguments we can always  set the last argument such,  
that the constraint becomes true. 
In this case we can 
restrict attention to the $2-$core.
It is obtained by 
iteratively deleting all variables which occur
at most once. 
Thus it  is the maximal 
subformula in which each variable occurs at least twice.  

We consider the random instance $F(n, p):$  
Each equation over 
$n$ variables 
is picked 
independently with 
probability $p;$  
the domain 
size $d$ and 
the number of slots per 
equation $k$ is fixed. 
We consider the case $p=c/n^{k-1}$  
and the number of constraints is 
linear in $n$ whp. 
(with high probability, that is 
probability $1\,-\,o(1),  n$ large. )
The density of a formula is equal to the 
number of equations divided by the number of variables. 
The following is well known:
\begin{fa} [\cite{MOLLOY}] \label{Core}
1. Conditional on the number of variables $n'$
and equations $m'$ of the $2-$core the $2-$core is a 
uniform random member of all formulas where each variable 
occurs at least twice. \\
2. There exist $n'=n'(c)$ and $m'=m'(c)$ 
such that the number of variables of the 
$2-$core is  $n'(1+o(1))$ and the number of equations 
$m'(1+o(1))$ whp.    \\    
3. There exists a $T$ such that  whp. for $c \le T-\varepsilon$ 
the $2-$core has density $\le 1-\varepsilon$ and for $c \ge  T+\varepsilon$ the $2-$core has density $\ge 1+ \varepsilon.$ 
$T$ is determined as the solution of an analytical equation. 
\end{fa}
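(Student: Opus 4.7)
The three items can be proved by standard machinery for random hypergraph models: exchangeability for item 1, and the peeling-process analysis (Wormald's differential-equation method, or equivalently configuration-model calculations) for items 2 and 3.

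For item 1 the plan is a conditional symmetry argument. Fix any candidate core formula $F^\star$ on vertex set $V^\star\subseteq[n]$ with $|V^\star|=n'$ and $|F^\star|=m'$, each variable having degree $\ge 2$ in $F^\star$. Split the random instance $F$ into (a) equations whose variables lie entirely in $V^\star$ and (b) equations touching $[n]\setminus V^\star$; these two parts are independent under $F(n,p)$. The event $\{\text{2-core}(F)=F^\star\}$ is equivalent to (a) equalling $F^\star$ exactly, together with the peeling run on (b) deleting every variable outside $V^\star$ (note that variables in $V^\star$ already have degree $\ge 2$ in $F^\star$ alone, hence are never peelable regardless of what (b) looks like). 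The probability of the first condition is $p^{m'}(1-p)^{N(n')-m'}$ with $N(n')$ the number of possible equations on $V^\star$, depending only on $(n',m')$. The probability of the second is a function of the label set $V^\star$ alone, and by the $S_n$-invariance of $F(n,p)$ only of $n'$. Hence $\prob[\text{2-core}(F)=F^\star]$ depends on $F^\star$ only through $(n',m')$, which is the asserted conditional uniformity.

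For item 2 the plan is to analyze the peeling process directly. Under $p=c/n^{k-1}$ the variable degrees are asymptotically Poisson with mean $\lambda=\Theta(c)$; one tracks the joint evolution of the degree sequence as variables of degree $\le 1$ are removed sequentially. The rescaled trajectory concentrates around the solution of a small ODE system by Wormald's theorem, and this system can be integrated in closed form. Equivalently, in the configuration-model picture, repeated Poisson thinning produces a fixed-point equation whose nontrivial solution (when one exists) yields analytic functions $g(c),h(c)$ such that the 2-core has $n'(c)=n\,g(c)(1+o(1))$ variables and $m'(c)=n\,h(c)(1+o(1))$ equations whp.

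For item 3 the asymptotic core density is $h(c)/g(c)$. The plan is to show by inspection of the fixed-point equations that this ratio is continuous and strictly monotone on its domain of definition, so that it equals $1$ at a unique point $T$ characterized by an explicit analytic equation; the $\varepsilon$-gap phase transition is then immediate from continuity and monotonicity, with the degenerate regime $c<c_{\text{emergence}}$ trivially satisfying density $\le 1-\varepsilon$. The main technical burden lies in item 2: verifying Wormald's hypotheses (bounded one-step increments, Lipschitz drift, correct initial conditions) for the peeling Markov chain and extracting the correct closed-form fixed-point equation. With those ingredients in hand, items 1 and 3 follow essentially by bookkeeping.
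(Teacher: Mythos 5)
The paper does not prove this Fact at all --- it is imported verbatim from Molloy \cite{MOLLOY} --- and your sketch follows exactly the standard machinery of that source: the conditional symmetry/anchoring argument for item 1 (your parenthetical that $V^\star$-variables stay at degree $\ge 2$ under $F^\star$ alone is precisely the point that makes the peeling of part (b) autonomous and the decomposition valid), and the peeling/Wormald or configuration-model fixed-point analysis together with monotonicity of the limiting density for items 2 and 3. So your proposal is consistent with the cited reference and with how the paper uses the Fact; no discrepancy to report.
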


The expected number of solutions  of the $2-$core is $d^{n-m} ,n$ the number of variables, 
$m$ the number of equations. When the $2-$core has density $\ge 1+ \varepsilon$ 
whp. no solution exists. This holds in particular when 
the density of  $F(n, p)$ itself is  $\ge 1+\varepsilon.$ The formulas considered here always 
have density $<1.$ 
In seminal work Dubois and Mandler \cite{DUMA} consider   
equations  $\mod 2:$ $x_1 \, +\, \dots \, + \,x_k=a,$ $0 \le  a\le 1, k=3.$
They show satisfiability whp. when the $2$-core has density $\le 1-\varepsilon.$ 
For larger $k\ge 15$ a  full proof  for this result is given in 
\cite{MITZ}, Appendix C . Thus $T/n^{k-1}$ is the threshold for unsatisfiability
in this case.   

It is a natural conjecture that the same threshold applies to 
equations as discussed initially (and to some other types.) 
However, it seems difficult
to prove the conjecture in  some generality. 
One of the difficulties seems to be  that we have $2$  parameters 
$k$ and $d.$ 
We make some progress towards this conjecture.
We show it for equations $\mod 3. $ 
(The result is for $k>15,$ but we think it mainly technical to get it for all $k \ge 3.$)
\begin{theor} \label{mod3}
Let $F(n, p)$ be the  random  set of equations $\mod 3:$     $ x_1 + \dots + x_k \,=\, a , \, 0 \le a \le 2,$ 
$x_1+ \dots  +x_k$  an ordered  $k-tuple$ of variables.  
If $p < (T-\varepsilon)/n^{k-1}$ $F(n, p)$ is satisfiable whp. for $k>15.$ 
\end{theor}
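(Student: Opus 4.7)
By Fact~\ref{Core}.1 it suffices to show the $2$-core is satisfiable whp: the variables peeled off during core construction can always be set to satisfy their last remaining equation, since equations mod $3$ are uniquely extendable. By Fact~\ref{Core}.2 the $2$-core has $n' = n'(c)(1+o(1))$ variables and $m' = m'(c)(1+o(1))$ equations; by Fact~\ref{Core}.3, for $c < T - \varepsilon$ its density $\alpha := m'/n'$ is at most $1-\delta$ for some $\delta>0$. The plan is to apply the second moment method to the number $X$ of solutions of the random $2$-core: since the right-hand side of each equation is uniform on $\{0,1,2\}$, one has $\E[X] = 3^{n'-m'} \ge 3^{\delta n'}$, exponentially large.

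\textbf{Second moment via $\mathbb{F}_3$-characters.} For $\E[X^2]$ I would classify pairs $(x,y) \in \mathbb{F}_3^{n'}\times\mathbb{F}_3^{n'}$ by the empirical profile $p = (p_0,p_1,p_2)$ of the difference $z = x - y$, with $p_c = |\{i : z_i = c\}|/n'$. A single random equation is satisfied by both $x$ and $y$ iff $\sum_j z_{i_j} \equiv 0$ in $\mathbb{F}_3$ and the random right-hand side matches this common value, contributing probability $\tfrac{1}{3} q(z)$ with $q(z) = \tfrac{1}{3}(1 + 2\mathrm{Re}(\hat p(1)^k))$ by characters, where $\hat p(1) = p_0 + \omega p_1 + \omega^2 p_2$ and $\omega = e^{2\pi i/3}$. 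Multinomial counting of pairs at each profile and dividing by $\E[X]^2 = 3^{2(n'-m')}$ yields
\[
\frac{\E[X^2]}{\E[X]^2} \;=\; \sum_{p}\exp\bigl(n'\,\Phi(p) + o(n')\bigr),\quad \Phi(p) := H(p) - \log 3 + \alpha \log\bigl(1 + 2\mathrm{Re}(\hat p(1)^k)\bigr),
\]
with $H$ the entropy; the symmetric profile $p^\ast = (1/3,1/3,1/3)$ satisfies $\hat p^\ast(1)=0$ and $\Phi(p^\ast) = 0$.

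\textbf{Maximization of $\Phi$.} The crux is to show $\Phi(p) \le 0$ on the simplex with $p^\ast$ the unique non-degenerate maximum, whence Laplace's method yields $\E[X^2] = O(1)\,\E[X]^2$ and Paley--Zygmund yields $\Pr[X>0] \ge c_0 > 0$. A Hessian computation at $p^\ast$ shows it is a strict local maximum. At each corner of the simplex one has $H=0$, and since $\alpha \log(1 + 2\mathrm{Re}(\hat p(1)^k)) \le \alpha \log 3$, the bound $\Phi \le (\alpha - 1)\log 3 < 0$ holds there and, by continuity, in a neighborhood. The genuinely hard region is the intermediate one: writing $\hat p(1) = r e^{i\theta}$, the function $\log(1 + 2r^k \cos(k\theta))$ oscillates in $\theta$ and may be positive even for modest $r$, so one must show that the entropy deficit $\log 3 - H(p)$, which is bounded below in terms of $r$ via standard large-deviations estimates on the simplex, dominates $\alpha\log(1 + 2r^k)$ uniformly in $r \in (0,1)$ whenever $k \ge 15$. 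I expect this to reduce to a one-variable inequality in $r$ verifiable analytically for $k$ large and numerically for $k \in \{15,\ldots,K_0\}$. A correction to $\Phi$ from the degree-at-least-$2$ constraint of the $2$-core enters through a configuration-model adjustment of $q(z)$ but is a smooth perturbation that does not alter the leading-order bound.

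\textbf{From constant probability to whp.} Paley--Zygmund yields only $\Pr[X>0] \ge c_0 > 0$. To boost this to $1-o(1)$ I would apply small-subgraph conditioning in the style of Robinson--Wormald, the standard route in random XORSAT analyses, verifying that the residual fluctuations of $X$ are captured by counts of short cycles in the variable-equation incidence hypergraph; a Talagrand / bounded-differences concentration argument gives an alternative path. The main obstacle throughout is the $\Phi$-maximization above: the fact that $\hat p(1)$ is genuinely complex-valued makes the optimization two-dimensional, in contrast to the mod-$2$ setting of \cite{DUMA} where $\hat p(1) \in \mathbb{R}$, and the requirement $k \ge 15$ is dictated by when the Fourier decay $r^k$ overcomes the entropy penalty; this is where the additional observations advertised in the abstract should appear.
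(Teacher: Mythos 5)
The decisive gap is your dismissal of the minimum-degree-$2$ structure of the $2$-core as ``a configuration-model adjustment \dots\ a smooth perturbation that does not alter the leading-order bound.'' It does alter the exponential order, and in exactly the regime the theorem requires. In your unconstrained model (each equation an i.i.d.\ uniform ordered $k$-tuple) take a difference profile near the agreement corner, $p=(1-\delta,\delta_1,\delta_2)$ with $\delta=\delta_1+\delta_2$ small: then $H(p)\approx\delta\ln(1/\delta)+O(\delta)$ while $\log\bigl(1+2\mathrm{Re}(\hat p(1)^k)\bigr)\approx\log 3-k\delta$, so $\Phi(p)\approx(\alpha-1)\log 3+\delta\bigl(\ln(1/\delta)-\alpha k+O(1)\bigr)$; maximizing over $\delta$ gives $(\alpha-1)\log 3+e^{-\alpha k+O(1)}$, which is strictly positive as soon as $1-\alpha<e^{-\alpha k+O(1)}$. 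Since the $\varepsilon$ in the theorem is arbitrary, Fact~\ref{Core} forces you to handle core densities $\alpha$ arbitrarily close to $1$, so your rate function genuinely exceeds $0$ near the corner and the plain second moment fails there. This is the classical failure mode (pairs of assignments differing on a few rarely occurring variables) that is the very reason \cite{DUMA} and this paper carry the degree-$\ge 2$ constraint through the entire computation: it is encoded in $q(x)=e^x-x-1$, in the counts $M(l_i,w_i)$, and in the joint optimization over variable fractions $\bar\omega$ and slot fractions $\bar\lambda$ with the auxiliary parameters $\bar a,\bar c$ obtained from generating-function bounds (Lemma~\ref{LEMBA}, Theorem~\ref{OPT} with Lemmas~\ref{lem1}--\ref{lem4}, and the local-limit Lemmas~\ref{MLOC},~\ref{KLOC} feeding Theorem~\ref{LAPLA}). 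That constrained optimization is the bulk of the paper, and the hypothesis $k>15$ arises there (Lemma~\ref{lem4}), not from a Fourier-decay-versus-entropy tradeoff on the unconstrained simplex. Even within your unconstrained model, the proposed reduction of the intermediate region to ``a one-variable inequality in $r$'' is not justified: at fixed $r=|\hat p(1)|$ the entropy is not a function of $r$ alone and the angle $\theta$ couples to it, so the optimization stays genuinely two-dimensional; a partly numerical verification would in any case be weaker than what is needed.

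On the boost from constant probability to whp: the paper does this via the Friedgut--Bourgain sharp-threshold theorem applied to $F(n,p)$ (Lemma~\ref{THRESH}), using symmetry properties of the equations; your proposed small-subgraph conditioning is a legitimately different but substantially heavier route (and it would have to be carried out in the min-degree-$2$ model), while a Talagrand/bounded-differences argument does not by itself lift a constant satisfiability probability to $1-o(1)$. This second part is a divergence rather than an error, but as stated it is not worked out; the first issue above is a genuine gap that your plan cannot avoid without rebuilding the degree-constrained counting that constitutes the paper's main technical content.
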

The main task is to show that 
a $2-$core of density $\le 1-\varepsilon$ 
has a solution with probability $>\varepsilon>0$ . Our proof starts as 
Dubois/Mandler: Let $X$ be
the number of satisfying assignments of the $2-$core.
Its expectation is $ \ge d^{\varepsilon n}, d=3.$
We show that $E[X^2] \le O(E[X])^2.$ This implies
(by Cauchy-Schwartz (or Paley-Zygmund) inequality)
that the probability to have a solution is
$\ge \varepsilon >0.$  By Fact \ref{Core} $F(n, p)$ has a solution with the same probability. 
We apply Friedgut-Bourgain's Theorem  to $F(n, p)$ 
to show that unsatisfiability has a 
sharp threshold. By this the probability  becomes 
$1-o(1).$ In \cite{CRDA} Friedgut-Bourgain is applied to
the $\mod 2-$case. 
It seems that  our  proof for the $\mod 3-$case  
is somewhat simpler
(and applies  to the $\mod 2-$case and 
other cases.)

To determine  $E[X^2]$ Dubois/Mandler
apply  Laplace Method (one ingredient: bounding a sum through its maximum term.) 
The main difficulty is to bound a  real 
function of several arguments  from above. 
They show that their function has only one local maximum. 
We proceed by the same  method, but  substantial changes are necessary for $ k>3.$  

First, we observe (cf. \cite{MITZ}, Appendix C) 
that the function in question is $\le$  the {\it infimum } with respect to  certain other
parameters. This is based on generating functions: If $f(x)=\sum c_kx^k$ then $c^k \le f(a)/a^k,\, a>0, c_i\ge 0$ 
(a method rarely used in the area, a notable exception is \cite{PUY}.) 
Thus to bound the maximum from above 
we need to find  suitable
parameters and show that the value 
with respect to these parameters is less than the required upper bound .
(This leads to involved, but elementary calculus. )

To make this approach work we need appropriate generating functions:   
$X=X_{a_1}+ \dots X_{a_{3^n}},$  where $X_{a_i}$ is the indicator 
random variable  of the event 
that assignment $a_i$ makes the formula true. 
Then $X^2= \sum_a \sum_b \, X_aX_b.$ To get 
$E[X^2]$ we need to determine 
Prob$[X_aX_b=1].$
To this end we observe that the  equation $x_1+ \dots + x_k=c$  which is true  
under $a$ is true  exactly under those  assignments $b$ such that 
$0k_0 + 1k_1 +2k_2 = 0 \mod 3, $ and  $k_i$ is  the number of slots of 
$x_1+ \dots  +x_k$   filled with a variable $x$ with $b(x)=a(x)+i.$ 
Thus there are $\sum_{k_1=k_2 \mod 3} {k \choose k- k_1-k_2, k_1, k_2}$ different
ways in which an equation can become true under $a, b.$
The following generating function allows us to deal with these 
possibilities analytically. With  $\bf  w_1 \,= \, \exp(2\pi \i/3)$ the primitive third root of unity
and  $ \bf  w_2\,=w_1^2$ we define
$r(x_0, x_1, x_2) = \frac{1}{3} \left[ (x_0\,+\,x_1\,+\,x_2 )^k \,+  \, 
(x_0 \, + \,{ \bf w_1} x_1 + {\bf  w_2} x_2)^k \,+ \, (x_0 \,+ \,{ \bf w_2} x_1 \,+ \, 
{\bf w_1}  x_2)^k \right]$ \\
then  Coeff$[x_1^{k_1}x_2^{k_2}, r(1,\, x_1, \,  x_2)]= {k \choose k-k_1-k_2, k_1, k_2}$
if $k_1=k_2\mod 3$ and $0$ otherwise (easy from properties  $ \bf  w_j.)$  In the $\mod 2-$case we use 
$1/2\left[(1+x)^k \,\, + \,\, (1-x)^k\right]$ instead  \cite{MITZ}, Appendix C.\\

With the motivation to get an exact threshold of unsatisfiability for
a type of constraint whose worst-case complexity is NP-complete,
Connamacher/Molloy  \cite{COMO} see also the very recent \cite{CO} introduce uniquely extendible constraints.
A $k-$ary  uniquely extendible constraint  is a function 
from $D^k$ to true, false with the property: Given values from $D$ for 
any $k-1$ argument slots there is exactly one 
value for the remaining slot which makes the constraint true.  
(The $k> 8$ in the following result can be eliminated at the price of some additional technical effort.) 
\begin{theor} \label{uni}
Let $F(n, p)$ be the random formula of uniquely extendible constraints:
Each constraint is a  random $k-$tuple of variables and
a $k-$ary uniquely extendible constraint over $D$ and we pick with probability $p.$   
For $|D|=4 $ and $p < (T- \varepsilon)/n^{k-1}$ $F(n, p)$ is satisfiable whp. for $k>8.$ 
\end{theor}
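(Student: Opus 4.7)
The plan is to follow the three-phase blueprint of Theorem \ref{mod3}: reduce to the 2-core via Fact \ref{Core}; apply the second moment method to $X$, the number of satisfying assignments of the 2-core; and finally boost constant probability to whp.\ via Friedgut-Bourgain's sharp threshold theorem. A uniformly random uniquely extendible constraint is satisfied by any fixed assignment with probability $1/|D|=1/4$ (by the unique extension property), so $E[X] = 4^{n-m} \ge 4^{\varepsilon n}$ when the 2-core density is $\le 1-\varepsilon$, and the task reduces to showing $E[X^2] = O(E[X]^2)$.

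Expanding $E[X^2] = \sum_{a,b} \prob[\text{both } a \text{ and } b \text{ satisfy all constraints}]$, the per-constraint joint satisfaction probability depends only on a type statistic of the pair $(a,b)$ over the constraint's variables; by the symmetry of the uniform distribution on UE constraints, this statistic can be organised as a difference profile after a convenient change of coordinates on $D$. Since $|D|=4=2\cdot 2$ the natural group structure to impose on $D$ is the Klein four-group $\Z_2 \times \Z_2$, whose four characters $\chi_0, \chi_1, \chi_2, \chi_3$ take values in $\{\pm 1\}$; this yields a real-valued generating polynomial
\[
r(x_0,x_1,x_2,x_3) \,=\, \frac{1}{4}\sum_{j=0}^{3}\Bigl(\sum_{i=0}^{3}\chi_j(i)\,x_i\Bigr)^k,
\]
directly analogous to the $r(x_0,x_1,x_2)$ of Theorem \ref{mod3} and the $\frac{1}{2}\bigl[(1+x)^k+(1-x)^k\bigr]$ of the $\mod 2$ case. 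Its coefficients encode the multinomial weights needed to count joint type configurations in $E[X^2]$.

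As in Theorem \ref{mod3}, Laplace's method then reduces the bound on $E[X^2]$ to dominating an explicit real function on the simplex of type vectors by its value at the ``diagonal'' point, which contributes the $E[X]^2$ main term. Rather than locate all critical points analytically, I would apply the coefficient inequality $\coeff[x^{\alpha}, f(x)] \le f(u)/u^{\alpha}$ (valid for nonnegative-coefficient $f$ and $u > 0$) with carefully chosen auxiliary parameters $u_0, \ldots, u_3$, yielding an explicit upper bound valid uniformly on the simplex. Selecting these parameters and verifying the resulting uniform inequality via elementary but intricate case analysis is the main technical obstacle, and the restriction $k > 8$ reflects the range in which a natural choice of parameters suffices; further technical effort should remove it. Paley-Zygmund then yields satisfiability of the 2-core with constant probability, and Friedgut-Bourgain upgrades this to $1-o(1)$, completing the proof exactly as in Theorem \ref{mod3}.
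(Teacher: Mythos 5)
Your overall blueprint (2-core reduction via Fact \ref{Core}, second moment, Paley--Zygmund, then Friedgut--Bourgain) is indeed the paper's, but the central technical ingredient you propose is wrong for this model: the character polynomial does not describe uniformly random uniquely extendible constraints. Imposing the Klein four-group structure on $D$ and forming $r(x_0,\dots,x_3)=\frac14\sum_j\bigl(\sum_i\chi_j(i)x_i\bigr)^k$ gives coefficients ${k \choose k_0,k_1,k_2,k_3}\cdot\mathbf{1}\bigl(k_1g_1+k_2g_2+k_3g_3=0\bigr)$, i.e.\ it counts joint satisfaction for \emph{linear equations over} $\Z_2\times\Z_2$. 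A uniform random UE constraint is not a random group equation: conditioned on being true under $a$, it is true under a tuple differing in exactly $i$ slots with probability $p_i$ given by the Connamacher--Molloy recursion $p_0=1$, $p_{i+1}=\frac{1}{d-1}(1-p_i)$, hence $p_i=\frac1d\bigl(1+(-1)^i(\tfrac{1}{d-1})^{i-1}\bigr)$, and this depends only on the \emph{number} of differing slots, not on which ``difference types'' occur. The two weightings genuinely disagree: a column in which exactly two slots carry the same nonzero Klein difference receives weight ${k\choose 2}$ from your polynomial, whereas the correct weight is ${k\choose 2}p_2={k\choose 2}/3$ for $d=4$; moreover, summing over $b$ does not repair this, since the group-equation indicators couple constraints through shared variables while the UE probabilities factor. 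So the Laplace/parameter optimization you would carry out bounds the second moment of Klein-group XORSAT, not of the uniform UE model in Theorem \ref{uni}.

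The paper's route exploits exactly the fact you missed: because $\mbox{E}[X_aX_b]$ depends only on $w=|\mathrm{DIFF}(a,b)|$, the sum collapses to two parameters $(w,l)$ with an explicit factor $(d-1)^w$, and the generating polynomial is the univariate $p(z)=\frac1d\bigl[(1+z)^k+(d-1)(1-\frac{z}{d-1})^k\bigr]$ with $p(z)=\sum_i{k\choose i}p_iz^i$ --- this polynomial is one of the paper's stated contributions. Your parameter-tuning idea (the bound $\coeff[x^\alpha,f]\le f(u)/u^\alpha$) is the same device the paper uses, but note also that the ensuing optimization cannot simply mirror the mod-$3$ case: $p(z)$ lacks the symmetry of $r(x_0,x_1,x_2)$, and the paper must treat the regimes $\lambda\le 1-1/d$ and $\lambda\ge 1-1/d$ by separate families of lemmas. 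To repair your argument, replace the character polynomial by $p(z)$ (equivalently, weights ${k\choose k_0,\dots,k_3}p_{k_1+k_2+k_3}$, i.e.\ $p$ evaluated at $x_1+x_2+x_3$), after which the four-type difference profile is superfluous.
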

The threshold $T/n^{k-1}$ is proved   for $k=3$ and $|D|=4$, cf.  \cite{CO} remark following Theorem 8. 
Our proof uses the technique  as in the 
$\mod 3-$case, however the details are different.  
One of the  contributions making is the generating polynomial \\   
$ p(x)=\frac{1}{d} \left[(1+x)^k+ (d-1)(1-\frac{x}{d-1})^k \right],$
as $r(x_0, x_1, x_2)$ above, not used before.

\subsection{Motivation}
Many computational problems can be naturally formulated as conjunctions of constraints. 
And we are interested to find a solution of this conjunction. 
Algorithmic properties of these conjunctions are considered in 
theoretical research (with remarkable results e. g.  in the realm of approximation\cite{HA})
and applied research, e. g. \cite{MEI}. An additional aspect is the investigation of 
conjunctions of randomly picked constraints; \cite{MO} is a fundamental study here.  
Propositional formulas in $k$-conjunctive normalform 
provide  an example  which has lead to a rich literature e. g. \cite{DIAZ}.
One of the characteristic properties of this research is that
its findings can often be related to experimental work by running algorithms
on randomly generated instances. 

One of the aspects of random formulas is a threshold phenomenon:
If the number of constraints of a conjunction picked is less than a threshold value
the conjunction is typically satisfiable, if it is more
we get unsatisfiability whp. Moreover instances picked close to
the threshold seem to be algorithmically hard, thus 
being candidate test cases for  algorithms.  
The threshold phenomenon and the possibility to investigate it by 
experiments causes  physics to become interested in the area e. g.  \cite{MEZE}. 
On the other hand, physical approaches lead to 
new algorithms and  classical theoretical computer science research, e. g. \cite{COPRI}. 

One of the major topics is to determine
the value of the threshold in natural cases. 
A full solution even in the natural $k-$CNF SAT case 
has not been obtained, but many partial results, \cite{DIAZ} for $k=3.$ 
Note that $k-$CNF does not have the unique extendability property as 
possessed by the constraints considered here. 
And it seems to be a major open problem to get the precise threshold for 
constraints without unique extendibility and not similar to $2-$CNF. 
A mere existence result is the 
Friedgut-Bourgain theorem \cite{FRI}. 
Based on this theorem thresholds for formulas of constraints over 
domains with more than $2$ elements are considered in 
\cite{MO}. Ordering constraints are considered in \cite{GO},
only partial results towards a threshold  can be proven. 
In order to get definite threshold results further techniques are required. Therefore
it is a useful effort to further develop the techniques with
which thresholds can be proven. This is the  general contribution  of this paper. 

A notable early exception, in that the precise threshold can be proven
is the $\mod 2-$case considered above. Historically \cite{DUMA}  is 
the first paper which uses variance calculation based on 
Laplace method in this area. Subsequently, for $k-$CNF SAT this method has lead to substantial progress 
in \cite{ACMO}.  The contribution 
here is that $\mod 2-$proof can be 
refined and extended to cover other 
cases based on  observations of 
independent interest. 
Note that random sparse linear systems over finite fields 
are used to construct error correcting codes, 
e. g. \cite{LMSS} or \cite {RU}, motivating the $\mod 3-$case.  A very recent study of the $\mod 2-$ case is 
\cite{ACHLIO}. More literature can be found in  \cite{KO}, 
but precise threshold results have not been obtained.

\subsection{Contents}
\begin{tabular}{l l}
I. Equations modulo $3$ &             \\
\quad 1. Notation and basics &   \\
\quad 2. Outline of the proof of Theorem \ref{EX2EI} & \\
\quad 3. Proof of Theorem \ref{OPT}  &  \\
\quad \quad \quad 3.1 Proof of Lemma \ref{lem1} & \\
\quad \quad \quad 3.2 Proof of Lemma \ref{lem2} & \\
\quad \quad \quad 3.3 Proof of Lemma \ref{lem3} & \\
\quad \quad \quad 3.4 Proof of Lemma \ref{lem4} & \\
\quad 4. Proof of Theorem \ref{LAPLA} & \quad\\
\quad 5. Remaining proofs & \quad \\
\quad \quad \quad 5.1 Local limit consideration& \\
\quad \quad \quad 5.2 The sharp threshold &   \\
 & \\
II. Uniquely extendible constraints & \\
\quad 1. Outline  & \\
\quad 2. Proof of Theorem \ref{UNOPT} for $d=4,\, s\ge 7 , \,  \lambda \le 1-1/d$  &\\
\quad 3. Proof of Theorem \ref{UNOPT}   for $d=4,\, \lambda \ge 1-1/d, s \ge 5.$   &\\
\end{tabular}

\quad \quad \\
\\

\noindent
{\bf \Large I.  Equations  modulo $3$ }

\setcounter{section}{0}

\section{Notation and basics}


We use the abbreviation 
\begin{eqnarray}
M(m, n) \,:= \, \sum_{ v_1, \dots v_n \,\ge 2}\, {m \choose v_1,\dots v_n } \mbox{ and } 
N_0 \,:= \, M(km, n). \mbox{ Then } N_0\cdot 3^m  \label{DEFN0} 
 \end{eqnarray}
is the number of all formulas with $ k $   variables  per equation and $ m $ equations.
We consider the uniform distribution on the set of formulas. 
Note that the formulas we consider are   $2-$cores. (Here the same equation to occur several times.
This happens with probability $o(1)$ as $m$ is linear in $n$ and can be ignored.  )
Let $X$ be the number of solutions of a formula.    We have $X=\sum_a\, X_a$ where $a$ stands for an assignment of the variables with $0,1, 2$  
and $X_a(F)=1$  if $F$ is true under $a$ and $0$ otherwise. 
The  expectation of $X$ is  $3^{n-m}$ because given an assignment each equation is true independently  with probability $1/3.$ We assume that 
$m=\gamma n , \gamma $ bounded above by a constant $<1.$  As $k$ is also constant, the asymptotics is only with respect  to $n.$
We need to show the following theorem 
\begin{theor} \label{EX2}
$\mbox{E}[X^2] \le  C \cdot 3^{2(n-m)} $
\end{theor}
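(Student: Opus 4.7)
The plan is a direct second moment calculation in the spirit of Dubois--Mandler, organized throughout around the generating polynomial $r(x_0,x_1,x_2)$ introduced earlier in the paper.

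First, I would expand $E[X^2] = \sum_{a,b} \Pr[X_a X_b = 1]$ and group the pairs $(a,b)$ of assignments by their \emph{type} $(n_0,n_1,n_2)$, where $n_i$ counts the variables $x$ with $b(x)-a(x) \equiv i \pmod 3$. By symmetry the joint satisfaction probability depends only on the type, and the number of pairs of a fixed type equals $3^n \binom{n}{n_0,n_1,n_2}$. Writing the count of $2$-core formulas satisfied by a fixed such pair as a ratio over $N_0 \cdot 3^m$ recasts the target inequality as an estimate on a finite sum indexed by $(n_0,n_1,n_2)$.

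To evaluate the numerator, note that once the $k$ slots of an equation are placed, the right-hand side is forced by $a$-satisfaction; the further demand that $b$ also satisfy the equation reduces to $k_1 \equiv k_2 \pmod 3$, where $k_i$ counts the slots of that equation landing in class $i$. By the defining property of $r$ recalled in the introduction, the single-equation generating polynomial for this constraint is exactly $r(x_0,x_1,x_2)$. Splitting slots across the three classes into global totals $(K_0,K_1,K_2)$ with $K_0+K_1+K_2 = km$ therefore yields the count
\[\sum_{K_0,K_1,K_2}\coeff\!\left[x_0^{K_0}x_1^{K_1}x_2^{K_2},\, r(x_0,x_1,x_2)^m\right]\,M(K_0,n_0)\,M(K_1,n_1)\,M(K_2,n_2),\]
where the factors $M(K_j,n_j)$ distribute the $K_j$ slots among the $n_j$ class-$j$ variables with each variable receiving at least two slots (the $2$-core condition).

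Third, I would apply the generating-function inequality $\coeff[x_0^{K_0}x_1^{K_1}x_2^{K_2},\, r^m] \le r(y_0,y_1,y_2)^m/(y_0^{K_0}y_1^{K_1}y_2^{K_2})$, valid for any positive $y_0,y_1,y_2$, with the $y_i$ chosen to match the expected slot split for the type at hand. This replaces a multivariate coefficient extraction by an explicit exponential in $m$ and is the key new ingredient allowing the approach to reach general $k$. After Stirling on $\binom{n}{n_0,n_1,n_2}$ and on the factors $M(K_j,n_j)$, and normalisation by $N_0 \cdot 3^m$ (Fact \ref{Core}), the whole sum becomes $n^{O(1)}$ times the supremum over the simplex of an explicit exponential-rate function $\Phi(\alpha_0,\alpha_1,\alpha_2)$, with $\alpha_i = n_i/n$.

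The main obstacle is to show that this supremum is attained essentially at the balanced point $\alpha_0=\alpha_1=\alpha_2=1/3$ and evaluates there to $3^{2(n-m)}$ up to constants. Balanced pairs correspond, heuristically, to independent choices of $a$ and $b$ and should therefore dominate; making this rigorous requires ruling out competing maxima both in the interior and on the boundary of the simplex, using elementary but delicate calculus together with sharper choices of the auxiliary parameters $y_i$ in the coefficient bound. The hypothesis $k > 15$ enters precisely through a concavity/curvature estimate at this step, and is where I expect the bulk of the technical effort, handled in the subsequent sections via the intermediate theorems and lemmas listed in the contents.
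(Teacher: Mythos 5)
Your outline reproduces the paper's own strategy almost exactly: the same decomposition of pairs $(a,b)$ by the partition into classes $W_i$ (your $(n_0,n_1,n_2)$ is the paper's $\bar{w}$), the same refinement by the slot totals per class (the paper's $\bar{l}$), the same single-equation generating polynomial $r(x_0,x_1,x_2)$ with the coefficient bound $\mbox{Coeff}[\bar{x}^{\bar{l}},r^m]\le r(\bar{y})^m/\bar{y}^{\bar{l}}$ for arbitrary $\bar{y}>0$, and the same reduction to maximizing an explicit rate function over the simplex with the balanced point $(1/3,1/3,1/3)$ as the dominant type. So in spirit this is the paper's proof.

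There is, however, a genuine quantitative gap: as written, your plan only yields $\mbox{E}[X^2]\le n^{O(1)}\cdot 3^{2(n-m)}$, not $C\cdot 3^{2(n-m)}$, and the constant is what the theorem asserts and what the downstream Paley--Zygmund step needs to give a satisfiability probability bounded below by a constant. Away from the balanced point the strict inequality in the rate function kills any polynomial prefactor, so the problem sits entirely in an $\varepsilon$-neighbourhood of $\bar\lambda=\bar\omega=(1/3,1/3)$, where there are $\Theta(n^4)$ lattice points whose rate-function value is essentially $3^{1-\gamma}$; the crude bounds you propose (coefficient inequality plus Stirling) lose factors of $\sqrt{m}$ per class there and cannot recover the constant. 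The paper closes this by replacing the crude bounds, near the balanced point only, with local limit theorem estimates: $M(m,n)=\Theta(1)(m/(ae))^m q(a)^n$ with $Q(a)=m/n$ (Lemma \ref{MLOC}) and $K(\bar l)=O(1/n)\,r(1,c_1,c_2)^m/(c_1^{l_1}c_2^{l_2})$ with $R(c_1,c_2)=(k\lambda_1,k\lambda_2)$ (Lemma \ref{KLOC}), giving the $O(1/n^2)$ prefactor of Corollary \ref{PSIINU}; it then shows that the first derivatives of $\ln\Psi$ vanish and the Hessian is negative definite at the balanced point (Lemma \ref{LOCMAX}), so that the sum over the neighbourhood is a Riemann sum of a Gaussian integral and contributes only $O(n^2)$, which the $O(1/n^2)$ prefactor absorbs (Theorem \ref{LAPLA}). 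Without this local-limit/Laplace refinement -- which your proposal does not mention -- the stated bound with a constant $C$, and hence Theorem \ref{EX2}, does not follow from your outline.
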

We have  E$[X^2] \,= \, \sum_{(a, b)}\, \mbox{E}[X_a \cdot X_b] $ where  $(a, b)$ refers to all ordered pairs of assignments. 

Let $\overline{W}=(W_0, W_1, W_2)$ be a partition of the set of variables into $3$ sets.We always use the notation
$w_i = \sharp W_i,\, \,  \bar{w}\, = \, (w_0, w_1, w_2).$  For two assignments we write 
$b \, = \, D(a, \overline{W}) $  iff  $W_i= \{x\, | \,b(x)=a(x) +i \mod 3 \}.$ We have that
$a(x_1 +\dots +x_k)=b(x_1 +\dots +x_k)$ (Here $ \,  a(x_1 +\dots +x_k)$  is the value of $ x_1 +\dots +x_k$ under $a$
(analogously for $b$).) iff $\sum_{i=0, 1, 2} \,i \cdot \sharp \{ j \,| \,x_j \in W_i\} \,= \,0 \mod 3.$ 
This is equivalent to $ \sharp \{ j \,| \,x_j \in W_1\}\,= \, \sharp \{ j \,| \,x_j \in W_2\} \mod 3.$ 
Given $\bar{l}= (l_0, l_1, l_2) $ with $\sum l_i \, =\, km $ we let  ${ \cal K}(\bar{l})$ be the set of all
$3\times m-$matrices $(k_{i, j})_{0\le i\le 2, 1\le j \le m}$ with  $k_{1, j}= k_{2, j} \mod 3 $   and  each column sums to $k$, that is  $\sum_i k_{i, j}\,= k$   
for each $j.$ Moreover,  $ \sum_j k_{i,j} \,=\, l_i $ for $ i=0, 1, 2 $  ( the $i'$th row sums to $l_i.)$

We denote
\begin{eqnarray}
 K(\bar{l}) :=
 \sum_{(k_{i, j})\in {\cal K}(\bar{l})}\,\prod_{j=1}^m{ k \choose k_{0, j}\,, k_{1, j}\,, k_{2, j}} .\, \, 
\mbox{ Then } \, \, \hat{N}(\bar{w}, \bar{l})\,:= \,  K(\bar{l}) \cdot \prod_{i=0}^2 M(l_i, w_i)  \label{NUFO}
\end{eqnarray}
is the number of formulas  $F$ true under two assignments $a, b$ with
$b\,= \, D(a, \overline{W}) $  (with $w_i =\sharp W_i)$ and  the variables from
$W_i$ occupy exactly $l_i$ slots of $F.$   The factor   
$K(\bar{l})$ of  $\hat{N}(\bar{w}, \bar{l})$ counts how the $l_i$ slots available for $W_i$ are distributed over
the left-hand-sides of the equations. The second factor counts how to place the variables into their slots.
Note that the right-hand-side of an equation  cannot be chosen, it is determined by the value of the left-hand-side
under $a, b.$ 

We abbreviate ${n \choose \bar{w}}\,= \, { n \choose w_0, w_1, w_2}. $  Given an assignment    $a, $  $\bar{w}, $ and $\bar{l}, $  the number of assignment formula pairs  $(b, F)$ with : There exist  $\overline{W}$ with $\sharp W_i= w_i,$ such that
$b \in D(a, \overline{W}),$ 
$F$ is true under $a$ and $b$,  and the variables from $W_i$ occupy exactly   $l_i$ slots of $F$ is 
\begin{eqnarray}
N( \bar{w}, \, \bar{l})\,
:= {n \choose \bar{w}} \cdot \hat{N}( \bar{w}, \, \bar{l}). \, \, \mbox{ This implies  } \, \,  \mbox{E}[X^2]\,= \, 3^n \cdot  \sum_{\bar{w},  \bar{l}}\, N(\bar{w}, \,\bar{l}) \cdot  \frac{1}{3^m \cdot N_0}.  \label{EWQU} 
\end{eqnarray}

Theorem \ref{EX2} follows directly from the next theorem:
\begin{theor}  \label{EX2EI}
$\sum_{\bar{w}, \bar{l}}  N(\bar{w}, \bar{l})/N_0 \, \le \, C\cdot 3^{(1- \gamma )n} .$ 
\end{theor}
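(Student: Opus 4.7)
My plan is a Laplace-type second-moment bound on $S := \sum_{\bar{w}, \bar{l}} N(\bar{w}, \bar{l})/N_0$. Writing $w_i = \alpha_i n$ and $l_i = \mu_i\,(km)$, there are only polynomially many summands in $n$, so it suffices to bound each term uniformly by $C\cdot 3^{(1-\gamma)n}/n^c$ for some small $c$; a sharper Gaussian integration around the dominant contribution would later recover the precise constant.

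The pointwise analysis rests on Stirling's formula for $\binom{n}{\bar{w}}$ and the $l_i!$ factors, combined with generating-function upper bounds for the two non-trivial ingredients $K(\bar{l})$ and $M(l_i, w_i)$. Since $K(\bar{l}) = [x_1^{l_1} x_2^{l_2}]\,r(1, x_1, x_2)^m$ and $M(l_i, w_i) = l_i!\,[x^{l_i}](e^x - 1 - x)^{w_i}$, the observation that a coefficient of a series with non-negative coefficients is at most the series evaluated at any positive point divided by the appropriate power yields
\begin{align*}
K(\bar{l}) &\le \frac{r(1, a_1, a_2)^m}{a_1^{l_1}\,a_2^{l_2}}, \qquad a_1, a_2 > 0,\\
M(l_i, w_i) &\le \frac{l_i!\,(e^{b_i} - 1 - b_i)^{w_i}}{b_i^{l_i}}, \qquad b_i > 0.
\end{align*}
Plugging these in turns the exponent of $N(\bar{w}, \bar{l})/N_0$ into an expression $\Phi(\bar{\alpha}, \bar{\mu}; a_1, a_2, b_0, b_1, b_2)$ that one may minimize over the auxiliary parameters. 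The central claim, identified in the contents as Theorem \ref{OPT}, is that $\inf_{a, b} \Phi \le (1-\gamma)\log 3$ uniformly over the simplex, with equality at the symmetric point $\bar{\alpha}^* = \bar{\mu}^* = (1/3, 1/3, 1/3)$. At this point one chooses $a_1 = a_2 = 1$ and all $b_i$ equal; using $r(1, 1, 1) = 3^{k-1}$ (the roots-of-unity terms cancel), the symmetric-point value is exactly $(1-\gamma)\log 3$, as required.

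The main obstacle is this global optimization: $\Phi$ lives on a $4$-dimensional region in $(\bar{\alpha}, \bar{\mu})$, and one must rule out competing maxima, especially along the boundary strata where some $\alpha_i$ or $\mu_i$ vanishes (these correspond to highly correlated pairs of assignments, where one class of the partition $\overline{W}$ is empty). My tactic would be: first, use Lagrange multipliers to show the symmetric point is the unique interior critical point and a strict local maximum via the Hessian; second, handle each boundary face separately, exploiting the freedom in $(a_1, a_2, b_0, b_1, b_2)$ to tune down the upper bound along rays away from symmetry; third, gather the contributions via a covering argument. The hypothesis $k > 15$ should enter precisely here, since for large $k$ the polynomial $r(1, x_1, x_2)$ concentrates sharply around its symmetric coefficient and the boundary contributions decay exponentially in $k$. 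Once Theorem \ref{OPT} is established, the summation step (Theorem \ref{LAPLA}) together with the local limit refinement in Section 5.1 is fairly standard and delivers the constant $C$ in Theorem \ref{EX2EI}.
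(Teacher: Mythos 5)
Your outline reproduces the paper's skeleton (coefficient-extraction bounds $K(\bar l)\le r(1,a_1,a_2)^m/(a_1^{l_1}a_2^{l_2})$, $M(l_i,w_i)\le l_i!\,q(b_i)^{w_i}/b_i^{l_i}$, an optimization statement playing the role of Theorem \ref{OPT}, then a Laplace summation), but the two steps that carry the actual difficulty are not proved, and the route you sketch for the hard one would not go through as stated. For Theorem \ref{OPT} you propose Lagrange multipliers to show the symmetric point is the unique interior critical point, a Hessian check, and a face-by-face boundary analysis, with $k>15$ entering through a concentration heuristic for $r(1,x_1,x_2)$. After taking the infimum over the auxiliary parameters the exponent is essentially the original one, so this program is exactly the Dubois--Mandler ``only one local maximum'' analysis, which is precisely what the paper argues is unmanageable for $k>3$ and $d=3$; the paper's actual proof of Theorem \ref{OPT} avoids any global critical-point classification and instead \emph{constructs} admissible parameters $(\bar a,\bar c)$ explicitly, reducing everything to the function $\mathrm{OPT}(\bar a,\bar c,s)$ (the $\bar\omega$-dependence is removed by the AGM inequality, the $\bar\lambda$-dependence enters only through $a_ic_i=\lambda_i/\lambda$) and then verifying $\mathrm{OPT}\le 3$ along explicit curves in four ranges of $P_1+P_2$ (Lemmas \ref{lem1}--\ref{lem4}); this case analysis is where $s\ge 15$, i.e.\ $k>15$, is genuinely used. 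You give no argument for uniqueness of the interior critical point nor for the boundary faces, so the central claim remains unestablished. (A small additional slip: at the symmetric point equality requires the $q$-parameters to equal the specific $s$ with $Q(s)=k\gamma$ coming from the tight estimate of $N_0$; ``all $b_i$ equal'' to an arbitrary common value gives a factor $(q(b)/q(s))(s/b)^{k\gamma}\ge 1$.)

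The summation step is also misstated. There are $\Theta(n^4)$ pairs $(\bar w,\bar l)$, and the terms near $\bar\omega=\bar\lambda=(1/3,1/3,1/3)$ are genuinely of size $3^{(1-\gamma)n}\cdot n^{-2}$, so a uniform per-term bound $C\cdot 3^{(1-\gamma)n}/n^{c}$ with ``some small $c$'' cannot yield $C\cdot 3^{(1-\gamma)n}$ for the whole sum; no choice of the auxiliary parameters repairs this, because the crude coefficient bound loses polynomial factors that must be recovered. This is why the paper proves two-sided local-limit estimates (Lemmas \ref{MLOC} and \ref{KLOC}, via local limit theorems for lattice variables and vectors), the Hessian negativity at the symmetric point (Lemma \ref{LOCMAX}), and then performs the Riemann-sum-to-Gaussian-integral argument of Theorem \ref{LAPLA}. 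Your closing remark that the Gaussian integration ``recovers the precise constant'' understates its role: without it one does not obtain any constant $C$ at all, only a polynomially growing factor.
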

One more piece of notation: 
$ \omega_i= w_i/n$ usually is the fraction of variables belonging to  $W_i.$ And 
$\lambda_i\,= l_i\,/ (km)=l_i/(k\gamma n)$ is the fraction of slots  filled with a variable from $W_i.$ 
We  use  $\bar{\omega} = (\omega_0, \omega_1, \omega_2), $ and   $\bar{\lambda}=(\lambda_0, \lambda_1, \lambda_2).$ 
Sometimes $\omega_i, \lambda_i$ stand for arbitrary  reals, this should be clear form the context.

\section{ Outline of the proof of Theorem  \ref{EX2EI} } 


First,  bounds for $M(m, n)$ and $K(\bar{l}).$ 
We consider  $q(x):= \exp(x)-x-1=  \sum_{j \ge 2} \frac{x^j}{j!} $ for $x\ge 0.$ Then for $a>0$ and all $m, n$  
\begin{eqnarray}
M(m, n) \,= \, \mbox{Coeff}[x^m, q(x)^n]\cdot m! \, < \, q(a)^n \cdot \frac{1}{a^m } \cdot m!\,\le \,  q(a)^n \left(\frac{m}{a\cdot e}\right)^m \cdot O(\sqrt{m})  \label{M1}
\end{eqnarray}
using Stirling  in the form $ m! < (m/e)^m \cdot  O(\sqrt{m}).$

To get rid of the $\sqrt{m}-$factor we let $Q(x):= xq'(x)/q(x)$ with $q'(x)$ the derivative of $q(x), \,q'(x)=\exp(x)-1$ for $x>0.$ 
Then $Q'(x)>0$  for $x>0,$ $Q(x)>x,$ and $Q(x)\longrightarrow 2$ for $x \longrightarrow 0.$
Thus, for $y>2$ the inverse function $Q^{-1}(y)>0$ is defined and differentiable. Lemma \ref{MLOC} is proved in 
Section \ref{REPRO}. 

\begin{lem}  \label{MLOC}
Let $Cn \ge m \ge (2+\varepsilon)n, \, \, C, \varepsilon>0$ constants. Then 
\begin{eqnarray}
M(m, n) \,= \, \Theta(1)\cdot \left( \frac{ m}{ae}\right)^m\cdot q(a)^n \mbox{ with } a \mbox{ defined by }  Q(a)=\frac{m}{n} \nonumber
\end{eqnarray}
\end{lem}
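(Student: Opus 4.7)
\textbf{Proof plan for Lemma \ref{MLOC}.}

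The plan is to sharpen the generating-function bound (\ref{M1}) by the saddle-point method, realised probabilistically as a local central limit theorem. Since $M(m,n)/m! = \text{Coeff}[x^m, q(x)^n]$, I fix $a>0$ and introduce iid random variables $v_1,\dots,v_n$ supported on integers $\ge 2$ with $\prob[v_i = k] = a^k/(k!\,q(a))$. Writing $V := v_1+\cdots+v_n$, the joint mass at a partition rearranges into
\[
\prob[V=m] \;=\; \frac{a^m}{q(a)^n}\cdot\frac{M(m,n)}{m!}, \qquad\text{i.e.}\qquad M(m,n) \;=\; \frac{q(a)^n\,m!}{a^m}\,\prob[V=m].
\]

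Next I choose the saddle. A direct calculation gives $\E[v_i] = aq'(a)/q(a) = Q(a)$, so I take $a := Q^{-1}(m/n)$, which makes $m$ the mean of $V$. Since $Q$ is a continuous strictly increasing bijection $(0,\infty)\to(2,\infty)$ (as noted in the excerpt) and $m/n\in(2+\varepsilon,C)$, this $a$ is well defined and confined to a fixed compact interval $[a_-,a_+]\subset(0,\infty)$. The variance $\sigma^2(a) = \mathrm{Var}(v_i)$ is a continuous strictly positive function of $a$, hence bounded above and away from zero on $[a_-,a_+]$; and the distribution of $v_i$ has full support on $\{2,3,\dots\}$, hence span $1$. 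A standard local CLT for sums of iid integer-valued random variables with finite variance then yields, uniformly in $a\in[a_-,a_+]$,
\[
\prob[V=m] \;=\; \frac{1+o(1)}{\sigma(a)\sqrt{2\pi n}}.
\]

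Inserting this into the boxed identity above and applying the sharp Stirling formula $m! = \sqrt{2\pi m}\,(m/e)^m(1+o(1))$ gives
\[
M(m,n) \;=\; q(a)^n\,\Bigl(\frac{m}{ae}\Bigr)^m\cdot \sqrt{\tfrac{m}{n}}\cdot\frac{1+o(1)}{\sigma(a)}.
\]
Both $m/n$ and $\sigma(a)$ lie in a fixed compact subinterval of $(0,\infty)$, so the prefactor is $\Theta(1)$, which is precisely the claim.

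The only real technical point is the uniformity of the local CLT in $a\in[a_-,a_+]$; but since $q$ is analytic and $a$ stays in a compact positive interval, all low-order moments of $v_i$ needed for a Berry--Esseen-type remainder are bounded by absolute constants, so standard local CLT statements apply uniformly. Everything else—the existence and boundedness of $a = Q^{-1}(m/n)$, the positivity of $\sigma(a)$, and Stirling—is routine. The matching lower bound and the removal of the $\sqrt{m}$ factor in (\ref{M1}) thus both come out of the same saddle-point identification.
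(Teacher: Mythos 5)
Your proposal is correct and is essentially the paper's own argument: the paper likewise writes $M(m,n)$ via the identity $M(m,n)=q(a)^n\,m!\,\prob[X_1+\dots+X_n=m]/a^m$ for iid variables with $\prob[X=j]=(a^j/j!)/q(a)$, chooses $a$ by $Q(a)=m/n$, invokes a local limit theorem for lattice variables to get $\prob[\cdot]=\Theta(1/\sqrt{m})$, and finishes with Stirling. The only cosmetic difference is that you normalize by $\sigma(a)\sqrt{2\pi n}$ and make the uniformity in $a$ explicit, which the paper handles implicitly through the variance bounds.
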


Throughout we use $s=s(k, \gamma)$ uniquely defined  by $Q(s) = k \gamma= k \gamma n/n=km/n.$ Note that for $k \ge 3$ we can  
assume that $k \gamma >2$ and $s$ always  exists. We have $Q(s) \ge s.$ We often write $Q$ instead of $Q(s).$  Recall $N_0=M(km, n)$  and we get a tight bound on the number of formulas (cf. (\ref{DEFN0}).) 
\begin{cor} \label{N0}
 $ N_0 \,= \, \Theta(1) \left(k \gamma n/(s e) \right)^{k \gamma n}\cdot q(s)^n. $
\end{cor}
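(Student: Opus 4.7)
The plan is to deduce this corollary as a direct specialization of Lemma \ref{MLOC}, substituting $m \mapsto km$ in the role of the first argument and $n$ in the role of the second, and then identifying the resulting $a$ with the fixed quantity $s$.

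First I would verify that the hypotheses of Lemma \ref{MLOC} apply to $M(km, n)$. We have $m = \gamma n$ with $\gamma$ bounded above by a constant $< 1$ and $k$ a fixed constant, so $km = k\gamma n$ and hence $km \le Cn$ for $C = k\gamma$. For the lower bound one needs $km \ge (2+\varepsilon)n$, i.e., $k\gamma \ge 2 + \varepsilon$; this is exactly the standing assumption flagged in the paragraph just before the corollary ("for $k \ge 3$ we can assume that $k\gamma > 2$"), so the hypothesis is met with room to spare.

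With the hypotheses in place, Lemma \ref{MLOC} gives
\begin{equation*}
M(km, n) \,=\, \Theta(1) \cdot \left(\frac{km}{a e}\right)^{km} \cdot q(a)^n,
\end{equation*}
where $a$ is the unique positive solution of $Q(a) = km/n$. Since $km/n = k\gamma$ and $s$ is defined by $Q(s) = k\gamma$, uniqueness of the inverse of $Q$ on $(0,\infty)$ (which is part of the setup for Lemma \ref{MLOC}, coming from $Q'(x) > 0$ and $Q(x) \to 2$ as $x \to 0$) forces $a = s$. Substituting $a = s$ and $km = k\gamma n$ yields exactly
\begin{equation*}
N_0 \,=\, M(km, n) \,=\, \Theta(1) \left(\frac{k\gamma n}{s e}\right)^{k\gamma n} \cdot q(s)^n,
\end{equation*}
which is the claim.

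There is essentially no obstacle here: the corollary is a bookkeeping consequence of the lemma once one has checked the ranges and matched the notation. The only point that could conceivably trip one up is the invertibility of $Q$ at the value $k\gamma$, but this was already established in the discussion preceding the corollary, so no further work is needed.
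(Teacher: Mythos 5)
Your proof is correct and is exactly the argument the paper intends: the corollary is the specialization of Lemma \ref{MLOC} to $M(km,n)$ with $km/n=k\gamma$, using the standing assumption $k\gamma>2$ and the injectivity of $Q$ to identify $a=s$. Nothing further is needed.
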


We treat the sum $ K(\bar{l})$ similarly to  
$M(m, n).$   
Instead of $q(x)$ we use the function,  
\begin{eqnarray}
 r(\bar{x}):= \sum_{k_1 = k_2 \mod 3}  { k \choose k_0, k_1, k_2} x_0^{k_0}x_1^{k_1} x_2^{k_2}, \, \,  \bar{x}=(x_0, x_1, x_2).\, \mbox{ Then} \, \,\nonumber \\
{K(\bar{l})} \, =\, 
 \sum_{(k_{i, j})\in {\cal K}(\bar{l})}\,\prod_{j=1}^m{ k \choose k_{0, j}\,, k_{1, j}\,, k_{2, j}} \, \, = \, \mbox{Coeff}[\bar{x}^{\bar{l}},\, r(\bar{x})^m\,  ] \, 
< \,\frac{ r(\bar{c})^m}{\bar{c}^{\bar{l}}}   \label{K1}
\end{eqnarray}
with the notation $ \bar{x}^{\bar{l}} = \prod_i x_i^{l_i}$ and $\bar{c}=(c_0, c_1, c_2)>0,$ meaning $c_i >0$ for all $i.$ 

For calculations it is useful to have  a different representation of $r(\bar{x}).$ Let 
$\i$ be the imaginary unit, and
 ${\bf w_1}\,:= -1/2+ (\sqrt{3}/2) \i$ is the primitive third root of unity, 
${ \bf w_2} := -1/2- (\sqrt{3}/2 )\i\,= \, {\bf w_1}^2.$ We have 
\begin{eqnarray}
r(\bar{x}) = 
\frac{1}{3} \left[ (x_0\,+\,x_1\,+\,x_2 )^k \,+  \, 
(x_0 \, + \,{ \bf w_1} x_1 + {\bf  w_2} x_2)^k \,+ \, (x_0 \,+ \,{ \bf w_2} x_1 \,+ \, 
{\bf w_1}  x_2)^k \right] \label{RCOM}
\end{eqnarray}
The preceding equation is well known and easy to prove from basic properties of 
roots of unity.  Note that  in  derivatives  $\frac{d}{dx_i} r(\bar{x})$ the roots of unity
are treated as constants.

For $ x_i, y_i >0$ we define (convention  $\alpha^\alpha=1$ for $\alpha=\omega_i$ or $\alpha=\lambda_i$ and
$\alpha =0)$   
\begin{eqnarray}
\Psi( \bar{\omega}\, ,\, \bar{\lambda}\, ,\, \bar{x} \,,\, \bar{y}  \, )\,= \, 
\prod_{i=0,1,2}\left( \frac{q(x_i )}{\omega_i q(s)} \right)^{\omega_i} \cdot
\left[ \prod_{i=0, 1, 2} \left(\frac{\lambda_i s }{x_i y_i}\right)^{\lambda_i } \right]^{k \gamma}
 r(y_0, y_1, y_2)^\gamma  \nonumber 
\end{eqnarray}
With $\omega_i=\lambda_i=1/3, a_i=s(k, \gamma)=s, $ and $c_i=1,$  we have  
$\Psi(\overline{\omega}, \overline{\lambda}, \bar{a}, \bar{c})=
3\cdot (1/3)^{k\gamma}\cdot ((1/3)3^k)^\gamma= 3^{1-\gamma} $ (use (\ref{RCOM}).)

\begin{lem} \label{LEMBA}
$ N(\bar{w}, \bar{l})/N_0 \, \, < \, \, \Psi( \bar{\omega}\, ,\, \bar{\lambda}\, ,\, \bar{a} \,,\, \bar{c}  \, )^n\cdot O(n)^{3/2}$
for any $\bar{a}, \bar{c},  a_i, c_i>0.$  
\end{lem}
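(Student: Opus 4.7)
The plan is to upper-bound each factor of $N(\bar w,\bar l)=\binom{n}{\bar w}\cdot K(\bar l)\cdot \prod_i M(l_i,w_i)$ by plugging the free parameters $\bar a,\bar c$ into the generating-function bounds (\ref{M1}) and (\ref{K1}), divide by the exact asymptotic for $N_0$ from Corollary \ref{N0}, and rearrange the resulting product to recognize $\Psi(\bar\omega,\bar\lambda,\bar a,\bar c)^n$. No new ideas are needed beyond the bounds already established; the lemma is essentially a bookkeeping consequence of them.

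Concretely, I would carry out four steps in order. First, apply Stirling to the multinomial coefficient to obtain $\binom{n}{\bar w}=O(1)\cdot\prod_i \omega_i^{-w_i}$, with the convention $0^0=1$ so the formula remains valid when some $\omega_i=0$. Second, apply (\ref{M1}) with parameter $a_i>0$ to each factor $M(l_i,w_i)$, giving
$$\prod_{i=0,1,2} M(l_i,w_i)\;\le\;\prod_i q(a_i)^{w_i}\cdot \prod_i\Bigl(\tfrac{l_i}{a_i e}\Bigr)^{l_i}\cdot O(n^{3/2}),$$
where the polynomial prefactor comes from the three $O(\sqrt{l_i})$ factors and $l_i\le k\gamma n$. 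Third, apply (\ref{K1}) with parameter $\bar c$ to get $K(\bar l)\le r(\bar c)^{\gamma n}\cdot \prod_i c_i^{-l_i}$. Fourth, invoke Corollary \ref{N0} to substitute $N_0^{-1}=\Theta(1)\cdot\bigl(k\gamma n/(se)\bigr)^{-k\gamma n}\cdot q(s)^{-n}$.

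Multiplying the four pieces, the ratio $N(\bar w,\bar l)/N_0$ becomes, up to a factor $O(n^{3/2})$,
$$\prod_{i}\Bigl(\tfrac{q(a_i)}{\omega_i q(s)}\Bigr)^{w_i}\cdot\;\prod_i\Bigl(\tfrac{l_i}{a_i c_i e}\Bigr)^{l_i}\cdot\Bigl(\tfrac{se}{k\gamma n}\Bigr)^{k\gamma n}\cdot r(\bar c)^{\gamma n}.$$
Now use $\sum_i l_i=k\gamma n$ to absorb the second and third blocks: writing $(se/(k\gamma n))^{k\gamma n}=\prod_i (se/(k\gamma n))^{l_i}$, the $e$-factors cancel and the ratio $l_i/(k\gamma n)=\lambda_i$ appears, yielding $\prod_i(\lambda_i s/(a_i c_i))^{l_i}=\bigl[\prod_i(\lambda_i s/(a_i c_i))^{\lambda_i}\bigr]^{k\gamma n}$. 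Similarly $w_i=\omega_i n$ turns the first block into $\bigl[\prod_i(q(a_i)/(\omega_i q(s)))^{\omega_i}\bigr]^{n}$, and $r(\bar c)^{\gamma n}$ is already in the desired shape. Collecting all three blocks as an $n$-th power reproduces exactly $\Psi(\bar\omega,\bar\lambda,\bar a,\bar c)^n$.

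The only real obstacle is bookkeeping: making sure the $e$-factors, the $(k\gamma n)$-factors and the implicit $n$-powers all line up via $\sum_i\omega_i=\sum_i\lambda_i=1$ and $\sum_i l_i=k\gamma n$, and that the polynomial prefactors from Stirling and from the three $O(\sqrt{l_i})$ multiply to at most $O(n^{3/2})$. Degenerate cases where some $\omega_i$ or $\lambda_i$ vanish are covered uniformly by the $\alpha^{\alpha}=1$ convention, since $M(0,w_i)=1$ and $K(\bar l)$ is well-defined for any slot-partition.
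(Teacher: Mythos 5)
Your proposal is correct and follows essentially the same route as the paper's proof: bound $\binom{n}{\bar w}$ by $\prod_i\omega_i^{-\omega_i n}$, apply (\ref{M1}) with parameters $a_i$ and (\ref{K1}) with $\bar c$, divide by $N_0$ via Corollary \ref{N0}, and regroup using $l_i=\lambda_i k\gamma n$, $\sum_i\lambda_i=\sum_i\omega_i=1$ to recognize $\Psi^n$ with an $O(n^{3/2})$ prefactor. The only cosmetic difference is that you invoke Stirling for the multinomial where the paper cites the standard entropy inequality directly.
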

\begin{proof}
${n \choose \bar{w}} \le \prod_i (1/\omega_i)^{\omega_i n}$ for all $\bar{w}, $  (\cite{MITZEN}, page 228 )
$\prod_{i=0, 1, 2} M(l_i, w_i)/N_0 \, \le \, \\
 \prod_i \left( (l_i/(a_ie))^{l_i} q(a_i)^{w_i} O(\sqrt{l_i})\right)\cdot (es/(k\gamma n))^{k\gamma n}\cdot 1/q(s)^n \cdot O(1)$ 
with (\ref{M1} ) and Corollary \ref{N0}. 
Observe that $l_i= \lambda_i k \gamma n, \sum_i \lambda_i=1, \sum \omega_i=1.$ Concerning $K(\bar{l})$ apply (\ref{K1}).
\end{proof} 

For  reals  $a, b$ we let ${\cal U}_\varepsilon(a, b)=\{(c, d)| \, \, |c-a|, |d-b|<\varepsilon \}$ be the open square 
neighborhood  of $(a, b).$  The notation $\bar{\lambda}, \bar{\omega}\in {\cal U}_\varepsilon(a, b)$ is used to mean  $(\lambda_1, \lambda_2), (\omega_1, \omega_2)\in {\cal U}_\varepsilon(a, b).$
Theorem \ref{OPT} is proved in Section \ref{PROOPT}. 
\begin{theor}   \label{OPT}
For any  $\bar{\lambda}>0 $ there exist $\bar{a}, \bar{c}\,>\,0$ such that: \\
(1) $\Psi( \bar{\omega}\, ,\, \bar{\lambda}\, ,\, \bar{a} \,,\, \bar{c}  \, )\, \le \, 3^{1-\gamma}. $ \\
(2) For any $\varepsilon>0, $ if  $\bar{\lambda} \notin  {\cal U}_\varepsilon(1/3, 1/3)$  then 
$\Psi( \bar{\omega}\, , \bar{\lambda}\, , \bar{a} \,, \bar{c}   )\, \le \, 3^{1-\gamma} -  \delta  $ for a  $\delta>0.$ \\
\end{theor}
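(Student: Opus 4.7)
The plan is to peel off the auxiliary variables one at a time, reducing the four-variable inequality to a calculus problem in $\bar\lambda$ alone; I read part~(1) as asserting $\max_{\bar\omega}\Psi(\bar\omega,\bar\lambda,\bar a,\bar c)\le 3^{1-\gamma}$ after a suitable choice of $\bar a,\bar c$ depending on $\bar\lambda$.

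\emph{Step 1 (eliminate $\bar\omega$).} Only the factor $\prod_i(q(a_i)/(\omega_i q(s)))^{\omega_i}$ depends on $\bar\omega$; a Lagrange-multiplier calculation on the simplex $\sum\omega_i=1$ shows it is maximized at $\omega_i^{\ast}=q(a_i)/S$ with $S:=q(a_0)+q(a_1)+q(a_2)$, and the maximum value is $S/q(s)$. It therefore suffices to find $\bar a,\bar c>0$ with
$$\frac{S}{q(s)}\cdot\left[\prod_i\!\left(\frac{\lambda_i s}{a_i c_i}\right)^{\!\lambda_i}\right]^{\!k\gamma}\! r(\bar c)^{\gamma} \;\le\; 3^{1-\gamma}.$$

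\emph{Step 2 (choose $\bar a,\bar c$).} I would take $\bar a(\bar\lambda),\bar c(\bar\lambda)$ to be the saddle-point parameters of the generating-function bounds (\ref{M1}) and (\ref{K1}); setting the partials of the log of the left-hand side to zero yields the coupled systems
$$a_i q'(a_i)\;=\;k\gamma\lambda_i S,\qquad c_i\, r_{x_i}(\bar c)\;=\;k\lambda_i\, r(\bar c),\qquad i=0,1,2.$$
Existence and uniqueness should follow from strict monotonicity of $Q$ and of the corresponding logarithmic derivatives of $r$, together with a continuity/fixed-point argument. At $\bar\lambda=(1/3,1/3,1/3)$ these are solved by $\bar a=(s,s,s)$, $\bar c=(1,1,1)$, and (\ref{RCOM}) gives $r(1,1,1)=3^{k-1}$, so the left-hand side equals exactly $3^{1-\gamma}$ there.

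\emph{Step 3 and main obstacle.} Substitution produces a function $\Phi(\bar\lambda)$ on the open $2$-simplex; part~(1) is the assertion $\max_{\bar\lambda}\Phi=3^{1-\gamma}$, and part~(2) then follows from continuity and compactness on the complement of $\mathcal{U}_{\varepsilon}(1/3,1/3)$. Uniqueness of the global maximizer is the real work. I expect to (a) exploit the symmetry of $r$ under $x_1\leftrightarrow x_2$ to cut the search to a two-parameter slice; (b) verify that the Hessian of $\log\Phi$ at the symmetric point is negative definite, which is where $k>15$ should enter via the sign of an explicit coefficient in the Taylor expansion of $r$; (c) handle the boundary as some $\lambda_i\to 0$ or $1$ using the limits of $r$ and $q$ in the optimality equations; and (d) rule out other interior critical points by a monotonicity argument along rays through $(1/3,1/3,1/3)$. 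This mirrors the Dubois--Mandler ``unique local maximum'' argument for $\mod 2$, $k=3$, but the three-symbol coupling of $\bar a,\bar c$ via $\bar\lambda$ is what makes the $\mod 3$, $k>3$ case substantially more intricate.
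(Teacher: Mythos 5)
Your Step 1 is sound and is in fact the same move the paper makes (the AGM bound giving $\mathrm{OPT}_1=\sum_i q(a_i s)/q(s)$, uniform in $\bar\omega$), but from Step 2 on the proposal has a genuine gap: everything that constitutes the actual content of Theorem~\ref{OPT} is deferred to ``should follow'' / ``I expect to'' items. Choosing $\bar a,\bar c$ as the saddle-point parameters and then claiming $\max_{\bar\lambda}\Phi=3^{1-\gamma}$ turns the theorem into exactly the global ``unique maximizer'' statement of Dubois--Mandler, and your items (b)--(d) do not prove it: negative definiteness of the Hessian at $(1/3,1/3,1/3)$ only gives a \emph{local} maximum (the paper proves precisely this, separately, as Lemma~\ref{LOCMAX}, and only uses it for the Laplace step, not for Theorem~\ref{OPT}), while (d) ``rule out other interior critical points by a monotonicity argument along rays'' is unsubstantiated and is exactly the step the authors state they could not carry out for $k>3$. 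There is also a concrete technical obstruction to Step 2 itself: because the support of $r$ is restricted to exponents with $k_1\equiv k_2 \bmod 3$, the system $c_i r_{x_i}(\bar c)=k\lambda_i r(\bar c)$ need not have a positive solution when $\bar\lambda$ is close to a corner such as $(0,1,0)$ and $k\not\equiv 0 \bmod 3$ (the target mean then lies outside the Newton polytope), so the saddle-point parametrization does not even exist on all of the region you must cover.

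The paper's proof deliberately takes the opposite route: it exploits the freedom in the bounds (\ref{M1}) and (\ref{K1}) to plug in explicit, \emph{suboptimal} parameters --- $a_0=c_0=1$ and hand-chosen linear functions $A(\cdot)$ of the $c_i$, organized by a four-case split on $P_1+P_2=(\lambda_1+\lambda_2)/\lambda_0$ --- and then verifies $\mathrm{OPT}(\bar a,\bar c,s)\le 3$ (resp.\ $\le 3-\delta$) by the elementary but lengthy calculus of Lemmas~\ref{lem1}--\ref{lem4}; this is where the hypothesis $s\ge 15$ (hence $k>15$) actually enters, not through a Taylor coefficient of $r$. Note also that your compactness argument for part (2) would not be enough as stated near the corner $\bar\lambda\to(1,0,0)$ with the paper's parametrization, where the intermediate bound $\mathrm{OPT}$ tends to $3$; the paper extracts the required $\delta$ there by a separate argument (strictness in the concavity step, using $C_1,C_2\to 1$ in the proof of Lemma~\ref{LEMOPT}). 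So while your outline correctly identifies where the difficulty lies, it does not supply the missing global argument, and the route it proposes is the one the paper was designed to avoid.
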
 

\begin{cor} \label{SUMOPT}
Let $U= {\cal U}_\varepsilon(1/3, 1/3)$  then 
$\sum_{\bar{\lambda} \notin U, \lambda_i>0 , \bar{\omega} } N(\bar{w}, \bar{l})/N_0  \,  <  \, C \cdot 3^{(1-\gamma)n}. $
\end{cor}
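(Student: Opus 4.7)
The approach is to combine Lemma \ref{LEMBA} and Theorem \ref{OPT} termwise, then exploit the fact that the index set has only polynomially many tuples. For each fixed $\bar{\lambda} \notin U$ with $\lambda_i > 0$, Theorem \ref{OPT}(2) supplies $\bar{a} = \bar{a}(\bar{\lambda})$ and $\bar{c} = \bar{c}(\bar{\lambda})$ satisfying $\Psi(\bar{\omega}, \bar{\lambda}, \bar{a}, \bar{c}) \le 3^{1-\gamma} - \delta$ for some $\delta = \delta(\bar{\lambda}) > 0$, uniformly in $\bar{\omega}$ (since $\bar{\omega}$ plays no role in the hypothesis of Theorem \ref{OPT}). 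Plugging this choice into Lemma \ref{LEMBA} gives
\[
\frac{N(\bar{w}, \bar{l})}{N_0} \;<\; (3^{1-\gamma} - \delta)^n \cdot O(n)^{3/2}.
\]

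The principal obstacle is upgrading the pointwise $\delta(\bar{\lambda}) > 0$ to a uniform constant $\delta_0 > 0$ valid for all relevant $\bar{\lambda}$. The closed simplex $\Delta = \{\bar{\lambda} : \sum \lambda_i = 1,\ \lambda_i \ge 0\}$ is compact, and so is $\Delta \setminus U$. If the map $\bar{\lambda} \mapsto \inf_{\bar{a}, \bar{c} > 0} \sup_{\bar{\omega}} \Psi$ is upper semicontinuous on $\Delta \setminus U$, compactness delivers a uniform $\delta_0 > 0$. Continuity is transparent in the interior; the delicate points are the boundary faces $\lambda_i = 0$, where the witnesses $\bar{a}, \bar{c}$ produced in Theorem \ref{OPT} may degenerate. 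I would handle this by splitting the sum into a main piece where $\lambda_i \ge \eta$ for all $i$, with $\eta > 0$ a small absolute constant, and a boundary piece where $\lambda_j < \eta$ for some $j$. On the main piece, compactness yields a uniform $\delta_0(\eta) > 0$; on the boundary piece, the $2$-core constraint $l_j \ge 2 w_j$ forces $\omega_j$ to be small whenever $\lambda_j$ is small, so that $(a,b)$ is nearly diagonal and a crude first-moment-style estimate, using only Theorem \ref{OPT}(1), already supplies the required decay.

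Once a uniform $\delta_0 > 0$ is in hand, the rest is bookkeeping. The indices $\bar{w}$ with $\sum w_i = n$ and $\bar{l}$ with $\sum l_i = k\gamma n$ each range over $O(n^2)$ tuples, so the sum has at most $O(n^4)$ terms. Consequently
\[
\sum_{\bar{\lambda} \notin U,\, \lambda_i > 0,\, \bar{\omega}} \frac{N(\bar{w}, \bar{l})}{N_0} \;\le\; O\!\left(n^{11/2}\right) \cdot (3^{1-\gamma} - \delta_0)^n,
\]
and since $(3^{1-\gamma} - \delta_0)/3^{1-\gamma} < 1$, the polynomial prefactor is swamped by the exponential decay, giving $o(3^{(1-\gamma)n})$, which is in particular $\le C \cdot 3^{(1-\gamma)n}$ for a suitable constant $C$. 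This is the claimed bound.
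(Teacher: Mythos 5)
Your argument is correct and follows essentially the same route as the paper: bound each term via Lemma \ref{LEMBA} together with Theorem \ref{OPT}(2), note the index set has only $O(n^4)$ elements, and let the exponential factor $(3^{1-\gamma}-\delta)^n$ absorb the polynomial prefactor. The extra compactness/boundary discussion you add to make $\delta$ uniform in $\bar{\lambda}$ is a reasonable precaution but is not a new route; the paper reads Theorem \ref{OPT}(2) as already supplying a $\delta$ depending only on $\varepsilon$ (its proof handles the degenerate regime of small $\lambda_i$ explicitly), and then concludes exactly as you do.
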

\begin{proof}
The sum has only $O(n^4)$ terms.  
With Lemma \ref{LEMBA} and Theorem \ref{OPT} (2) 
we  see that each term is bounded above by 
$ (3^{1-\gamma}- \delta)^n O(n)^{3/2}.$ 
\end{proof}

To treat $(\lambda_1, \lambda_2) $ close to $(1/3, 1/3)$
we need a lemma analogous to Lemma \ref{MLOC} for $K(\bar{l}).$ 
Let the function $R(x_1, x_2) \,= \, (R_1(x_1, x_2),\, R_2( x_1, x_2))$ be defined by 
$R_i(x_1, x_2) = \\=  x_i r_{x_i}(1, x_1, x_2)/r(1, x_1, x_2)$ for $i=1, 2,$ 
$r_{x_i}(1, x_1, x_2)$ is the partial derivative of $r(1, x_1, x_2)$ wrt. $x_i.$ 
The Jacobi Determinant of $R(x_1, x_2) $ is $>0$ at $ x_1=x_2=1$ (proof Subsection \ref{LOLICO}.)
Thus there is a neighborhood of
$(1, 1)$ 
in which  $R( x_1, x_2)$ is invertible and the inverse function is differentiable. 
We have that $R(1, 1)= (k/3, k/3).$ 
Thus for a suitable $\varepsilon $ and  $(\lambda_1, \lambda_2) \in  {\cal U}_\varepsilon(1/3, 1/3)$ 
we can define $(c_1, c_2)$ by $R(c_1, c_2)= (k \lambda_1, k\lambda_2).$ 
Moreover, $c_i=c_i(\lambda_1, \lambda_2)$ is differentiable. Lemma \ref{KLOC} is proved in Subsection \ref{LOLICO}.

\begin{lem}  \label{KLOC}
There is an   $\varepsilon>0$ such that for $(\lambda_1, \lambda_2) \in  {\cal U}_\varepsilon(1/3, 1/3)$  
\begin{eqnarray}
K(\bar{l}) \,= \, O\left( \frac{1}{n}\right) \cdot \frac{r(1, c_1, c_2)}{c_1^{l_1}c_2^{l_2}} \mbox{ with } R(c_1, c_2)=(k\lambda_1, k\lambda_2) \mbox{ defining } c_1, c_2.  \nonumber 
\end{eqnarray}
\end{lem}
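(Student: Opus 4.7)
The plan is to carry out a two-dimensional saddle-point (local central limit) calculation on the generating function $r(1,x_1,x_2)^m$. By homogeneity of $r$ of degree $k$ and the constraint $\sum_i l_i = km$, the coefficient extraction in (\ref{K1}) reduces to two variables,
\[K(\bar l) = \mathrm{Coeff}\bigl[x_1^{l_1}x_2^{l_2},\ r(1,x_1,x_2)^m\bigr],\]
so Cauchy's formula on the polytorus $|x_j| = c_j$ gives
\[K(\bar l) = \frac{1}{(2\pi)^2\, c_1^{l_1} c_2^{l_2}} \int_{-\pi}^{\pi}\!\!\int_{-\pi}^{\pi} r(1, c_1 e^{i\theta_1}, c_2 e^{i\theta_2})^m\, e^{-i(l_1\theta_1 + l_2\theta_2)}\, d\theta_1\, d\theta_2.\]
The defining equations $R(c_1,c_2) = (k\lambda_1,k\lambda_2)$ are precisely the saddle-point equations for this integrand: they express that the logarithm of the integrand has vanishing $\theta$-gradient at $\theta = 0$. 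Non-vanishing of the Jacobi determinant of $R$ at $(1,1)$ (Subsection \ref{LOLICO}) and the implicit function theorem give $\bar c$ as a smooth function of $\bar\lambda$ on ${\cal U}_\varepsilon(1/3,1/3)$ for $\varepsilon$ small.

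A new feature compared with the mod-$2$ case is that the integrand has three saddle points of equal modulus. Since $r(1,x_1,x_2)$ only contains monomials $x_1^{k_1} x_2^{k_2}$ with $k_1 \equiv k_2 \pmod 3$, one has $r(1, e^{2\pi i/3}x_1, e^{-2\pi i/3}x_2) = r(1,x_1,x_2)$, so the integrand attains the value $r(1,c_1,c_2)^m$ at the three points $\theta^{(0)} = (0,0)$ and $\theta^{(\pm)} = (\pm 2\pi/3, \mp 2\pi/3)$. The constraint $l_1 \equiv l_2 \pmod 3$ built into the definition of ${\cal K}(\bar l)$ makes the phase $e^{-i(l_1\theta_1 + l_2\theta_2)}$ equal to $1$ at each of these points, so all three saddles contribute in phase.

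Near each saddle I would apply the Taylor expansion
\[\log r(1, c_1 e^{i\theta_1}, c_2 e^{i\theta_2}) = \log r(1,c_1,c_2) + i(k\lambda_1\theta_1 + k\lambda_2\theta_2) - \tfrac12 \theta^T H \theta + O(|\theta|^3),\]
where $H$ is the Hessian of $(u_1,u_2) \mapsto \log r(1, c_1 e^{u_1}, c_2 e^{u_2})$ at the origin; the Jacobian computation of Subsection \ref{LOLICO} identifies $H$ as a real symmetric matrix which is positive definite on a neighborhood of $(c_1,c_2)=(1,1)$. Restricting the integral to a shrinking neighborhood of each saddle and substituting $\theta = \vartheta/\sqrt m$ turns the integrand into $r(1,c_1,c_2)^m \exp(-\tfrac12 \vartheta^T H \vartheta)$ up to a $1+o(1)$ factor, and the resulting Gaussian integral evaluates to $2\pi/\sqrt{\det H}$. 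This yields $r(1,c_1,c_2)^m \cdot (2\pi m\sqrt{\det H})^{-1}(1+o(1))$ per saddle; summing the three equal contributions and dividing by $c_1^{l_1} c_2^{l_2}$ produces the claimed $O(1/n) \cdot r(1,c_1,c_2)^m/(c_1^{l_1}c_2^{l_2})$ bound (the $m$-th power seems to be missing from the statement as printed).

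The main obstacle is the off-saddle estimate needed to make the localization in the last step rigorous: one must show that $|r(1, c_1 e^{i\theta_1}, c_2 e^{i\theta_2})| \le r(1,c_1,c_2) - \delta$ uniformly for $\theta$ outside small disks around the three saddles. At $(c_1,c_2) = (1,1)$ this reduces to equality analysis in the triangle inequality for $|r(1,e^{i\theta_1}, e^{i\theta_2})|$: equality forces every monomial of $r$ to contribute with a common phase, and the lattice of exponent differences $\{(a,b) \in \mathbb{Z}^2 : a \equiv b \pmod 3\}$ (already generated by $(1,1)$ and $(3,0)$, both realized as differences of support exponents once $k \ge 3$) admits only the three listed $(\theta_1,\theta_2)$ modulo $2\pi$ as simultaneous zeros. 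Compactness of $[-\pi,\pi]^2$ and continuity of $r$ then transfer the strict bound uniformly to a small neighborhood of $(1,1)$, producing a $\delta > 0$ valid throughout ${\cal U}_\varepsilon(1/3, 1/3)$.
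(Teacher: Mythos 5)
Your proof is correct, and it reaches the lemma by a genuinely different (though closely related) route. The paper does not perform any contour integration: it rewrites $K(\bar{l})/\bigl(r(1,c_1,c_2)^m c_1^{-l_1}c_2^{-l_2}\bigr)$ as the probability that a sum of $m$ i.i.d.\ copies of the tilted lattice random vector $(X,Y)$, with $\mbox{Prob}[(X,Y)=(k_1,k_2)]\propto \binom{k}{k-k_1-k_2,k_1,k_2}c_1^{k_1}c_2^{k_2}$ on $k_1\equiv k_2 \bmod 3$, hits its mean $(l_1,l_2)$, checks that the covariance determinant is bounded away from $0$ near $(c_1,c_2)=(1,1)$, and then cites a multivariate local limit theorem for lattice random vectors (Bhattacharya--Ranga Rao) to get $\Theta(1/m)=\Theta(1/n)$; the choice of $(c_1,c_2)$ via $R(c_1,c_2)=(k\lambda_1,k\lambda_2)$ and the implicit-function argument are the same as yours. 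Your saddle-point computation is essentially a self-contained proof of that local limit theorem, and it has the merit of making explicit a point the paper's black-box citation glosses over: the support of $(X,Y)$ lives on the index-$3$ sublattice $\{(a,b):a\equiv b \bmod 3\}$, so the characteristic function has three points of maximal modulus, which is exactly your three in-phase saddles $\theta^{(0)},\theta^{(\pm)}$ (and which requires $l_1\equiv l_2\bmod 3$, automatic since otherwise $K(\bar{l})=0$); in the probabilistic formulation this corresponds to applying the LLT with respect to the minimal lattice, changing the constant by a factor $3$ but not the order $\Theta(1/m)$. Your lattice-generation argument for the off-saddle bound and the compactness transfer to a neighborhood of $(1,1)$ are sound, and you are also right that the displayed formula in the lemma should read $r(1,c_1,c_2)^m$, consistent with (\ref{K1}) and with how the bound is used in Corollary \ref{PSIINU}. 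What the paper's route buys is brevity (no uniform off-saddle estimate or Gaussian localization to write out); what yours buys is a self-contained argument with an explicit constant and explicit handling of the sublattice periodicity.
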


\begin{cor} \label{PSIINU}
There is  $\varepsilon >0$ such that for $(\omega_1, \omega_2), ( \lambda_1, \lambda_2 ) \in   {\cal U}_\varepsilon(1/3, 1/3)$ 
\begin{eqnarray}
\frac{N(\bar{w}, \bar{l})}{N_0} \, \, \le  \, 
\,O \left(\frac{1}{n^2}\right) \Psi( \bar{\omega}\, ,\, \bar{\lambda}\, ,\, \bar{a} \,,\, \bar{c}  \, )^n \nonumber 
\end{eqnarray}
where $  Q(a_i)= l_i / w_i = \lambda_ik \gamma /\omega_i$ and $c_0=1$ and $ R(c_1, c_2)= (\lambda_1k, \lambda_2k). $ 
\end{cor}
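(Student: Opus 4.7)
\medskip
\noindent\textbf{Proof proposal for Corollary \ref{PSIINU}.}
The plan is simply to multiply together the tight asymptotic formulas that have already been established and read off $\Psi$ from the result. Writing
$$\frac{N(\bar w,\bar l)}{N_0}\;=\;\binom{n}{\bar w}\cdot K(\bar l)\cdot\prod_{i=0}^{2}\frac{M(l_i,w_i)}{N_0^{1/?}}\quad\text{(combined as one ratio)},$$
I would estimate each of the four factors using the sharp (not merely upper-bound) versions available in the paper. For the trinomial coefficient I use Stirling in its $\Theta$--form, which in the regime where each $\omega_i$ is a constant in $(0,1)$ (guaranteed by $\bar\omega\in\mathcal U_\varepsilon(1/3,1/3)$) gives $\binom{n}{\bar w}=\Theta(n^{-1})\prod_{i}\omega_i^{-\omega_i n}$.

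\smallskip
Next, I apply Lemma \ref{MLOC} to each $M(l_i,w_i)$. The hypothesis $(2+\varepsilon)w_i\le l_i\le Cw_i$ is met because $l_i/w_i=\lambda_i k\gamma/\omega_i$ is, by continuity and the assumption $\bar\omega,\bar\lambda\in\mathcal U_\varepsilon(1/3,1/3)$, within an $O(\varepsilon)$ neighborhood of $k\gamma>2$ (the paper's standing assumption $k\ge 3$ yields $k\gamma>2$). This gives $M(l_i,w_i)=\Theta(1)(l_i/(a_ie))^{l_i}q(a_i)^{w_i}$ with $Q(a_i)=l_i/w_i=\lambda_i k\gamma/\omega_i$. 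For $N_0$ I use Corollary \ref{N0}, and for $K(\bar l)$ I use Lemma \ref{KLOC}, which requires $\bar\lambda\in\mathcal U_\varepsilon(1/3,1/3)$ and yields $K(\bar l)=O(1/n)\,r(1,c_1,c_2)^{m}/(c_1^{l_1}c_2^{l_2})$ with $R(c_1,c_2)=(k\lambda_1,k\lambda_2)$.

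\smallskip
Now the bookkeeping: the two prefactors $\Theta(1/n)$ from $\binom{n}{\bar w}$ and $O(1/n)$ from $K(\bar l)$ combine to the required $O(n^{-2})$. In the exponential part I substitute $l_i=\lambda_ik\gamma n$, $w_i=\omega_in$, $m=\gamma n$ and collect terms. Using $\sum_i\omega_i=\sum_i\lambda_i=1$ (and $c_0=1$) the $n^{k\gamma n}$ factor in the numerator from $\prod_i l_i^{l_i}$ cancels exactly the one in $N_0$, the $e$ factors cancel, and $q(s)^n$ redistributes across the three $\omega_i$'s. The resulting expression reads off as
$$\prod_{i}\Bigl(\tfrac{q(a_i)}{\omega_i q(s)}\Bigr)^{\omega_i n}\cdot\prod_{i}\Bigl(\tfrac{\lambda_i s}{a_i c_i}\Bigr)^{\lambda_i k\gamma n}\cdot r(1,c_1,c_2)^{\gamma n}\;=\;\Psi(\bar\omega,\bar\lambda,\bar a,\bar c)^{n},$$
which is precisely the definition of $\Psi^n$ given before Lemma \ref{LEMBA}.

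\smallskip
This is essentially bookkeeping once the three sharp lemmas are in hand, so the real work lives in Lemma \ref{MLOC} and Lemma \ref{KLOC} rather than in the corollary itself. The one genuine thing to check is that the neighborhood $\mathcal U_\varepsilon(1/3,1/3)$ is small enough for \emph{both} hypotheses to hold simultaneously: $\bar\lambda\in\mathcal U_\varepsilon$ is needed to define $(c_1,c_2)$ via the local inverse of $R$ from Lemma \ref{KLOC}, and $\bar\omega,\bar\lambda\in\mathcal U_\varepsilon$ is needed so that each ratio $\lambda_ik\gamma/\omega_i$ stays strictly above $2$ so Lemma \ref{MLOC} applies to $M(l_i,w_i)$. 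The mild obstacle, if any, is therefore to fix $\varepsilon$ small enough once and for all so that both requirements are met for every $i\in\{0,1,2\}$; this is a standard compactness/continuity argument and presents no difficulty.
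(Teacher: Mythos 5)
Your proposal is correct and follows essentially the same route as the paper's own proof: Stirling in $\Theta$-form for $\binom{n}{\bar w}$ (one factor $O(1/n)$), Lemma \ref{MLOC} for each $M(l_i,w_i)$ with $Q(a_i)=l_i/w_i$, Corollary \ref{N0} for $N_0$, and Lemma \ref{KLOC} for $K(\bar l)$ (the second factor $O(1/n)$), followed by the same bookkeeping as in Lemma \ref{LEMBA}. Your explicit check that $\varepsilon$ can be chosen so that the hypotheses of Lemma \ref{MLOC} and Lemma \ref{KLOC} hold simultaneously is a point the paper leaves implicit, but it changes nothing in substance.
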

\noindent
{\it Comment.} Observe that $\lambda_ik \gamma /\omega_i\,\approx \, k\gamma, \, a_i\approx s, c_i \approx 1.$ 
\begin{proof}
Our restriction on $\bar{\omega}$ implies that ${n \choose \bar{w}}\le O(1/n)\prod_i (1/\omega_i)^{\omega_i n}$ 
(Stirling), giving us one $O(1/n).$ We get $\prod_i M(l_i, w_i)/N_0 \le \\
\prod_i \left( (l_i/(a_ie))^{l_i} q(a_i)^{w_i}  \right)\cdot (es/(k\gamma n))^{k\gamma n}\cdot 1/q(s)^n \cdot O(1)$
applying Corollary  \ref{N0} and Lemma \ref{MLOC} for the $M(l_i, w_i).$ Concerning $K(\bar{l})$ apply Lemma \ref{KLOC}
which gives us a second factor $O(1/n).$  Otherwise the proof is as the proof of Lemma \ref{LEMBA}. 
\end{proof}

Lemma \ref{LOCMAX} is proved in Section \ref{PROLAPLA}. 

\begin{lem} \label{LOCMAX}
The function  $ \Psi( \bar{\omega}\, ,\, \bar{\lambda}\, ,\, \bar{a} \,,\, \bar{c}  \, )$
with  $a_i, c_i$ given by 
$  Q(a_i) = \lambda_ik\gamma /\omega_i$ and $c_0=1$ and $ R(c_1, c_2)= (\lambda_1k, \lambda_2k) $ 
has a local maximum with value $3^{1-\gamma}$  for $\lambda_i= \omega_i=1/3.$ In this case we get $a_i=s$ and $c_i=1.$ 
\end{lem}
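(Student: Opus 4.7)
The plan is to exploit the envelope structure built into the definitions of $\bar a$ and $\bar c$. Direct differentiation of $\log\Psi$ in the $(\bar x,\bar y)$-variables gives $\partial_{x_i}\log\Psi = \omega_i q'(x_i)/q(x_i) - k\gamma\lambda_i/x_i$, which vanishes exactly when $Q(x_i) = k\gamma\lambda_i/\omega_i$, and (for $i=1,2$, with $y_0$ held at $1$) $\partial_{y_i}\log\Psi = -k\gamma\lambda_i/y_i + \gamma r_{y_i}(\bar y)/r(\bar y)$, which vanishes exactly when $R_i(y_1,y_2) = k\lambda_i$. Hence the prescribed $\bar a,\bar c$ are precisely the stationary point of $\Psi$ in $(\bar x,\bar y)$, and by the envelope theorem I may treat them as constants when differentiating the composite $(\bar\omega,\bar\lambda)\mapsto \Psi$ with respect to $\omega_j,\lambda_j$.

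At the symmetric point $\omega_i=\lambda_i=1/3$, the condition $Q(a_i) = k\gamma = Q(s)$ forces $a_i=s$. For $\bar c$, the complex representation \eqref{RCOM} gives $r(1,1,1) = \tfrac{1}{3}(3^k+0+0) = 3^{k-1}$ and $r_{y_i}(1,1,1) = k\cdot 3^{k-2}$, since the two nonreal contributions collapse because $1+{\bf w_1}+{\bf w_2}=0$. Hence $R_i(1,1) = k/3 = k\lambda_i$, so $c_1=c_2=1$. Plugging these values into $\Psi$ yields
\[
\Psi \;=\; 3 \cdot \left(\tfrac{1}{3}\right)^{k\gamma}\cdot 3^{(k-1)\gamma} \;=\; 3^{1-\gamma},
\]
matching the claim. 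Using $\omega_0 = 1-\omega_1-\omega_2$, $\lambda_0 = 1-\lambda_1-\lambda_2$, and the envelope reduction above, the remaining partials simplify to $\partial_{\omega_j}\log\Psi = \log(q(a_j)/\omega_j) - \log(q(a_0)/\omega_0)$ and $\partial_{\lambda_j}\log\Psi = k\gamma\bigl[\log(\lambda_j/(a_j c_j)) - \log(\lambda_0/(a_0 c_0))\bigr]$ for $j=1,2$, both of which vanish at the symmetric point. So it is a critical point.

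The main obstacle is upgrading \emph{stationary} to \emph{local maximum} without grinding through the full $4{\times}4$ Hessian in $(\omega_1,\omega_2,\lambda_1,\lambda_2)$. My plan is to invoke Theorem \ref{OPT}(1) together with a log-convexity argument. For each fixed $(\bar\omega,\bar\lambda)$, the map
\[
(\log x_0,\log x_1,\log x_2,\log y_1,\log y_2)\;\longmapsto\;\log\Psi(\bar\omega,\bar\lambda,\bar x,\bar y)\bigr|_{y_0=1}
\]
is convex: its only nonlinear pieces are $\omega_i\log q(x_i)$ and $\gamma\log r(1,y_1,y_2)$, and each of these is log-convex in the corresponding log-variables by the standard covariance-matrix argument, which uses only that $q$ and $r$ have nonnegative power-series coefficients. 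Hence the stationary $(\bar a,\bar c)$ is the global \emph{minimum} of $\Psi$ over $\bar x,\bar y > 0$ with $y_0=1$, so
\[
\Psi(\bar\omega,\bar\lambda,\bar a,\bar c) \;\le\; \Psi(\bar\omega,\bar\lambda,\bar a^*(\bar\lambda),\bar c^*(\bar\lambda)) \;\le\; 3^{1-\gamma}
\]
with $\bar a^*(\bar\lambda),\bar c^*(\bar\lambda)$ supplied by Theorem \ref{OPT}(1). Combined with the equality $\Psi=3^{1-\gamma}$ at the symmetric point, this yields the local (in fact global) maximum.
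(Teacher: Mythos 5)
Your proof is correct, but it follows a genuinely different route from the paper's for the decisive second-order step. The paper differentiates the composite function $\ln\Psi(\bar\omega,\bar\lambda)$ twice (carrying the implicit derivatives $a_i'$, $c_i'$ obtained from $Q(a_i)=k\gamma\lambda_i/\omega_i$ and $R(c_1,c_2)=(k\lambda_1,k\lambda_2)$), assembles the $4\times4$ Hessian $H$ at $\omega_i=\lambda_i=1/3$, and checks that $-H$ is positive definite via Jacobi's criterion on the leading principal minors. You instead observe that the defining equations for $\bar a,\bar c$ are exactly the stationarity conditions of $\Psi$ in the parameters, that $\log\Psi$ is convex in $(\log x_i,\log y_i)$ because $q$ and $r(1,\cdot,\cdot)$ have nonnegative coefficients, so the prescribed $(\bar a,\bar c)$ is the global minimizer over the slice $y_0=1$, and then you import Theorem \ref{OPT}(1) to conclude $\Psi(\bar\omega,\bar\lambda,\bar a,\bar c)\le 3^{1-\gamma}$ throughout, with equality at the symmetric point — a (global, hence local) maximum. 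This is legitimate and non-circular, since Theorem \ref{OPT} is proved independently of Lemma \ref{LOCMAX}; your first-order/envelope computation agrees with the paper's (\ref{eqn:diffWi})--(\ref{eqn:diffLi}). Two remarks. First, a small loose end: Theorem \ref{OPT}'s witness need not have $c_0^*=1$ (e.g.\ after the permutation step at the end of its proof), so before comparing with your constrained minimum you should rescale $\bar c^*$ to $c_0^*=1$, which leaves $\Psi$ unchanged because $r$ is homogeneous of degree $k$ and $\sum_i\lambda_i=1$. Second, what each approach buys: yours avoids all second-derivative and implicit-differentiation work and even gives the stronger statement that the composite never exceeds $3^{1-\gamma}$ where defined; the paper's explicit Hessian, however, is not dispensable in the larger argument — the proof of Theorem \ref{LAPLA} uses precisely the negative definiteness of $\mbox{Hess}(h)(\overline{1/3})=H$ to carry out the Gaussian/Laplace integration, so with your proof of the lemma that computation would still have to be done there.
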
 

\begin{cor}   \label{SUMOMEGA}
Let $U= {\cal U}_\varepsilon(1/3, 1/3), \varepsilon$ small enough. Then 
$\sum_{\bar{\omega} \notin U,  \bar{\lambda} \in U, \lambda_i>0 } N(\bar{w}, \bar{l})/N_0   < \, C \cdot 3^{(1-\gamma)n}. $
\end{cor}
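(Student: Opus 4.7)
The strategy is to apply Lemma \ref{LEMBA} to each individual summand and upgrade Theorem \ref{OPT}(1)'s bound ``$\Psi \le 3^{1-\gamma}$'' to a uniform strict bound ``$\Psi \le 3^{1-\gamma}-\delta$'' throughout the region $\{\bar\omega \notin U,\ \bar\lambda \in U\}$; the corollary then follows since the sum has only $O(n^4)$ terms.

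For each $\bar\lambda \in U$ I would select $\bar a(\bar\lambda), \bar c(\bar\lambda) > 0$ as in Theorem \ref{OPT}(1), yielding $\Psi(\bar\omega, \bar\lambda, \bar a, \bar c) \le 3^{1-\gamma}$ uniformly in $\bar\omega$. Lemma \ref{LEMBA} then gives $N(\bar w, \bar l)/N_0 < \Psi(\bar\omega, \bar\lambda, \bar a, \bar c)^n \cdot O(n^{3/2})$ for each summand. Shrinking $\varepsilon$ if necessary so that $\bar U$ is compactly contained in the positive simplex, the selection $\bar\lambda \mapsto (\bar a(\bar\lambda),\bar c(\bar\lambda))$ can be arranged to be continuous (the construction of $\bar c$ via the implicit-function machinery already appears in Lemma \ref{KLOC}).

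The crux is the uniform strict inequality. The admissible set of $\bar\omega$ on the simplex with $\bar\omega \notin U$ is compact once boundary strata with some $\omega_i = 0$ are treated separately: on such strata $\binom{n}{\bar w}$ is subexponential and $\Psi$ can be bounded strictly below $3^{1-\gamma}$ directly from its definition. On the remaining compact product, $\Psi$ is continuous in $(\bar\omega,\bar\lambda)$, so if no uniform $\delta > 0$ existed a limiting maximizer $(\bar\omega^*, \bar\lambda^*)$ with $\bar\omega^* \notin U$, $\bar\lambda^* \in \bar U$ and $\Psi(\bar\omega^*, \bar\lambda^*, \bar a, \bar c) = 3^{1-\gamma}$ would exist. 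Lemma \ref{LOCMAX} identifies the symmetric point as a strict local maximizer of $\Psi$ in $\bar\omega$ at $\bar\lambda = (1/3,1/3,1/3)$; by continuity of the Hessian, strictness persists for $\bar\lambda^*$ close to $(1/3,1/3,1/3)$, so the only way of saturating the global bound $\Psi \le 3^{1-\gamma}$ on that compact product is at $\bar\omega$ close to the symmetric point, contradicting $\bar\omega^* \notin U$.

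Given the uniform bound $\Psi \le 3^{1-\gamma} - \delta$, each summand is at most $(3^{1-\gamma} - \delta)^n \cdot O(n^{3/2})$, and summing the $O(n^4)$ terms yields $O(n^{11/2}) \cdot (3^{1-\gamma} - \delta)^n = o(3^{(1-\gamma)n})$, which is absorbed into $C \cdot 3^{(1-\gamma)n}$. The main obstacle in this plan lies in the uniqueness step: one must reconcile Theorem \ref{OPT}'s choice of $(\bar a, \bar c)$ (depending on $\bar\lambda$ only, but uniformly controlling $\Psi$ over all $\bar\omega$) with Lemma \ref{LOCMAX}'s choice (depending on both $\bar\omega$ and $\bar\lambda$, but only locally controlling $\Psi$) in order to rule out ``phantom maximizers'' of $\Psi$ at level $3^{1-\gamma}$ lying outside $U$ in $\bar\omega$. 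In essence, one needs that the $3^{1-\gamma}$ level set of $\Psi$ in the relevant regime consists only of the symmetric point; I expect this to be where the detailed structure of $\Psi$ via the root-of-unity representation (\ref{RCOM}) has to be invoked.
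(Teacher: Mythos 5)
Your plan correctly reduces the corollary to a uniform strict bound $\Psi\le 3^{1-\gamma}-\delta$ on the region $\{\bar\omega\notin U,\ \bar\lambda\in U\}$ and correctly handles the bookkeeping ($O(n^4)$ terms, Lemma \ref{LEMBA}), but the step that is supposed to produce the $-\delta$ is exactly the gap you yourself flag, and the compactness argument does not close it. Theorem \ref{OPT}(1) chooses $(\bar a,\bar c)$ as a function of $\bar\lambda$ alone, and the whole OPT machinery (Lemma \ref{LEMOPT}, in particular the AGM step on the first factor) bounds $\Psi$ \emph{uniformly in} $\bar\omega$ --- it is blind to $\bar\omega$ --- so it cannot distinguish $\bar\omega\notin U$ from $\bar\omega$ near $(1/3,1/3)$ and cannot yield ``saturation only at the symmetric point in $\bar\omega$.'' Lemma \ref{LOCMAX}, on the other hand, concerns a different parametrization ($Q(a_i)=\lambda_i k\gamma/\omega_i$, so $\bar a$ depends on $\bar\omega$ too) and is a purely local statement; ``continuity of the Hessian'' gives information only in a small neighborhood of the symmetric point in \emph{both} variables and says nothing about a putative maximizer with $\bar\omega$ at distance $\ge\varepsilon$ from $(1/3,1/3)$. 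So the hypothetical limit point $(\bar\omega^*,\bar\lambda^*)$ with $\Psi=3^{1-\gamma}$ is not excluded by anything you invoke; in addition, the continuity of $\bar\lambda\mapsto(\bar a(\bar\lambda),\bar c(\bar\lambda))$ across the four case distinctions in the proof of Theorem \ref{OPT} is not established, so even the compactness setup is shaky.

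The paper closes this by an explicit, elementary two-case argument instead of compactness. Fix $\varepsilon'\le\varepsilon/3$ and $U'={\cal U}_{\varepsilon'}(1/3,1/3)$. For $\bar\lambda\in U\setminus U'$, Theorem \ref{OPT}(2) (applied with $\varepsilon'$) already gives $3^{1-\gamma}-\delta$, uniformly in $\bar\omega$. For $\bar\lambda\in U'$ one takes the concrete choice $a_i=s$, $c_0=1$, $R(c_1,c_2)=(k\lambda_1,k\lambda_2)$; with $a_i=s$ the first factor of $\Psi$ collapses to the entropy term $\prod_i(1/\omega_i)^{\omega_i}$ and all remaining factors are independent of $\bar\omega$, i.e.
\begin{eqnarray}
\Psi(\bar\omega,\bar\lambda,\bar a,\bar c)\,=\,\prod_i\left(\frac{1}{\omega_i}\right)^{\omega_i}\cdot
\prod_i\left(\frac{\lambda_i}{c_i}\right)^{\lambda_i k\gamma}\cdot r(\bar c)^\gamma . \nonumber
\end{eqnarray}
Comparing with the same expression at $\bar\omega=\bar\lambda$ --- which is precisely the Lemma \ref{LOCMAX} parametrization (then $Q(a_i)=k\gamma$ forces $a_i=s$), hence $\le 3^{1-\gamma}$ for $\varepsilon$ small --- and using that $\prod_i(1/\omega_i)^{\omega_i}\le\prod_i(1/\lambda_i)^{\lambda_i}-\delta'$ when $\bar\omega\notin U$ but $\bar\lambda\in U'$, together with the crude lower bound $1/2$ on the $\bar\omega$-independent factor (in the nontrivial case $\Psi\ge 3/2$), yields $\Psi\le 3^{1-\gamma}-\delta$. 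This isolation of the $\bar\omega$-dependence in a single entropy factor is the missing idea; no global identification of the $3^{1-\gamma}$ level set (and no use of the root-of-unity structure beyond what is already in Lemma \ref{LOCMAX}) is needed.
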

\begin{proof}
Let  $\varepsilon>0$ be  such that
$\Psi(\bar{\omega}, \bar{\lambda}, \bar{a}, \bar{c}) \le 3^{1-\gamma} $ for $\bar{a}, \bar{c} $
as specified in Lemma \ref{LOCMAX} and $\bar{\omega}, \bar{\lambda}\in U.$ 
Let $\bar{\omega } \notin U, \bar{\lambda} \in U.$ We show 
$\Psi(\bar{\omega}, \bar{\lambda}, \bar{a}, \bar{c}) \le 3^{1-\gamma} - \delta$ for some $\bar{a}, \bar{c}.$ 
This implies the claim   as in the proof of Corollary \ref{SUMOPT}. 

Let  $\varepsilon'<\varepsilon/3$  and  $U'= {\cal U}_{\varepsilon'}(1/3, 1/3).$   
For $ \bar{\lambda} \notin U'$  the claim follows with Theorem \ref{OPT} (2). 
For  $ \bar{\lambda} \in U'$ we show that 
$\Psi(\bar{\omega}, \bar{\lambda}, \bar{a}, \bar{c}) \le 3^{1- \gamma}- \delta$ for 
$a_i=s$ and $ c_0=1, R(c_1, c_2)=(k \lambda_1, k \lambda_2).$  (Recall $\bar{\lambda} \in U.$)
For $\Psi:= \Psi(\bar{\lambda}, \bar{\lambda}, \bar{a}, \bar{c})$ with 
$\bar{a}, \bar{c}$ as required by Lemma \ref{LOCMAX} we have $\Psi \le 3^{1-\gamma}.$ 
Note,  $Q(a_i)=\lambda_ik\gamma/\lambda_i= k \gamma$ 
which implies $a_i=s$ and $c_0=1, R(c_1, c_2)=  (k \lambda_1, k \lambda_2).$
Therefore all $a_i-$terms cancel and $\Psi= \prod (1/\lambda_i)^{\lambda_i}\cdot \left(\prod (\lambda_i/c_i)^{\lambda_ik\gamma}\right)p(\bar{c})^\gamma \le 3^{1-\gamma}. $ 

As $\bar{\omega} \notin U$ whereas  $\bar{\lambda} \in U'$ and $\varepsilon' \le \varepsilon/3$
we have that  
$\prod (1/\omega_i)^{ \omega_i } \, \le \prod (1/\lambda_i)^{\lambda_i}- \delta'$ for a $\delta' >0$ 
(proof omitted.) Then  
$\Psi(\bar{\omega}, \bar{\lambda}, \bar{a}, \bar{c}) \le \Psi - \delta'\left(\prod 
(\lambda_i/c_i)^{\lambda_i k \gamma }\right)p(\bar{c})^\gamma .$
If $\Psi \le 3/2$, we are done. Otherwise we have that $\left(\prod_i(\lambda_i/c_i)^{\lambda_ik\gamma}\right)p(\bar{c})^\gamma $
is bounded below by $1/2$  (as $\prod (1/\lambda_i)^{\lambda_i} \le 3)$ and the claim  follows, with
with $\delta = (1/2)\delta'.$  
\end{proof}

Theorem \ref{LAPLA} is proved in Section \ref{PROLAPLA} by Laplace method.

\begin{theor} \label{LAPLA}
Let $U={\cal U}_\varepsilon(1/3, 1/3).$ There is an $\varepsilon>0$ such that \\
\hspace*{4cm} $\sum_{\bar{\lambda}  , \bar{\omega} \in U} N(\bar{w}, \bar{l})/N_0 \,< \, C \cdot 3^{(1-\gamma)n}.$ \\  
\end{theor}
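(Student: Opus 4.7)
The strategy is Laplace's method (saddle-point / Gaussian approximation) applied to the restricted sum, combining the pointwise bound of Corollary~\ref{PSIINU} with the local maximum of Lemma~\ref{LOCMAX}. By Corollary~\ref{PSIINU}, uniformly in $\bar{\omega},\bar{\lambda}\in U$,
\[N(\bar{w},\bar{l})/N_0 \;\le\; O(1/n^2)\,\Psi(\bar{\omega},\bar{\lambda},\bar{a},\bar{c})^n,\]
where $\bar{a},\bar{c}$ are the functions of $(\bar{\omega},\bar{\lambda})$ determined implicitly by $Q(a_i)=\lambda_ik\gamma/\omega_i$, $c_0=1$ and $R(c_1,c_2)=(k\lambda_1,k\lambda_2)$, smooth on $U$ by Lemma~\ref{KLOC} and the implicit function theorem. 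Hence it suffices to show $\sum_{\bar{\omega},\bar{\lambda}\in U}\Psi^n\le O(n^2)\cdot 3^{(1-\gamma)n}$.

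Next I would view $\Psi$ as a smooth function of the four free coordinates $\bar{p}=(\omega_1,\omega_2,\lambda_1,\lambda_2)$ (eliminating $\omega_0,\lambda_0$ by the unit-sum constraints). Lemma~\ref{LOCMAX} says $\bar{p}_0=(1/3,1/3,1/3,1/3)$ is a local maximum of value $3^{1-\gamma}$; granting that the Hessian there is strictly negative definite, Taylor expansion gives
\[\log\Psi(\bar{p})=(1-\gamma)\log 3-\tfrac12\,\bar{u}^T H\bar{u}+O(|\bar{u}|^3),\qquad \bar{u}=\bar{p}-\bar{p}_0,\ H\succ 0.\]
Partition $U$ into an inner region $A_n=\{|\bar{u}|\le n^{-1/2+\delta}\}$ for a small $\delta>0$ and its complement. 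Outside $A_n$ the quadratic term yields $\Psi^n\le 3^{(1-\gamma)n}e^{-cn^{2\delta}}$, so those terms together contribute $o(3^{(1-\gamma)n})$. Inside $A_n$ the admissible lattice of $(\bar w,\bar l)$ has spacing $\Theta(1/n)$ in each of the four free coordinates, so the sum is a Riemann approximation of the Gaussian integral
\[3^{(1-\gamma)n}\cdot n^4\int_{A_n}\exp\!\bigl(-\tfrac n2\bar{u}^T H\bar{u}\bigr)\,d\bar{u}\;=\;3^{(1-\gamma)n}\cdot n^4\cdot n^{-2}\cdot\frac{(2\pi)^2}{\sqrt{\det H}}\;=\;O(n^2)\cdot 3^{(1-\gamma)n},\]
after the substitution $\bar{v}=\sqrt n\,\bar{u}$. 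Multiplying by the $O(1/n^2)$ of Corollary~\ref{PSIINU} gives the theorem.

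The main obstacle is verifying strict positive-definiteness of $H$: Lemma~\ref{LOCMAX} only asserts a local maximum, which a priori permits degenerate directions. I would compute $H$ explicitly by exploiting the envelope identity that $\Psi$ is stationary in $\bar{a},\bar{c}$ at the symmetric values $(s,s,s,1,1)$ (by the very definitions of $Q$ and $R$), so the implicit dependence of $\bar{a},\bar{c}$ on $(\bar\omega,\bar\lambda)$ contributes nothing at first order, and $H$ reduces to the Hessian of $\Psi$ in $(\bar{\omega},\bar{\lambda})$ alone with $(\bar{a},\bar{c})$ frozen at those symmetric values. The resulting $4\times 4$ matrix can be block-diagonalized using the $S_3$-symmetry of the problem and then inspected coefficient-by-coefficient. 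Secondary technicalities---uniform control of the cubic remainder over $A_n$ and the replacement of the Riemann sum by the integral up to a constant factor---are routine once $H\succ 0$ is in hand.
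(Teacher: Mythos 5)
Your overall Laplace-method skeleton matches the paper's: Corollary \ref{PSIINU} for the pointwise bound, Taylor expansion of $\ln\Psi$ around the symmetric point, comparison of the lattice sum (spacing $\Theta(1/n)$ in each of the four free coordinates) with a Gaussian integral of size $\Theta(n^{-2})$, giving $O(n^2)\cdot 3^{(1-\gamma)n}$ which the $O(1/n^2)$ factor absorbs. You also correctly identify the crux: strict negative definiteness of the Hessian of $h(\bar\omega,\bar\lambda)=\ln\Psi\bigl(\bar\omega,\bar\lambda,\bar a(\bar\omega,\bar\lambda),\bar c(\bar\omega,\bar\lambda)\bigr)$ at $\omega_i=\lambda_i=1/3$, since ``local maximum'' alone permits degenerate directions.

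However, your proposed way of establishing this is wrong. The envelope identity (stationarity of $\ln\Psi$ in $\bar a,\bar c$ at the chosen parameters, which is exactly what the defining equations $Q(a_i)=\lambda_ik\gamma/\omega_i$ and $R(c_1,c_2)=(k\lambda_1,k\lambda_2)$ express) kills the implicit contributions only in the \emph{first} derivatives; it does not make the Hessian of $h$ equal to the Hessian of $\Psi$ with $(\bar a,\bar c)$ frozen at $(s,s,s,1,1,1)$. Since $(\bar a,\bar c)$ is the minimizer, the true Hessian is the Schur complement $H_{pp}-H_{pa}H_{aa}^{-1}H_{ap}$, and the correction term is essential here. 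Indeed, with $a_i=s$, $c_i=1$ frozen, the only $\bar\lambda$-dependence of $\Psi$ is the factor $\bigl(\prod_i\lambda_i^{\lambda_i}\bigr)^{k\gamma}$, and $\sum_i\lambda_i\ln\lambda_i$ has a strict \emph{minimum} at $\lambda_i=1/3$; so the frozen Hessian is positive in the $\lambda$-directions and your claimed reduction would leave you trying to prove negative definiteness of a matrix that is not negative definite. (Freezing only gives $\mathrm{Hess}\,h\preceq H_{pp}$, which is useless when $H_{pp}$ is indefinite.) The paper instead differentiates the defining equations to get $a_i'$ and $c_i'$ explicitly (these produce the $D=3/C(s)$ entries and the $\tfrac{8}{3}k\gamma$ terms, e.g.\ ${c_1'}_{\lambda_1}=6$, ${c_2'}_{\lambda_1}=3$ at the symmetric point), assembles the full $4\times4$ Hessian $H$ of the composite function, and checks positive definiteness of $-H$ via the leading principal minors. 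Without this second-order bookkeeping of the implicit parameters your argument does not close; with it, the rest of your outline (inner/outer split or, as in the paper, a $\delta$-perturbed quadratic bound valid on all of $U$ plus the Riemann-sum comparison) goes through.
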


\noindent
{\it Proof of Theorem \ref{EX2EI}.}
Pick $\varepsilon $ such that Theorem \ref{LAPLA} applies.  
Use Corollary \ref{SUMOPT}, Corollary \ref{SUMOMEGA},  and Theorem \ref{LAPLA}  and the sum of all terms $N(\bar{w}, \bar{l})/N_0$ 
with $l_i>0$  is $\le C \cdot 3^{(1- \gamma)n}.$  Terms with an  $l_i=0$ do not add substantially to the sum (proof omittted.) 
 \qed

\section{Proof  of Theorem \ref{OPT}} \label{PROOPT}

We use the notation  $\bar{x}= (x_0,x_1, x_2), \bar{y}=(y_0, y_1, y_2)$ 
and define
\begin{eqnarray}
\mbox{OPT}_1(\bar{x} ,  s)\,= \,  \frac{ q(sx_0)}{q(s)}  \,+ \, \frac{q(sx_1)}{q(s)} + \frac{ q(sx_2)}{q(s)} \nonumber \\
\mbox{OPT}_2(\bar{x}, \bar{y} ,  s)\,= \, \left( \frac{1}{x_0y_0+x_1y_1+ x_2y_2} \right)^{Q}  , \, x_0y_0+x_1y_1+ x_2 y_2 >0\nonumber \\
 \mbox{OPT}_3(\bar{y}, s)\,= \, \left( y_0+y_1+y_2 \right )^Q \,+ \, 
2\cdot \left( y_0^2 \,+\,y_1^2\,+\, y_2^2\,- y_0 y_1\,- \, y_0 y_2\, - \, y_1 y_2 \right)^{1/2\cdot Q} \nonumber \\
\mbox{OPT}(\bar{x}, \bar{y},  s) \,=\, 
\mbox{OPT}_1(\bar{x} , s) \cdot 
\mbox{OPT}_2 ( \bar{x}, \bar{y} ,  s)\cdot 
\mbox{OPT}_3(  \bar{y}, s) .  \nonumber 
\end{eqnarray}
Observe that  OPT$(1, 1, 1, 1, 1,1 , s)= 3(1/3)^Q 3^Q=3=$ OPT$_1(1, 1, 1, s),$  OPT$(1,0,0, 1,0,0, s)=1\cdot (1/1)^Q \cdot 3=3=$OPT$_3(1, 0, 0,s).$  The following lemma shows the idea of OPT.  
\begin{lem} \label{LEMOPT} 
Given $\bar{\lambda}>0$ and let  $\lambda$ be the maximum of the $\lambda_i.$  
Let  $ \,  a_i, c_i \,>0$ be such that   
$P_i:= a_ic_i= \lambda_i/\lambda.$ Then  
\begin{eqnarray}
\Psi:= \Psi(\bar{\omega}, \bar{\lambda}, \bar{a}\cdot s, \bar{c}) \le \frac{1}{3^\gamma} 
\mbox{OPT}(\bar{a}, \bar{c}, s).       \nonumber 
\end{eqnarray}
\end{lem}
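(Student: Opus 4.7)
My plan is to match the three pieces of $\Psi(\bar\omega,\bar\lambda,\bar a\cdot s,\bar c)$ with the three factors $\mathrm{OPT}_1$, $\mathrm{OPT}_2$, $\mathrm{OPT}_3/3^\gamma$ on the right-hand side. After substituting $x_i=a_is$ and $y_i=c_i$ the function reads
\[
\Psi \,=\, \prod_i\left(\frac{q(a_is)}{\omega_i\,q(s)}\right)^{\omega_i}\cdot\left[\prod_i\left(\frac{\lambda_i}{a_ic_i}\right)^{\lambda_i}\right]^{k\gamma}\cdot r(\bar c)^{\gamma}.
\]
The hypothesis $a_ic_i=\lambda_i/\lambda$ makes the middle bracket collapse: every ratio $\lambda_i/(a_ic_i)$ equals $\lambda$, and $\sum_i\lambda_i=1$ gives the bracket value $\lambda$, so the middle factor is $\lambda^{k\gamma}$. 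On the other hand $\sum_i a_ic_i=\sum_i\lambda_i/\lambda=1/\lambda$, and hence by definition $\mathrm{OPT}_2(\bar a,\bar c,s)=\lambda^{Q}=\lambda^{k\gamma}$ (using $Q=k\gamma$). Thus the middle factor equals $\mathrm{OPT}_2$ exactly.

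For the $\omega$-prefactor I would invoke the weighted AM--GM inequality in the form $\prod_i(u_i/\omega_i)^{\omega_i}\le\sum_i u_i$, valid for positive $u_i$ whenever $\sum_i\omega_i=1$ (apply weighted AM--GM to $y_i=u_i/\omega_i$). Taking $u_i=q(a_is)/q(s)$ yields precisely the bound $\mathrm{OPT}_1(\bar a,s)$.

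For the remaining factor $r(\bar c)^\gamma$ I would use the complex representation~(\ref{RCOM}). Since $\mathbf{w}_2=\overline{\mathbf{w}_1}$ and the $c_i$ are real, the last two summands of $3r(\bar c)$ are complex conjugates, so
\[
3r(\bar c)\,=\,(c_0+c_1+c_2)^k+2\,\mathrm{Re}\bigl[(c_0+\mathbf{w}_1c_1+\mathbf{w}_2c_2)^k\bigr].
\]
Bounding the real part by the modulus and evaluating $|c_0+\mathbf{w}_1c_1+\mathbf{w}_2c_2|^2$ by multiplication with its conjugate (collecting terms through $\mathbf{w}_1+\mathbf{w}_2=-1$ and $\mathbf{w}_1\mathbf{w}_2=1$) gives the clean expression $c_0^2+c_1^2+c_2^2-c_0c_1-c_0c_2-c_1c_2$. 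Consequently
\[
3r(\bar c)\,\le\,(c_0+c_1+c_2)^k+2\bigl(c_0^2+c_1^2+c_2^2-c_0c_1-c_0c_2-c_1c_2\bigr)^{k/2}.
\]
Since $\gamma\in(0,1]$, the map $x\mapsto x^\gamma$ is subadditive on $\mathbb{R}_{\ge 0}$, so $(A+B+B)^\gamma\le A^\gamma+2B^\gamma$. Applying this and $Q=k\gamma$ converts the previous display into $(3r(\bar c))^\gamma\le \mathrm{OPT}_3(\bar c,s)$, i.e.\ $r(\bar c)^\gamma\le \mathrm{OPT}_3(\bar c,s)/3^\gamma$. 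Multiplying the three bounds together then proves the lemma.

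The one step I expect to need care is the modulus computation $|c_0+\mathbf{w}_1c_1+\mathbf{w}_2c_2|^2=c_0^2+c_1^2+c_2^2-c_0c_1-c_0c_2-c_1c_2$: it is this cube-roots-of-unity identity that is responsible for $\mathrm{OPT}_3$ having a closed form at all. Every other step — the collapse enforced by $a_ic_i=\lambda_i/\lambda$, weighted AM--GM, and the subadditivity of $x^\gamma$ for $\gamma\le 1$ — is routine.
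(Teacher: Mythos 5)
Your proposal is correct and follows essentially the same route as the paper's proof: exact equality of the middle factor with $\mathrm{OPT}_2$ via $a_ic_i=\lambda_i/\lambda$, weighted AM--GM for the $\omega$-factor against $\mathrm{OPT}_1$, and the triangle inequality plus the roots-of-unity modulus identity $|c_0+\mathbf{w}_1c_1+\mathbf{w}_2c_2|^2=\sum c_i^2-\sum_{i<j}c_ic_j$ together with subadditivity of $x^\gamma$ for the $\mathrm{OPT}_3$ bound. No gaps.
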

\begin{proof}
The factors of $\Psi$ one by one: The first factor: 
The AGM-inequality  gives \\
$ \prod_{i=0, 1, 2} \left(\frac{q(a_i s)}{\omega_i q(s)} \right)^{\omega_i} \,\le \, \mbox{OPT}_1(\bar{a}, s). $
(Applies for $\omega_i=0$, too.)

The second factor: We have $P_0+P_1+P_2= a_0c_0+ a_1c_1+ a_2c_2 = 1/\lambda$ 
and $\lambda_i/a_ic_i= \lambda$ for $i=0, 1, 2.$ Recall $Q=k\gamma, $ and the second factor of $\Psi \,=$ 
\begin{eqnarray}
\prod_{i=0, 1, 2} \left( \frac{\lambda_is}{a_isc_i}\right)^{\lambda_ik\gamma} \,= \, \lambda^{k \gamma} = 
\left(\frac{1}{a_0c_0+a_1c_1+a_2c_2}\right)^Q\,= \mbox{OPT}_2(\bar{a}, \bar{c}, s). \nonumber 
\end{eqnarray}

The third factor:  
We let $C_1=\sum_i c_i$ and $C_2= \sum_i c_i^2 -c_0c_1-c_0c_2-c_1c_2.$ 
Then $r(\bar{c}) =|r(\bar{c})| \le (1/3)(C_1^k\,+ \,2 C_2^{k/2})$ by the triangle inequality and as
$|c_0 +{\bf w_1}c_1 + {\bf w_2}c_2|= [(c_0-1/2\cdot (c_1+c_2))^2+ (\sqrt{3}/2( c_1-c_2))^2]^{1/2}=C_2^{1/2}.$  
Then 
$|r(\bar{c})|^\gamma  \le 1/3^\gamma (C_1^k+2C_2^{k/2})^\gamma \le 1/3^\gamma (C_1^{k\gamma}+2^\gamma C_2^{\gamma k/2})\le1/3^\gamma \mbox{OPT}_3(\bar{c}, s)$
as $Q=k\gamma,$ and as $x^\gamma$ is concave (by $\gamma <1)$ we have $(y+z)^\gamma \le y^\gamma + z^\gamma.$  
\end{proof} 

The following picture shows OPT$(1, a, a, 1,  c, c, s), \, \, 0 \le a, c \le 1.$  The $\le 3-$area  is dark.
We have a path from $a=c=0$ to $a=c=1$ through this area. Therefore, for  all $P$ with $0 \le P \le 1$ we have
$0 \le a,  c\le 1$ with $P=ac$ such that  OPT$(1, a, a, 1,  c, c, s) \le 3.$  In the notation of Lemma \ref{LEMOPT}
this corresponds to $\lambda_0 \ge \lambda_1=\lambda_2$ (and visualizes   Theorem \ref{OPT}  for this case.)     
The following four lemmas are the technical core of our proof. 

\begin{figure}[th]
\centering
\includegraphics[width=0.49\textwidth]{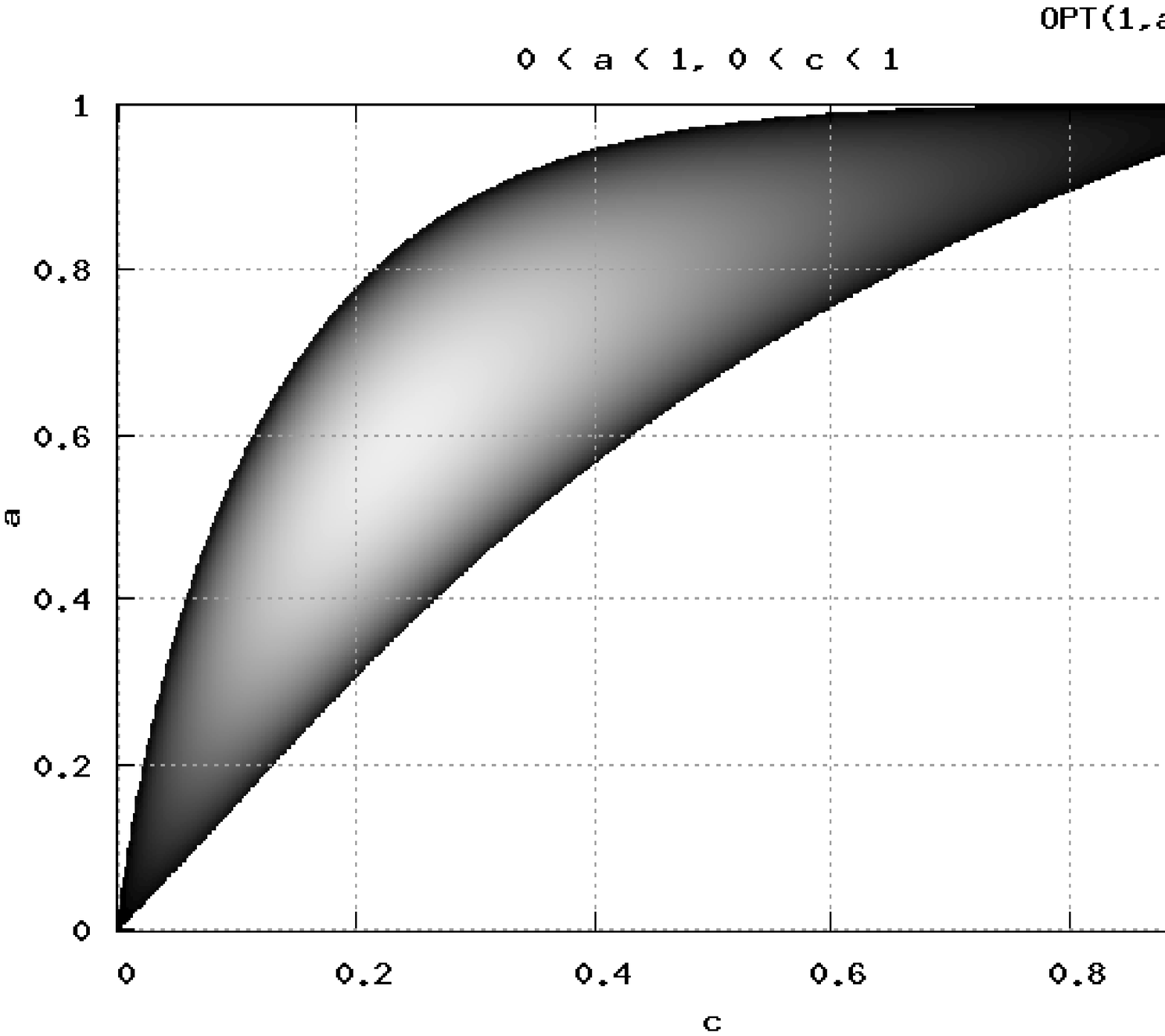} \hfill
\includegraphics[width=0.49\textwidth]{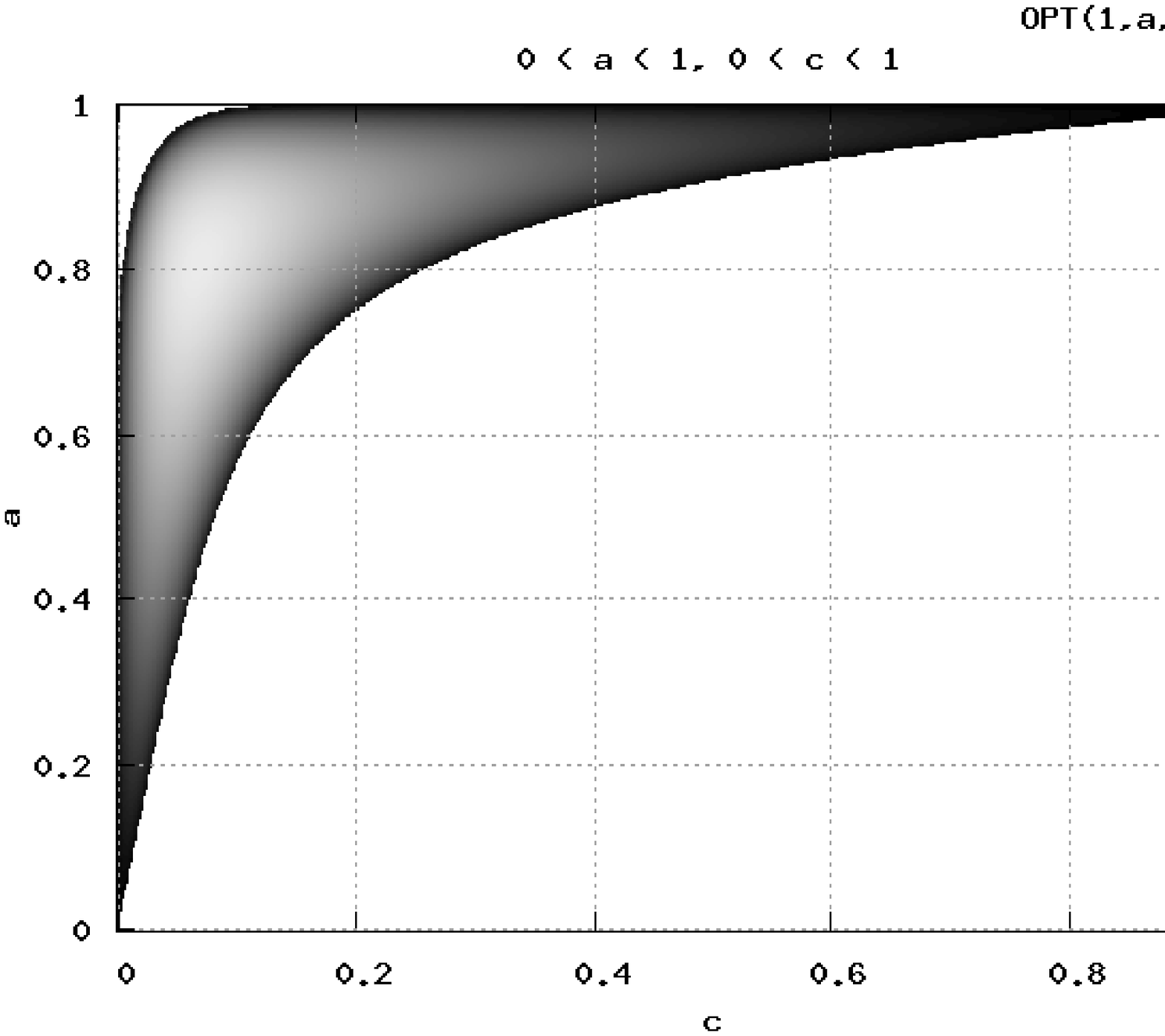}
\caption{OPT$(1,a,a,1,c,c,s)$ over the rectangle $0\le a \le 1, 0 \le c \le 1$ for $s=3$ and $s=14$.}
\end{figure}

\begin{lem} \label{lem1}
Let $s\ge 8, A(x)\,=\,A(x, s)  := \,(7/10)Q \cdot x.$ \\
(a) OPT$(y)\,\,:=$  OPT $\left( 1, A(y), \, A(y), \, 1, y, \, y, \, s \right)$ is strictly decreasing for  $   0 \le y \le 1/(2Q).$ The start value is
OPT$(0)=3.$ \\
(b) Given $0\le y \le 1/(2Q),$ OPT$(z):= $ OPT$(1, A(y+z), \, A(y-z), \,1,  y+z, \, y-z, \, s)$ is decreasing in $0\le z \le y.$ 
\end{lem}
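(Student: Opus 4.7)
The plan is to substitute the specified vectors into $\mathrm{OPT}$ and prove monotonicity of the resulting one-variable functions by logarithmic differentiation. For part (a), with $\bar x=(1,A(y),A(y))$ and $\bar y=(1,y,y)$, direct substitution gives the closed forms
\[
\mathrm{OPT}_1 = 1+\tfrac{2\,q(0.7sQy)}{q(s)},\ \mathrm{OPT}_2 = (1+1.4Qy^2)^{-Q},\ \mathrm{OPT}_3 = (1+2y)^Q + 2(1-y)^Q,
\]
using $2A(y)y=1.4Qy^2$ for $\mathrm{OPT}_2$ and the identity $y_0^2+2y_1^2-2y_0y_1-y_1^2=(1-y)^2$ for $\mathrm{OPT}_3$. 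At $y=0$ the product equals $1\cdot 1\cdot 3=3$, giving the start value. For strict decrease I would compute
\[
(\log\mathrm{OPT}_2)'=-\frac{2.8\,Q^2 y}{1+1.4\,Qy^2},\quad (\log\mathrm{OPT}_3)'=\frac{2Q[(1+2y)^{Q-1}-(1-y)^{Q-1}]}{(1+2y)^Q+2(1-y)^Q},
\]
and $(\log\mathrm{OPT}_1)'=1.4\,sQ(e^{0.7sQy}-1)/[q(s)+2q(0.7sQy)]$, and show the sum is negative on $(0,1/(2Q)]$. Substituting $u=Qy\in(0,1/2]$, the first two are asymptotically $-2.8Qu$ and $2Q(e^{2u}-e^{-u})/(e^{2u}+2e^{-u})$; a short calculus check shows the ratio of their magnitudes stays bounded by a constant strictly less than $1$ on $(0,1/2]$ (about $5/7$ at $u=0$ and $0.77$ at $u=1/2$). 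The $\mathrm{OPT}_1$ term is exponentially suppressed: $0.7sQy\le 0.35s$ forces $q(0.7sQy)/q(s)=O(e^{-0.65s})$. For $s\ge 8$ (where $Q\ge s$), the $\mathrm{OPT}_2$ decrease dominates and the sum is strictly negative.

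For part (b), substitute $\bar x=(1,A(y+z),A(y-z))$ and $\bar y=(1,y+z,y-z)$. Each factor is even in $z$:
\[
\mathrm{OPT}_1 = 1+\tfrac{q(sA(y+z))+q(sA(y-z))}{q(s)},\ \mathrm{OPT}_2 = (1+1.4Q(y^2+z^2))^{-Q},\ \mathrm{OPT}_3 = (1+2y)^Q+2((1-y)^2+3z^2)^{Q/2},
\]
using $(y+z)^2+(y-z)^2=2(y^2+z^2)$ and $\sum y_i^2-\sum_{i<j}y_iy_j=(1-y)^2+3z^2$. By evenness all three $z$-derivatives vanish at $z=0$, so for $z\in(0,y]$ it suffices to show the sum of the log-derivatives is negative. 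The analysis mirrors part (a): $\tfrac{d}{dz}\log\mathrm{OPT}_2=-2.8Q^2 z/(1+1.4Q(y^2+z^2))$ supplies the dominant decay; $\tfrac{d}{dz}\log\mathrm{OPT}_3=6Qz((1-y)^2+3z^2)^{Q/2-1}/\mathrm{OPT}_3$ stays modest because $(1-y)^2+3z^2$ is close to $1$; and $\tfrac{d}{dz}\log\mathrm{OPT}_1$ is exponentially suppressed because $sA(y+z)\le 0.7s$ throughout. The $\mathrm{OPT}_2$ decay again wins.

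The main obstacle is not the strategy but turning the asymptotic dominations into honest inequalities uniform in $s\ge 8$, $y\in(0,1/(2Q)]$, and $z\in(0,y]$. The worst case is the boundary $s=8$, where $Q$ is only marginally larger than $s$ and the exponential factor $e^{-0.65s}$ is weakest; the coefficient $7/10$ in $A(x)$ appears to be calibrated precisely so that $\mathrm{OPT}_1$'s growth stays controlled even there. I would handle this via explicit numerical bounds at $s=8$ combined with monotonicity in $s$ for the rest of the range.
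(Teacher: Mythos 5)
Your proposal takes essentially the same route as the paper's proof: write $\mathrm{OPT}$ as the product of the three factors, log-differentiate, and show that the negative contribution of the $\mathrm{OPT}_2$ factor (of order $2.8\,Q^2y$ in (a), $2.8\,Q^2z$ in (b)) dominates the sum of the exponentially suppressed $\mathrm{OPT}_1$ term and the $\mathrm{OPT}_3$ term, with the tightest case at $s=8$, $y=1/(2Q)$. The only difference is in how the quantitative margin is certified: the paper splits the $\mathrm{OPT}_2$ term into explicit shares ($\tfrac{7}{10}$ and $\tfrac{33}{10}$ of $A'y$ in (a), $\tfrac{9}{10}$ and $\tfrac{1}{10}$ of $\tfrac{28}{10}Qz$ in (b)) and uses convexity to reduce each comparison to a one-variable endpoint inequality in $s$, which is precisely the "honest inequalities uniform in $s\ge 8$" that your sketch defers to numerical checks at $s=8$ plus monotonicity in $s$; your asymptotic ratio bound (maximum about $0.78<1$ on $0<Qy\le 1/2$) and the $e^{-0.65s}$, $e^{-0.3s}$ suppression estimates are consistent with what those endpoint checks deliver.
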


\begin{lem} \label{lem2}
Let  $ s\, \ge 7 \, \,$, and  $  \frac{7}{20} \le A \le 1-\frac{1}{Q}.$ Then \\
OPT$(z):= $ OPT$\left(1, A,\, A, \, 1, 1/(2Q)+z, \, 1/(2Q)-z \,  , \, s \right) \le 3 \,- \, \delta$ for $0 \le z \le 1/(2Q).$ 
\end{lem}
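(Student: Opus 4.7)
My plan is to exploit the fact that this parameter slice has very special structure: with $\bar{x}=(1,A,A)$ and $\bar{y}=(1,1/(2Q)+z,1/(2Q)-z)$, direct substitution into the definitions of $\mbox{OPT}_1,\mbox{OPT}_2,\mbox{OPT}_3$ gives
\[
\mbox{OPT}_1 = 1 + 2\,q(sA)/q(s), \qquad \mbox{OPT}_2 = (1 + A/Q)^{-Q},
\]
both independent of $z$ (the second because $x_0y_0+x_1y_1+x_2y_2=1+A/Q$), and
\[
\mbox{OPT}_3 = (1+1/Q)^Q + 2\bigl[(1-1/(2Q))^2 + 3z^2\bigr]^{Q/2}.
\]
The key observation is that $y_0+y_1+y_2 = 1+1/Q$ is $z$-independent, while the quantity $C_2 = y_0^2+y_1^2+y_2^2-y_0y_1-y_0y_2-y_1y_2$ simplifies to $(1-1/(2Q))^2 + 3z^2$, which is strictly increasing in $z^2$. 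Since $u\mapsto u^{Q/2}$ is strictly increasing on $u>0$, the whole function $\mbox{OPT}(z)$ is strictly increasing on $[0,1/(2Q)]$, and the task reduces to proving $\mbox{OPT}(1/(2Q)) \le 3 - \delta$, i.e., the bound at $\bar y=(1,1/Q,0)$.

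At this endpoint I would factor $\mbox{OPT}(1/(2Q)) = F(A,s)\cdot G(Q)$, where
\[
F(A,s) = \bigl[1+2q(sA)/q(s)\bigr](1+A/Q)^{-Q}, \quad G(Q) = (1+1/Q)^Q + 2(1-1/Q+1/Q^2)^{Q/2}.
\]
First I would bound $G$: a one-variable calculus check shows $G(Q)$ admits a uniform upper bound for $Q\ge Q(7)>7$, and $G(Q)\to e+2e^{-1/2}\approx 3.93$ as $Q\to\infty$; in particular $G(Q) < 4$ throughout the admissible range.

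The main work is then to control $F(A,s)$ uniformly. I would compute
\[
\frac{\partial \log F}{\partial A} \;=\; \frac{2s\,q'(sA)}{q(s)+2q(sA)} \;-\; \frac{1}{1+A/Q},
\]
and argue that $\log F(\cdot,s)$ has no interior local maximum on $[7/20,\,1-1/Q]$ (the most natural route is to show it is convex in $A$, using the convexity of $q$ and $q'$). Then $\max_A F(A,s) = \max\{F(7/20,s),\,F(1-1/Q,s)\}$, and each endpoint is tractable: at $A=7/20$ the ratio $q(sA)/q(s)$ is exponentially small (so the first factor is close to $1$) and $(1+A/Q)^{-Q}\approx e^{-7/20}\approx 0.70$; at $A=1-1/Q$ the relation $Q=sq'(s)/q(s)$ yields $q(sA)/q(s)\approx e^{-1}$, so $F\approx(1+2/e)e^{-1}\approx 0.64$. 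Either endpoint, multiplied by the uniform bound on $G$, is strictly less than $3$, producing a positive $\delta$.

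The main obstacle is the convexity (or endpoint-reduction) step for $\log F$, because it must be made uniform in $s\ge 7$. For large $s$ the Taylor expansions in $1/Q$ cleanly verify the needed inequalities, but near $s=7$ the quantity $Q\approx 7.045$ is only moderately large and the leading-order approximations are not sharp enough. I would therefore split into a tail $s\ge s_0$, handled by expanding $\log F$ and $G$ to sufficient order in $1/Q$ using $Q=sq'(s)/q(s)$, and a compact range $s\in[7,s_0]$, handled by explicit monotonicity/convexity estimates on $q$ and $q'$. Combining the two endpoint bounds on $F$ with the uniform bound on $G$ then yields a single $\delta>0$ valid throughout the hypotheses of the lemma.
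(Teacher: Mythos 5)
Your two reduction steps are exactly the paper's: $\mathrm{OPT}(z)$ is increasing in $z$ (only the summand $2\bigl((1-1/(2Q))^2+3z^2\bigr)^{Q/2}$ depends on $z$), so the problem sits at $z=1/(2Q)$, i.e.\ at $\bar y=(1,1/Q,0)$; and the maximum over $A\in[7/20,\,1-1/Q]$ is attained at an endpoint (the paper gets this from a single sign change of $\frac{d}{dA}\ln\mathrm{OPT}$ — one fraction increasing, the other decreasing — your convexity of $\log F$ is a workable variant). The genuine gap is in the final quantitative step, where you bound $F(A,s)=\bigl[1+2q(sA)/q(s)\bigr](1+A/Q)^{-Q}$ and $G(Q)=(1+1/Q)^Q+2(1-1/Q+1/Q^2)^{Q/2}$ \emph{separately} and multiply. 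These two suprema live at opposite ends of the $s$-range: any uniform bound on $G$ must be at least its limit $e+2e^{-1/2}\approx 3.93$ (attained as $s\to\infty$), while at the endpoint $A=1-1/Q$ the factor $F$ is largest at $s=7$ (where $Q\approx 7.04$), namely $F\le 1.7404\cdot(1+1/Q-1/Q^2)^{-Q}\approx 0.78$ — not the $\approx 0.64$ you quote, which is only the $s\to\infty$ value and is not an upper bound over $s\ge 7$. Hence $\sup_s F\cdot\sup_s G\approx 0.78\cdot 3.93>3$, and no positive $\delta$ comes out of the decoupled estimate; the true product at fixed $s=7$ is about $2.96$, so the margin exists but is invisible once you separate the factors.

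To close this you must keep the $s$-dependence of the two factors coupled, which is precisely what the paper does: it folds $\mathrm{OPT}_2$ into the two summands of $\mathrm{OPT}_3$ and bounds $\bigl(\frac{1+1/Q}{1+A/Q}\bigr)^Q$ and $\frac{2(1-1/Q+1/Q^2)^{Q/2}}{(1+A/Q)^Q}$ individually; each of these regrouped quantities is monotone in $Q$, so both may be evaluated at $Q=7$, giving $1.741\cdot(1.135+0.565)\approx 2.96<3$ at $A=1-1/Q$ (and $\approx 2.87$ at $A=7/20$, where your decoupled bound does happen to suffice since $F\le 1.03\cdot e^{-0.33}\lesssim 0.73$ even at $s=7$). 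Your proposed split into a tail $s\ge s_0$ and a compact range $[7,s_0]$ would therefore have to be aimed at this coupled endpoint estimate, not only at making the convexity of $\log F$ uniform; as written, the concluding multiplication fails for $s$ near $7$ at the endpoint $A=1-1/Q$.
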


\begin{lem} \label{lem3}
Let $\, s\, \ge \, 7,$ and $  1/(2Q) \le C \le 1/2.$ Then \\
OPT$(z):= $ OPT$\left(1, 1-1/Q,\, 1-1/Q, \,1,  C+z, \, C-z\,, s \right)\, \le  3 \,- \, \delta$   for $0 \le z \le C.$ 
\end{lem}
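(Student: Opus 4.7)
The strategy mirrors Lemmas~\ref{lem1} and~\ref{lem2}: exploit the product structure of OPT to reduce to a one-variable inequality, and then bound the resulting function uniformly in $s$.

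First I decouple $z$ from most of OPT. With $\bar x=(1,A,A)$, $A=1-1/Q$, and $\bar y=(1,C+z,C-z)$, the factor $\mathrm{OPT}_1$ depends only on $\bar x$; the factor $\mathrm{OPT}_2$ depends on $\bar y$ only through $x_0y_0+x_1y_1+x_2y_2=1+2AC$; and in $\mathrm{OPT}_3$ the first summand $(y_0+y_1+y_2)^Q=(1+2C)^Q$ is also independent of $z$. A direct computation of the quadratic form in the second summand yields
\[
y_0^2+y_1^2+y_2^2-y_0y_1-y_0y_2-y_1y_2 \;=\; (1-C)^2 + 3z^2,
\]
which is strictly increasing on $[0,C]$. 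Therefore $\mathrm{OPT}(z)$ is strictly increasing on $[0,C]$ and is maximized at $z=C$, i.e., at $\bar y=(1,2C,0)$. The problem reduces to showing, uniformly for $C\in[1/(2Q),1/2]$ and $s\ge 7$, that
\[
g(C) := \left(1+\frac{2q(sA)}{q(s)}\right)\cdot\frac{(1+2C)^Q + 2(1-2C+4C^2)^{Q/2}}{(1+2AC)^Q} \;\le\; 3-\delta.
\]

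Next I would bound each factor of $g$ using the special form $A=1-1/Q$. Since $(1+2C)-(1+2AC)=2C/Q$, one gets $(1+2C)^Q/(1+2AC)^Q=(1+2C/(Q(1+2AC)))^Q\le\exp(2C/(1+2AC))\le e^{1/2}$ for $C\le 1/2$; the second summand $2((1-2C+4C^2)/(1+2AC)^2)^{Q/2}$ vanishes as $C\to 1/2$ and tends to $2e^{-3/2}$ at $C=1/(2Q)$; and $q(sA)/q(s)=q(s-s/Q)/q(s)\le 1/e+o(1)$, since $s/Q(s)\to 1$ and $q$ grows exponentially. Combining these estimates yields $g(C)\le(1+2/e)\cdot e^{1/2}(1+o(1))\approx 2.86$, comfortably below $3$.

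The main obstacle is to upgrade this asymptotic picture to a uniform bound for all $s\ge 7$, where the approximations above are not tight. My plan is to show that $g$ is monotone or at worst unimodal in $C$---by differentiating and verifying that $g'$ has a unique sign change on $[1/(2Q),1/2]$---thereby reducing the verification to the two endpoints $C=1/(2Q)$ and $C=1/2$. Each endpoint then becomes a one-variable inequality in $s$, which can be handled by the same calculus-based arguments used in Lemma~\ref{lem1}(a): Taylor expansions for $q$ at the expansion point dictated by $Q$, and standard monotonicity of $s\mapsto q(sA)/q(s)$ via $A<1$. That $\bar x$ is fixed at $(1,1-1/Q,1-1/Q)$, rather than being a free parameter as in Lemma~\ref{lem2}, is precisely what keeps the problem one-dimensional and should render the uniform estimate tractable.
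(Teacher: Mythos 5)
Your skeleton is the same as the paper's: the quadratic-form identity $(1-C)^2+3z^2$ and the resulting monotonicity in $z$ correctly reduce the lemma to $z=C$, and one then wants to reduce over $C\in[1/(2Q),1/2]$ to the two endpoints by showing the one-variable function has at most one interior extremum, which is a minimum. The genuine gap is that this second reduction --- the technical core of the lemma --- is only announced (``show that $g$ is monotone or at worst unimodal \dots by differentiating''), not carried out, and it is not a routine verification. The paper proves it by writing the stationarity condition as a comparison of the constant $A=1-1/Q$ with the ratio $\bigl[(1+2C)^{Q-1}+(4C-1)(1+4C^2-2C)^{Q/2-1}\bigr]\big/\bigl[(1+2C)^{Q-1}+(2-2C)(1+4C^2-2C)^{Q/2-1}\bigr]$, and then showing this ratio is strictly increasing in $C$ (via the rewriting $4C-1=(2-2C)+(6C-3)$, a rescaling of the exponents, and a positivity computation for the numerator of a derivative). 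Some argument of this kind is indispensable; without it you have no control over $\mathrm{OPT}$ away from the endpoints, and the asymptotic estimates in your second step say nothing about intermediate $C$.

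The endpoint estimates themselves also need more care than your sketch suggests, because the margin is only a few hundredths. Your bound $(1+2C)^Q/(1+2AC)^Q\le e^{1/2}$ for $C\le 1/2$ is false at $C=1/2$: there the ratio is $\bigl(2/(2-1/Q)\bigr)^Q$, which decreases in $Q$ \emph{to} the limit $e^{1/2}$ and equals about $1.68$ at $Q=7$ (your exponential bound, done correctly, gives $\exp(1/(2-1/Q))\approx 1.71$, since $2C/(1+2AC)=1/(2-1/Q)>1/2$); likewise the second summand does not vanish at $C=1/2$ but equals $2/(2-1/Q)^Q\approx 0.026$ at $Q=7$. With correct constants the value at $s=7$, $C=1/2$ is about $2.96$--$2.98$ (the paper gets $2.98$), not your $2.86$; the conclusion survives, but only if you also establish the monotonicity in $s$ (equivalently in $Q=Q(s)\ge s$) of each factor, which is what lets a single evaluation at $Q=7$ cover all $s\ge 7$ --- these monotonicity claims should be made and justified explicitly. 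Finally, the endpoint $C=1/(2Q)$ is exactly the boundary case $A=1-1/Q$ treated at the end of Lemma~\ref{lem2}, so it can simply be cited rather than re-derived by a new one-variable estimate.
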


\begin{lem} \label{lem4}
Let $s\ge 15$ and $  A(x)\,=\, A(x, s) \, := \, 1+ 7/(10Q)\cdot x \,- \, 7/(10Q).$ \\
(a) OPT$(y):= $OPT$\left(1, A(y), A(y), 1, y, y, s \right) $ is strictly increasing in $4/10 \le y \le 1.$ The final value is 
OPT$(1)=3.$ \\
(b) Given $4/10 \le y \le 1,$ OPT$(z):=$OPT$\left(1, A(y+z), A(y-z),1,  y+z \, , y-z \, , \, s \right)   $ is decreasing in $0 \le z \le \min\{y, 1-y\}.$
\end{lem}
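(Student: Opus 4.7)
The plan is to substitute the specific choices of parameters into the definition of $\mathrm{OPT}$ and study the resulting one-variable functions directly. Writing $\bar x=(1,\alpha_1,\alpha_2)$ and $\bar y=(1,\beta_1,\beta_2)$, the three factors simplify to
\[
\mathrm{OPT}_1=\frac{q(s)+q(s\alpha_1)+q(s\alpha_2)}{q(s)},\qquad \mathrm{OPT}_2=(1+\alpha_1\beta_1+\alpha_2\beta_2)^{-Q},
\]
\[
\mathrm{OPT}_3=(1+\beta_1+\beta_2)^Q+2\bigl(1-\beta_1-\beta_2+\beta_1^2+\beta_2^2-\beta_1\beta_2\bigr)^{Q/2}.
\]
For part (a) I set $\alpha_1=\alpha_2=A(y)$ and $\beta_1=\beta_2=y$, which collapses $\mathrm{OPT}_3$ to $(1+2y)^Q+2(1-y)^Q$ and gives $\mathrm{OPT}(1)=3$ at once. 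For part (b) I set $\alpha_i=A(y+(-1)^{i+1}z)$ and $\beta_i=y+(-1)^{i+1}z$; since $z\mapsto -z$ only swaps the indices $1$ and $2$ while every factor is symmetric in them, $\mathrm{OPT}(z)$ is an even function of $z$, and a short expansion yields $\alpha_1\beta_1+\alpha_2\beta_2=2A(y)y+\tfrac{7}{5Q}z^2$ and $1-\beta_1-\beta_2+\beta_1^2+\beta_2^2-\beta_1\beta_2=(1-y)^2+3z^2$, so $\mathrm{OPT}(z)$ depends on $z$ only through $u:=z^2$.

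For part (a), I set $L(y):=\log\mathrm{OPT}(y)$ and aim to show $L'(y)\ge 0$ on $[4/10,1]$ with equality only at $y=1$. Using $A'(y)=7/(10Q)$ together with the defining identity $Q=sq'(s)/q(s)$, the three summands of $L'(1)$ can be evaluated exactly and sum to $0$: the $\mathrm{OPT}_1$-contribution equals $+\tfrac{7}{15}$, the $\mathrm{OPT}_2$-contribution equals $-\tfrac{7}{15}-\tfrac{2Q}{3}$, and the $\mathrm{OPT}_3$-contribution equals $+\tfrac{2Q}{3}$. Thus the slope $7/(10Q)$ in $A(y)$ is precisely what makes $y=1$ a critical point of $L$. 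For $y\in[4/10,1)$ I would use that $(1+2y)^{Q-1}$ dominates $(1-y)^{Q-1}$ to approximate the $\mathrm{OPT}_3$-contribution by $2Q/(1+2y)$, cancel its leading part against the $-2Q/(1+2y)$-piece of the $\mathrm{OPT}_2$-contribution, and check that the remaining subleading terms together with the near-constant $\mathrm{OPT}_1$-contribution are strictly positive for $s\ge 15$.

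For part (b), in the variable $u=z^2$ the factor $\mathrm{OPT}_2$ is strictly decreasing with logarithmic derivative $-\tfrac{7/5}{1+2A(y)y+7u/(5Q)}$ (bounded away from $0$), while $\mathrm{OPT}_3$ is strictly increasing but has logarithmic derivative that is exponentially small in $Q$ because $(1+2y)^Q$ dominates the second summand for $y\ge 4/10$, and $\mathrm{OPT}_1$ is strictly increasing by convexity of $q$ with logarithmic derivative asymptotically of order $49s^2/(100Q^2)\sim 49/100$ times a bounded $q''$-ratio. Because for $s\ge 15$ the $\mathrm{OPT}_2$-decay ($\ge 7/15$ throughout the range considered) strictly exceeds the $\mathrm{OPT}_1$-growth (asymptotically $\le 49/300$), the inequality $(d/du)\log\mathrm{OPT}(u)\le 0$ follows by comparing these rates uniformly on $0\le u\le(\min\{y,1-y\})^2$.

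The main obstacle is the numerical fine-tuning: the competing rates in both parts match at leading order in $1/Q$, so the required inequalities close only through explicit subleading estimates, which is why the bound $s\ge 15$ and the specific constants $7/(10Q)$ and $4/10$ are essentially tight. A clean proof will likely split $[4/10,1]$ into two or three sub-intervals, treat a neighborhood of $y=1$ via a second-order expansion (exploiting $L(1)=\log 3$ and $L'(1)=0$, so that $L(y)<\log 3$ follows from $L''(1)<0$), and handle the remainder by explicit elementary bounds on $q,q'$ and on the polynomials in $y$, in the same spirit as the proofs of Lemmas~\ref{lem1}--\ref{lem3}.
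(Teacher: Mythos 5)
Your setup is sound: the reduction of (b) to a function of $u=z^2$, the exact evaluation of the three contributions to $L'(1)$ (they are indeed $+\tfrac{7}{15}$, $-\tfrac{7}{15}-\tfrac{2Q}{3}$, $+\tfrac{2Q}{3}$), and the observation that the slope $7/(10Q)$ is exactly what makes $y=1$ critical all agree with the structure of the paper's argument. For part (b) your route is genuinely different from the paper's: instead of splitting the negative $\mathrm{OPT}_2$-term into two pieces and proving the two inequalities (\ref{groasymugl1}) and (\ref{groasymugl2}) separately, you compare the $\mathrm{OPT}_2$-decay rate (at least $7/15$, which does require the observation $z^2\le y(1-y)$, i.e.\ $2A(y)y+\tfrac{7}{5Q}u\le 2y$, that you use only implicitly) against the $\mathrm{OPT}_1$- and $\mathrm{OPT}_3$-growth rates. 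This comparison can be made rigorous and has room to spare, although your stated bound $49/300$ for the $\mathrm{OPT}_1$-rate presupposes $\mathrm{OPT}_1\approx 3$; since $y-z$ may be $0$, one only gets $\mathrm{OPT}_1\ge 1+2L(1-\tfrac{7}{10Q},s)\approx 2$, giving roughly $49/200$, which still lies below $7/15$ once the exponentially small $\mathrm{OPT}_3$-term is added. So (b) is a workable, arguably cleaner alternative, modulo writing out the mean-value estimate for the $q'$-difference.

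Part (a), however, has a genuine gap. After you cancel the leading $2Q/(1+2y)$ pieces of the $\mathrm{OPT}_2$- and $\mathrm{OPT}_3$-contributions, the residual inequality you must "check" is, up to exponentially small corrections,
\begin{equation*}
\frac{2K(A(y),s)}{1+2L(A(y),s)}\;>\;\frac{4y^2+4y-2}{(1+2y)^2}\qquad (4/10\le y<1),
\end{equation*}
and both sides equal $2/3$ at $y=1$. This is precisely the delicate part of the lemma — it is not a routine positivity check of "subleading terms", it is an order-one inequality that is tight at the endpoint, and your proposal offers no argument for it. The paper resolves exactly this difficulty by a different pairing of terms, proving (\ref{grosymugl1}) and (\ref{grosymugl2}) separately via convexity of $K$, concavity of the relevant quotient, and a derivative comparison at the endpoint; some equivalent quantitative argument (e.g.\ comparing derivatives of both sides at $y=1$ and establishing monotonicity/convexity over $[4/10,1]$) is missing from your sketch. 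Also note that your fallback of a second-order expansion at $y=1$ with $L''(1)<0$ only yields $\mathrm{OPT}(y)<3$ near $y=1$, which is weaker than the strict monotonicity claimed in the lemma and still leaves the interior of $[4/10,1]$ to explicit estimates you have not supplied.
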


\noindent
{\it Proof of Theorem \ref{OPT} from the preceding  lemmas.}
We prove Theorem \ref{OPT} for $\lambda_0 \ge \lambda_1 \ge \lambda_2>0 $ first. 
We denote $P_i:=\lambda_i/ \lambda_0, \, \mbox{ then }  1\ge P_1\ge P_2>0.$ 

\noindent
{\it Case 1: $P_1\, +\, P_2 \le \frac{7}{20 Q}$.} With $A(x)$ from Lemma \ref{lem1} 
we have $A(x)\cdot x \,= \, (7/10)Q \cdot x^2.$ Thus there exist $y_1 \ge  y_2$ with 
$P_i= A(y_i)\cdot y_i.$ We represent $y_i$ such that Lemma \ref{lem1} is applicable.  
\begin{eqnarray}
y\,:= \, \frac{y_1 + y_2}{2} ,\,\,  z \,:= \, \frac{y_1 - y_2}{2}.
\mbox{ Then } y_1 = y+z , y_2 = y-z , 0 \le z \le y. \nonumber 
\end{eqnarray}
We show $y\le \frac{1}{2Q}$ and Lemma \ref{lem1} applies to $y, z.$  
\begin{eqnarray}
\frac{7}{20Q}\, \ge \, P_1+P_2 \,= \, \frac{7}{10}Q (y_1^2+ y_2^2) \Longrightarrow y_1^2+ y_2^2 \,\le \, \frac{1}{2Q^2}. \nonumber \\
 (y_1+ y_2)^2 \le  2 y_1^2+2 y_2^2 \le  \frac{1}{Q^2} \mbox{ and } y= \frac{y_1+y_2}{2}\le \frac{1}{2Q}. \nonumber 
\end{eqnarray}
With $a_0=c_0=1, a_1=A(y+z), a_2=A(y-z), c_1=y+z, c_2=y-z$ we have $a_ic_i=P_i.$ By Lemma \ref{LEMOPT} 
$\Psi:= \Psi(\bar{\omega},\bar{\lambda}, \bar{a}\cdot s, \bar{c})\le 1/3^\gamma \mbox{OPT}, \,\,
\mbox{OPT}:= \mbox{OPT}(\bar{a}, \bar{c},s).$ 
If   $P_1 \ge \varepsilon $ for an  $\varepsilon>0$ we have  OPT$< 3-\delta'$ by Lemma \ref{lem1} and Theorem \ref{OPT} holds.   

For smaller $P_1$ we have OPT $ \le 3,$ approaching $3$. Only (1) of Theorem \ref{OPT} holds. 
To get (2) for small $P_1$ we argue as follows:
For  $P_1$ approaching $0$ we see that 
$c_1$ and $c_2$ approach $0 .$ We consider the treatment of the factor $r(\bar{c})$ in the  proof Lemma \ref{LEMOPT}.
Both $C_1$ and $ C_2$ from this proof  approach $1$ in this case.  Therefore we have a $\delta'>0$ such that 
$(C_1^k+2C_2^{k/2})^\gamma \le C_1^{k\gamma}+2^\gamma C_2^{k\gamma/2}- \delta'.$ As $a_0=c_0=1$ the first two factors
of  OPT
do not approach $0.$  And we have  $\Psi(\bar{\omega},\bar{\lambda}, \bar{a}\cdot s, \bar{c})\le (1/3^\gamma)(\mbox{OPT}- \delta'') \le 3^{1-\gamma}-\delta$ 
and Theorem \ref{OPT} (2) holds. 

\noindent
{\it Case 2:  $\frac{7}{20Q} \le P_1\, +\, P_2 \le \left(1\,- \, \frac{1}{Q}\right) \frac{1}{Q}.$} To 
use Lemma \ref{lem2} we define $A$ by 
$A\cdot \frac{1}{Q}=P_1+P_2.$
and $A$ is as required by Lemma \ref{lem2}. We need to find 
an appropriate $z.$ 
 As $   P_1 \ge P_2 $  there is a $  y\ge \frac{1}{2}$  such that $ P_1=A\frac{1}{Q}y $  and $ P_2=A\frac{1}{Q}(1-y).$ 
 With $  y=\frac{1}{2}+ z' $   and $ 1-y = \frac{1}{2}-z' , z' \le \frac{1}{2},$ and $ P_1= A\left(\frac{1}{2Q}+ \frac{z'}{Q}\right) \, , \, P_2\,= \, A\left(\frac{1}{2Q}-\frac{z'}{Q} \right) $
 Lemma \ref{lem2} applies with $z=z'/Q.$
Again we set $a_0=c_0=1$ and $a_1=a_2=A, c_1= \frac{1}{2Q}+ z, c_2= \frac{1}{2Q}- z.$
By Lemma \ref{LEMOPT}   $\Psi(\bar{\omega},\bar{\lambda}, \bar{a}\cdot s, \bar{c})\le 3^{1-\gamma} -\delta.$ 

\noindent
{\it Case 3:  $ \left( 1 \,- \, \frac{1}{Q}\right) \frac{1}{Q} \le P_1+P_2 \le
  1\,- \, \frac{1}{Q} . $}
Let $C$ be given by  $\left( 1 \,- \, \frac{1}{Q}\right)\cdot C \,= \, \frac{P_1 + P_2}{2}.$ 
Then  $C$ is as required by Lemma \ref{lem3}. We have a $0 \le z' \le \frac{1}{2}$ such that 
\begin{eqnarray}
P_1 \,= \, \left( 1 \,- \, \frac{1}{Q}\right)\cdot C \cdot 2\left(\frac{1}{2}\, +\, z'\right)\, \, = 
\left( 1 \,- \, \frac{1}{Q}\right)\cdot (C \, +\, 2Cz'), \nonumber \\
P_2 \,= \, \left( 1 \,- \, \frac{1}{Q}\right)\cdot (C - 2Cz').\nonumber 
\end{eqnarray}
With $z=2Cz'\le C $  Lemma \ref{lem3} applies. We set $a_0=c_0=1$ and $a_1=a_2=1-1/Q$ and $c_1=C+z, c_2=C-z$ and finish the argument
as in Case 2.

\noindent
{\it Case 4:  $ P_1 + P_2 \ge 1- \frac{1}{Q}.$ }
With $A(x)$ as from Lemma \ref{lem4} we have 
$A(x)\cdot x\,= \, \left( 1- \frac{7}{10Q} \right) x \,+ \, \frac{7}{10Q}x^2$ and
 $A(x)x $ increases  from $0$ to $1$ for $0\le x \le 1.$ 
Let $y_i$ be such that $P_i\,=\,A(y_i)\cdot y_i.$ Then $y_2 \le y_1 \le 1$ and 
we can represent $y_i$ such that Lemma \ref{lem4} 
is applicable. 
\begin{eqnarray} 
 y :=  \frac{y_1+ y_2}{2} , z:=  \frac{y_1- y_2}{2},\mbox{ and }  y_1=y+z, y_2=y-z, z \le y, 1-y. \nonumber 
\end{eqnarray}
We show that $1\ge y \ge 4/10$ and Lemma \ref{lem4} applies to $y, z.$ 
We have  $ P_1+P_2=A(y_1)y_1 + A(y_2)y_2 = y_1+y_2 + 7/(10Q)(y_1^2+y_2^2-y_1 - y_2) \le y_1+y_2.$
Therefore $y = (y_1+y_2)/2 \ge 1/2(1-1/Q)\ge 4/10 $ as $Q \ge s \ge 15.$
Setting $a_0=c_0=1, a_1=A(y+z), a_2=A(y-z), c_1=y+z, c_2=y-z$ implies the claim. 

Now, assume the $\lambda_i$ are ordered in a different way. We apply the  
permutation leading from $ \lambda_0 \ge \lambda_1 \ge \lambda_2$   to the ordering considered
to the $P_i , a_i , c_i$  above.  The first two factors of $\Psi$ do not change, only $r(\bar{c})$ may change.
But, Lemma \ref{LEMOPT} still applies. 
The three factors, OPT$_1$, OPT$_2$, OPT$_3$  of OPT$(\bar{a}, \bar{c}, s)$ do not  change.
This refers to $C_1, $ and $C_2, $  too, and the argument above for  $P_1$ small  applies, too. \qed  

In the proofs to come in the following four subsections we use the notation
\begin{eqnarray}
L(a, s)= \frac{q(as)}{q(s)}= \frac{\exp(as)-as-1}{\exp(s)-s-1}, \, K(a, s)= \frac{q'(as)}{q'(s)}\,= \,\frac{\exp(as)-1}{\exp(s)-1}, \nonumber \\
 M(a, s)= \frac{\exp(as)}{\exp(s)}. \mbox{Then } aK(a, s) \le  L(a, s) \le K(a, s) \le M(a, s) \,,\,  0 \le a \le 1 . \, \label{BAKL} 
\end{eqnarray}
{\it Proof of (\ref{BAKL}.) }  $p(x):=q'(x), \,K:= K(a, s), L:= L(a, s).$ For $a=0$ or $a=1$ we have $aK=L.$ For $a>0,$ 
$aK \le L \Longleftrightarrow  ap(as)/q(as) \le p(s)/q(s) \Longleftrightarrow  asp(as)/q(as) \le sp(s)/q(s).$ 
The preceding inequality holds trivially for $a=1.$  We show that   $as p(as)/q(as)$ is  strictly increasing in $a>0.$ 
We observe that $q(x)/(xp(x)) = 1/x- 1/p(x).$  The derivative is of the last expression is $<0$ iff $x^2+2< \exp(x)+1/\exp(x).$ 
For $x=0$ we have equality and several differentiations show the inequality. 

For $a=0, L\le K$ is true. For $a>0$  $L \le K  \Longleftrightarrow 1- sa/p(sa) \le 1-s/p(s).$
The last inequality follows from $a \ge p(sa)/p(s)$ for $0 \le a \le 1.$ This follows from convexity. 
$K(a, s) \le M(a, s)$ is very easy to show.  \qed 

\begin{eqnarray}
\mbox{We also have } aK(a, s) \le \frac{7}{10} L(a, s), \,\mbox{ for } 0 \le a \le \frac{1}{2}, s\ge 4 \, \, \, \mbox{ (proof omitted.) }   \label{BAKL7/10} \\
\mbox{ We recall }   Q(x)= \frac{x q'(x)}{q(x)}=\frac{x (\exp(x)-1)}{\exp(x)-x-1}, Q=Q(s)=k\gamma , \, Q(s)> s.  \nonumber 
\end{eqnarray}

\subsection{ Proof of Lemma \ref{lem1}}
\noindent
{\bf Lemma \ref{lem1} (repeated)} 
Let $s\ge 8, A(x)\,=\,A(x, s)  := \,(7/10)Q \cdot x.$ \\
(a) OPT$(y)\,\,:=$  OPT $\left(1, A(y), \, A(y),1,  \, y, \, y, s \right)$ is strictly decreasing for  $   0 < y \le 1/(2Q).$ The start value is
OPT$(0)=3.$ \\
(b) Given $0\le y \le 1/(2Q),$ OPT$(z):= $ OPT$(1, A(y+z), \, A(y-z), \,1, \, y+z, \, y-z, \, s)$ is decreasing in $0\le z \le y.$ 

\noindent
{\it Proof of (a).} We have  
\begin{eqnarray} 
\mbox{OPT}(y) = 
 \left( 1+2L(A(y), s) \right) 
\left(\frac{1}{1+2A(y)\cdot y}\right)^Q  \left(\left(1\,+ \, 2y\right)^Q \,+ \,2\left(1\,- \, y \right)^Q \right). \nonumber \\
\mbox{ We write   OPT}_1(y)\,= \, 1+2L(A(y), s).
\mbox{ Clearly  OPT}(0)\,= \, 3 .\nonumber \\ 
\mbox{ We have } A' := \frac{d}{dy} \,A(y)\,= \,\frac{7}{10} Q.  \, \, \, \, \mbox{ And } 
\frac{d}{dy} \ln \mbox{ OPT }(y) \,>=< \, 0 \Longleftrightarrow \nonumber \\
\frac{A' \cdot 2 \cdot K(A(y), s)\,} {\mbox{ OPT}_1(y)}\,- \, 
\frac{2A(y)\,+ \, 2A'\cdot y}{1+2A(y) \cdot y}\,+ \,\frac{2(1+2y)^{Q-1}- 2(1-y)^{Q-1}}{\left(1\,+ \, 2y\right)^Q \,+ 
\,2\left(1\,- \, y \right)^Q } \nonumber \\ \,>=<\,0  \label{EQKEY}
\end{eqnarray} 
The  relationship (\ref{EQKEY}) is obtained by taking the derivative and dividing by $Q.$ 
To get the first summand we look into  the definition of $Q$ (the formula after \ref{BAKL7/10}.)
\begin{eqnarray}
\frac{d}{dy} \ln  \mbox{ OPT }_1(y)=  \frac{ \frac{A' s\cdot 2  (\exp( A(y)s )-1)}{q(s)}} {\mbox{ OPT}_1(y)}\, ,\, 
\frac{1}{Q(s)}  \frac{A's\cdot 2 (\exp(A(y)s)-1)}{q(s)}=  A'\cdot 2 K(A(y), s)         \nonumber 
\end{eqnarray}

Observe that the first and third term of (\ref{EQKEY})  is $\ge 0$ 
for $0 \le y \le 1$ whereas the second term is $\le 0.$ Moreover, $ A'\cdot y = A(y). $
We have that  $\frac{d}{dy} \ln \mbox{ OPT }(y)\,<\,0$  if the following two inequalities  both hold:
\begin{eqnarray}
\frac{A' \cdot 2 \cdot K(A(y), s)}{\mbox{ OPT}_1(y)}\,< \, \frac{\, \frac{7}{10}A' \cdot y}{1+2A(y)\cdot y} \label{klsymugl1} \\
 \frac{2(1+2y)^{Q-1}- 2(1-y)^{Q-1}}{\left(1\,+ \, 2y\right)^Q \,+ \,2\left(1\,- \, y \right)^Q }\,   
 < \frac{\frac{33}{10}A'\cdot y}{1+2A(y) \cdot y} \, \, \label{klsymugl2}
\end{eqnarray}
Note that for $y=0$ both sides of the first inequality are equal to  $0$ and of the second 
inequality, too. The derivative of OPT$(y)$ is $=0$ for $y=0.$ 

\noindent
{\it Comment:}  It is important to split up the left-hand-side of inequality (\ref{EQKEY}), otherwise the
calculations get very complicated.  Equally important is the  step leading to (\ref{EQKEY}). 
Analogous steps will occur several times.

\noindent
{\it Proof of (\ref{klsymugl1}) for $ 0 < y \le 1/(2Q) \, \,    ,\,\,  s \ge 7 $ .} 
We abbreviate $K:=K(A(y), s)\, \, , L:=L(A(y),s).$ Note OPT$_1(y)\, = \, 1+2L.$ 
 As $ A'>0$    we show 
\begin{eqnarray}
 \frac{2 \cdot K}{1+2L} \,< \, \frac{\frac{7}{10}y}{1+2A(y)\cdot y} 
\Longleftrightarrow 2K+ 4K\cdot A(y)\cdot y-2\frac{7}{10} L\cdot y\,< \, \frac{7}{10}y. \nonumber \\
\mbox{ By (\ref{BAKL7/10}) we know  } K \cdot A(y) \le \frac{7}{10} L  \mbox{ for } s \ge 4 \mbox{ as} A(y) \le \frac{1}{2}
\mbox{ ( by  } y \le \frac{1}{2Q}. )  
\end{eqnarray}
Thus (\ref{klsymugl1}) follows from  $ 2K+ 2K \cdot A(y) \cdot y < \frac{7}{10}y.$    
As $ 2K \cdot A(y) \cdot y \, \le \, 2K$ and $2K$ is convex and $2K=0$ for $y=0$ we show that 
$2K<(7/20) y$ for $y=1/(2Q).$  For $y=1/(2Q)$ we have $   A(y)=7/20$ and  $ 2K= 2(\exp((7/20)s)-1)/(\exp(s)-1).$ 
As $1/(2Q)=(\exp(s)-s-1)/(2s(\exp(s)-1))$ we have   for $y=1/(2Q)$ 
\begin{eqnarray}
2K\,< \,\frac{7}{20}y \Longleftrightarrow 2 \left(\exp \left(\frac{7}{20}s\right)-1 \right)\,<\,\frac{7}{40} \frac{\exp(s)-s-1}{s}  \nonumber 
\end{eqnarray}
This last inequality holds for $s \ge 7 $ (but not for $s\le 4.$ )\\

\noindent
{\it Proof of  (\ref{klsymugl2})  for $y \le 1/Q $ and $s \ge 2.$ } 
Inequality (\ref{klsymugl2}) is equivalent to 
\begin{eqnarray}
2(1+2y)^{Q-1}- 2(1-y)^{Q-1} \, < \,  \nonumber \\ 
 <  A(y)\left[\frac{33}{10}\left[\left(1\,+ \, 2y\right)^Q \,+ \,2\left(1\,- \, y\right)^Q\right] \,- \, 2y\cdot \left[ 2(1+2y)^{Q-1}- 2(1-y)^{Q-1} \right]\right] \,\, \label{klsymugl210} \\
\mbox{ The right-hand-side of (\ref{klsymugl210}) is  } \,\ge  \, \nonumber \\
A(y)\left[\frac{33}{10}\left[\left(1\,+ \, 2y\right)^Q \,+ \,2\left(1\,- \, y\right)^Q\right] \,- \, \frac{33}{10}y\cdot \left[ 2(1+2y)^{Q-1}- 2(1-y)^{Q-1} \right]\right] \nonumber \\
= \, \frac{33}{10} A(y)  \left[\left(1\,+ \, 2y\right)^{Q-1}(1+2y-2y) \,+ \,2\left(1\,- \, y \right)^{Q-1}(1-y+y)\right]\nonumber \\
= \, \frac{33}{10} A(y)  \left[\left(1\,+ \, 2y\right)^{Q-1} \,+ \,2\left(1\,- \, y \right)^{Q-1}\right].\nonumber \\
\mbox{ And (\ref{klsymugl210}) follows from  } 
\frac{ 2(1+2y)^{Q-1}- 2(1-y)^{Q-1}}{(1+2y)^{Q-1}  + 2(1-y)^{Q-1}} \, < \,  \frac{33}{10} A(y)  \label{klsymugl21} 
\end{eqnarray}
For $y=0$ both sides of (\ref{klsymugl21}) are equal to $0.$ 
We show that $33/10\cdot A'\,> \, $ the derivative with respect to $y$ of the left-hand-side of (\ref{klsymugl21}.)
By elementary calculation
\begin{eqnarray}
\frac{d}{dy} \frac{ 2(1+2y)^{Q-1}- 2(1-y)^{Q-1}}{(1+2y)^{Q-1} + 2(1-y)^{Q-1}}\,
=\,\frac{18 \cdot (Q-1)(1+y-2y^2)^{Q-2}}{\left[(1+2y)^{Q-1} + 2(1-y)^{Q-1}\right]^2}. \nonumber \\
\mbox{ We need to show }
\frac{33}{10}\frac{7}{10}Q\left[ (1+2y)^{Q-1} + 2(1-y)^{Q-1}\right]^2\,> \, 18 (Q-1)(1+y-2y^2)^{Q-2}. \nonumber \\
\mbox{ Enlarging the right-hand-side, } 1\le 1+y-2y^2 \mbox{ for } y \le 1/Q  \le 1/s \le 1/2  \mbox{(by  } Q(s)\ge s) \nonumber \\
\mbox{ we show } 33\cdot 7\left[ (1+2y)^{Q-1} + 2(1-y)^{Q-1}\right]^2\,> \,1800 (1+y-2y^2)^{Q-1} \,\nonumber \\
= \, 1800\left((1+2y)(1-y)\right)^{Q-1} \nonumber \\
\Longleftrightarrow 
231\left[(1+2y)^{2(Q-1)}\, + \, 4\left(( 1+2y)(1-y) \right)^{Q-1} \, + \, 4(1-y)^{2(Q-1)} \right]\,> \nonumber \\
> \, 1800\left((1+2y)(1-y)\right)^{Q-1} 
\Longleftrightarrow  \mbox{ (Division by } \left((1+2y)(1-y)\right)^{Q-1} \mbox{)}\nonumber \\
\Longleftrightarrow \, \, \left(\frac{1+ 2y}{1-y}\right)^{Q-1}\, + \, 4\,+ \, \,4 \left(\frac{1-y}{1+2y}\right)^{Q-1} \,> \, 1800/231. \nonumber \\
\mbox{ Rescaling the fraction to } x \mbox{ the preceding inequality  follows from } \nonumber \\
x+4\frac{1}{x} > 1800/231-4\,=  3.79 \dots\,  \mbox{ true for } \,  x > 0 .\nonumber 
\end{eqnarray}

\noindent
{\it Proof of (b).} We assume $0 \le y \le 1/(2Q)$ and $ 0 < z \le y.$ 
\begin{eqnarray}
A(y+z)= \frac{7}{10}Q\cdot (y+z)\, \,\,\,, \, A(y+z)\cdot (y+z) \,= \, \frac{7}{10}Q \cdot (y+z)^2 \nonumber \\
A(y+z)\cdot (y+z)\,+ \, A(y-z)\cdot (y-z)\,= \, \frac{7}{10}Q\cdot 2(y^2+z^2) \nonumber \\
\mbox{OPT}(z)\,= 
\, \left(1 \,+ \, L(A(y+z), s)\,+ \, L(A(y-z), s) \right) \cdot \nonumber \\ 
\cdot \left(\frac{1}{1 +  \frac{7}{10}Q\cdot 2 (y^2+z^2) }\right)^Q\cdot            
\left( (1+2y )^Q \,+ \, 2\cdot \left((1-y)^2 \,+ \, 3z^2 \right)^{Q/2}\right). \nonumber 
\end{eqnarray}

\begin{eqnarray}
\frac{d}{dz} \ln \mbox{OPT}(z) \,>=< \, 0 \Longleftrightarrow \nonumber \\
\frac{ \frac{7}{10}Q \cdot K(A(y+z), s)\,- \,\frac{7}{10}Q \cdot K(A(y-z), s) }
{1\,+ \, L(A(y+z), s)\,+ \, L(A(y-z), s) \,} \, \, - \, \, \frac{\frac{7}{10}Q 4 z}{1 +  \frac{7}{10}Q\cdot 2 (y^2+z^2) }
\,+ \, \nonumber \\
\frac{6z  \cdot ((1-y)^2+3z^2)^{Q/2-1}}{(1+2y )^Q \,+ \, 2\cdot ((1-y)^2+3z^2)^{Q/2} } \, \, >=<\,\,0. \nonumber 
\end{eqnarray}
The first term of the sum is obtained as the first term of  (\ref{EQKEY}.) 
The first and third term of the left-hand-side of the preceding inequality are $\ge 0$ for $0\le z \le y$  
whereas the second term is $\le 0.$ 

Analogously to  (\ref{klsymugl1}) and (\ref{klsymugl2})  $\frac{d}{dz} \ln \mbox{OPT}(z)\,<\,0$ is implied by
\begin{eqnarray}
\frac{ \frac{7}{10} Q  \left[ K(A(y+z), s)\,- \, K(A(y-z), s) \right]}                                           
{1\,+ \,  L(A(y+z), s)\,+ \, L(A(y-z), s) }\, \, <\, \, \frac{\frac{9}{10}\frac{28}{10}Q  z}{1 +  \frac{14}{10}Q (y^2+z^2) }
\label{klasymugl1} \\
\frac{6z  \cdot ((1-y)^2+3z^2)^{Q/2-1}}{(1+2y )^Q \,+ \, 2\cdot ((1-y)^2+3z^2)^{Q/2} }
\,< \, 
\frac{\frac{1}{10}\frac{28}{10}Q  z}{1 +  \frac{14}{10}Q (y^2+z^2) } \label{klasymugl2}
\end{eqnarray}

\noindent
{\it Proof of (\ref{klasymugl1}) for $y \le 1/(2Q)$ and $ s \ge 3.5.$}
The denominator of the right-hand-side fraction is maximal for 
$y=z=1/(2Q).$ In this case it is $1+ 7/(10Q)< 1+1/Q.$ We lower the denominator of the 
left-hand-side simply to  $1.$  The  claim follows from
\begin{eqnarray}
 K(A(y+z), s)\,- \, K(A(y-z), s)\,< \, \frac{\frac{18}{5}z}{1+\frac{1}{Q}} \nonumber 
\end{eqnarray}
The left-hand-side of the preceding 
inequality is convex   in $z$ for all $y<1/(2Q)$ (based on the convexity of $\exp(x)-\exp(-x).$)
For $z=0$ both sides are $=0.$ 
Therefore it is sufficient to show that the inequality holds for $z=y$ where $y \le 1/(2Q).$
Setting  $z=y$ yields    $K(A(y-z), s)=0$  and we show 
\begin{eqnarray}
 K(A(2y), s)\,\,< \, \frac{\frac{18}{5}y}{1+\frac{1}{Q}} \nonumber 
\end{eqnarray}
Again by convexity of the left-hand-side it is sufficient to show
the inequality for $y=1/(2Q).$ 
In this case we need to show 
\begin{eqnarray}
K ( A(1/Q),s )\,= \, \frac{\exp\left(\frac{7}{10}s\right)-1}{\exp(s)-1}\,<\,\frac{18}{10}\frac{1}{Q+1} \nonumber \\
\mbox{ By (\ref{BAKL}) we know  }\frac{\exp\left(\frac{7}{10}s\right)-1}{\exp(s)-1}\,\le \,  \exp\left( -\frac{3}{10}s\right). \nonumber \\
\mbox{ And } \exp\left( -\frac{3}{10}s\right) \,< \, \frac{18}{10}\frac{1}{Q+1}    \mbox{ holds (proof omitted) for  } s\ge 3.5. \nonumber 
\end{eqnarray}

\noindent
{\it Proof of (\ref{klasymugl2}) for $ s\ge 8.$  }
We show 
\begin{eqnarray}
\frac{(1+2y )^Q \,+ \, 2\cdot ((1-y)^2+3z^2)^{Q/2} }{6 z \cdot ((1-y)^2+3z^2)^{Q/2-1}}
\,> \, 
\frac{1 +  \frac{14}{10}Q (y^2+z^2) }{\frac{1}{10}\frac{28}{10}Q z }. \nonumber \\
\mbox{ Canceling } z \mbox{ in the denominator ,  setting } z=y \mbox{ on the right-hand-side, this follows from }\nonumber \\
\frac{(1+2y )^Q} {6  \cdot ((1-y)^2+3z^2)^{Q/2-1}}\,+ \, \frac{1}{3}((1-y)^2+3z^2)\,> \, \frac{1+\frac{14}{5}Qy^2}{\frac{28}{100}Q} \,= \, 
  \frac{100}{28Q} +   10y^2               \nonumber \\
\mbox{As } (1-y)^2+3z^2\,\le \, 1-2y+4y^2\,< 1 \mbox{ by }y\le 1/(2Q),  Q\ge s \ge 8 
\nonumber \\ \mbox{ this follows from } 
\frac{1}{6}(1+2y)^Q \,+ \, \frac{1}{3}(1-y)^2\,>\,  \frac{100}{28Q} +   10y^2   \nonumber 
\end{eqnarray}
The last inequality holds for $Q \ge 8, y \ge 0$ and then the  claim holds  as $Q \ge s.$

\subsection{Proof of Lemma \ref{lem2}}

\noindent
{\bf Lemma \ref{lem2} (repeated)} 
Let  $ s\, \ge 7 \, \,$ and  $  \frac{7}{20} \le A \le 1-\frac{1}{Q}.$ Then \\
OPT$(z):= $ OPT$\left(1, A,\, A, \, 1, 1/(2Q)+z, \, 1/(2Q)-z\, , s\right) \le 3 \,- \, \delta$  for $0 \le z \le 1/(2Q).$ 
\begin{proof}
 OPT$(z)\,= \,$ 
\begin{eqnarray}
\left( 1\,+ \ 2\cdot L(A, s) \right)
\left(\frac{1}{1+\frac{A}{Q}}\right)^Q 
\cdot \left[ \left( 1+\frac{1}{Q} \right)^Q \,+ \, 2\left(\left(1-\frac{1}{2Q}\right)^2+ 3z^2\right)^{1/2\cdot Q} \right] \nonumber 
\end{eqnarray}
is  increasing in $z.$ We  show the claim for $z=1/(2Q).$ Let from now on 
OPT$(A)\,= \, $ OPT$(1, A, A, 1, 1/Q, 0, s)\,= \, $ 
\begin{eqnarray}
\left( 1\,+ \ 2\cdot L(A, s) \right)
\left(\frac{1}{1+\frac{A}{Q}}\right)^Q 
\cdot \left[ \left( 1+\frac{1}{Q} \right)^Q \,+ \, 2\left( 1-\frac{1}{Q} +  \frac{1}{Q^2}\right)^{1/2\cdot Q} \right] \nonumber 
\end{eqnarray}

First, we show that OPT$(A)$  has
exactly one extremum in $0 \le A\le 1$  which is a minimum. 
\begin{eqnarray}
\frac{d}{dA} \ln \mbox{OPT}(A) \, > =< \,0 \Longleftrightarrow
\frac{2K(A, s)}{1+2L(A, s)} \,- \, \frac{\frac{1}{Q}}{1+ \frac{A}{Q}}\,>=< \,0 .\, \nonumber 
\end{eqnarray}
Concerning the first term of the preceding sum we refer to the explanation  following (\ref{EQKEY}.) 
For $A=0$  the first term  is $=0$ and the derivative is  $<0.$ 
For $A=1$  the first term is $=2/3,$ whereas the second term is
$1/(Q+1)\,<\,2/3$ for $Q>s>2,$ and the derivative is $>0.$  We show that 
the derivative is $=0$ for exactly one $ 0 < A < 1 $  which must be a minimum. 

The second fraction of the derivative is decreasing in $A.$  
We check that the first
fraction is increasing. Abbreviating  $L=L(A, s), \, L' \, = \, \frac{d}{dA}L(A, s) $  and analogously for $K,$ we get
\begin{eqnarray}
\frac{d}{dA} \frac{2K(A, s)}{1+2L(A, s)} \,> \,0  \Longleftrightarrow 
2K' (1+2L) \,> \, 2K2L' \Longleftrightarrow \nonumber \\
\mbox{ (Multiplication with }( \exp(s)-s-1)(\exp(s)-1), \mbox{division by } 2 \mbox{ and } s \mbox{.)} \nonumber \\
(\exp(s)-s-1)  \exp(sA) \,+ \,\exp(sA) 2( \exp(sA)-sA-1 ) \,> \, 2(\exp(sA)-1)^2 \nonumber \\ \Longleftrightarrow 
(\exp(s)-s-2sA)\exp(sA) \,>\,-\exp(sA)\,+ \, 2 \nonumber \\
\Longleftrightarrow \exp(s)-s-2sA \,> \,-1 + 2/\exp(sA)  \nonumber \\
 \mbox{ which is true for }  s >2, 0< A <1 \mbox{ by convexity of } 2/\exp(sA). \nonumber 
\end{eqnarray}

We need to show the claim for the boundary values $A=7/20 \, $ and $ A=1-1/Q.$ 
First,  $A=7/20:$  
\begin{eqnarray}
1 \,+ \, 2 L(A, s) \, \le 1 + 2 M(A, s) \, = \, 1\,+ \, 2\exp(-13/20 \cdot s) \, \, \, \mbox{ (by (\ref{BAKL}).)}\nonumber \\
\mbox{ With  the derivative of the logarithm and the Mean Value Theorem we can  show that } \nonumber \\
\left(\frac{1+\frac{1}{Q}}{1+\frac{A}{Q}} \right)^Q \mbox{ is increasing  in } Q  \mbox{ towards its limit } \exp(13/20) .
 \,  \frac{ 2\left( 1-\frac{1}{Q} \,+ \, \frac{1}{Q^2} \right)^{1/2\cdot Q}}{\left(1+ \frac{A}{Q}\right)^Q} \nonumber \\
\mbox{  is decreasing in }Q =Q(s)\ge 2 \mbox{ (proof by standard calculus methods)  and therefore }\nonumber \\
\mbox{ also in } s \mbox{ towards its limit  }   2\exp(-17/20).\mbox{ For } Q=7 \mbox{ we get a value } \le 0.9 \nonumber \\
\mbox{ Therefore,  for all } \ge s \, \ge 7 \mbox{( as } Q(s)\ge s)  \nonumber \\  \mbox{OPT}(A)\,
< \, (1+ \, 2\exp(-13/20 \cdot 7))( \exp(13/20)+0.9 ) \,= \, 2.87\dots  \, \,. \nonumber 
\end{eqnarray}
 
Now,  $A=1-1/Q:$  
\begin{eqnarray}
1\,+ \, 2L(A, s)\,\le \, 1 \,+ \, 2 M(A, s) \,= \,1\,+\, 2 \exp\left(-\frac{s}{Q}\right) \nonumber \\
 \,=\, 1+2\exp\left( -\frac{\exp(s)-s-1}{\exp(s)-1}\right) 
\mbox{decreasing in } s \mbox{ to } 1+2\exp(-1).  \nonumber \\
\mbox{ For } s=7 \mbox{ we get } \, 1+2L(A, s) \le   1.7404  \dots \nonumber \\
\left( \frac{1+\frac{1}{Q}}{1+\frac{A}{Q}} \right)^Q\,= \, \left( \frac{1+\frac{1}{Q}}{1\,+\,\frac{1}{Q}\,- \, \frac{1}{Q^2}}\right)^Q 
\mbox{  is decreasing in } Q =Q(s) \nonumber \\ \mbox{ (elementary proof omitted) and therefore  in}\,\, s  \mbox{ to } 1.\nonumber \\
\mbox{ For } Q=7 \mbox{ we get } 1.1344  \dots .\mbox{ As } Q(s)\ge s \, \mbox{this  bound applies to  } s=7 \mbox{, too.} \nonumber \\
\frac{ 2\left ( 1-\frac{1}{Q}+\frac{1}{Q^2} \right )^{1/2\cdot Q}}{\left(1+\frac{1}{Q}- \frac{1}{Q^2}\right)^Q} 
\mbox{is again decreasing (proof omitted) in } \nonumber \\ Q 
\mbox{ and } s \mbox{ to } 2\exp(-3/2). 
\mbox{ For } Q=7 \mbox{ we get }  0.564 \dots \nonumber \\
\mbox{Altogether for } Q(s)\ge s \ge 7 \nonumber \\ \mbox{OPT}(A) \le 1.741\cdot(1.135+ 0.565)\,= 2.9597 . \nonumber 
\end{eqnarray}

\end{proof}

\subsection{ Proof of Lemma \ref{lem3}}

\noindent
{\bf Lemma \ref{lem3} (repeated)}
Let $\, s\, \ge \, 7$ and $  1/(2Q) \le C \le 1/2.$  
Then  \\
OPT$(z):= $ OPT$\left(1, 1-1/Q,\, 1-1/Q, \,1,  C+z, \, C-z\,, s\right)\, \le  3 \,- \, \delta$  for $0 \le z \le C.$ 
\begin{proof} 
We abbreviate  $A=1-1/Q.$ First, analogously  to the proof of Lemma \ref{lem2} we can restrict attention to $z=C.$
OPT$(z)\,= \, $
\begin{eqnarray}
=\,\left(1+2L(A, s)\right) \left( \frac{1}{1+2AC}\right)^Q \left[(1+2C)^Q\,+ \, 2((1-C)^2+3z^2)^{1/2\cdot Q}\right] \nonumber \\
\mbox{ Let from now on OPT}(C)\,= \,  \mbox{OPT}(1, A,A, 1, 2C, 0, s) \,= \,\nonumber \\
=\,\left(1+2L(A, s)\right) \left( \frac{1}{1+2AC}\right)^Q  \cdot  \left[(1+2C)^Q\,+ \, 2(1+ 4C^2-2C)^{1/2\cdot Q}\right] \nonumber 
\end{eqnarray}

 OPT$(C)$ has exactly one extremum, which is a minimum for $0 \le C \le 1.$ 
\begin{eqnarray}
\frac{d}{dc} \ln \mbox{OPT}(C) \,>=<\, 0 \Longleftrightarrow \nonumber \\
 - \frac{2A}{1+2AC} \,+ \, \frac{2(1+2C)^{Q-1} \,+ \, (8C-2)(1+ 4C^2-2C)^{1/2 \cdot Q-1}}{(1+2C)^Q \,+ \, 2(1+ 4C^2-2C)^{1/2\cdot Q}}\, >=< \,0
\Longleftrightarrow \nonumber \\
2A \left((1+2C)^Q\,+ \, 2(1+ 4C^2-2C)^{1/2Q}\right) \,- \nonumber \\ 
-\,2A C \left(2(1+2C)^{Q-1}\,+ \, (8C-2)(1+ 4C^2-2C)^{1/2 \cdot Q-1}\right)\, = \nonumber \\
\,= \, 2A\left[(1+2C)^{Q-1}\,+ \, (2-2C)(1+ 4C^2-2C)^{1/2 \cdot Q-1}\right]\, <\,\, = \, \, > \nonumber \\
\,<\,\,=\,\,  >\,2(1+2C)^{Q-1}\,+ \, (8C-2)(1+ 4C^2-2C)^{1/2 \cdot Q-1}  \Longleftrightarrow \nonumber \\
2A\,<\, \, =\, \, > \, \frac{2(1+2C)^{Q-1}\,+ \, (8C-2)(1+ 4C^2-2C)^{1/2 \cdot Q-1}}{(1+2C)^{Q-1}\,+(2-2C)(1+ 4C^2-2C)^{1/2 \cdot Q-1}} \, \Longleftrightarrow \, \nonumber \\
A\,<\, \, =\, \, > \, \frac{(1+2C)^{Q-1}\,+ \, (4C-1)(1+ 4C^2-2C)^{1/2 \cdot Q-1}}{(1+2C)^{Q-1}\,+(2-2C)(1+ 4C^2-2C)^{1/2 \cdot Q-1}} \nonumber 
\end{eqnarray}
For $C=0$ the right-hand-side fraction is equal to $0<A$ and OPT$(C)$ is decreasing.
For $C=1$ the right-hand-side fraction  is greater than  $1>A$ and OPT$(C)$ is increasing. 

Next we show that the preceding fraction is increasing in $0<C<1,$ 
and equality is attained for only one $C$ which must be a minimum.  
\begin{eqnarray}
\mbox{ Rewriting } 4C-1 \,= \, (2-2C)+6C-3  \mbox{ the fraction is rewritten as } \nonumber \\
1 \,+ \, \frac{(6C-3)(1+ 4C^2-2C)^{1/2 \cdot Q-1}}{(1+2C)^{Q-1}\,+(2-2C)(1+ 4C^2-2C)^{1/2 \cdot Q-1}} \nonumber \\
\mbox{ Rescaling }1/2\cdot Q-1 \mbox{ to } Q   \mbox{ ( then } Q-1 \mbox{ scales to } 2Q+1 ) \mbox{ and } 2C \mbox{ to }C
\mbox{ we get } \nonumber \\
1 \,+ \, \frac{(3C-3)(1+ C^2-C)^Q}{(1+C)^{2Q+1}\,+(2-C)(1+ C^2-C)^Q} \, \,  \nonumber \\
\mbox{ Dividing through } 3(C-1)(1+ C^2-C)^Q \mbox{ the  preceding fraction is certainly increasing  if } \nonumber \\
\frac{(1+C)^{2Q+1}}{3(C-1)(1+ C^2-C)^Q} \mbox{ and } \frac{2-C}{3(C-1)} \mbox{ are both decreasing for } 0 <C <2, C \neq 1. \nonumber 
\end{eqnarray}
The second fraction is easily seen to be decreasing. We show that the inverse of the first fraction is
increasing. The numerator of its derivative is
\begin{eqnarray}
\left[ (1+C^2-C)^Q \,+ \,( C-1)(2C-1)Q(1+C^2-C)^{Q-1}\right] \cdot (1+C)^{2Q+1}\,  \,- \, \nonumber \\
- \, (C-1)(1+C^2-C)^Q\cdot  (2Q+1)(1+C)^{2Q} \,\, 
\,= \, (1+C)^{2Q}(1+C^2-C)^{Q-1}\cdot \nonumber \\
\left[(1+C)(1+C^2-C) \,+ \, (1+C)(C-1)(2C-1)Q \,- \, (2Q+1)(C-1)(1+C^2-C)\right] \nonumber \\
\mbox{ The expression in square brackets can be rewritten as} \nonumber \\
(1+C)(C-1)(2C-1)Q\,- \, 2Q(C-1)(1+C^2-C)\,+\, (-C+1+C+1)(1+C^2-C) \nonumber \\ 
=\,\, Q(1-C)^2\,+ \, 2(1+C^2-C)\,>\, 0   \nonumber 
\end{eqnarray}

Now it is sufficient to show the claim for the boundary values, $C=1/(2Q)$ and $C=1/2.$ 
The first case is contained in Lemma \ref{lem2}. Let $C=1/2.$ We proceed as in the proof of Lemma \ref{lem2},
case $A=1-1/Q.$ 
\begin{eqnarray}
1+2L(A, s) \le 1.7404 \mbox{ for } s\ge 7 \nonumber \\
\left( \frac{1+2C}{1\,+\,2CA} \right)^Q\,= \, \left( \frac{2}{2\,- \, \frac{1}{Q}}\right)^Q 
\mbox{  is decreasing in } Q =Q(s) \nonumber \\ \mbox{ (elementary proof omitted) and therefore  in}\,\, s  \mbox{ to } \exp(-1/2).
\nonumber \\
\mbox{ For } Q=7 \mbox{ we get } 1.67993\dots  .\mbox{ As } Q(s)\ge s \, \mbox{this  bound applies to  } s=7 \mbox{, too.} \nonumber \\
\frac{ 2(1+ 4C^2-2C)^{1/2\cdot Q}}{(1+2AC)^Q } \,= \,\frac{2}{(2-\frac{1}{Q})^Q}  \mbox{ decreasing to } 0 \nonumber \\
\mbox{ For } Q=7 \mbox{ we get } 0.02624\dots \nonumber \\
\mbox{ Altogether OPT}(C)  \le 1.75\cdot(1.68+0.027)=2.98 \mbox{ for } s\ge 7.\nonumber 
\end{eqnarray}

\end{proof}

\subsection{Proof of Lemma \ref{lem4}}
\noindent
{ \bf Lemma \ref{lem4} (repeated)} 
Let $s\ge 15$ and $  A(x)\,=\, A(x, s) \, := \, 1+ 7/(10Q)\cdot x \,- \, 7/(10Q).$ \\
(a) OPT$(y):= $OPT$\left(1, A(y), A(y), 1, y, y, s \right) $ is strictly increasing in $4/10 \le y < 1.$ The final value is 
OPT$(1)=3.$ \\
(b) Given $4/10 \le y \le 1,$ OPT$(z):=$OPT$\left(1, A(y+z), A(y-z),1,  y+z, y-z, s \right)   $ 
is decreasing in $0 \le z \le \min\{y, 1-y\}.$ \\

\noindent
{\it Proof of (a).}
We have OPT$(y) \,= \,$
\begin{eqnarray} 
 \left( 1+2L( A(y), s) \right) 
\left(\frac{1}{1+2A(y)\cdot y}\right)^Q  \left(\left(1\,+ \, 2y\right)^Q \,+ \,2\left(1\,- \, y \right)^Q \right) \nonumber \\
\mbox{ We write OPT}_1(y)=  1+2L(A(y), s).  \mbox{  Clearly  OPT}(1)\,= \, 3 \nonumber \\
\mbox{ We have } A' := \frac{d}{dy} \,A(y)\,= \,\frac{7}{10}\frac{1}{Q}.  \nonumber \\
\frac{d}{dy} \ln \mbox{ OPT }(y) \,>=< \, 0 \Longleftrightarrow    \mbox{(See  comment to  (\ref{EQKEY}.))} \nonumber \\
\frac{A' \cdot 2 \cdot K(A(y), s) }{\mbox{ OPT}_1(y)}\,- \, 
\frac{2A(y)\,+ \, 2A'\cdot y}{1+2A(y) \cdot y}\,+ \,\frac{2(1+2y)^{Q-1}- 2(1-y)^{Q-1}}{\left(1\,+ \, 2y\right)^Q \,+ \,2\left(1\,- \, y \right)^Q }\,>=<\,0   \nonumber 
\end{eqnarray} 
Observe that the first and third term of the preceding sum are $\ge 0$ for $0 \le y \le 1$ whereas the second term is $\le 0.$ 

We have that  $\frac{d}{dy} \ln \mbox{ OPT }(y)>0$  if the following two inequalities  both hold:
\begin{eqnarray}
\frac{A' \cdot 2 \cdot K(A(y), s) }{\mbox{ OPT}_1(y)}\,> \, \frac{\, 2A'\cdot y}{1+2A(y)\cdot y} \label{grosymugl1} \\
 \frac{2(1+2y)^{Q-1}- 2(1-y)^{Q-1}}{\left(1\,+ \, 2y\right)^Q \,+ \,2\left(1\,- \, y \right)^Q }\,>                                   
\frac{2A(y)}{1+2A(y) \cdot y} \, \, \label{grosymugl2}
\end{eqnarray}
Note that for $y=1$ both sides of the first inequality are equal to  $7/(10Q) \cdot 2/3$ and of the second 
inequality $2/3.$ Therefore the derivative of OPT$(y)$ is $=0$ for $y=1.$ \\

\noindent
{\it Proof of (\ref{grosymugl1}) for $1>y \ge \ 0\, \, ,\, \, \,  s \ge 4$ .} 
Let $ K= K(A(y), s)$  and $  L=L(A(y), s) .$
\begin{eqnarray}
\mbox{ As } A'>0 \mbox{ we need to show  } \frac{K}{1+2L} \,> \, \frac{y}{1+2A(y)\cdot y}.\nonumber \\
\Longleftrightarrow K\,+ \, 2K\cdot A(y)y \,- \, 2L\cdot y\, > \, y \nonumber \\
\mbox{ As }  K \ge L  \, \, \mbox { by (\ref{BAKL}) this follows from }  \nonumber \\ 
K\left( 1\, +\, 2A(y)\cdot y\, - 2y \right) \, \,
=\, K \left( 1+ 2\frac{7}{10Q}y^2 -   2\frac{7}{10Q}y)\right) \, > \, y   \label{grosymugl11}
\end{eqnarray}
For $y=1$ both sides of (\ref{grosymugl11}) are $=1.$ For $y=0$ (\ref{grosymugl11}) holds as $K>0$ in this case. 

$K$ considered as a function in $y$ is  convex, increasing and $> 0.$ 
The second term on the left-hand-side of (\ref{grosymugl11}), $ 1- 2\frac{7}{10Q}y^2 + 2\frac{7}{10Q} y, $ is convex, $>0,$ and increasing for $y>1/2.$
Therefore the left-hand-side of (\ref{grosymugl11}) is 
convex for $1/2<y<1.$ We next show that the derivative of the left-hand-side at $y=1$ is $<1.$
This implies that (\ref{grosymugl11})  holds for $1/2 \le y <1.$ 

\begin{eqnarray}
\frac{d}{dy} K \left( 1 \,+\,2\frac{7}{10Q}y^2\,- \,2\frac{7}{10Q}y)\right)\,=  \, 
\frac{s\cdot \frac{7}{10Q}\cdot \exp(sA(y))}{\exp(s)-1} \nonumber \\
\cdot 
\left(1\, \,\,+ 2\frac{7}{10Q}y^2\,- \,2\frac{7}{10Q}y\right) 
\, + \, \frac{\exp(sA(y))-1}{\exp(s)-1}\cdot\left( 4\frac{7}{10Q}y\,- \,2\frac{7}{10Q}\right). \nonumber \\
\mbox{ Plugging in }\,\, y=1 \, \,  \mbox{ yields } 
\frac{7}{10Q} \left(\frac{s \exp(s)}{\exp(s)-1} + 2\right)  \label{grosymugl12}  
\end{eqnarray}
For $s=4$ (\ref{grosymugl12}) is $0.9837 \dots< 1. $ As (\ref{grosymugl12}) is in decreasing in $s $
(proof omitted) 
(\ref{grosymugl11}) holds for all $s \ge 4$ and $1/2 \le y <1.$ 

$ 1- 2\frac{7}{10Q}y^2 + 2\frac{7}{10Q} y$    is decreasing for $y<1/2.$ 
Therefore, for $0 \le y \le 1/2,$ we  can bound the left-hand-side of 
(\ref{grosymugl11}) from  below by  
\begin{eqnarray}
K \left[ \left(  1 + 2\frac{7}{10Q}y^2  - 2\frac{7}{10Q}y \right)_{y=1/2} \right] \nonumber
\end{eqnarray}
This  function (the argument $y$ occurs only in $K$) is convex in $y$ . For $y=1/2$ it is $>y$ by the previous argument. 
For $y=1$ it is $<y.$ Therefore it is $>y$ for $0 \le y \le 1/2.$ The claim is shown. \\

\noindent
{\it Proof of (\ref{grosymugl2}) for $y \ge 4/10 $ and $ s \ge  3.5 .$ } 
Inequality (\ref{grosymugl2}) is equivalent to 
\begin{eqnarray}
2(1+2y)^{Q-1}- 2(1-y)^{Q-1} \, > \,  \nonumber \\  > 2 A(y)\left[ \left( 1\,+ \, 2y\right)^Q \,+ \,2\left(1\,- \, y \right)^Q \,- \, y\cdot \left[ 2(1+2y)^{Q-1}- 2(1-y)^{Q-1} \right]\right] \,= \, \nonumber \\
= \, 2A(y)\left[ (1+2y)^{Q-1}(1+2y- 2y) \,+\, 2(1-y)^{Q-1} (1-y +y)\right) \nonumber \\ \,= \,  2A(y)\left[ (1+2y)^{Q-1} + 2(1-y)^{Q-1}\right]
\Longleftrightarrow   \frac{ (1+2y)^{Q-1}- (1-y)^{Q-1}}{(1+2y)^{Q-1} + 2(1-y)^{Q-1}} \,> \,  A(y) \label{grosymugl21} 
\end{eqnarray}
For $y=1$ both sides of (\ref{grosymugl21}) are equal to $1.$ For $y<1$ (\ref{grosymugl21}) can be rewritten as
\begin{eqnarray}
\left(\frac{1+2y}{1-y}\right)^{Q-1}\,> \, \frac{2A(y)+1}{1-A(y)}.
\mbox{ With } y=\frac{4}{10} 
\mbox{ this becomes } 3^{Q-1}\, > \, \frac{30}{7}Q \,- \, 2 . \nonumber 
\end{eqnarray}
The preceding inequality holds for $Q>s \ge 3.5.$  and we have the claim for $y=4/10.$ 

To show the claim for $4/10<y<1$ we show that the left-hand-side of (\ref{grosymugl21}) 
is concave in $y.$ The derivative of the left-hand-side is
\begin{eqnarray}
9(Q-1) \frac{(1+y-2y^2)^{Q-2}}{\left[ (1+2y)^{Q-1} \,+ \, 2(1-y)^{Q-1} \right]^2 } \nonumber 
\end{eqnarray}
This is a decreasing function in $y\ge 4/10$ because the numerator is 
decreasing in this case whereas the denominator is increasing and $>0.$ \\

\noindent
{\it Proof of (b).}
Some preparatory calculations:
\begin{eqnarray}
A(y+z)\,= \, A(y)\,+ \, \frac{7}{10Q} z \, \,, \,\, \,  \, A(y-z)\,= \, A(y)\,- \, \frac{7}{10Q} z \, \, \,\nonumber \\
A(y+z) \cdot (y+z) \,= \, A(y)y +A(y)z +\frac{7}{10Q}zy + \frac{7}{10Q}z^2 \nonumber \\
A(y-z) \cdot (y-z) \,= \, A(y)y -A(y)z - \frac{7}{10Q}zy + \frac{7}{10Q}z^2 \nonumber \\
A(y+z) \cdot (y+z) \,+\, A(y-z) \cdot (y-z) \, \,= \, 2A(y) \cdot y  +\frac{14}{10Q}z^2 \nonumber 
\end{eqnarray}
We denote 
\begin{eqnarray}
\mbox{OPT}_1(z)\,= \,  1 \,+ \,L(A(y+z), s) \,+ \, L(A(y-z), s) \nonumber \\
\mbox{Then OPT}(z)\,= \,  \mbox{OPT}_1(z)\cdot  \hfill \nonumber \\
 \cdot \left(\frac{1}{1 +  2A(y) \cdot y  +\frac{14}{10Q}z^2 }\right)^Q\cdot            
 \left( (1+y )^Q \,+ \, 2\cdot \left((1-y)^2 \,+ \, 3z^2 \right)^{Q/2}\right). \nonumber 
\end{eqnarray}

We proceed to show that $\frac{d}{dz} \ln$ OPT$(z)\,< \,0$  for $z>0.$ Some derivatives first.
\begin{eqnarray}
\frac{d}{dz} \,A(y+z)\,= \,\frac{7}{10}\frac{1}{Q}, \, \mbox{\quad \quad }
\frac{d}{dz} \,A(y-z)\,= \,- \,\frac{7}{10}\frac{1}{Q}, \, \nonumber \\
\frac{d}{dz} \left(1 +  2A(y) \cdot y+\frac{14}{10Q}z^2 \right)\,   \,= \, \frac{28}{10Q}z \,\nonumber \\
\frac{d}{dz} \left( ( 1 + y )^Q \,+ \, 2\cdot ((1-y)^2+3z^2)^{Q/2} \right) \,= \, 6z \cdot Q \cdot ((1-y)^2+3z^2)^{Q/2-1}. \nonumber \\
\frac{d}{dz} \ln \mbox{OPT}(z) \,>=< \, 0 \Longleftrightarrow  \mbox{ (Recall comment to (\ref{EQKEY}).)} \nonumber \\
\frac{ \frac{7}{10Q} K(A(y+z), s)\,-\frac{7}{10Q} K(A(y-z), s) }
{\mbox{OPT}_1(z)  } \, \, - \, \, \frac{\frac{28}{10Q}z}{1+2yA(y)+\frac{14}{10Q}z^2}
\,+ \, \nonumber \\
\frac{6z  \cdot ((1-y)^2+3z^2)^{Q/2-1}}{(1+y )^Q \,+ \, 2\cdot ((1-y)^2+3z^2)^{Q/2} } \, \, >=<\,\,0. \nonumber 
\end{eqnarray}
Observe that the first and third term of the 
preceding inequality are $\ge 0$ for $0 \le z \le \min \{y, 1-y \},$ whereas the second term is $\le 0.$ 

We have that  $\frac{d}{dz} \ln \mbox{ OPT }(z)<0$  if the following two inequalities  both hold:
\begin{eqnarray}
\frac{ \frac{7}{10Q}  \left( K(A(y+z), s)\,-\, K(A(y-z), s) \right) }                                        
{ \mbox{OPT}_1(z)   }                            \, \, <\, \, 
\frac{\frac{11}{10Q}z}{1+2yA(y)+
\frac{14}{10Q}z^2}    \label{groasymugl1} \\
\frac{6z  \cdot ((1-y)^2+3z^2)^{Q/2-1}}{(1+y )^Q \,+ \, 2\cdot ((1-y)^2+3z^2)^{Q/2} }
\,< \, 
\frac{\frac{17}{10Q}z}{1+2yA(y)+\frac{14}{10Q}z^2} \label{groasymugl2}
\end{eqnarray}
Note that for $z=0$ both sides of the preceding inequalities are equal to  $0$ and the derivative
of $\ln \mbox{OPT}(z)$  is $=0.$ 
Moreover, we have  $1+2yA(y)+\frac{14}{10Q}z^2\,\le \,1+2y $ and the inequalities follow when they are
shown with the  denominator  $1+2y$ in the   right-hand-side fraction. To get  this,  observe that 
\begin{eqnarray}
2yA(y)+ \frac{14}{10Q}z^2\,= \, 2y+ \frac{14}{10Q}\left(y^2-y+z^2\right) \le 2y , \nonumber \\
\mbox{as } z\le \min\{y, 1-y\} \mbox{ we have } z^2 \le y(1-y) \mbox{ or } y(y-1)+z^2\le 0. \nonumber 
\end{eqnarray}

\noindent
{\it Proof of (\ref{groasymugl1}) for $0< z < \min \{y, 1-y \}  ,0 \le y \le 1,  s \ge  5 $ .} 
We enlarge the left-hand-side of   (\ref{groasymugl1}) first:
\begin{eqnarray}
 K(A(y+z), s)\,-\, K(A(y-z), s) \,= \, 
\,\frac{1}{\exp(s)-1}\left( \exp( A(y+z)\cdot s)\,- \, \exp( A(y-z)\cdot s)\right) \nonumber \\
= \, \,   \frac{ \exp(A(y)s) }{  \exp(s)-1 } \left[\exp\left(\frac{7}{10Q}sz \right) \, - \, 
\exp\left(-\, \frac{7}{10Q}sz\right) \right]      \nonumber 
\end{eqnarray}

\begin{eqnarray}
\mbox{OPT}_1(z)\,= \, 1\,+ \,L(A(y+z), s) \,+ \, L(A(y-z), s) \, \, \,= \, 1 +  \frac{1}{\exp(s)-s-1}\cdot \nonumber \\
\cdot  [\exp(A(y+z)s) \,- \, A(y+z)s\,-1 \,+  \,
\exp(A(y-z)s)  \,- \,    A(y-z)s\,-1 ] \nonumber \\
\ge \,   \mbox{( As } A(y+z), A(y-z)\le 1 \mbox{.)}  \nonumber \\
 1 \,\, + \,\, \frac{1}{\exp(s)-1} [\exp(A(y+z)s) \,+ \, \,
\exp(A(y-z)s)  \,- \, 2s\,-2 ] \, \,  = \, \frac{1}{\exp(s)-1} \cdot  \nonumber \\
\cdot \left[\exp(s)\,-2s\,-3\, + \, \exp(A(y)s)\left( \exp\left(\frac{7}{10Q}sz \right)  \,+ \, 
\exp\left(-\frac{7}{10Q}sz \right) \right) \right] \nonumber \\
\ge \,  \mbox{ (As } A(y)s \le s \mbox{ and } s\ge 2 \mbox{ so that } \exp(s)-2s-3>0\mbox{.)}\nonumber\\
\frac{\exp(A(y)s)}{\exp(s)-1} \left[\frac{\exp(s)\,-2s\,-3}{\exp(s)}\, + \,  \exp\left(\frac{7}{10Q}sz\right)   \,+ \, 
\exp\left(-\frac{7}{10Q}sz \right)  \right] \nonumber \\
\ge \, \frac{\exp(A(y)s)}{\exp(s)-1} \left[ 0.9 \, + \,  \exp\left(\frac{7}{10Q}sz\right)   \,+ \, 
\exp\left(-\frac{7}{10Q}sz \right)  \right],  \nonumber 
\end{eqnarray}
as $(\exp(s)-2s-3)/\exp(s) \ge 0.9$  for $s \ge 5.$ 
The denominator of the right-hand-side of (\ref{groasymugl1}) is  enlarged by $1+2y\le 3.$ We set 
\begin{eqnarray}
u= \exp\left(\frac{7}{10Q}sz\right)>1 \mbox{ and show (simple algebra from (\ref{groasymugl1})) } 
\frac{u-\frac{1}{u}}{0.9+ u+ \frac{1}{u}} \, < \, \frac{11}{3\cdot 7}z \nonumber \\
\mbox{ We have } z= (\ln u)\frac{10}{7}\frac{Q}{s}\, >\, (\ln u)\frac{10}{7} \mbox{ ( by } Q>s.) \nonumber \\
\mbox{ Therefore it is enough  to show  } \frac{u -\frac{1}{u}}{0.9+ u+ \frac{1}{u}}\,<\, (\ln u)\frac{10}{7}\frac{11}{21} \nonumber \\
\mbox{ Elementary means show that this is true for }  u>1. \nonumber 
\end{eqnarray}

\noindent
{\it Proof of (\ref{groasymugl2} ) for $ s\ge 15, 1 \ge y \ge 2/10 \, , \, 0< z \le \min \{y, 1-y \} .$} 
Inequality (\ref{groasymugl2}) follows from 
\begin{eqnarray}
\frac{6z  \cdot ((1-y)^2+3z^2)^{Q/2-1}}{(1+y)^Q}
\,< \, 
\frac{\frac{17}{10 Q}z}{1+2y} \nonumber \\
\Longleftrightarrow 
60Q(1+2y)((1-y)^2+3z^2)^{Q/2-1} \, < \, 17(1+y)^Q \label{groasymugl21} 
\end{eqnarray}

For $ y \le 1/2 $ we have $ z \le y  $  and (\ref{groasymugl21}) follows from 
\begin{eqnarray}
60Q(1+2y)(1-2y+4y^2)^{Q/2-1} \, < \,17 (1+y)^Q  \nonumber 
\end{eqnarray}
The preceding inequality holds for $ Q \ge s \ge 15 $  and $ 1/2\ge  y \ge 2/10$ (proof omitted.)  

For $y \ge 1/2$ we have $z \le 1-y$ and (\ref{groasymugl21}) follows from
\begin{eqnarray}
60Q(1+2y)(4(1-y)^2)^{Q/2-1} \, < \, 17 (1+y)^Q  \nonumber 
\end{eqnarray}
This inequality holds for $Q\ge s \ge 10$ and $y\ge 1/2$   (details omitted.)

\newpage

\section{Proof of Lemma \ref{LOCMAX} and 
Theorem \ref{LAPLA}} \label{PROLAPLA}


We consider   $\Psi( \bar{\omega}\, ,\, \bar{\lambda} )\, =\, \Psi( \bar{\omega}\, ,\, \bar{\lambda}\, ,\, \bar{a} \,,\, \bar{c}  \, )$
as function of $w_i, \lambda_i,$ $i=1, 2$ in a neighborhood of $(\omega_1, \omega_2)=(\lambda_1, \lambda_2)=(1/3, 1/3).$  
The parameters  $a_i, c_i$  are given by 
$ Q(a_i) = \lambda_i k\gamma /\omega_i,$ and $ c_0=1, \, \,\, \, R(c_1, c_2)= (\lambda_1k, \lambda_2k).$ 
Subsection \ref{LOLICO} shows that this is well defined and $a_i, c_i$ is  differentiable in $\lambda_i, \omega_i.$
For $\lambda_i=1/3, \omega_i=1/3$ we have $a_i=s, c_i=1 \, (Q(s)=k\gamma$ defining $s.$) 
We show that the partial derivatives of $\ln \Psi ( \bar{\omega}\, ,\, \bar{\lambda} )$ are $0$ for $\omega_i= \lambda_i=1/3$
and the Hessian matrix is negative definite. This implies Lemma \ref{LOCMAX}.

For $i=1, 2$ the  first derivatives are, with
$a_i', c_i' $ denoting the right derivatives of $a_i, c_i$  resp. and recalling that 
$Q(x)= \frac{xq'(x)}{q(x)}, q(x)=\exp(x)-x-1, R(x_1,x_2)= $ \\ 
$\left( \frac{x_1r_{x_1}(1, x_1, x_2)}{r(1, x_1, x_2)}\,, \, \frac{x_2r_{x_2}(1, x_1, x_2)}{r(1, x_1, x_2)} \right) $ 
\begin{eqnarray}
\frac{d \ln \Psi(\bar\omega, \bar\lambda)}{d \omega_i} &=&
    - \ln q(a_0) + \omega_0 \frac{a_0' q'(a_0)}{q(a_0)} + \ln \omega_0 + 1 + \nonumber \\
&& \quad    + \ln q(a_i)  + \omega_i \frac{a_i' q'(a_i)}{q(a_i)} - \ln \omega_i -1 - \nonumber \\
&& \quad    - k \gamma \lambda_0 \frac{a_0'}{a_0} - k \gamma \lambda_i \frac{a_i'}{a_i} \nonumber \\
&=& \ln \omega_0 - \ln \omega_i + \ln q(a_i) - \ln q(a_0) \quad  (\textrm{ using  } Q(a_i) = k\gamma\lambda_i/\omega_i)  \label{eqn:diffWi}. 
\end{eqnarray}

\begin{eqnarray}
\frac{d \ln \Psi(\bar\omega, \bar\lambda )}{d \lambda_i} &=&
    \omega_0 \frac{a_0' q'(a_0)}{q(a_0)} + \omega_i \frac{a_i' q'(a_i)}{q(a_i)} + \nonumber \\
&& \quad k \gamma \Bigg( - \ln \lambda_0 - 1 + \ln a_0 - \lambda_0 \frac{a_0'}{a_0} + \nonumber \\
&& \quad \quad          + \ln \lambda_i + 1 - \ln a_i - \lambda_i \frac{a_i'}{a_i} - \nonumber \\
&& \quad \quad          - \ln c_i - \lambda_1 \frac{c_1'}{c_1} - \lambda_2 \frac{c_2'}{c_2}
                \Bigg) + \nonumber \\
&& \quad \gamma \frac{c_1' r_{c_1}(1, c_1,c_2) + c_2' r_{c_2}(1, c_1, c_2)}{r(1, c_1, c_2)} \nonumber \\
&=& k \gamma ( \ln \lambda_i - \ln \lambda_0 + \ln a_0 - \ln a_i - \ln c_i )  \label{eqn:diffLi} \\
 & & \quad  ( \mbox{ using  } \, \, \, R(c_1,c_2) = (k\lambda_1, k\lambda_2), Q(a_i) = k\gamma\lambda_i/\omega_i) ) \nonumber 
\end{eqnarray}

For $\bar\lambda = \bar\omega = (1/3, 1/3)$ the terms in 
(\ref{eqn:diffWi}) and (\ref{eqn:diffLi}) yield $0$.

The second derivatives (with $i,j \in \{1,2\}, i \not= j)$ are (observe that some of the subsequent terms are
equal as the derivative does not depend on the ordering of the variables)
\begin{eqnarray}
\frac{d^2 \ln \Psi(\bar\omega, \bar\lambda)}{d \lambda_i,\lambda_i} &=&
    k \gamma \left( \frac{1}{\lambda_i} + \frac{1}{\lambda_0} +
        \frac{a_0'}{a_0} - \frac{a_i'}{a_i} - \frac{c_i'}{c_i} \right)
            \label{eqn:diffLiLi} \\
\frac{d^2}{d \lambda_i,\lambda_j} &=&
    k \gamma \left( \frac{1}{\lambda_0} + \frac{a_0'}{a_0} - \frac{c_i'}{c_i} \right)
            \label{eqn:diffLiLj} \\
\nonumber \\
\frac{d^2\ln \Psi(\bar\omega, \bar\lambda)}{d \omega_i,\omega_i} &=&
    - \frac{1}{\omega_0} - \frac{1}{\omega_i} + \frac{a_i' q'(a_i)}{q(a_i)}
    - \frac{a_0' q'(a_0)}{q(a_0)}
            \label{eqn:diffWiWi} \\
\frac{d^2\ln \Psi(\bar\omega, \bar\lambda)}{d \omega_i,\omega_j} &=&
    - \frac{1}{\omega_0} - \frac{a_0' q'(a_0)}{q(a_0)}
            \label{eqn:diffWiWj}
\end{eqnarray}

\begin{eqnarray}
\frac{d^2\ln \Psi(\bar\omega, \bar\lambda)}{d \omega_i, \lambda_i} &=&
    \frac{a_i' q'(a_i)}{q(a_i)} - \frac{a_0' q'(a_0)}{q(a_0)}   \label{eqn:diffWiLi} \\
\frac{d^2}{d \omega_i, \lambda_j} &=&
    - \frac{a_0' q'(a_0)}{q(a_0)}   \label{eqn:diffWiLj} \\
\nonumber \\
\frac{d^2\ln \Psi(\bar\omega, \bar\lambda)}{d \lambda_i, \omega_i} &=&
    k \gamma \left( \frac{a_0'}{a_0} - \frac{a_i'}{a_i} \right) \label{eqn:diffLiWi} \\
\frac{d^2 \ln \Psi(\bar\omega, \bar\lambda)}{d \lambda_i, \omega_j} &=&
    k \gamma \frac{a_0'}{a_0}   \label{eqn:diffLiWj}
\end{eqnarray}
In (\ref{eqn:diffLiLi}) - (\ref{eqn:diffLiWj}) we need  several 
$a_i'$ and $c_i'.$  We  get these
from the defining equations  $Q(a_i)$ and $R(c_1,c_2).$ 
\paragraph{Derivative of $a_0.$}
By $Q(a_i) = k \gamma \lambda_i/\omega_i$ we have
\begin{equation}
\frac{a_0 q'(a_0)}{q(a_0)} = \frac{k \gamma \lambda_0}{\omega_0} \quad \Leftrightarrow \quad
\frac{a_0}{k \gamma \lambda_0} = \frac{q(a_0)}{\omega_0 q'(a_0)}  \nonumber 
\end{equation}
Taking the derivative of both sides wrt.  $\omega_i$ yields
\begin{eqnarray}
\frac{a_0'}{k \gamma \lambda_0} =
    \frac{a_0' q'(a_0) \omega_0 q'(a_0) -
        q(a_0) \left( -q'(a_0) + \omega_0 a_0' q''(a_0) \right)}{\omega_0^2 q'(a_0)^2} \nonumber \\
=\, \frac{a_0'}{\omega_0}\, +\, \frac{q(a_0)}{\omega_0^2 q'(a_0)} \,- \, 
\frac{ a_0' q''(a_0)q(a_0)}{\omega_0 q'(a_0)^2 } \nonumber \\
\Longleftrightarrow \frac{a_0' q'(a_0)}{q(a_0)} =
    \frac{1}{\omega_0 \left( \frac{\omega_0}{k \gamma \lambda_0} + 
                \frac{q''(a_0) q(a_0)}{q'(a_0)^2} - 1 \right)} \nonumber
\end{eqnarray}
The last step is obtained by collecting all terms with $a_0'$ on the left,
multiplying with $q'(a_0)/q(a_0)$ and dividing through the term in brackets.  
We define
\begin{equation}
C(x) := \left( \frac{q(x)}{x q'(x)} + \frac{q''(x) q(x)}{q'(x)^2} - 1 \right). \nonumber 
\end{equation}
Using $Q(a_i) = k \gamma \lambda_i / \omega_i$ the  preceding equation becomes 
\begin{eqnarray}
\frac{{a_0'}_{,\omega_1} q'(a_0)}{q(a_0)} = \frac{{a_0'}_{,\omega_2} q'(a_0)}{q(a_0)} =
    \frac{1}{\omega_0 \left( \frac{q(a_0)}{a_0 q'(a_0)} + 
                \frac{q''(a_0) q(a_0)}{q'(a_0)^2} - 1 \right)} =
    \frac{1}{\omega_0 C(a_0)}. \nonumber 
\end{eqnarray}
We use  equation $Q(a_i) = k \gamma \lambda_i / \omega_i$ again to get 
\begin{eqnarray}
k \gamma \frac{{a_0'}_{\omega_1}}{a_0} = k \gamma \frac{{a_0'}_{\omega_2}}{a_0} =
    \frac{1}{\lambda_0 \left( \frac{q(a_0)}{a_0 q'(a_0)} +
                \frac{q''(a_0) q(a_0)}{q'(a_0)^2} - 1 \right)} =
    \frac{1}{\lambda_0 C(a_0)}  \nonumber 
\end{eqnarray}

\paragraph{Derivative of $a_1.$}
As for $a_0$ we get 
\begin{eqnarray}
k \gamma \frac{{a_1'}_{\omega_1}}{a_1} =
    - \frac{1}{\lambda_1 \left( \frac{q(a_1)}{a_1 q'(a_1)} +
                \frac{q''(a_1) q(a_1)}{q'(a_1)^2} - 1 \right)} =
    - \frac{1}{\lambda_1 C(a_1)} \nonumber \\
\frac{{a_1'}_{\omega_1} q'(a_1)}{q(a_1)} =
    - \frac{1}{\omega_1 \left( \frac{q(a_1)}{a_1 q'(a_1)} + 
                \frac{q''(a_1) q(a_1)}{q'(a_1)^2} - 1 \right)} =
    - \frac{1}{\omega_1 C(a_1)}. \nonumber 
\end{eqnarray}
The remaining  $a_i-$derivatives can be calculated in a similar way. For
$\omega_i = \lambda_i = \frac{1}{3}$ (then  $ a_i = s, c_i=1)$
we get 
\begin{equation}
k\gamma\frac{a_i'}{a_i} \quad \textrm{and} \quad
\frac{a_i' q'(a_i)}{q(a_i)} \quad \textrm{is} \quad \frac{3}{C(s)} \mbox{ for } i=0 \mbox{ and }  -\frac{3}{C(s)} \mbox{ for } i=1, 2 \label{MAT1}. 
\end{equation}

\paragraph{Derivatives of $c_i$}
By $R(c_1,c_2) = (k\lambda_1, k\lambda_2)$ we have
\begin{equation}
\frac{c_1 r_{c_1}(1, c_1,c_2)}{r(1, c_1,c_2)} = k \lambda_1 \quad \Longleftrightarrow \quad
\frac{c_1}{k} = \frac{\lambda_1 r(1, c_1,c_2)}{r_{c_1}(1, c_1, c_2)} \nonumber 
\end{equation}
Taking the derivative  wrt. $\lambda_1$ leads to  (omitting the argument  $1$) 
\begin{eqnarray} 
{c_1'}_{\lambda_1} \left( \frac{1}{k} - \lambda_1 +
        \lambda_1 \frac{r(c_1,c_2) r_{c_1,c_1}(c_1,c_2)}{r_{c_1}(c_1,c_2)^2} \right) = \nonumber \\
\quad  = \lambda_1 {c_2'}_{\lambda_1} \left( \frac{r_{c_2}(c_1,c_2)}{r_{c_1}(c_1,c_2)} -
                    \frac{r(c_1,c_2) r_{c_1,c_2}(c_1,c_2)}{r_{c_1}(c_1,c_2)^2} \right)
                    + \frac{r(c_1,c_2)}{r_{c_1}(c_1, c_2)} \label{eqn:diffC1L1}
\end{eqnarray}
Also by $R(c_1,c_2) = (k\lambda_1, k\lambda_2)$ we have
\begin{equation}
\frac{c_2 r_{c_2}(c_1,c_2)}{r(c_1,c_2)} = k \lambda_2 \quad \Longleftrightarrow \quad
\frac{c_2}{k \lambda_2} = \frac{r(c_1,c_2)}{r_{c_1}} \nonumber 
\end{equation}
Taking the derivative wrt.  $\lambda_1$ again leads to
\begin{eqnarray}
{c_2'}_{\lambda_1} \left( \frac{1}{k \lambda_2} - 1 +
        \frac{r(c_1,c_2) r_{c_2,c_2}(c_1,c_2)}{r_{c_2}(c_1,c_2)^2} \right) = \nonumber \\
\quad  = \lambda_1 {c_1'}_{\lambda_1} \left( \frac{r_{c_1}(c_1,c_2)}{r_{c_2}(c_1,c_2)} -
                    \frac{r(c_1,c_2) r_{c_2,c_1}(c_1,c_2)}{r_{c_2}(c_1,c_2)^2} \right)
                      \label{eqn:diffC2L1}
\end{eqnarray}
Again we consider the  point $\lambda_1 = \lambda_2 = \frac{1}{3}$ then $ c_1 = c_2 = 1$  and
equations
(\ref{eqn:diffC1L1}) and (\ref{eqn:diffC2L1}) yield 
\begin{equation}
2 {c_1'}_{\lambda_1} = {c_2'}_{\lambda_1} + 9 \quad \textrm{and} \quad
2 {c_2'}_{\lambda_1} = {c_1'}_{\lambda_1}. \nonumber
\end{equation}
Therefore we have $\frac{{c_1'}_{\lambda_1}}{c_1} = 6$ and
$\frac{{c_2'}_{\lambda_1}}{c_2} = 3$.
Analogously for the derivatives wrt. $\lambda_2$ we get
$\frac{{c_1'}_{\lambda_2}}{c_1} = 3$ and
$\frac{{c_2'}_{\lambda_2}}{c_2} = 6$.

Putting the derivatives together we get from (\ref{eqn:diffLiLi}) - (\ref{eqn:diffLiWj}) the following Hessian-Matrix of $\ln \Psi(\bar\omega,\bar\lambda)$
at the point  $\omega_i=\lambda_i=1/3$ ,  abbreviating $D=3/C(s), $ 
\begin{equation}
H = \left(
\begin{array}{cccc}
-2(\frac{1}{3}+D)   & -(\frac{1}{3}+D)  & 2D    & D     \\
-(\frac{1}{3}+D)    & -2(\frac{1}{3}+D) & D     & 2D    \\
2D                  & D                 & -2(\frac{8}{3}k\gamma+D)  & -(\frac{8}{3}k\gamma+D)   \\
D                   & 2D                & -(\frac{8}{3}k\gamma+D)   & -2(\frac{8}{3}k\gamma+D)
\end{array}
\right) \nonumber 
\end{equation}
$H $ is negative definite iff $-H$ is positive definite.

\begin{lem}[Jacobi]
A matrix $A = A^T = (a_{ij}) \in \mathbb{R}^{n \times n}$ is positive definite iff
the determinants of ist $n$ main-sub-matrices $S_i$ are positive.
\begin{equation*}
S_1 = a_{11}, \quad
S_2 = \left( \begin{array}{cc} a_{11} & a_{12} \\ a_{21} & a_{22}\end{array} \right),
..., \quad
S_k = \left( \begin{array}{ccc} a_{11} & \hdots & a_{1k} \\
                                \vdots && \vdots \\
                                a_{k1} & \hdots & a_{kk}
            \end{array} \right),
..., \quad
S_n = A
\end{equation*}
\label{lem:jacobi}
\end{lem}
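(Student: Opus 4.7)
The statement is Sylvester's criterion, a classical result. The plan is to prove both directions separately, splitting off the easier one first.

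For the forward direction (positive definite $\Rightarrow$ all $\det(S_k)>0$), I would first show that each leading principal submatrix $S_k$ inherits positive definiteness from $A$: for any nonzero $y\in\mathbb{R}^k$, extend it by zeros to $\tilde y\in\mathbb{R}^n$; then $y^T S_k y = \tilde y^T A\, \tilde y >0$. Since a symmetric positive definite matrix is diagonalizable with strictly positive eigenvalues, its determinant (which equals the product of the eigenvalues) is positive, giving $\det(S_k)>0$ for every $k$.

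For the reverse direction I would proceed by induction on $n$. The base case $n=1$ is immediate, since $A = (a_{11})$ with $a_{11}>0$ makes $a_{11}x^2>0$ for $x\neq 0$. For the induction step, assume the result for $n-1$, so that $S_{n-1}$ is positive definite. Write $A$ in block form
\[
A \;=\; \begin{pmatrix} S_{n-1} & b \\ b^T & a_{nn} \end{pmatrix},
\]
and apply the block-triangular congruence transformation $A \mapsto P^T A P$ with $P=\begin{pmatrix} I & -S_{n-1}^{-1}b \\ 0 & 1\end{pmatrix}$, which is invertible and preserves definiteness. The result is $\mathrm{diag}(S_{n-1},\alpha)$ where $\alpha = a_{nn}-b^T S_{n-1}^{-1} b$ is the Schur complement. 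By multiplicativity of determinants, $\det(A) = \det(S_{n-1})\cdot\alpha$; combined with $\det(A)>0$ and $\det(S_{n-1})>0$ this forces $\alpha>0$. Hence the congruent matrix $\mathrm{diag}(S_{n-1},\alpha)$ is positive definite, and so is $A$.

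The only real step to verify carefully is the Schur-complement congruence identity together with invertibility of $S_{n-1}$ (which follows from $\det(S_{n-1})>0$ by the inductive hypothesis). Everything else is bookkeeping. Since the lemma is a textbook fact invoked here merely to conclude negative definiteness of $H$ from sign conditions on its principal minors, a reference to any standard linear algebra text would suffice in place of the detailed proof.
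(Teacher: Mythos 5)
Your proposal is a correct and complete proof of Sylvester's (Jacobi's) criterion: the forward direction by restricting the quadratic form to leading blocks and using positivity of eigenvalues, and the converse by induction on $n$ via the Schur-complement congruence $P^T A P = \mathrm{diag}(S_{n-1},\alpha)$ with $\alpha = a_{nn}-b^T S_{n-1}^{-1}b$, using $\det(A)=\det(S_{n-1})\cdot\alpha$ to force $\alpha>0$. The paper itself offers no proof of this lemma --- it is stated as a classical fact and immediately applied to $-H$ --- so there is no argument to compare against; your observation that a reference to a standard linear algebra text would suffice matches exactly how the paper treats it.
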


\noindent By Lemma \ref{lem:jacobi} $-H$ is positive definite,  as $D>0$ as $C(x)>0$ for $x>0, $ and
\begin{eqnarray*}
\det S_1 &=& 2\left( \frac{1}{3} + D \right) > 0 \\
\det S_2 &=& 3\left( \frac{1}{3} + 2D + 3D^2 \right) > 0 \\
\det S_3 &=& \frac{16}{9} k \gamma + \frac{2}{3} D +
            \frac{32}{3} k \gamma D + 2D^2 + 16 k \gamma D^2 > 0 \\
\det S_4 = \det (-H)&=& \frac{64}{9} k^2 \gamma^2 +
                D^2 + 64 k^2 \gamma^2 D^2 + 16 k \gamma D^2 + \\
&& \quad        + \frac{128}{3} k^2 \gamma^2 D + \frac{16}{3} k \gamma D > 0.
\end{eqnarray*}

\noindent
{\bf Theorem \ref{LAPLA} \quad (repeated) } 
Let $U={\cal U}_\varepsilon(1/3, 1/3).$ There is an $\varepsilon>0$ such that \\
\hspace*{4cm} $\sum_{\bar{\lambda}  , \bar{\omega} \in U} N(\bar{w}, \bar{l})/N_0 \,< \, C \cdot 3^{(1-\gamma)n}.$ \\  
\begin{proof} 
For $ \bar{\lambda}  , \bar{\omega} \in U $ and $a_i$ given by $Q(a_i)= \lambda_ik\gamma/\omega_i$ and 
$c_0=1$ and $R(c_1, c_2)= ( \lambda_1k,\, \lambda_2k)$ we have 
$ \frac{N(\bar{w}, \bar{l})}{N_0} \, \, \le  \, 
\,O \left(\frac{1}{n^2}\right) \Psi( \bar{\omega}\, ,\, \bar{\lambda}\, ,\, \bar{a} \, ,\, \bar{c}  \, )^n $
by Corollary  \ref{PSIINU}.  
Let $\bar{x}=(x_1, \dots, x_4)$ and $h(\bar{x})\,= \,  \ln \Psi( \bar{\omega}\, ,\, \bar{\lambda}\, ,\, \bar{a} \, ,\, \bar{c}  \, )$ 
with $\omega_1=x_1, \omega_2=x_2, \lambda_1=x_3,  \lambda_2=x_4 $ and $a_i, c_i$ as before  for  $\bar{\omega}, \bar{\lambda} \in U.$ 
Let $\overline{1/3}= (1/3, 1/3, 1/3, 1/3)$ then $h(\overline{1/3})=\ln 3^{1-\gamma},\, h_{x_i}(\overline{1/3})\,= \, 0 \, $
and $-\mbox{Hess}(h)(\overline{1/3})$ $,  \mbox{Hess}(h)$ the Hessian matrix of $h, $ is positive definite (
proved above, note  $\mbox{Hess}(h)(\overline{1/3})=H.)$ We abbreviate 
$h_{i, j}\,= \, h_{x_i, x_j}( \overline{1/3})$  and by Taylor's Theorem we have for $\sum_i \,x_i^2 \,\rightarrow \,\,0$ 
\begin{eqnarray} 
h(\overline{1/3}+ \bar{x}) \,= \, h(\overline{1/3}) \,- \, \frac{1}{2} \, \, \sum_i\, \sum _j -h_{i,j} x_i x_j\, + \, o(\sum_i x_i^2) \nonumber \\
\le \,h(\overline{1/3}) \,- \, \frac{1}{2} \, \, \left( \sum_i \,- (h_{i,i} + \delta)  x_i^2\, + \, \sum_i\, 
\sum _{j \neq i} -h_{i,j} x_i x_j \right)\label{TAYLOR1}
\end{eqnarray}
with $\delta $ arbitrarily small for  $\sum_i \,x_i^2 $ small enough. We 
pick $\delta$ such 
that 
$-(\mbox{Hess}(h)(\overline{1/3})+ \delta I )$
is still  positive definite.  

We consider  (\ref{TAYLOR1}) with 
$ x_1=w_1/n-1/3,\,\, \, x_2=w_2/n-1/3$ 
and $x_3=l_1/(k\gamma n)-1/3\, \,\, x_4= l_2/(k \gamma n)-1/3.$
Then 
\begin{eqnarray}
\sum_{\bar{\omega}, \bar{\lambda} \in U} \,   \Psi ( \bar{\omega}\, ,\, \bar{\lambda}\, ,\, \bar{a} \, ,\, \bar{c}  \, )^n\,
= \,\sum_{\bar{\omega}, \bar{\lambda} \in U} \exp ( h(x_1, x_2, x_3, x_4) n) \nonumber \\
\, \le \, 3^{(1-\gamma)n} \cdot \sum_{\bar{\omega}, \bar{\lambda} \in U} \exp
\left[\,- \, \frac{1}{2}\left( \, \sum_i \, - (h_{i,i} + \delta)  x_i^2\, + \, \sum_i\, \sum _{j \neq i}-h_{i,j} x_i x_j \right)n \right] \label{TAYLOR2}
\end{eqnarray} 
Note that $\omega_i=w_i/n,\, \lambda_i=l_i/(k\gamma n), \, w_i, l_i $ integer. 
We distribute the factor $n$ into the $x_i$ multiplying 
each $x_i$ with $\sqrt{n}:$ 
$x_1 \sqrt{n}\,= \, w_1/\sqrt{n}-\sqrt{n}/3$ 
and $x_3\sqrt{n}\,= \, l_1/(k \gamma\sqrt{n})-\sqrt{n}/3.$ 
As $w_i, l_i$ are integers, the  sum  in (\ref{TAYLOR2}) multiplied  with $1/(\sqrt{n}^4(k\gamma)^2)$ 
is a Riemannian sum of the integral 
$\int \int \int \int \exp[-(1/2) ( - (h_{i,i} + \delta)  x_i^2\, + \, \sum_i\, \sum _{j \neq i}-h_{i,j} x_i x_j)]dx_1 dx_2 dx_3 dx_4$ 
with  bounds $-\infty, \, \, \infty$  for each $x_i.$  Following \cite{dBRU}, page 71, the integral evaluates to 
$(2\pi)^2/\sqrt{D}$ where $D>0$ is  the determinant of $(-h_{i, j})+ \delta I.$ Thus for $n$ large the 
sum in (\ref{TAYLOR2})is  $(2\pi)^2/\sqrt{D}(1+o(1)) \sqrt{n}^4(k\gamma)^2\,= \, O(n^2).$ 
The claim follows.
\end{proof} 

\quad \quad

\section{Remaining proofs}  \label{REPRO}

\subsection{Local limit consideration} \label{LOLICO}

{\bf Lemma  \ref{MLOC} (repeated)}
Let $Cn \ge m \ge (2+\varepsilon)n, \, \, C, \varepsilon>0$ constants. Then 
\begin{eqnarray}
M(m, n) \,= \, \Theta(1)\cdot \left( \frac{ m}{a e}\right)^m\cdot q(a)^n \mbox{ with } a \mbox{ defined by }  Q(a)=\frac{m}{n} \nonumber
\end{eqnarray}

\begin{proof}  
As $Q(x)$  is increasing the  assumptions for $m/n$ imply  that   $a$ is bounded away from $0$ and $\infty.$ 
Let $ X \,= \, X(x) $ be a random variable with Prob$[ X=j ]\,= \, (x^j/j!)/ q(x), \,$ for  $  j \ge 2, $ and
let  $X_1, \dots ,X_n $ be independent copies of $X.$ Then  
\begin{eqnarray}
\sum_{ l_i \ge 2} {m \choose l_1, \dots , l_n} \,= \, \mbox{Prob}[ X_1 + \dots + X_n\,= \, m ] \cdot \frac{q(x)^n}{x^m}\cdot m!.  \nonumber 
\end{eqnarray}
We have   E$[X]= xq'(x)/q(x)=Q(x).$ We pick $x=a$ then  E$[X]=m/n, $  E$[X_1+ \dots + X_n]\,= \, m.$ 
The bounds on $a$ imply that 
$C > \mbox{VAR}[X] > \epsilon >0$ (constants $\epsilon, C $ not the same as above.) 
Therefore   the Local  Limit Theorem for lattice type random variables , cf. \cite{DUR} , Theorem 5. 2, page 112,
implies that  Prob$[ X_1 + \dots + X_n\,= \, m  ]=  \Theta\left(\frac{1}{\sqrt{m }}\right) .$ 
Applying Stirling's formula in the form  $ m! = \Theta (\sqrt{m}) \left(\frac{m}{e}\right)^m $
yields the claim. 
\end{proof}

We come to Lemma \ref{KLOC}. First we show that $R(c_1, c_2)\, = \, (R_1(c_1, c_2), R_2(c_1, c_2))\,=\,(k \lambda_1, k\lambda_2)$ 
with $R_i(x_1, x_2)= \frac{x_i r_{x_i}(1, x_1, x_2)}{r(1, x_2, x_2)}$ defines $c_i=c_i(\lambda_1, \lambda_2)$ 
and that $c_i$  is  differentiable with respect to $\lambda_i$ for $(\lambda_1, \lambda_2)\in {\cal U}_\varepsilon(1/3, 1/3).$ 
By the theory of implicit function of several variables we need to show that the
Jacobian Determinant of $R(x_1, x_2)$ is $\neq 0$ for $x_1=x_2=1.$ The Jacobian Matrix  of $R(x_1, x_2)$ is , omitting the arguments $x_i,$ recalling that  $r=r(1, x_1, x_2)$ is our polynomial, 
\begin{equation}
J = \frac{1}{r^2}\left(
\begin{array}{cc}
  (r_{x_1}+x_1r_{x_1, x_1})r-x_1 r_{x_1}^2 \quad & \quad x_1r_{x_1, x_2}r-x_1 r_{x_1}r_{x_2} \\
   & \\
x_2r_{x_1, x_2}r-x_2 r_{x_1}r_{x_2} \quad & \quad  (r_{x_2}+x_2r_{x_2, x_2})r-x_2 r_{x_2}^2
\end{array}
\right). \nonumber 
\end{equation}

For $x_1=x_2=1$ we get the following values:
$r=r(1, 1, 1)= 3^{k-1}, \, r_{x_1}= r_{x_2}=k3^{k-2}, \, r_{x_1, x_1}=r_{x_2, x_2}= r_{x_1, x_2}  = k(k-1)3^{k-3}.$
>From this we get that the determinant of $J$ for $x_1=x_2=1$ is  $.... \neq 0.$ \\

\noindent
{\bf Lemma  \ref{KLOC} (repeated)}
There is an   $\varepsilon>0$ such that for $(\lambda_1, \lambda_2) \in  {\cal U}_\varepsilon(1/3, 1/3)$  
\begin{eqnarray*}
K(\bar{l}) \,= \, O\left( \frac{1}{n}\right) \cdot \frac{r(1, c_1, c_2)}{c_1^{l_1}c_2^{l_2}} \mbox{ with } R(c_1, c_2)=(k\lambda_1, k\lambda_2) \mbox{ defining } c_1, c_2.  \nonumber 
\end{eqnarray*}
\begin{proof}
The previous consideration shows that $(c_1, c_2)$ is close to $(1, 1)$ and well-defined. Let
$(X, Y)=(X(x_1, x_2), Y(x_1, x_2))$ be the random vector with 
\begin{equation*}
\mbox{Prob}[(X, Y)\,=\,(k_1, k_2)]\,= \, \frac{ { k \choose k-k_1-k_2, \,k_1, \,k_2} x_1^{k_1}x_2^{k_2}}{r(1, x_1, x_2)} \mbox{ if } k_1=k_2 \mod 3
\end{equation*}
and $0$ otherwise. Then $E(X, Y)\,= \,( R_1(x_1, x_2),\, R_2(x_1, x_2)).$ We consider $m$ independent  copies 
$(X_i\, , Y_i)$ of $(X,\, Y)$ with $(x_1,\,x_2)= (c_1,\, c_2).$ Then $E\left[\sum_i\, (X_i, Y_i)\right]= (k \lambda_1 m, \,k\lambda_2 m)=(l_1, l_2).$ 
Let $DCo$ be the determinant of the covariance matrix of $(X, Y).$ We show below that for $(c_1,c_2)$ close to $(1, 1)$ we have  that
$DCo >0$  for constants.  The Local Limit Theorem  for lattice random vectors \cite{INDER}, Theorem 22.1, Corollary 22.2 with $k=2$
shows that Prob$\left[\sum_i\, (X_i, Y_i) \,= \, (k\lambda_1 m, \,k\lambda_2 m)\right]\,= \, \Theta(1/m).$  This implies the claim.

The covariance matrix of $(X, Y)$ is defined as
\begin{equation}
Co \,= \, \left(
\begin{array}{cc}
  EX^2-(EX)^2\quad & \quad  E[XY]-\, E[X]E[Y]  \\
   & \\
   E[XY]-\, E[X]E[Y]              \quad & \quad   EY^2-(EY)^2
\end{array}
\right). \nonumber 
\end{equation}
For $(X, Y)\,=\,(X(x_1, x_2), Y(x_1, x_2))$ we get 
\begin{eqnarray*}
EX^2\,= \, \frac{x_1 (x_1 r_{x_1, x_1}(1, x_1, x_2)\,+ \, r_{x_1}(1, x_1, x_2)}{r(1, x_1,x_2))}, \, \\
 EY^2\,= \, \frac{x_2 (x_2r_{x_2, x_2}(1, x_1, x_2)\,+ \, r_{x_2}(1, x_1, x_2)}{r(1, x_1,x_2))}, \, \nonumber \\
E[XY]\,= \, \frac{x_1x_2r_{x_1x_2}(1, x_1, x_2)}{r(1, x_1, x_2)}. 
\end{eqnarray*}
This leads to  a matrix similar to the Jacobian Matrix above: For  $x_1=x_2=1$ its determinant is positive. 
\end{proof}

\subsection{The sharp threshold} 
To prove the sharp threshold we apply a general theorem.
Let $A \subseteq \{0,\, 1\}^N$ and let $a_m$ be the number of elements of 
$A$ with exactly $m$ $1'$s. We let $\mu_p(A) \,= \, \sum_{m=0}^N \, a_m \cdot p^m \cdot (1-p)^{N-m}$
be the probability of $A,$ note $a_m \le {N \choose m}.$  If $A$ is a non-trivial,
monotone set we have that  $\mu_p(A)$ is a strictly increasing, continuous, differentiable  function in $0 \le p \le 1.$
In this case for $0 \le \tau \le 1$ we have that $p_\tau$ is well defined by $ \mu_{p_\tau}(A)=\tau.$ 
Not let $A=(A_n)_{n\ge 1}$ and let be $A_n$  be monotone. We say that $A$ has a coarse threshold iff 
there exist constants $0 < \rho < \tau < 1$ such that $(p_{\tau}- p_{\rho})/p_{\rho} \, \ge \varepsilon$
for a constant $\varepsilon$ (and infinitely many $n.$) 
We can assume that $p_{\tau}=O(p_\rho)$ otherwise the threshold is
clearly coarse. Moreover, we assume  that $p_{1-o(1)}\,= \,o(1).$

\begin{theor}[ Bourgain, \cite{FRI} , Theorem 2.2 ]
There exist functions $\delta=\delta(C, \tau)>0$ and  $K=K(C, \tau)$ 
such that the following holds: Let   $A=A_n$ with $A \subseteq \{0,\, 1\}^N$ be 
a monotone set with $\tau \le \mu_p(A) \le 1-\tau$ for constant $1/2> \tau > 0$ 
and assume that $p \cdot \frac{d\mu_p(A)}{dp} < C.$ Then at least one of the following two possibilities
holds: \\
1.\begin{eqnarray*} \mbox{Prob}_p[a \in A\, \, ; \, \, \exists b \in A \, \, , |b| \le K \, \, , b \subseteq a] \,> \, \delta
\end{eqnarray*}
2. There exists  $b\, \in \{0, 1\}^N   
, \, b \notin \, A\, , |b| \le K$  such that the conditional probability 
\begin{eqnarray*}
\mbox{Prob}_p[a \in A \,| \, b \subseteq a ]\,> \, \mbox{Prob}_p[A]\,+ \, \delta .
\end{eqnarray*}
\end{theor}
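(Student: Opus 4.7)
The theorem is cited from Friedgut's paper \cite{FRI} and is not reproved in this work, so my proposal sketches the original Bourgain argument at a high level rather than giving a new proof. The plan is a contradiction argument using discrete Fourier analysis on $\{0,1\}^N$ under the $p$-biased product measure: assume both conclusions (1) and (2) fail for every positive $\delta$ and every $K$, and derive a lower bound on $p \cdot d\mu_p(A)/dp$ that contradicts the hypothesis $p \cdot d\mu_p(A)/dp < C$.

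First I would expand $f = \mathbf{1}_A$ in the orthonormal basis $\{\chi_S\}$ of $p$-biased Fourier characters and invoke Russo's formula to rewrite $p \cdot d\mu_p(A)/dp$, up to a factor of $p(1-p)$, as the total $p$-biased influence $\sum_i \mathrm{Inf}_i^{(p)}(A) = \sum_S |S|\,\hat{f}(S)^2$. Next I would apply the Bonami--Beckner hypercontractive inequality to the noise operator $T_\rho$ with $\rho$ tuned to $p$, bounding the $L^4$ norm of the low-degree truncation $f^{\le k}$ by its $L^2$ norm. This is the standard vehicle for converting a bounded-influence assumption into structural information about $A$.

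With these tools in place I would run a dichotomy. If, after truncating $f$ to a bounded Fourier degree and discarding the low-influence coordinates, what remains is essentially a junta supported on some bounded coordinate set $S^\star$, then by monotonicity and non-triviality of $A$ I can extract a minimal witness $b \subseteq S^\star$ of size at most $K$ with $b \in A$ contributing constant mass to $A$, giving conclusion (1). Otherwise the Fourier spectrum spreads in a controlled way, and a pigeonhole over short patterns $b$ produces one with $b \notin A$ for which $\mbox{Prob}_p[a \in A \mid b \subseteq a]$ exceeds $\mu_p(A) + \delta$, giving conclusion (2).

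The main obstacle is precisely the step from failure of (1) to conclusion (2). In the symmetric $p = 1/2$ regime one could invoke Friedgut's junta theorem off the shelf, but in the $p$-biased setting with $p = o(1)$ one needs Bourgain's refinement: a careful pigeonhole over Fourier levels, combined with hypercontractivity with $\rho$ adapted to $p$, that locates a bounded-size booster $b$ whose conditional bias is bounded below by a constant independent of $N$ rather than decaying with it. The resulting constants $\delta = \delta(C,\tau)$ and $K = K(C,\tau)$ depend only on $C$ and $\tau$, and the statement is qualitative rather than quantitative; this is precisely what the applications to $F(n,p)$ need in order to upgrade constant satisfiability probability to high probability.
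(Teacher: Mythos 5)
You are right that this statement is not proved in the paper at all: it is quoted verbatim as Bourgain's theorem from \cite{FRI} (Theorem 2.2 there) and used as a black box in the proof of Corollary \ref{Thresh}, so treating it as an external citation is exactly what the paper does and no gap arises on that account. Your sketch of the underlying argument should, however, be taken only as a heuristic: the Russo--Margulis identification of $p\,\frac{d\mu_p(A)}{dp}$ with total $p$-biased influence and the use of hypercontractivity are indeed ingredients of Bourgain's appendix, but the dichotomy you describe (``essentially a junta'' versus ``spread spectrum'') is not faithful to it --- the whole point of Bourgain's result is that for $p=o(1)$ one \emph{cannot} conclude junta structure, and conclusion (1) is not extracted from a junta but from bounding the measure of points of $A$ containing a bounded-size witness, with the booster set of conclusion (2) produced when that local mass is small. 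Since the paper supplies no proof to compare against, this imprecision costs nothing here; it would only matter if one actually had to reprove Bourgain's theorem rather than cite it.
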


\begin{cor} \label{Thresh}
$A=(A_n)$ has a sharp threshold if 
$ p_{1-o(1)}=O(p_\tau)$ for all $\tau>0,$   and 
for each $1/2>\tau >0, \, \delta >0, \varepsilon >0 , K, $ $p_\tau < p < p_{1-\tau}$ and all sufficiently large $n$ the following two statements hold: \\
1. \begin{eqnarray*} \mbox{Prob}_p[a \in A\, \, ; \, \, \exists b \in A \, \, , |b| \le K \, \, , b \subseteq a]\, \,<\, \delta. \end{eqnarray*}
2. If  $b\, \in \{0, 1\}^N   
, \, b \notin \, A \, , |b| \le K$ with the conditional probability $\mbox{Prob}_p[a \in A \,| \, b \subseteq a ]\,> \, \mbox{Prob}_p[A]\,+ \, \delta $
then $\mbox{Prob}_{p(1+\varepsilon)}[A] \,> \, 1-\tau$
\end{cor}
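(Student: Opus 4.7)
\medskip

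The plan is to argue by contraposition: if both assumptions of the corollary hold, then $A$ cannot have a coarse threshold in the sense defined in the paper, so $A$ has a sharp threshold.

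Suppose for contradiction that $A$ has a coarse threshold. Then there exist constants $0<\rho<\tau<1$ and $\varepsilon_0>0$ such that $p_\tau \ge (1+\varepsilon_0)\,p_\rho$ for infinitely many $n$. Setting $\tau_1 := \min(\rho,\,1-\tau) > 0$, the monotonicity of $\mu_p(A)$ in $p$ yields $p_{1-\tau_1}/p_{\tau_1} \ge p_\tau/p_\rho \ge 1+\varepsilon_0$ on the same subsequence. Fix once and for all an $\varepsilon \in (0,\varepsilon_0)$; this will play the role of the $\varepsilon$ in assumption~2 of the corollary.

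Next I locate a $p^*$ at which Bourgain's theorem is applicable. Consider the interval $I := [\,p_{\tau_1},\; p_{1-\tau_1}/(1+\varepsilon)\,]$, which has positive logarithmic length $L := \ln\!\big((1+\varepsilon_0)/(1+\varepsilon)\big) > 0$. Rewriting the total increase of $\mu_p(A)$ on $I$ in logarithmic form,
\[
\int_I p\,\frac{d\mu_p(A)}{dp}\,d(\ln p) \;=\; \mu_{p_{1-\tau_1}/(1+\varepsilon)}(A) - \tau_1 \;\le\; 1-2\tau_1,
\]
so by the mean-value theorem there exists $p^* \in I$ with $p^*\,d\mu_{p^*}(A)/dp \le C := (1-2\tau_1)/L$, a constant depending only on $\tau_1,\varepsilon_0,\varepsilon$.

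Because $\tau_1 \le \mu_{p^*}(A) \le 1-\tau_1$ and the logarithmic derivative at $p^*$ is bounded by $C$, Bourgain's theorem applies and yields constants $\delta = \delta(C,\tau_1) > 0$ and $K = K(C,\tau_1)$ such that at least one of its two alternatives holds at $p=p^*$. If alternative~1 holds (existence with probability $> \delta$ of a small $b \in A$ lying inside $a$), this directly contradicts assumption~1 of the corollary invoked with parameters $\tau_1,\delta,\varepsilon,K$ and the point $p^* \in (p_{\tau_1},p_{1-\tau_1})$, for all large $n$. Hence alternative~2 must hold: there is some $b \notin A$ with $|b| \le K$ whose conditional probability is boosted by $\delta$.

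Applying assumption~2 of the corollary to this $b$ (with our fixed $\varepsilon$) forces $\mu_{p^*(1+\varepsilon)}(A) > 1-\tau_1$. But the construction $p^* \in I$ guarantees $p^*(1+\varepsilon) \le p_{1-\tau_1}$, and monotonicity of $\mu_p(A)$ gives $\mu_{p^*(1+\varepsilon)}(A) \le 1-\tau_1$---a contradiction. Both Bourgain alternatives are thereby ruled out, so the initial assumption of a coarse threshold was false.

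The main obstacle is the calibration in the first two paragraphs: one has to choose $\varepsilon$ strictly smaller than the coarseness gap $\varepsilon_0$ so that the interval $I$ has positive logarithmic length (enabling the integration/mean-value step) while simultaneously ensuring that the multiplier $(1+\varepsilon)$ from assumption~2 cannot push $p^*$ above $p_{1-\tau_1}$. Once this calibration is in place, the remainder is a direct application of Bourgain's theorem together with the two hypotheses.
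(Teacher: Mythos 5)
Your proposal is correct and follows the same overall strategy as the paper's proof: assume a coarse threshold, locate a point $p^*$ between the relevant quantiles at which $p\,\frac{d\mu_p(A)}{dp}$ is bounded by a constant, apply Bourgain's theorem there, and then exclude its first alternative by hypothesis 1 and its second by hypothesis 2 together with monotonicity of $\mu_p(A)$. The genuine difference is in how $p^*$ is produced. The paper applies the mean value theorem on $[p_\beta,(p_\alpha+p_\beta)/2]$ and then uses $p^*=O(p_\beta)$ --- i.e.\ the quantile-ratio hypothesis $p_{1-o(1)}=O(p_\tau)$ --- twice: once to convert the secant slope into a bound on $p^*\frac{d\mu_p(A)}{dp}\big|_{p^*}$, and once to obtain a constant $\varepsilon''$ with $p^*(1+\varepsilon'')<p_\alpha$ before invoking hypothesis 2. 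You instead average the logarithmic derivative over the calibrated interval $[p_{\tau_1},\,p_{1-\tau_1}/(1+\varepsilon)]$ with $\varepsilon<\varepsilon_0$: this gives the bound $p^*\frac{d\mu_p(A)}{dp}\big|_{p^*}\le(1-2\tau_1)/L$ with no upper bound on quantile ratios, and the choice of the right endpoint makes $p^*(1+\varepsilon)\le p_{1-\tau_1}$ automatic, so the hypothesis $p_{1-o(1)}=O(p_\tau)$ is never needed in your version --- a slightly more general and cleaner calibration, at the cost of being a small step away from Friedgut's standard presentation which the paper follows. Two cosmetic points you should add: take $p^*$ in the open interval (your averaging argument does give an interior point, since a continuous function exceeding the bound throughout the interior would make the integral strictly larger than $1-2\tau_1$), so that $p_{\tau_1}<p^*<p_{1-\tau_1}$ as the hypotheses require; and enlarge $C$ slightly so that the strict inequality $p\,\frac{d\mu_p(A)}{dp}<C$ demanded by Bourgain's theorem holds.
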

\begin{proof}
Assume, that $A$ has a coarse threshold. Let $1>\alpha> \beta>0$  be such that 
$(p_{\alpha}- p_{\beta})/p_{\beta} \, \ge \varepsilon.$
We abbreviate  $q=(p_\alpha+ p_\beta)/2.$ By strict monotonicity of $\mu_p(A)$ we have 
$\mu_q(A)\,= \, \gamma $ for a $\alpha> \gamma > \beta.$ We have that 
$\frac{\gamma-\beta}{q-p_\beta} \,= \, \frac{d\mu_p(A)}{dp}|p=p^*$ for a $p _\beta< p^*<q$ 
(by the Mean Value Theorem.) We have that $(q-p_\beta)/p^* \, \ge \varepsilon'$ as $p^*=O(p_\beta).$ 
Therefore  $\frac{\gamma -\beta}{q-p_\beta} \cdot p^*\,= \, \left(\frac{d\mu_p(A)}{dp}|p=p^*\right) \cdot p^* \le C$ 
for a constant $C.$ The preceding theorem applies to $p^*.$ 
Our assumption implies that the first item of the theorem does not hold.

Therefore the second item of the preceding  theorem must hold for $p=p^*.$
We have that $p^*+\frac{p_\alpha-p_\beta}{ 2}\, < p_\alpha.$ 
Therefore   $p^*\left( 1+ \frac{p_\alpha-p_\beta}{p^* \cdot  2}\right) < p_\alpha.$ 
Moreover $\frac{p_\alpha-p_\beta}{p^* \cdot  2}\,> \, \varepsilon''$ as $p^* =O(p_\beta).$ 
Our second assumption shows that the 
preceding statement cannot hold. Therefore the second item of the preceding theorem does not hold, too.  
Therefore $A$ 
cannot have a coarse threshold.  
\end{proof} 

Let $F(n, p)$ be the random formula 
of equations $y_1+ \dots + y_k= a \mod 3,$  $0 \le a \le 2$ over $n$ variables 
where each equation is picked  with probability $c/n^{k-1}$ independently.  

\begin{lem} \label{THRESH}
Unsatisfiability of  $ F(n, p) $ has a sharp threshold. 
\end{lem}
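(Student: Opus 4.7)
The plan is to apply Corollary \ref{Thresh} with $A_n$ equal to the set of unsatisfiable formulas on $n$ variables: this is a monotone property, since adding equations preserves unsatisfiability. The universe of possible equations has size $N = 3n^k$ (ordered $k$-tuples of variables with an RHS in $\{0,1,2\}$), and at $p = c/n^{k-1}$ the expected number of equations $pN = 3cn$ is linear in $n$, which controls $p \cdot d\mu_p(A_n)/dp$. The preliminary hypothesis $p_{1-o(1)} = O(p_\tau)$ of the corollary is granted by Fact \ref{Core}: the entire nontrivial window for unsatisfiability sits in $\{c/n^{k-1} : c \in [0, T']\}$ for a bounded interval, so $p_\tau$ and $p_{1-o(1)}$ differ only by a constant factor. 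It therefore suffices to verify items 1 and 2 for all $p$ with $p_\tau < p < p_{1-\tau}$.

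For item 1 I would run a standard first-moment calculation showing that the expected number of unsatisfiable subformulas of $F(n,p)$ having at most $K$ equations tends to $0$. A minimal unsatisfiable sub-system on $v$ variables and $m$ equations must have $m$ strictly exceeding the rank of its coefficient matrix over $\mathbb{Z}/3$, and for each fixed isomorphism type of such a configuration the expected count in $F(n,p)$ is $\Theta(n^v \cdot p^m) = \Theta(n^{v - m(k-1)})$ up to constants. Since $k \ge 3$ and since minimality forces a dependency relation among the equations, $v - m(k-1)$ is strictly negative for every bounded type, so the total expected count is $O(n^{-\eta})$ for some $\eta > 0$. This is eventually smaller than any fixed $\delta$, and the claim of item 1 then follows from Markov's inequality.

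The main obstacle is ruling out item 2, which I would handle by the symmetrization argument of Friedgut as adapted in \cite{CRDA} to the $\mod 2$ setting. Suppose $b \notin A_n$ is a satisfiable bounded-size formula with $\mbox{Prob}_p[A_n \mid b \subseteq F] > \mbox{Prob}_p[A_n] + \delta$. The measure induced by $F(n,p)$ carries two symmetries: the action of $S_n$ permuting the $n$ variables, and the action of $(\mathbb{Z}/3)^n$ sending $x_i \mapsto x_i + t_i$, which permutes each equation's RHS and is a bijection on the full equation set precisely because $a$ ranges over all of $\{0,1,2\}$ with equal marginal weight. Under both actions, the image $b^{\sigma,t}$ is again a satisfiable booster of size $|b|$ with the same $\delta$-boost. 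I would then show that conditioning on the presence of a random translate $b^{\sigma, t}$ in $F$ is distributionally dominated by taking $F(n, p(1+\varepsilon))$ for some $\varepsilon = \varepsilon(\delta, K) > 0$, because adding a single random translate of bounded size increases the edge count by $O(1)$, while iterating the coupling $\Theta(1/\delta)$ times compounds the boost until $\mbox{Prob}[A_n] > 1 - \tau$. The delicate step, and the technical heart of the proof, is verifying that the boosts compose: one must show that overlaps among distinct translates are negligible and that the additive $\delta$-gain survives across iterations, which is a classical consequence of the bounded size of $b$ and the Bourgain symmetrization lemma; once that is in place, $\mbox{Prob}_{p(1+\varepsilon)}[A_n] > 1 - \tau$ follows and item 2 is verified, completing the application of Corollary \ref{Thresh}.
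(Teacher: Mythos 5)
Your item 1 is essentially the paper's argument: a first-moment bound over bounded-size witnesses at $p=\Theta(n^{-(k-1)})$. One small correction: the negative exponent does not follow merely from ``a dependency relation among the equations''; what is needed (and what the paper uses) is that a minimal unsatisfiable subsystem -- equivalently, a bounded subformula with nonempty $2$-core -- has every variable occurring at least twice, so that $v\le km/2$ and the exponent is at most $m(1-k/2)<0$ for $k\ge 3$. With that repair, item 1 and the preliminary condition $p_{1-o(1)}=O(p_\tau)$ are fine.

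The genuine gap is in item 2, and it sits exactly where you place ``the technical heart.'' You propose to plant random translates $b^{\sigma,t}$ of the booster and to ``iterate the coupling $\Theta(1/\delta)$ times'' so that the boosts compose; but the hypothesis only gives a $\delta$-gain over $\mathrm{Prob}_p[A]$ for the original measure, and after planting (or conditioning on) the first translate the underlying distribution is no longer $F(n,p)$, so nothing guarantees a recurring $\delta$-gain along the iteration -- no symmetrization lemma cited in the paper supplies this composition, and as stated the step fails. The paper's actual route is different: whp each random equation meets the bounded booster $B$ in at most one variable; substituting a satisfying assignment of $B$ into those equations turns the boost into the statement that a bounded random set of equations attached to a single distinguished variable $x$ makes $F$ unsatisfiable with constant probability -- but only for one residue of $x$, so one needs three disjoint boosting sets $U,V,W$ (available with probability about $\delta^3$) whose transplants $U_0,V_1,W_2$, with right-hand sides shifted by $0,1,2 \bmod 3$, exclude all three values of $x$. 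Then, adding each equation with probability $\varepsilon/n^{k-1}$, the expected number of variables $x$ receiving such a blocking triple is linear in $n$, and near-independence between distinct $x$ plus Chebyshev gives unsatisfiability whp at $p(1+\varepsilon)$, which is what item 2 of Corollary \ref{Thresh} demands. Your proposal contains neither the transplantation-onto-a-variable step nor the mod-$3$-specific device of three shifted blocking sets, and without these (or a worked-out substitute for the ``boost composition'' you invoke) the verification of item 2 is not complete.
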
 
\begin{proof}
We apply Corollary \ref{Thresh}. Let $p=c/n^{k-1}.$ Observe that $F(n, p)$ is unsatisfiable whp. for $c>1$ by expectation calculation. 
Concerning the first item of the corollary  we show 
that $F(n, p)$ does not contain a subformula over a bounded number of variables such that  each 
variable occurs at least twice. The expected number of such subformulas over $1\le l\le B,$ $B$ constant  variables 
is bounded above by $ {n \choose l }\cdot\left(c/n^{k-1}\right)^{2l/k}\le O(1) \cdot n^{(2/k-1) l}.$ As $k\ge 3$  
and $l\ge 1$ the geometric series  shows that the expectation of the number of such subformulas with 
$\le B$ variables is $o(1).$ 
As each unsatisfiable formula contains a subformula where each variable occurs at least twice
we have no unsatisfiable subformula of bounded size whp. 
The first item of the corollary holds.  

Concerning the second item, let $B$ be a fixed satisfiable formula and let $p < 1/n^{k-1}.$  We assume that 
$\mbox{Prob}[\mbox{UNSAT} (B \cup F(n, p)) ] >   \mbox{Prob}[\mbox{UNSAT}(F(n, p)) ]+ \delta. $
UNSAT$(F)$ is the event that $F$ is unsatisfiable. 
With high probability  $F(n, p)$ contains only equations with $1$ or none variables from  $B$ 
(as $p< 1/n^{k-1}$ and  the number of variables of $B$ is constant. )

Consider a fixed satisfiable formula  $F$ over the variables not in $B$
We pick each equation with 
exactly one variable in $B$ with probability $p=c/n^{k-1}$ independently.
We assume that the resulting random formula is unsatisfiable with
probability $\delta>0.$ We show that this implies that 
the random instance obtained from $F$ by  adding {\it  each } equation with probability 
$ \varepsilon / n^{k-1},$ independently,  $\varepsilon>0$ constant. 
is unsatisfiable with high probability. This directly implies that 
the second item of Corollary \ref{Thresh} holds. 

Consider a fixed variable $x$  of $F.$ We  throw in the 
equations containing  $x$ with   $ \varepsilon / n^{k-1},$ 
We show below that 
the resulting random formula is unsatisfiable with 
probability $\delta'>0,$ $\delta'$  constant. 
Throwing {\it each} equation with probability 
$\varepsilon/n^{k-1},$  the  expected number of variables $x$ such that
the equations containing $x$ lead to unsatisfiability  of $F$ 
is  $\delta' n.$ 
For $x \neq x'$ the equations with $x$ or $ x'$ are
nearly independent. Tschebycheff's inequality shows 
that we  even  have a linear number of variables $x$ whose equations 
yield unsatisfiability  whp. 

We show the statement above concerning the fixed variable $x.$ 
When throwing in the equations with one variable in $B$ with 
$p=c/n^{k-1}$ we get with probability $\delta$ a set
$U$ such that $F\cup B \cup U$ is unsatisfiable.  
With probability slightly lower, but still constant $>0$ we can
assume that $U$ is of bounded size. 
Now consider a satisfying assignment $a$ of $B.$ We replace the 
variable from $B$ in each equation by its value under $a$ 
and get a set of equations  with $k-1$ variables each.
When we add these equations to $F$ the resulting formula is
unsatisfiable. 

Now consider our  variable $x$ from 
$F$ and throw in   each equation containing 
$x$ with probability $\varepsilon/n^{k-1}.$
With constant probability $>0$ we get 
the a set $U'$  obtained from 
a set $U$ as above by replacing the variable from 
$B$ by $x.$ With the same probability we 
get $U_0$ instead of $U'$ where $U_0$ is obtained as follows:
Let $E$ be an equation of $U$ such that the  variable from $B$ has the value 
$j$ in the satisfying assignment $a$ from 
$B.$ The variable from $B$ is replaced with $x$ in 
$E$ and we subtract $j$ from the right hand side. 
The resulting formula is unsatisfiable for all
assignments which have  $x=0.$
$U_1$ is defined by adding $1-j$ to the right hand-side.
The resulting formula is unsatisfiable for $x=1.$
$U_2$ is defined by adding $2-j$ and the resulting 
formula is unsatisfiable for $x=2.$ 
With constant probability $>0$ we get one such set 
$U_j.$

To get unsatisfiability for all $3$ values of $x$ 
we observe that with probability roughly 
$\delta^3$ we get three sets $U, V, W$ 
with one variable in $B$ which are disjoint and
each of them causes unsatisfiability. 
This implies that with constant probability $>0$ we
get three sets $U_0, V_1, W_2$ of equations with $x.$
The resulting formula is unsatisfiable for any 
value of $x.$
\end{proof}

\newpage

\noindent
{\bf \Large II.  Uniquely extendible constraints }

\setcounter{section}{0}

\section {Outline}

A uniquely extendible constraint $C$  on a given domain  $D$ is 
a function from $D^k$ to true, false with the following restriction:
For any   argument list with a gap at an arbitrary position, like  
$(d_1, \dots d_{i-1}, -,  d_{i+1}, \dots , d_k)$ 
there is a unique $d \in D$ such that \\ $C(d_1,\dots d , \dots , d_k)$
evaluates to true. Note that 
$C(d_1 ,\dots , d, \dots, d_k) = $ true implies that $C(d_1, \dots ,d', \dots ,d_k)=$ false
for  $d \neq d'.$ 
The random constraint is a uniform random member from the
set of all uniquely extendible constraints over $D.$  
Let $\Gamma$ be the set of all 
such constraints.  Typical examples of such constraints
are linear  equations with $k$ variables,  modulo  $|D|.$  
A threshold result analogous to Lemma \ref{THRESH} can be proved by  similar arguments
based on symmetry properties of uniquely extendible constraints.

Given a set of $n$ variables a clause is an ordered $k$-tuple of variables
equipped with a uniquely extendible constraint. 
The number of all formulas with $m$ clauses is $M(km, n) \cdot |\Gamma|^m, $  we denote $N_0=M(km, n)$
(notation cf. (\ref{DEFN0}).) A random formula is a uniform random element of the set of all 
formulas. The random variable $X$ gives the number of solutions
of a formula and E$[X] = (1/d)^{(1-\gamma)n}, m = \gamma n.$ 
This follows from symmetry considerations.
For two assignments $a, b$ we study E$[X_aX_b]$
where $X_a$ is $=1$ iff the formula is true under $a.$ 
It turns out that E$[X_a X_b]$ depends only on the number of variables which have different
values under $a, b.$  Let 
DIFF$(a, b)=$ the set of variables with different values under $a$ and $b.$

Given a $k$-tuple $a$ of values from  $D$ and another $k$-tuple $b$ differing 
from $a$ in exactly $i,\,\  0 \le \, i\, \le k,$ slots, 
we let $p_i$ be the probability that the random constraint is true under 
$b$ conditional on the event that it is true under $a.$  
The following very simple 
generating polynomial for the ${k \choose i}  \cdot p_i$ is the  observation making our proof possible.   
\begin{lem} 
(a) (From \cite{COMO}) $p_0=1, \, \, p_{i+1}\,= \,  \frac{1}{d-1} \left(1 - p_i \right).$ \\
(b)
\begin{eqnarray}
\mbox{ Let } p(z) \,\,= \,\,\frac{1}{d}\left( (1+z)^k \,+ \, (d-1) \left( 1-\frac{z}{d-1} \right)^k \right) \, 
\mbox{ then } 
p(z) \,\,= \,\,  \sum_i {k \choose i} p_i \cdot z^i \nonumber 
\end{eqnarray}
\end{lem}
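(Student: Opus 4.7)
For part (a), which is attributed to \cite{COMO}, my approach is a conditioning argument based on unique extendibility combined with a value-permutation symmetry. The base case $p_0=1$ is immediate. For the recursion, fix $a,b$ differing in exactly $i+1$ coordinates and pick one coordinate $j$ with $a_j \neq b_j$. For $v\in D$, let $b^{(v)}$ denote the tuple obtained from $b$ by replacing its $j$-th entry with $v$. Unique extendibility at slot $j$ says that exactly one of the $d$ tuples $b^{(v)}$ lies in the true-set $T$ of the random constraint, hence
\begin{equation*}
\sum_{v \in D} \Pr[\,b^{(v)} \in T \mid a \in T\,] \;=\; 1.
\end{equation*}
Since $b^{(a_j)}$ differs from $a$ in exactly $i$ slots, that summand is $p_i$. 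The key step is to argue that the remaining $d-1$ summands are each equal to $p_{i+1}$; then solving $p_i+(d-1)p_{i+1}=1$ gives the recursion.

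The equality of those $d-1$ summands (and along the way the fact that $p_i$ depends only on $i$) follows from a coordinatewise symmetry of the uniform distribution on uniquely extendible constraints: for any bijection $\sigma_j$ of $D$, the map $C \mapsto C \circ (\mathrm{id},\ldots,\sigma_j,\ldots,\mathrm{id})$ preserves unique extendibility and hence preserves the uniform distribution. Choosing $\sigma_j$ to fix $a_j$ while sending $b_j$ to an arbitrary $v\neq a_j$ shows that $\Pr[\,b^{(v)} \in T \mid a \in T\,]$ takes the same value for every $v\neq a_j$, and by definition this common value is $p_{i+1}$.

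For part (b), I would simply expand the polynomial by the binomial theorem to obtain
\begin{equation*}
p(z) \;=\; \sum_{i=0}^{k} \binom{k}{i}\,\tilde{p}_i\, z^i,\qquad \tilde{p}_i \;:=\; \frac{1}{d}\left[1 + (d-1)\left(\frac{-1}{d-1}\right)^{i}\right],
\end{equation*}
and then verify by induction that $\tilde{p}_i = p_i$. One checks $\tilde{p}_0 = 1$, and a short computation yields
\begin{equation*}
\frac{1-\tilde{p}_i}{d-1} \;=\; \frac{1}{d}\left[1 - \left(\frac{-1}{d-1}\right)^{i}\right] \;=\; \tilde{p}_{i+1},
\end{equation*}
which matches the recursion from (a). Uniqueness of the recursion then forces $p_i = \tilde{p}_i$, so the coefficient of $z^i$ in $p(z)$ is $\binom{k}{i}p_i$, as claimed.

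The only genuinely nontrivial step is the symmetry argument in (a); the rest is bookkeeping. What makes that step work, and what is special to uniquely extendible constraints, is that the symmetry acts coordinatewise on $D$ rather than only through a single global permutation of $D$.
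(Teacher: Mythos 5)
Your proof of (b) is correct and essentially the paper's own argument: both verify that the closed form $\frac{1}{d}\left[1+(d-1)\left(\frac{-1}{d-1}\right)^i\right]=\frac{1}{d}\left[1+(-1)^i\left(\frac{1}{d-1}\right)^{i-1}\right]$, which is exactly the coefficient produced by the binomial expansion, satisfies the base case and the recursion from (a), hence equals $p_i$ by induction. For (a) the paper gives no proof (it is quoted from \cite{COMO}); your conditioning argument at one slot combined with the coordinatewise value-bijection symmetry of the uniform distribution on uniquely extendible constraints is a correct, self-contained substitute.
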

\begin{proof}
(b) We need to show that $p_i\,= \,\frac{1}{d}\left( 1+ (-1)^i \left( \frac{1}{d-1}\right)^{i-1}\right).$
This holds for $i=0, i=1.$ 
For $\,i >1 \,$   we get by induction: 
\begin{eqnarray}
p_i\,= \frac{1}{d-1}(1-p_{i-1}) \, \,\,= \,\, \,\frac{1}{d-1}\left( 1\,- \, \frac{1}{d}\left( 1+(-1)^{i-1}\left(\frac{1}{d-1}\right)^{i-2}\right)\right) \,=\, \nonumber \\
\,= \, \frac{1}{d-1} \, \,\, - \,\, \,  \frac{1}{d(d-1)}-  \frac{1}{d}(-1)^{i-1}\left(\frac{1}{d-1}\right)^{i-1}\,\,= \, \,
\frac{1}{d}\left( 1+ (-1)^i\left(\frac{1}{d-1}\right)^{i-1}\right). \nonumber 
\end{eqnarray}
\end{proof}

\begin{eqnarray} 
\mbox{ We let } C_j = \frac{|\Gamma|}{d} \cdot {k \choose j}  \cdot p_j  \mbox{ for } 0 \le j \le k   \,  \, ,  \, \, \, 
K(l)\,= \, \sum_{j_1+ \dots + j_m=l}\,C_{j_1}\cdots C_{j_m} . \nonumber \\
\mbox{ Then } \hat{N}(w, l) \,= \, M(l, w)M(km-l, n-w) K(l)  \nonumber 
\end{eqnarray}
is the number of formulas $F$ true under two assignments $a, b$ with $|\mbox{DIFF}(a, b)|=w$
and the variables with different values occupy exactly $l$ slots of $F.$
The factors ${k \choose j}$ of $C_{j}$  count  how to distribute the $l$ slots. 
The factor $ M(l, w)M(km-l, n-w)$ counts how to place the variables into these slots.
The factors  $\frac{|\Gamma|}{d} \cdot p_{j}$ count the number of constraints 
such that the formula becomes true under $a, b.$ Given an assignment $a$ the number of assignment
formula pairs $(b, F)$ with  $|\mbox{DIFF}(a, b)|=w,$  $F$ is true under $a, b,$ and the variables  from 
$\mbox{DIFF}(a, b)$ occupy exactly $l$ slots is 
\begin{eqnarray}
N(w, l) = {n \choose w}(d-1)^w\cdot \hat{N}(w, l).\, \,  \mbox{ And E}[X^2]= d^n\sum_{w, l}\, N(w, l)\cdot \frac{1}{N_0\cdot |\Gamma|^m}
\nonumber 
\end{eqnarray}
The next theorem is analogous to Theorem \ref{EX2EI}. 
\begin{theor}  \label{UNMAIN}
 $\sum_{w, l} \, N(w, l)/(N_0 |\Gamma|^m) \, \, \le C d^{(1-2\gamma)n}\, \, , \, \, $  $k\ge 8, \, \, m=(1-\gamma)n.$
\end{theor}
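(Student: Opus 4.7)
The argument follows the same template as Theorems~\ref{EX2EI}--\ref{LAPLA} in the mod~$3$ case. First bound every summand $N(w,l)/(N_0|\Gamma|^m)$ by $\Psi(\omega,\lambda,a_1,a_2,c)^n\cdot\mathrm{poly}(n)$ for a suitable $\Psi$, then show $\Psi\le d^{1-2\gamma}$ with strict gap outside a neighbourhood of the critical point $\omega^\star=\lambda^\star=1-1/d$, and finally extract an $O(n^2)$ local contribution by a Laplace calculation there. Here $\omega=w/n$, $\lambda=l/(km)$, and $a_1,a_2,c$ are free parameters coming from generating-function bounds.

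For the coarse step, use $K(l)\le|\Gamma|^m d^{-m}\,p(c)^m/c^l$ for any $c>0$, apply Lemma~\ref{MLOC} to both $M(l,w)$ and $M(km-l,n-w)$ with $Q(a_1)=l/w$ and $Q(a_2)=(km-l)/(n-w)$, use Corollary~\ref{N0} for $N_0$, and combine with the standard entropy bound on $\binom{n}{w}$ and the factor $(d-1)^w$. A direct calculation at $\omega^\star=\lambda^\star=1-1/d$, $a_1=a_2=s$, and $c^\star$ solving $cp'(c)/p(c)=k(d-1)/d$ then verifies $\Psi(\omega^\star,\lambda^\star,s,s,c^\star)=d^{1-2\gamma}$, the analog of the identity $\Psi=3^{1-\gamma}$ used in the mod~$3$ case.

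The main obstacle is an analog of Theorem~\ref{OPT}: showing $\Psi\le d^{1-2\gamma}-\delta$ outside any neighbourhood of $(\omega^\star,\lambda^\star)$. Following Lemma~\ref{LEMOPT}, factor $\Psi$ into three pieces handled by AGM on the $q$-terms, exact cancellation on the $(\lambda s/(ac))$-terms, and an elementary bound on $p(c)^\gamma$ via $p(c)\le\frac{1}{d}\bigl((1+c)^k+(d-1)|1-c/(d-1)|^k\bigr)$ together with concavity of $x\mapsto x^\gamma$. The table of contents indicates a split at $\lambda=1-1/d$, requiring $s\ge 7$ in one regime and $s\ge 5$ in the other; this split matches the sign change of $1-c/(d-1)$ as $c$ crosses $d-1$, which is where the $p$-bound changes form. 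In each regime I would pick an explicit one-parameter curve $c=c(\lambda)$ through the critical point, verify monotonicity of the resulting OPT along the curve using convexity of $\exp$ and inequalities analogous to~(\ref{BAKL})--(\ref{BAKL7/10}), and then show that off-curve perturbations of $c$ only decrease OPT. This calculus is the technically most delicate step, but simpler than its mod~$3$ counterpart because $p$ is a single-variable polynomial, so no permutation of coordinates and no triangle inequality on complex roots of unity are needed.

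For the Laplace step, upgrade the coarse bound to a local-limit form $K(l)=O(1/m)\cdot|\Gamma|^m d^{-m}p(c)^m/c^l$ for $\lambda$ near $\lambda^\star$, analogous to Lemma~\ref{KLOC}. This requires only the single-variable facts that $zp'(z)/p(z)$ has positive derivative at $c^\star$ and that the associated lattice random variable has positive variance, both by direct computation. Then compute $\nabla\ln\Psi$ and the $2\times 2$ Hessian in $(\omega,\lambda)$ after substituting the implicit relations $a_i=a_i(\omega,\lambda)$ and $c=c(\lambda)$, and verify by the Jacobi criterion that the Hessian is negative definite. A Riemann-sum approximation turns the local sum into a Gaussian integral, yielding the $O(n^2)$ local contribution. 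Combining with the polynomially-many-terms global bound from the preceding paragraph gives the required estimate, with the hypotheses $d=4$ and $k\ge 8$ entering only through the explicit calculus in the OPT step.
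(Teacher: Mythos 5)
You follow the paper's route for Theorem \ref{UNMAIN} essentially verbatim: the same $\Psi(\omega,\lambda,a,b,c)$, the same generating-function bound $K(l)\le (p(c)|\Gamma|/d)^m c^{-l}$, the same critical point $\omega=\lambda=1-1/d$ with $a=b=s$ and $c=d-1$ (where $cp'(c)/p(c)=k(1-1/d)$), the same split of the analysis at $\lambda=1-1/d$, and the same local-limit/Hessian/Laplace treatment near the maximum, so there is no methodological divergence to report.

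The genuine gap is that the step carrying all of the quantitative weight --- the analog of Theorem \ref{OPT}, namely Theorem \ref{UNOPT} --- is only announced, not carried out. Saying you ``would pick an explicit one-parameter curve $c(\lambda)$ and verify monotonicity'' is not yet an argument, and the added claim that off-curve perturbations of $c$ only decrease OPT is neither needed nor correct in general: by Lemma \ref{lagrkl} one needs, for every ratio $P=\lambda/(1-\lambda)$, just one admissible decomposition with $ac/b=P$ and $\mathrm{OPT}\le d$, and the actual difficulty is that no single smooth curve through the $(a,c)$-rectangle stays below $d$ for all $P$. The paper has to patch together a flat curve (Lemma \ref{flagekl}), a steep curve (Lemma \ref{stgekl}), explicit point evaluations (Lemma \ref{pukl}) and a unique-local-minimum interpolation along coordinate lines (Lemma \ref{einmi}) in the regime $\lambda\le 1-1/d$, plus a separate curve (Lemma \ref{lagr}) for $\lambda\ge 1-1/d$; this is precisely where $d=4$ and $s\ge 7$ (resp.\ $s\ge 5$), hence the hypothesis $k\ge 8$, are used, and also where the extra $-\delta$ for $\lambda$ near $0$ must be extracted from the proof of Lemma \ref{lagrkl}. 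Without executing this calculus the hypotheses of the theorem never enter and the bound is not established. A minor further slip: the one-dimensional local limit theorem gives $K(l)=\Theta(1/\sqrt{m})\cdot (p(c)|\Gamma|/d)^m c^{-l}$ near $\lambda=1-1/d$, not $O(1/m)$; the per-term factor $O(1/n)$ needed to offset the $O(n)$ Gaussian sum then comes from this $1/\sqrt{n}$ together with the $1/\sqrt{n}$ gained by applying Stirling to $\binom{n}{w}(d-1)^w$, so the bookkeeping still closes, but your intermediate claim is stronger than what the cited tool provides.
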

We let $\lambda=l/km$ and $\omega=w/n$ with $w, l$ always having the meaning above. The proof of Theorem \ref{UNMAIN} 
follows the pattern of Theorem \ref{EX2EI}. We omit all steps referring to the summation, they are quite analogous. 
The details to bound the summands are however different.  We have 
\begin{eqnarray}
K(l) =  \mbox{Coeff} [z^l, \, p(z)^m ] 
\cdot \left( \frac{|\Gamma|}{d}\right)^m \le \left( \frac{p(c)|\Gamma|}{d}\right)^m\cdot \frac{1}{c^l} \mbox{ for } c>0. \nonumber 
\end{eqnarray}

We define  $\Psi(\omega, \, \lambda, \, x, \, y, \,z\,):= \,$ 
\begin{eqnarray}
 \left(\frac{(d-1)q(x)}{q(s)\omega}\right)^\omega 
\left( \frac{q(y)}{q(s)( 1-\omega )} \right)^{1- \omega} \cdot  \left( \frac{\lambda s}{xz} \right)^{\lambda k \gamma}
\left( \frac{ (1- \lambda ) s }{y}\right)^{ ( 1-\lambda ) k \gamma}\cdot \left (\frac{p(z)}{d} \right)^\gamma. \nonumber  
\end{eqnarray}
We have $\Psi(1-1/d,\, 1-1/d, \, s, \, s,\, d-1 )\,= \, d^{1-2\gamma}\quad, $  $s$ is given by $Q(s)=k\gamma,$ 
cf. discussion around Lemma \ref{MLOC}. As Lemma \ref{LEMBA} we have the next Lemma; the subsequent Theorem is as Theorem \ref{OPT}.
\begin{lem}
$ N(w, l)/(N_0 |\Gamma|^m) \, \le \,  \Psi(\omega, \lambda, a, b, c)\cdot O(n) \mbox{ for }  a, b, c >0.$ 
\end{lem}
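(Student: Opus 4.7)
The plan is to follow the template of Lemma \ref{LEMBA} from Part I almost verbatim, substituting the single-class combinatorics of the uniquely-extendible setting for the three-class one, and using the polynomial $p(z)$ just introduced in place of $r(\bar x)$. First I would expand
\[
\frac{N(w,l)}{N_0\,|\Gamma|^m} \,=\, \binom{n}{w}(d-1)^w \cdot \frac{M(l,w)\,M(km-l,\,n-w)}{N_0} \cdot \frac{K(l)}{|\Gamma|^m}
\]
and bound each group separately using free auxiliary parameters $a,b,c>0$ (which will eventually correspond to the three arguments of $\Psi$ beyond $\omega,\lambda$).

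For the entropy factor I would apply $\binom{n}{w} \le (1/\omega)^{\omega n}(1/(1-\omega))^{(1-\omega)n}$. For the two $M$-factors I would invoke inequality (\ref{M1}) twice,
\[
M(l,w) \le q(a)^w \bigl(l/(ae)\bigr)^l O(\sqrt l), \quad M(km-l,n-w) \le q(b)^{n-w} \bigl((km-l)/(be)\bigr)^{km-l} O(\sqrt{km-l}),
\]
and replace $N_0$ by $\Theta(1)\cdot(km/(se))^{km}q(s)^n$ via Corollary \ref{N0}. For $K(l)$ I would use the coefficient-extraction bound stated just above the lemma, $K(l) = \mbox{Coeff}[z^l,p(z)^m]\cdot(|\Gamma|/d)^m \le (p(c)|\Gamma|/d)^m/c^l$, which is valid for every $c>0$ because the $p_i$ and hence the coefficients of $p(z)^m$ are non-negative. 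The factor $|\Gamma|^m$ then cancels exactly against the denominator.

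The remaining work is bookkeeping. Substituting $l = \lambda km = \lambda k\gamma n$, $w = \omega n$ and $m = \gamma n$, the powers of $e^{-1}$ combine as $e^{-l}\cdot e^{-(km-l)} = e^{-km}$ and cancel against the $e^{-km}$ in $N_0$; the pure-power terms reassemble into $(\lambda s/(ac))^{\lambda k\gamma n}\,((1-\lambda)s/b)^{(1-\lambda)k\gamma n}$; the $q$-factors together with the entropy bound and $(d-1)^w$ combine into $((d-1)q(a)/(q(s)\omega))^{\omega n}(q(b)/(q(s)(1-\omega)))^{(1-\omega)n}$; and the generating-function bound contributes the final $(p(c)/d)^{\gamma n}$. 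Multiplied out these are precisely $\Psi(\omega,\lambda,a,b,c)^n$ (reading the $n$ into the exponents), while the residual constants $\Theta(1)\cdot O(\sqrt l)\cdot O(\sqrt{km-l})$ yield the promised $O(n)$ overhead.

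There is no real obstacle beyond the proof of Lemma \ref{LEMBA}: the only genuinely new ingredient is the polynomial $p(z)$, designed so that its $i$-th coefficient reproduces $\binom{k}{i}p_i$ and thereby opens up the Cauchy-type bound $\mbox{Coeff}[z^l,p(z)^m] \le p(c)^m/c^l$. The step that requires the most care is the cancellation of $|\Gamma|^m$, which is immediate from the factorisation $C_j = (|\Gamma|/d)\binom{k}{j}p_j$ used to define $K(l)$, and the consistent substitution of $l=\lambda km$, $w=\omega n$, $m=\gamma n$ throughout.
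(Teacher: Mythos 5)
Your proposal is correct and is essentially the paper's own argument: the paper proves this lemma simply by analogy with Lemma \ref{LEMBA}, i.e.\ the entropy bound for $\binom{n}{w}$, inequality (\ref{M1}) applied to $M(l,w)$ and $M(km-l,n-w)$, Corollary \ref{N0} for $N_0$, and the coefficient bound $K(l)\le (p(c)|\Gamma|/d)^m/c^l$, exactly as you do. Your reading of the bound as $\Psi(\omega,\lambda,a,b,c)^n\cdot O(n)$ (the exponent $n$ being implicit in the statement, as in Lemma \ref{LEMBA}) and your bookkeeping with $l=\lambda km$, $w=\omega n$, $m=\gamma n$ are the intended ones.
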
 
Observe that for  $Q(s)=k\gamma \ge 8$  we have $s\ge 7.$
\begin{theor} \label{UNOPT}
Let $ d=4 $ and $ s \ge 7.$ For any $\lambda>0$ there exist $a, b, c>0$ such that:\\
(1) $\Psi(\omega, \lambda, a, b, c) \, \le \, d^{1-2\gamma}.$ \\
(2) For any $\varepsilon >0$,  $\lambda $ not $\varepsilon-$close to $1-1/d,$   $\Psi(\omega, \lambda, a, b, c)\, \le \, d^{1-2\gamma}- \delta.$
\end{theor}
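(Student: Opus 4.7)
The plan is to adapt the architecture of Theorem \ref{OPT} to the single-$\lambda$ setting. The first step is to apply the weighted AM--GM inequality to the two $\omega$-dependent factors of $\Psi$:
\begin{equation*}
\left(\frac{(d-1)q(a)}{q(s)\,\omega}\right)^{\!\omega}\!\left(\frac{q(b)}{q(s)(1-\omega)}\right)^{\!1-\omega}\le \frac{(d-1)q(a)+q(b)}{q(s)}=:\mathrm{OPT}_1(a,b,s),
\end{equation*}
which is tight at $\omega=(d-1)q(a)/((d-1)q(a)+q(b))$ and becomes $\omega=1-1/d$ whenever $a=b$---the expected fraction of positions on which two independent uniform assignments disagree. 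This eliminates $\omega$ and reduces the statement to exhibiting, for each $\lambda$, parameters $a,b,c>0$ for which
\begin{equation*}
\mathrm{OPT}_1(a,b,s)\cdot\left(\frac{\lambda s}{ac}\right)^{\!\lambda k\gamma}\!\left(\frac{(1-\lambda)s}{b}\right)^{\!(1-\lambda)k\gamma}\!\left(\frac{p(c)}{d}\right)^{\!\gamma}\le d^{1-2\gamma}.
\end{equation*}

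At the critical point $\lambda=1-1/d$, the natural choice $a=b=s$, $c=d-1$ gives equality: $\mathrm{OPT}_1=d$, $p(d-1)=d^{k-1}$, the middle factors contribute $(1/d)^{k\gamma}$, and $d\cdot d^{-k\gamma}\cdot d^{(k-2)\gamma}=d^{1-2\gamma}$. Following the announced contents, I would split the proof into Section~II.2 treating $\lambda\le 1-1/d$ and Section~II.3 treating $\lambda\ge 1-1/d$. The stationarity condition for $c$ is $R(c):=cp'(c)/p(c)=\lambda k$; since $R(0)=0$, $R(d-1)=(1-1/d)k$, and (as one checks by direct computation) $R$ is strictly increasing on $c\ge 0$, this selects $c^*(\lambda)\in[0,d-1]$ in the first case and $c^*(\lambda)\ge d-1$ in the second, matching the two $\lambda$-ranges. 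Within each range I would often keep $a=b=s$, reducing the inequality to the single-variable form $d\cdot(\lambda/c)^{\lambda k}(1-\lambda)^{(1-\lambda)k}p(c)\le 1$; when finer control is needed near the endpoints of $\lambda$, one lets $a,b$ depend on $\lambda$ via the stationarity conditions $Q(a)=\lambda k\gamma/\omega^*$, $Q(b)=(1-\lambda)k\gamma/(1-\omega^*)$, analogous to the $P_i$-parameterization of Lemma \ref{LEMOPT}.

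The hard part, exactly as in Lemmas \ref{lem1}--\ref{lem4}, is the calculus showing that the resulting one-variable function $\mathrm{OPT}(\lambda):=\mathrm{OPT}_1\cdot(\text{middle})\cdot(p(c^*)/d)^\gamma$ does not exceed $d^{1-2\gamma}$, with uniform slack $\delta>0$ outside every fixed neighborhood of $\lambda=1-1/d$. I would partition each $\lambda$-range into a few subintervals and on each exhibit a monotone auxiliary parameter in which $\mathrm{OPT}$ decreases toward the boundary value $d^{1-2\gamma}$, exploiting the explicit form $p(c)=\frac{1}{d}\bigl((1+c)^k+(d-1)(1-c/(d-1))^k\bigr)$ together with the estimates on $K(a,s)$, $L(a,s)$, $M(a,s)$ recorded in (\ref{BAKL}). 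The assumption $s\ge 7$ is needed to dominate residual exponential tails of $q(s)$, just as in Lemma \ref{lem2}; the weaker $s\ge 5$ is enough in Section~II.3 because the regime $c\ge d-1$ avoids the delicate small-$c$ behavior of $p(c)$. Claim (2) then follows from strict slack: the AM--GM step is tight only at $\omega=1-1/d$ and the $c$-stationarity is tight only at $c=d-1$, and both align only at $\lambda=1-1/d$, so any $\lambda$ bounded away from this value produces a uniform gap.
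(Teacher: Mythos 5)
Your opening moves match the paper's: the AM--GM elimination of $\omega$ and the product constraint are exactly Lemma \ref{lagrkl}, the critical-point check at $\lambda=1-1/d$, $a=b=s$, $c=d-1$ is correct, and the split into $\lambda\le 1-1/d$ and $\lambda\ge 1-1/d$ is the paper's Sections II.2/II.3. But the central reduction you propose -- ``often keep $a=b=s$'' and prove the single-variable inequality $d\,(\lambda/c)^{\lambda k}(1-\lambda)^{(1-\lambda)k}p(c)\le 1$ with $c$ chosen by stationarity -- is false away from the critical point, so the plan has a genuine gap. Concretely, as $\lambda\to 0$ the optimal $c$ tends to $0$ (recall $p_1=0$, so $p(c)=1+\Theta(c^2)$ near $0$), and the left-hand side tends to $d\cdot p(0)=d>1$; equivalently, with $a=b=s$ your bound on $\Psi$ degenerates to $d\cdot(p(0)/d)^\gamma=d^{1-\gamma}$, which exceeds the target $d^{1-2\gamma}$. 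The point is that for small $\lambda$ the tilt $a$ must be taken small (so that $\mathrm{OPT}_1=3q(sa)/q(s)+1$ is close to $1$ and the AM--GM-optimal $\omega$ is correspondingly small, reflecting that few variables lie in $\mathrm{DIFF}$ yet each must fill at least two slots); fixing $a=s$ throws this away on most of the range, not just ``near the endpoints.''

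This is precisely where the paper's actual work lies: it fixes $b=1$ (resp.\ $a=1$ in the regime $\lambda\ge 1-1/d$) and exhibits explicit two-parameter decompositions of the product $P=\lambda/(1-\lambda)=ac/b$ along hand-crafted linear paths, proving monotonicity of $\mathrm{OPT}$ along them -- the flat path of Lemma \ref{flagekl} for $P$ near $d-1$, the steep path $A(c)=\tfrac{Q}{2}c$ of Lemma \ref{stgekl} for small $P$, the interior/boundary control of Lemmas \ref{einmi} and \ref{pukl} for the middle range, and Lemma \ref{lagr} for $P\ge d-1$. Your fallback of choosing $a,b$ ``via the stationarity conditions'' tied to the AM--GM-optimal $\omega^*$ is the local (Hessian/Laplace) parameterization of Lemma \ref{UNMAX}; it is not shown (and is not true without work) that it yields the global bound $\mathrm{OPT}\le d$, and globally the paper deliberately does not use stationarity for this step. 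Likewise your argument for part (2) -- ``strict slack because AM--GM and $c$-stationarity align only at $\lambda=1-1/d$'' -- is not a proof; in the paper the uniform $\delta$-gap comes from the strict monotonicity of $\mathrm{OPT}$ along the paths together with the explicit numerical bounds $\mathrm{OPT}<4-\delta$ at the matching points. So while the framework is right, the technical core that the theorem actually requires is missing and the proposed shortcut would fail.
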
 
Two reals $a, b$ are $\varepsilon-$close iff $|a-b |< \varepsilon.$ 
To treat  $\lambda$ close to $(d-1)/d$ we  consider the function $P(z)= zp'(z)/p(z)$
(cf. discussion after Corollary \ref{SUMOPT}.)
We have $P(d-1)= k(1-1/d)$ and the derivative  $P'(d-1)>0.$  Thus we can define 
$c=c(\lambda)$ for $\lambda$  $\varepsilon-$close to $1-1/d$ by $P(c)=k \lambda.$ 
And $c(\lambda)$ is differentiable. 
As Lemma \ref{KLOC}, Corollary \ref{PSIINU}, and Lemma \ref{LOCMAX}
we get the next $3$ items. To prove Lemma \ref{UNMAX} the
Hessian matrix of  $\Psi(\omega, \lambda, a, b ,c)$ is considered (calculation analogously to \cite{MITZ}.)
\begin{lem}
There is an $\varepsilon>0$ such that for $\omega, \lambda\, \, $   $\, \, \varepsilon-$close to $1-1/d$ 
we have $K(l) = O(1/\sqrt{n}) \cdot \left( p(c)|\Gamma|/d\right)^m\cdot 1/c^l$ with $P(c)=k\lambda.$ 
\end{lem}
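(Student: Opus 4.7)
The plan is to mimic the proof of Lemma~\ref{KLOC}, replacing the two-dimensional Local Limit Theorem with its one-dimensional counterpart. First, for $\lambda$ sufficiently close to $1-1/d$ the equation $P(c)=k\lambda$ has a unique solution $c=c(\lambda)$ in a neighbourhood of $d-1$ that is bounded away from $0$ and $\infty$; this is immediate from $P(d-1)=k(1-1/d)$ and $P'(d-1)>0$, which the excerpt has already stated.

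Next, introduce the integer-valued random variable $X=X(c)$ with
\[
\Pr[X=j]\;=\;\binom{k}{j}\, p_j\, c^j\,/\,p(c),\qquad j=0,1,\dots,k.
\]
Differentiating $p(z)$ shows that $E[X]=c p'(c)/p(c)=P(c)=k\lambda$, so if $X_1,\dots,X_m$ are i.i.d.\ copies of $X$ then $E[X_1+\dots+X_m]=k\lambda m=l$. The generating-function identity
\[
\mathrm{Coeff}[z^l, p(z)^m] \;=\; \Pr[X_1+\dots+X_m=l]\cdot p(c)^m/c^l
\]
reduces the claim to showing $\Pr[X_1+\dots+X_m=l]=\Theta(1/\sqrt{n})$. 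This is exactly the Local Limit Theorem for lattice random variables (e.g.\ \cite{DUR}, Theorem 5.2, as used in Lemma~\ref{MLOC}).

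Two hypotheses must be verified. (i) \emph{Aperiodicity, i.e.\ span $1$:} From $p_0=1$, $p_2=1/(d-1)>0$ and $p_3=(d-2)/(d-1)^2>0$ (which for $d=4$ equals $2/9$), the support of $X$ contains $0,2,3$, whose pairwise differences have $\gcd$ equal to $1$; note that $p_1=0$, so the gap at $1$ is genuine but harmless. (ii) \emph{Variance bounded below:} $\mathrm{Var}[X]$ is smooth in $c$; at $c=d-1$ one computes $p(d-1)=d^{k-1}$, so $\Pr[X=j]=\binom{k}{j}p_j(d-1)^j/d^{k-1}$, which is plainly non-degenerate (again using $p_2,p_3>0$), and by continuity $\mathrm{Var}[X]$ stays bounded below for $c$ in a small neighbourhood of $d-1$.

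The only point requiring any care is (ii), but it is a routine continuity argument once the explicit distribution at $c=d-1$ is in hand. Granted (i) and (ii), the LLT gives $\Pr[X_1+\dots+X_m=l]=\Theta(1/\sqrt{m})=\Theta(1/\sqrt{n})$; multiplying by $p(c)^m/c^l$ and by the prefactor $(|\Gamma|/d)^m$ built into the definition of $K(l)$ yields the claimed estimate.
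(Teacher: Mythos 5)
Your proposal is correct and follows essentially the same route the paper intends: the paper proves this lemma "as Lemma \ref{KLOC}", i.e.\ by the same generating-function/local-limit argument, with the two-dimensional lattice LLT replaced by the one-dimensional one (as already used for Lemma \ref{MLOC}), which is exactly what you do. Your explicit verification of span $1$ (via $p_0,p_2,p_3>0$ despite $p_1=0$) and of the non-degenerate variance near $c=d-1$ supplies precisely the hypotheses the paper leaves implicit.
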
 

\begin{cor}
There is an $\varepsilon \, > \,0$ such that for $\omega, \lambda $ being  $\varepsilon-$close to $1-1/d$
$N(w, l)/(N_0 |\Gamma|^m) \le O(1/n) \cdot  \Psi(\omega, \, \lambda,\, a,\, b, \,  c)$ with 
$Q(a)=l/w, Q(b)=(km-l)/(n-w), P(c)= \lambda k.$ 
\end{cor}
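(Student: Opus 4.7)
The plan is to mimic the proof of Corollary \ref{PSIINU} from Part~I, with the simplification that there are now only two blocks of variables (matching vs.\ differing) instead of three. Starting from the factorization
\[
\frac{N(w, l)}{N_0 \,|\Gamma|^m} \;=\; \binom{n}{w}(d-1)^w \cdot \frac{M(l, w)\, M(km - l,\, n - w)}{N_0} \cdot \frac{K(l)}{|\Gamma|^m},
\]
I would bound each factor sharply and then collect exponents.

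First, since $\omega$ is $\varepsilon$-close to $1-1/d$, it is bounded away from $0$ and $1$; Stirling therefore gives $\binom{n}{w} = \Theta(1/\sqrt{n}) \cdot \omega^{-\omega n}(1-\omega)^{-(1-\omega)n}$. Combined with the $(d-1)^w$ prefactor, this already produces the first piece $((d-1)q(a)/(\omega q(s)))^{\omega n}$ of $\Psi^n$ modulo the $q$-ratio, and delivers a first $\Theta(1/\sqrt{n})$.

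Second, I would invoke Lemma \ref{MLOC} three times, for $M(l,w)$, $M(km-l,n-w)$, and $N_0 = M(km,n)$. The hypotheses hold because $l/w,\ (km-l)/(n-w),\ km/n$ all lie close to $k\gamma \ge 8$ for $\varepsilon$ small, so the corresponding values $a, b, s$ given by $Q$ are defined and stay in a compact interval. Each application contributes only $\Theta(1)$ in front, and the power-part rearranges: the $(l/ae)^l$, $((km-l)/be)^{km-l}$, $(se/km)^{km}$ factors collapse, via $l = \lambda k\gamma n$, $km-l = (1-\lambda)k\gamma n$, into the middle block $(\lambda s/a)^{\lambda k\gamma n}((1-\lambda)s/b)^{(1-\lambda)k\gamma n}$ of $\Psi^n$, while the $q$-factors $q(a)^w, q(b)^{n-w}, q(s)^{-n}$ combine into $(q(a)/q(s))^{\omega n} (q(b)/q(s))^{(1-\omega)n}$.

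Third, I would apply the preceding lemma on $K(l)$, which uses the one-dimensional local limit theorem for the lattice random variable associated with $p(z)$ and yields $K(l) = O(1/\sqrt{n}) \cdot (p(c)|\Gamma|/d)^m \cdot c^{-l}$ with $P(c) = k\lambda$. Dividing by $|\Gamma|^m$ cancels the $|\Gamma|$ contribution; the remaining $c^{-l} = c^{-\lambda k\gamma n}$ slots into the middle block (replacing the $\bar{c}$-product in the multi-variable case), and $(p(c)/d)^m = (p(c)/d)^{\gamma n}$ supplies the rightmost block. This produces the second $O(1/\sqrt{n})$.

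Combining the two $1/\sqrt{n}$ factors (from the binomial and from $K(l)$) with the $\Theta(1)$ contributions from the three $M$'s yields exactly the claimed $O(1/n) \cdot \Psi(\omega, \lambda, a, b, c)^n$. The only real obstacle is bookkeeping: one must verify that the $\omega, 1-\omega$ split of variables and the $\lambda, 1-\lambda$ split of slots align correctly with the exponents appearing in the definition of $\Psi$. This is routine but tedious; no new analytic idea is required beyond the three already-proved local-limit/Stirling asymptotics.
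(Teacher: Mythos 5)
Your proposal is correct and takes essentially the route the paper intends: the paper proves this corollary simply by analogy with Corollary \ref{PSIINU}, i.e.\ Stirling for $\binom{n}{w}(d-1)^w$, Lemma \ref{MLOC} (and Corollary \ref{N0}) for the factors $M(l,w)$, $M(km-l,n-w)$, $N_0$ with exactly the parameters $Q(a)=l/w$, $Q(b)=(km-l)/(n-w)$, and the one-dimensional local-limit bound on $K(l)$ with $P(c)=k\lambda$, and your accounting $\Theta(1/\sqrt n)\cdot O(1/\sqrt n)=O(1/n)$ is the correct two-class replacement of the two $O(1/n)$ factors in the three-class case. The only remark is that the exponent $n$ on $\Psi$, which you correctly carry through, is missing in the paper's statement of the corollary and is evidently a typo.
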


\begin{lem} \label{UNMAX}
The function  $\Psi(\omega, \lambda, a, b ,c)$ with  $a, b,c$ given by 
$Q(a)=l/w, Q(b)=(km-l)/(n-w), P(c)= \lambda k$ has a local maximum with value 
$d^{1-2\gamma} $ for $\lambda= \omega=1-1/d.$ In this case we have 
$a=b=s$ and $c=d-1.$
\end{lem}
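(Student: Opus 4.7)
The plan mirrors the proof of Lemma \ref{LOCMAX}: verify the claimed critical point, then show it is a local maximum by computing the Hessian of $\ln\Psi$ in $(\omega,\lambda)$ and checking it is negative definite. The dimensional reduction from $4$ to $2$ compared to the mod-$3$ case (only one free parameter per partition, since ``same vs.\ different'' is binary) makes the whole calculation substantially lighter: the Hessian is a mere $2\times 2$ matrix.

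First I would confirm the critical point. Setting $\omega=\lambda=1-1/d$ in $Q(a)=l/w=\lambda k\gamma/\omega$ and $Q(b)=(km-l)/(n-w)=(1-\lambda)k\gamma/(1-\omega)$ both reduce to $Q=k\gamma$, forcing $a=b=s$ by the definition of $s$. A direct differentiation of $p(z)$ yields $P(d-1)=k(1-1/d)$, so $P(c)=k\lambda=k(1-1/d)$ forces $c=d-1$. Substituting these values, together with $p(d-1)=d^{k-1}$, gives $\Psi=d^{1-2\gamma}$ after routine cancellations.

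For the first-order conditions I would differentiate $\ln\Psi$ with respect to $\omega$ and $\lambda$ by the chain rule, treating $a,b,c$ as implicit functions of $(\omega,\lambda)$. As in the derivation of (\ref{eqn:diffWi}) and (\ref{eqn:diffLi}), the terms containing $a_\omega', b_\omega', c_\omega'$ (respectively $a_\lambda', b_\lambda', c_\lambda'$) cancel pairwise by the defining equations $Q(a)=l/w$, $Q(b)=(km-l)/(n-w)$ and $P(c)=k\lambda$; this is the crucial simplification making the argument feasible. What remains is a sum of logarithms in $\omega,1-\omega,a,b$ (respectively $\lambda,1-\lambda,a,b,c$), and each such expression telescopes to zero at the critical point.

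For the Hessian I would implicitly differentiate the defining equations: $Q'(a)a_\omega'=-\lambda k\gamma/\omega^2$, $Q'(a)a_\lambda'=k\gamma/\omega$, analogously for $b$, and $P'(c)c_\omega'=0$, $P'(c)c_\lambda'=k$. Well-definedness requires $Q'(s)>0$ (immediate) and $P'(d-1)>0$ (a short direct calculation on $p$). The resulting entries of the Hessian $H$ involve $s$, $k\gamma$, the quantity $C(s)$ from Section \ref{PROLAPLA}, and its obvious $P$-analogue at $c=d-1$. By Lemma \ref{lem:jacobi} it suffices to check that $-H_{11}$, $-H_{22}$ and $\det(-H)$ are all strictly positive; positivity of the two diagonal entries follows from $C(s)>0$ and the analogous property of $P$, and positivity of $\det(-H)$ reduces to an explicit polynomial inequality in $k\gamma$ and the $C$-quantities, verifiable for $d=4$, $s\ge 7$ analogously to the $\det S_4$ check at the end of Section \ref{PROLAPLA}. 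The main obstacle is the bookkeeping of the six implicit derivatives and tracking the cancellations through $\ln\Psi$; once $-H$ is shown positive definite the local-maximum claim follows immediately.
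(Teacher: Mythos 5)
Your proposal is correct and follows exactly the route the paper intends: the paper itself only remarks that Lemma \ref{UNMAX} is proved by considering the Hessian of $\Psi(\omega,\lambda,a,b,c)$ analogously to Lemma \ref{LOCMAX}, and your plan (critical-point check with $a=b=s$, $c=d-1$, cancellation of the implicit-derivative terms via the defining equations, then negative definiteness of the $2\times2$ Hessian via the principal-minor criterion) is precisely that calculation. In fact the final determinant check comes out even cleaner than you anticipate, since at the critical point $P'(d-1)=k/d^2$ makes $k/((d-1)P'(d-1))$ equal to $1/\lambda+1/(1-\lambda)=d^2/(d-1)$, so $\det(-H)>0$ holds identically without any restriction on $d$ or $s$.
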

\begin{eqnarray}
\mbox{ We define }\, \,  \mbox{OPT}_1(x, y,  s)\,= \, (d-1) \cdot \frac{ q(sx)}{q(s)}  \,+ \, \frac{q(sy)}{q(s)} , \quad \quad \nonumber \\
\mbox{OPT}_2(x, y, z, s)\,\,=  \,\,
\, \left( \frac{1}{y+xz} \right)^{Q}  , \, y+xz>0\nonumber \\
\mbox{OPT}_3( z, s)\,= \, (1+z)^Q \,+ \, (d-1)\cdot \left| 1\,- \frac{z}{d-1}\right|^Q \, \, , \, Q=Q(s)   \nonumber \\
\mbox{OPT}(x, y, z, s) \,=\, \mbox{OPT}_1(x,y, s) \cdot \mbox{OPT}_2(x, y, z, s) \cdot 
\mbox{OPT}_3(z, s) .  \nonumber 
\end{eqnarray}
As Lemma \ref{LEMOPT} we have the next Lemma. We prove Theorem \ref{UNOPT} based on this lemma. 
We cannot proceed analogously to
the proof of Theorem \ref{OPT}  because the polynomial $p(z)$ is not as symmetric as $r(x_0, x_1, x_2).$ 
The two cases $\lambda$ small (in Section  \ref{SMALL}) and $\lambda$ 
large (in Section \ref{LARGE}) are treated separately. 
\begin{lem} \label{lagrkl}
\begin{eqnarray}
\mbox{ Let } \,  a, b, c\,>0  \, \, \,\mbox{be such that } \frac{\lambda}{1-\lambda}\,= \,\frac{a c} {b} . \, \, 
 \mbox{ Then } \, \, 
\Psi(\omega, \lambda, as, bs, c ) \, \, \le \, \, \frac{1}{d^{2\gamma}} \mbox{OPT}(a, b, c, s).  \,\, \nonumber 
\end{eqnarray}
\end{lem}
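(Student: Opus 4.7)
\medskip
\noindent
\textbf{Proof plan for Lemma \ref{lagrkl}.}
The plan is to mirror the structure of the proof of Lemma \ref{LEMOPT}: treat the five factors of $\Psi(\omega,\lambda,as,bs,c)$ one by one, and show that the first two are bounded by $\mathrm{OPT}_1(a,b,s)$, the third and fourth combine to exactly $\mathrm{OPT}_2(a,b,c,s)$, and the last one is bounded by $\mathrm{OPT}_3(c,s)/d^{2\gamma}$. Multiplying these bounds together yields the claim.

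First I would bound the ``entropy'' factors by AGM: with $A:=(d-1)q(as)/q(s)$ and $B:=q(bs)/q(s)$, the weighted geometric mean satisfies
\begin{equation*}
\left(\frac{A}{\omega}\right)^{\omega}\left(\frac{B}{1-\omega}\right)^{1-\omega}\le \omega\cdot\frac{A}{\omega}+(1-\omega)\cdot\frac{B}{1-\omega}=A+B=\mathrm{OPT}_1(a,b,s).
\end{equation*}
(As in Lemma \ref{LEMOPT}, the convention $0^0=1$ handles $\omega\in\{0,1\}$.) Next, the hypothesis $\lambda/(1-\lambda)=ac/b$ is equivalent to $\lambda=ac/(b+ac)$ and $1-\lambda=b/(b+ac)$, which gives $\lambda/(ac)=(1-\lambda)/b=1/(b+ac)$. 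Therefore
\begin{equation*}
\left(\frac{\lambda s}{asc}\right)^{\lambda k\gamma}\left(\frac{(1-\lambda)s}{bs}\right)^{(1-\lambda)k\gamma}=\left(\frac{1}{b+ac}\right)^{k\gamma}=\mathrm{OPT}_2(a,b,c,s),
\end{equation*}
using $Q=k\gamma$. So the third and fourth factors produce $\mathrm{OPT}_2$ exactly, with no inequality needed.

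The only nontrivial step is the fifth factor: it remains to verify
\begin{equation*}
\left(\frac{p(c)}{d}\right)^{\gamma}\le \frac{\mathrm{OPT}_3(c,s)}{d^{2\gamma}},
\qquad\text{i.e.}\qquad [d\,p(c)]^{\gamma}\le (1+c)^{Q}+(d-1)\bigl|1-\tfrac{c}{d-1}\bigr|^{Q}.
\end{equation*}
By the triangle inequality applied to $d\,p(c)=(1+c)^{k}+(d-1)\bigl(1-c/(d-1)\bigr)^{k}$,
\begin{equation*}
d\,p(c)\le (1+c)^{k}+(d-1)\bigl|1-\tfrac{c}{d-1}\bigr|^{k}.
\end{equation*}
Raising to the power $\gamma<1$ and using subadditivity $(y+z)^{\gamma}\le y^{\gamma}+z^{\gamma}$ (valid for $y,z\ge 0$ when $\gamma\le 1$, as noted in Lemma \ref{LEMOPT}), together with $(d-1)^{\gamma}\le d-1$ for $d\ge 2$ and $0<\gamma\le 1$, yields
\begin{equation*}
[d\,p(c)]^{\gamma}\le (1+c)^{k\gamma}+(d-1)^{\gamma}\bigl|1-\tfrac{c}{d-1}\bigr|^{k\gamma}\le \mathrm{OPT}_{3}(c,s).
\end{equation*}
The potential obstacle here is the sign of $1-c/(d-1)$ when $c>d-1$ and $k$ is odd; the absolute value in $\mathrm{OPT}_3$ and the triangle inequality are precisely what is needed to accommodate this. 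Assembling the three bounds gives $\Psi(\omega,\lambda,as,bs,c)\le d^{-2\gamma}\,\mathrm{OPT}(a,b,c,s)$, as required. \proofEnd
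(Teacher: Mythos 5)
Your proposal is correct and follows exactly the route the paper intends: the paper gives no separate proof of Lemma \ref{lagrkl}, stating only that it is proved "as Lemma \ref{LEMOPT}", and your argument is precisely that analogue — AGM for the two entropy factors giving $\mathrm{OPT}_1$, the constraint $\lambda/(1-\lambda)=ac/b$ making the slot factors collapse exactly to $\mathrm{OPT}_2$, and the triangle inequality plus subadditivity of $x^\gamma$ (with $(d-1)^\gamma\le d-1$) bounding $(p(c)/d)^\gamma$ by $\mathrm{OPT}_3(c,s)/d^{2\gamma}$.
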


\section{Proof of Theorem \ref{UNOPT} for $d=4,\, s\ge 7 , \,  \lambda \le 1-1/d$} \label{SMALL}

We restrict attention to $d=4$  fix $b=1$ and consider   $c, a $  with $0 \le c \le 3$ and $ 0 \le a \le 1.$ 
With these values OPT$(a, b, c, s)$ leads to the following notation used in this Section.
\begin{eqnarray}
\mbox{OPT}_1(a,  s)\,= \, 3 \cdot \frac{q(sa)}{q(s)}\,+ \,1   \, \,, \, \, \mbox{OPT}_2(a, c, s)\,= \, \left( \frac{1}{1+ac} \right)^{Q}  \nonumber \\
\mbox{OPT}_3( c, s)\,= \, (1+c)^Q \,+ \, 3\cdot \left( 1\,- \frac{c}{3}\right)^Q \, \,\nonumber \\
 \mbox{OPT}(a, c, s) \,=\, \mbox{OPT}_1(a,  s) \cdot \mbox{OPT}_2(a, c, s) \cdot 
\mbox{OPT}_3( c, s) .  \nonumber
\end{eqnarray}
The values of OPT$(a, c, s)$ at the corners of the rectangle for $0\le c\le 3,\, \, 0 \le  a \le 1 $ are: 
\begin{eqnarray}
\mbox{OPT}(0, 0, s)\,= \, 4\, \, \,\,\,,\, \mbox{OPT}(0, \, 3 , \, s)\,=\, 4^Q  \nonumber \\
\mbox{OPT}(1, 0, s)\,= \, 4^2 \, \, \,\,,\, \, \mbox{OPT}(1, \, 3, \, s)\,=\, 4  \label{ecken}
\end{eqnarray}

\begin{figure}[th]
\centering
\includegraphics[width=0.49\textwidth]{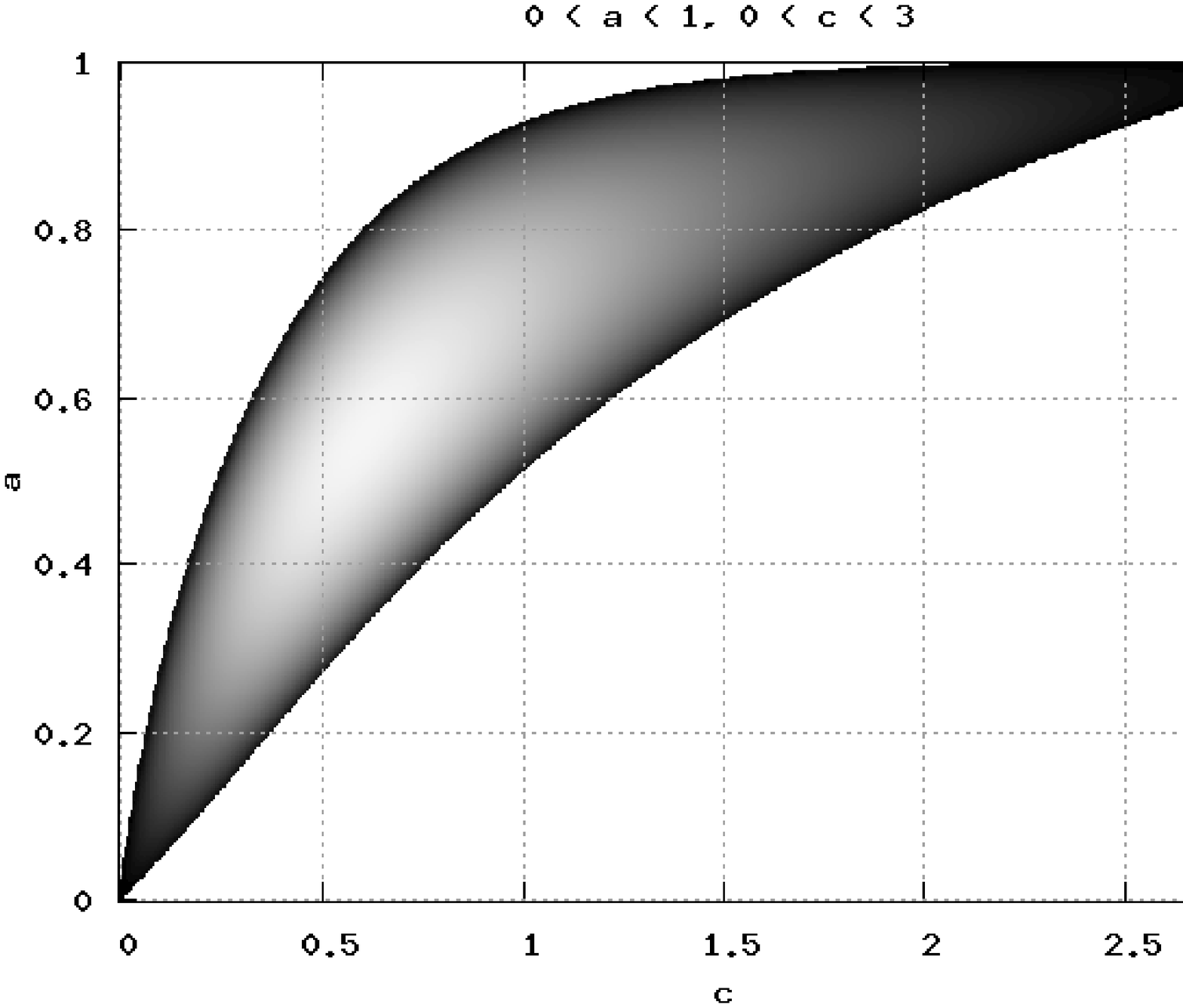} \hfill
\includegraphics[width=0.49\textwidth]{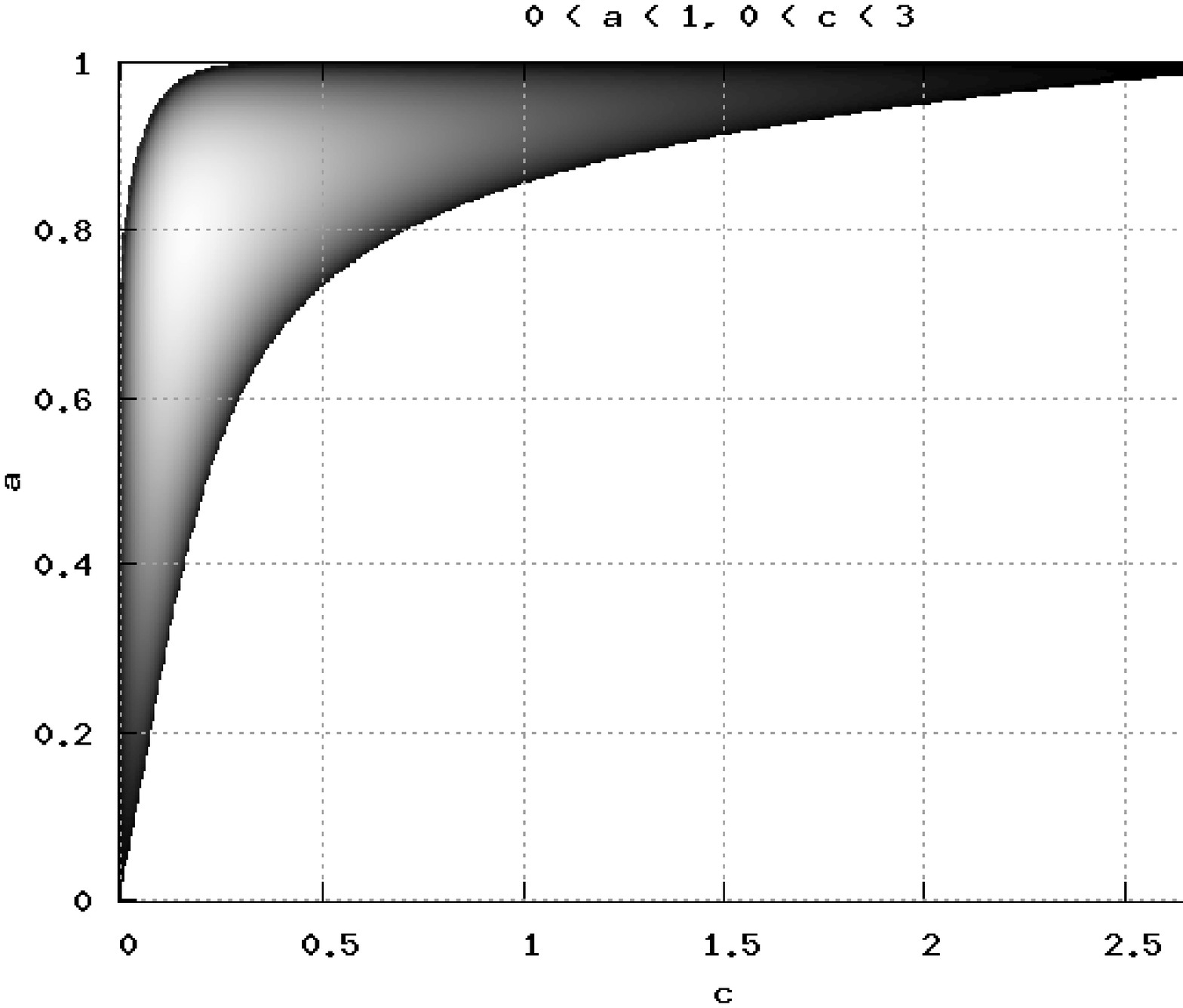}
\caption{OPT$(a,c,s)$ over the rectangle $0\le a \le 1, 0 \le c \le 3$ for $s=3$ and $s=14$.}
\end{figure}

We prove four lemmas. 
Observe that $A(c, s)$ in Lemma \ref{flagekl} is a  flat linear function in $c\ge 0$ from $A(0, s)=1-\frac{7}{10Q}$  to $A(3, s)=1.$  
\begin{lem} \label{flagekl}
\begin{eqnarray}
\mbox{ Let } s \ge 7 \mbox{ and let }\,A(c)\,= \, A(c, s) \,\,=\,\, \frac{7}{Q\cdot 10\cdot 3}\cdot c \,\,+1 \, \, - \frac{7}{10 \cdot Q}. \nonumber \\
\mbox{Then } \, 
 \mbox{OPT}(A(c), \, c, \, s\,) \,\mbox{ is strictly increasing in  } \, \,  1 \le c <3. \nonumber 
\end{eqnarray}
\end{lem}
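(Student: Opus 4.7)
The plan is to show $\frac{d}{dc}\ln\,\mbox{OPT}(A(c), c, s) > 0$ for $c \in [1, 3)$, following the template of Lemma \ref{lem4}(a). Taking logarithmic derivatives of the three factors and dividing through by $Q = sq'(s)/q(s)$, using the shorthand $K(a,s), L(a,s)$ from (\ref{BAKL}), the positivity condition is equivalent to
\begin{align*}
\frac{3 A'(c) K(A(c), s)}{3 L(A(c), s) + 1} + \frac{(1+c)^{Q-1} - (1 - c/3)^{Q-1}}{(1+c)^Q + 3(1 - c/3)^Q} \,>\, \frac{A'(c)c + A(c)}{1 + A(c) c},
\end{align*}
with $A'(c) = 7/(30Q)$. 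I split the right-hand side as $\frac{A'(c) c}{1 + A(c)c} + \frac{A(c)}{1 + A(c)c}$ and pair the two summands against the $\mbox{OPT}_1$- and $\mbox{OPT}_3$-contributions on the left respectively, mirroring the splitting of (\ref{grosymugl1})--(\ref{grosymugl2}) used in Lemma \ref{lem4}. Both sub-inequalities become equalities at $c = 3$ (where $A(3) = 1$), so strict positivity on $[1, 3)$ must come from gaps that vanish only linearly in $3 - c$.

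For the $\mbox{OPT}_3$-pairing, cross-multiplication combined with the identity $3(1 - c/3)^Q = (3-c)(1 - c/3)^{Q-1}$ collapses the sub-inequality to $(1 - A(c))(1 + c)^{Q-1} \ge (1 + 3 A(c))(1 - c/3)^{Q-1}$. Dividing by $(1+c)^{Q-1}$ and using $(1 - c/3)/(1+c) \le 1/3$ for $c \ge 1$ together with $1 + 3A(c) \le 4$, the right-hand side is bounded by $4 \cdot (1/3)^{Q-1}$, which decays geometrically in $Q$, whereas the left-hand side equals $7(3-c)/(30Q)$; for $Q \ge s \ge 7$ the left dominates on $[1, 3)$ with substantial slack. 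For the $\mbox{OPT}_1$-pairing, cross-multiplication yields $3 K(A(c), s) - c \ge 3 c (L(A(c), s) - A(c) K(A(c), s))$, again with both sides zero at $c = 3$; a first-order Taylor expansion at $c = 3$ using $q(x) = e^x - x - 1$, $q'(x) = e^x - 1$, $q''(x) = e^x$ gives a leading-order coefficient of $3-c$ equal to $\frac{9(s - 7)}{7}$ plus a strictly positive $O(s^2 e^{-s})$ correction coming from $Q - s > 0$ and from $q''(s)/q'(s) - 1 > 0$, which yields strict positivity for all $s \ge 7$.

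The main obstacle is this marginal first-order balance at the threshold $s = 7$: the hypothesis is essentially sharp, and at $s = 7$ itself only the exponentially small correction saves strict positivity. For $c$ bounded away from $3$, standard monotonicity arguments on the two elementary sub-inequalities, combined with the bounds (\ref{BAKL}) and (\ref{BAKL7/10}), give generous slack in each, analogous to the boundary-interval handling in Lemmas \ref{lem1}--\ref{lem4}; so the only delicate region is a right neighbourhood of $c = 3$, where the first-order expansion described above suffices.
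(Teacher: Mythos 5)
Your overall route is the same as the paper's: the same logarithmic derivative divided by $Q$, the same splitting into the two sub-inequalities (\ref{flagekl2}) and (\ref{flagekl3}), and your cross-multiplied forms are correct --- in particular the collapse of the $\mbox{OPT}_3$-pairing to $(1-A(c))(1+c)^{Q-1} > (1+3A(c))(1-c/3)^{Q-1}$ is exactly the paper's (\ref{flagekl31}) rearranged. The trouble is in how you finish each half. For the $\mbox{OPT}_3$ half your bound is false near $c=3$: after dividing by $(1+c)^{Q-1}$ you replace $\left((1-c/3)/(1+c)\right)^{Q-1}$ by the constant $(1/3)^{Q-1}$ and compare with the left side $7(3-c)/(30Q)$, which tends to $0$ as $c\to 3$; so ``the left dominates on $[1,3)$'' cannot hold --- at $s=7$ (so $Q\approx 7$) the comparison already fails for $c$ beyond roughly $2.85$. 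You have discarded precisely the factor $(3-c)^{Q-1}$ that makes the right side vanish much faster than the left. The repair is easy (keep one power of $3-c$: it suffices that $7/(30Q) > (1+3A(c))\,(3-c)^{Q-2}/(3(1+c))^{Q-1}$, whose right side is at most $2\cdot 3^{-(Q-1)}$ on $[1,3)$); the paper instead verifies (\ref{flagekl31}) at $c=1$ (needing $4(1/3)^{Q-1}<7/(15Q)$) and shows the right-hand side of (\ref{flagekl31}) is concave for $c>1$, which against the linear left side with equality at $c=3$ settles the whole interval. As written, though, your step is a genuine error, not just a loose constant.

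For the $\mbox{OPT}_1$ half, a first-order Taylor expansion at $c=3$ only yields positivity on some unquantified left neighbourhood of $3$, and your claim that ``standard monotonicity arguments give generous slack'' for $c$ bounded away from $3$ is not substantiated. The paper closes this by using $L\le K$ from (\ref{BAKL}) to reduce to $3K(A(c),s)\,(1+A(c)c-c)>c$ and observing that the left side is convex for $c\ge 3/2$, so a single derivative comparison at $c=3$ ((\ref{flagekl23})--(\ref{flagekl24}), value $<0.995$ at $s=7$ and decreasing in $s$) covers all of $[3/2,3)$, the range $c<3/2$ being handled as in Lemma \ref{lem4}(a). Your qualitative observation that $s=7$ is marginal is correct --- the non-exponential part of the derivative margin is $\frac{3}{10}-\frac{21}{10s}=\frac{3(s-7)}{10s}$, which vanishes exactly at $s=7$, so strict positivity there really does rest on exponentially small terms --- but your stated coefficient $\frac{9(s-7)}{7}$ is not right. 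In short: same skeleton as the paper, but the $\mbox{OPT}_3$ step fails near $c=3$ and the global control of the $\mbox{OPT}_1$ step away from the tangency at $c=3$ is missing; both need the convexity/concavity input the paper supplies.
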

$A(c, s)$ in the subsequent Lemma is a steep linear function starting at  $A(0, s)=0.$ 
\begin{lem} \label{stgekl}
\begin{eqnarray}
\mbox{ Let } s\ge 6 \mbox{ and let }\, A (c)\,\,=\,\, A(c, s) \,\,=\,\, \frac{Q}{2}\cdot c\, \,. \nonumber \\ 
\mbox{ Then }\, \mbox{OPT}(A(c), \, c, \, s) \, \mbox{  is strictly decreasing for }  0\, <c\, \le \, \frac{1}{Q} \nonumber 
\end{eqnarray}
\end{lem}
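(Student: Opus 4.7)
The plan is to mirror the proof of Lemma \ref{lem1}(a): take the logarithmic derivative of $\mathrm{OPT}(A(c),c,s)$ with respect to $c$, observe that it splits into a sum of three signed pieces, and dominate the sum by zero via two sufficient sub-inequalities. With $A(c)=(Q/2)c$, $A'(c)=Q/2$, $A(c)c=(Q/2)c^2$, and the identity $Q=sq'(s)/q(s)$, I compute
\begin{align*}
\frac{d}{dc}\ln\mathrm{OPT}(A(c),c,s) \;=\; T_1 + T_2 + T_3,
\end{align*}
where
\begin{align*}
T_1 &= \frac{(3Q^2/2)\,K(A(c),s)}{1+3L(A(c),s)}, \qquad T_2 \;=\; -\,\frac{Q^2 c}{1+(Q/2)c^2},\\
T_3 &= \frac{Q\bigl[(1+c)^{Q-1}-(1-c/3)^{Q-1}\bigr]}{(1+c)^Q+3(1-c/3)^Q}.
\end{align*}
For $0<c\le 1/Q$ we have $A(c)\le 1/2$ and $1+c>1-c/3>0$, so $T_1,T_3\ge 0$ and $T_2\le 0$; all three vanish at $c=0$, so I need $T_1+T_3<|T_2|$ for $c>0$.

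Next I split $|T_2|=\alpha|T_2|+(1-\alpha)|T_2|$ and prove the two sufficient bounds $T_1\le\alpha|T_2|$ and $T_3\le(1-\alpha)|T_2|$ separately, with $\alpha$ chosen so both have comfortable margin (the linearized behavior near $c=0$ gives $T_3/|T_2|\to(Q-1)/(3Q)$, which is slightly below $1/3$, while $T_1=O(c^2)$, so $\alpha$ near $1/10$ should suffice). For the first inequality, I clear denominators to reduce to
\begin{align*}
3K(A(c),s)\bigl(1+(Q/2)c^2\bigr) \;\le\; 2\alpha c\bigl(1+3L(A(c),s)\bigr),
\end{align*}
and I use (\ref{BAKL7/10}), i.e.\ $A(c)K(A(c),s)\le(7/10)L(A(c),s)$ (valid for $A(c)\le 1/2$, $s\ge 4$). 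This reduces the problem to showing $2K(A(c),s)\le\beta c$ for an explicit $\beta$, which by convexity of $c\mapsto K(A(c),s)$ in $c$ follows from checking the endpoint $c=1/Q$; at that point $A(c)=1/2$, and the bound becomes a scalar inequality $(\exp(s/2)-1)/(\exp(s)-1)\le$ const$\cdot 1/Q$, easily verified for $s\ge 6$.

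For the second inequality I proceed exactly as in the paper's derivation of (\ref{klsymugl21}): bound $T_3$ from above by moving the positive $y$-correction in $|T_2|$'s denominator around (using $1+(Q/2)c^2\le 1+c$ to simplify), divide through by $(1-c/3)^{Q-1}$, and set $y=(1+c)/(1-c/3)$, which turns the needed inequality into a one-variable algebraic bound of the form $(y-1)/(y+3/\text{something})\le$ linear in $c$; the left-hand side is concave in $c$ on $(0,1/Q)$ (derivative check analogous to (\ref{klsymugl21})'s derivative argument), so it suffices to check the endpoint $c=1/Q$, where $y=(1+1/Q)/(1-1/(3Q))$ is close to $1$ and the inequality becomes a numerical estimate verified for $Q\ge Q(6)>6$.

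The main obstacle will be confirming the constants uniformly over $s\ge 6$. Near $c=0$ the leading orders give $T_1=O(c^2)$ and $T_3\sim Q(Q-1)c/3$ versus $|T_2|\sim Q^2c$, so the sub-inequalities have comfortable but bounded slack; the tight endpoint is $c=1/Q$ with $s$ near $6$, where $Q(s)\approx 6.09$ is smallest and the exponential gains in $K(1/2,s)$ are weakest. As in Lemma \ref{lem1}(a), this final check is reduced to a single numerical inequality in $s$ that is monotone (in $s$) past $s=6$, completing the proof.
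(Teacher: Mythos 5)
Your derivative computation and the overall strategy (split $\frac{d}{dc}\ln\mathrm{OPT}$ into $T_1+T_2+T_3$, show the two positive pieces are separately dominated by shares of $|T_2|$, reduce each share-inequality by convexity to an endpoint check in $s$) are exactly the paper's plan; the paper's inequalities (\ref{stegekl1}) and (\ref{stegekl2}) are precisely the split with $\alpha=1/2$, since with $A(c)=(Q/2)c$ one has $A=A'c$, so $Q\frac{A'c}{1+Ac}$ and $Q\frac{A}{1+Ac}$ are each exactly half of $|T_2|$. The gap is in your quantitative choice. The claim $T_1=O(c^2)$ is false: $K(A(c),s)$ vanishes only to first order at $c=0$, so $T_1$ is linear in $c$ (with a small coefficient near $0$, about $0.07\,|T_2|$ at $s=6$). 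More importantly, the binding point is not $c\to 0$ but the endpoint $c=1/Q$, where $A(c)=1/2$ and, for $s=6$ (so $Q\approx 6.09$), $K(1/2,6)=(e^{3}-1)/(e^{6}-1)\approx 0.047$ gives $T_1\approx 2.35$ while $|T_2|=Q/(1+\frac{1}{2Q})\approx 5.6$; hence $T_1/|T_2|\approx 0.42$, and your first sub-inequality $T_1\le\frac{1}{10}|T_2|$ is simply false on the stated range (it only becomes true for $s$ roughly $\ge 10$). Consequently the reduction you sketch, ending in ``$(\exp(s/2)-1)/(\exp(s)-1)\le\mathrm{const}\cdot 1/Q$'' with the constant forced by $\alpha\approx 1/10$, fails at $s=6$. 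The repair is the paper's split $\alpha=1/2$: then, after clearing denominators, the plain inequality $aK(a,s)\le L(a,s)$ from (\ref{BAKL}) (not (\ref{BAKL7/10}), whose factor $7/10$ does not cancel against anything useful here) reduces the first sub-inequality to $3K(A(c),s)<c$, and convexity plus the endpoint $c=1/Q$, $A=1/2$ turns this into $3s\exp(s/2)<\exp(s)-1$, which holds for $s\ge 6$.

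There is also a logical slip in your treatment of the second sub-inequality: after reducing to a comparison of a fraction (vanishing at $c=0$) with a linear function of $c$, you argue ``the left-hand side is concave, so it suffices to check the endpoint $c=1/Q$.'' That is the wrong direction: a concave function $f$ with $f(0)=0$ satisfies $f(c)\ge\frac{c}{c_0}f(c_0)$ on $[0,c_0]$, so being below the linear bound at the endpoint does not control it in the interior (compare $\sqrt{c}$ versus $c$ on $[0,1]$). For an upper bound you must either compare slopes at $c=0$ (concavity gives $f(c)\le f'(0^+)c$) or, as the paper does for (\ref{stegekl2}), bound the derivative of the fraction for all admissible $c$ via the computation analogous to (\ref{flagekl32}), which after substituting $x=\bigl(\frac{1-c/3}{1+c}\bigr)^{Q-1}$ reduces to the always-positive quadratic $3(1+3x)^2-32x$. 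With these two corrections your argument becomes the paper's proof; as written, both the constant split and the endpoint logic would fail.
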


\begin{lem} \label{einmi}
(a) For each constant $0 \le a \le 1$   OPT$(a, c, s)$ as a function in $c$ with $ 0 \le c \le 3$ has a unique local minimum. \\ 
(b) For each constant $0 \le c \le 3$  OPT$(a, c, s)$ as a function in $a$ with $ 0 \le a \le 1$ has a unique local minimum.
\end{lem}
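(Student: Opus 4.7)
Both parts reduce to a monotonicity analysis of the critical equation for $\ln \mbox{OPT}$ in the relevant variable. For part (a), I differentiate $\ln \mbox{OPT}(a,c,s)$ in $c$ (noting that $\ln \mbox{OPT}_1$ does not depend on $c$) and clear denominators. The resulting critical equation
\begin{eqnarray*}
(1+ac)\left[(1+c)^{Q-1} - (1-c/3)^{Q-1}\right] = a\left[(1+c)^Q + 3(1-c/3)^Q\right]
\end{eqnarray*}
simplifies, after writing $(1+c)^Q = (1+c)(1+c)^{Q-1}$ and $3(1-c/3)^Q = (3-c)(1-c/3)^{Q-1}$ and collecting like terms, to the clean identity
\begin{eqnarray*}
(1-a)(1+c)^{Q-1} = (1+3a)(1-c/3)^{Q-1}.
\end{eqnarray*}
For $a<1$ this rearranges to $((1+c)/(1-c/3))^{Q-1} = (1+3a)/(1-a)$, and since $c \mapsto (1+c)/(1-c/3)$ is strictly increasing on $[0,3)$ (its derivative is $(4/3)/(1-c/3)^2 > 0$), there is at most one solution $c \in (0,3)$. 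Evaluating $\frac{d}{dc}\ln \mbox{OPT}$ at $c=0$ gives $-Qa$ and at $c=3$ gives $Q(1-a)/(4(1+3a))$; for $0<a<1$ these have opposite signs so the unique interior critical point is a local minimum, while the edge cases $a=0$ (derivative constantly $\ge 0$, unique minimum at $c=0$) and $a=1$ (derivative constantly $\le 0$, unique minimum at $c=3$) follow immediately.

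For part (b), differentiating $\ln \mbox{OPT}$ in $a$ (with $\ln \mbox{OPT}_3$ dropping out) yields $\frac{d}{da}\ln \mbox{OPT} = u(a) - v(a)$, where
\begin{eqnarray*}
u(a) = \frac{3s\, q'(sa)}{3q(sa)+q(s)}, \qquad v(a) = \frac{Qc}{1+ac}.
\end{eqnarray*}
The function $v$ is strictly decreasing for $c>0$. The key step is to show that $u$ is strictly increasing on $[0,1]$: using $q(x)=\exp(x)-x-1$, a direct computation shows that the sign of $u'(a)$ agrees with that of $h(x) := \exp(x)(q(s)+3-3x) - 3$ evaluated at $x=sa$. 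Since $h'(x) = \exp(x)(q(s)-3x)$ is positive on $[0,s]$ whenever $q(s) \ge 3s$ (which holds comfortably for $s \ge 7$), we have $h(x) \ge h(0) = q(s) > 0$, so $u$ is strictly increasing. Hence $u-v$ is strictly increasing on $[0,1]$, and its boundary values $-Qc$ at $a=0$ and $Q(3-c)/(4(1+c))$ at $a=1$ have opposite signs for $0<c<3$; this gives a unique interior critical point, which is the local minimum. The degenerate cases $c=0$ and $c=3$ again follow by the constant-sign argument, with the unique minimum at $a=0$ and $a=1$ respectively.

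The main obstacle is purely algebraic: spotting the cancellation in part (a) that collapses the apparently complicated critical equation to the clean form $(1-a)(1+c)^{Q-1} = (1+3a)(1-c/3)^{Q-1}$. Once this is in hand, the strict monotonicity of the ratio $(1+c)/(1-c/3)$ immediately gives uniqueness. Part (b) requires the honest but short computation that $u$ is strictly increasing on $[0,1]$; this is the only place where the restriction $s \ge 7$ is used, through the easy inequality $q(s) \ge 3s$.
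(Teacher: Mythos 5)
Your proof is correct and complete; note that the paper itself gives no argument here, only the remark that the lemma ``follows by elementary consideration'' analogous to the unique-extremum steps inside the proofs of Lemma \ref{lem2} and Lemma \ref{lem3}. Your part (b) is exactly that template: $\frac{d}{da}\ln\mbox{OPT}=u(a)-v(a)$ with $v$ strictly decreasing and $u$ strictly increasing, the latter because the sign of $u'$ reduces (I checked the computation) to $h(x)=\exp(x)\left(q(s)+3-3x\right)-3>0$ on $[0,s]$, which follows from $h(0)=q(s)>0$ and $h'(x)=\exp(x)(q(s)-3x)\ge 0$ under $q(s)\ge 3s$; together with the boundary values $-Qc$ at $a=0$ and $Q(3-c)/(4(1+c))$ at $a=1$ this gives the unique sign change and hence the unique local minimum, with the degenerate cases $c=0,3$ handled as you say. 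Your part (a) is a genuinely neater route than the monotonicity-of-ratios argument the paper's template would suggest (comparing $a/(1+ac)$ with $\bigl((1+c)^{Q-1}-(1-c/3)^{Q-1}\bigr)/\bigl((1+c)^{Q}+3(1-c/3)^{Q}\bigr)$ as in the proof of Lemma \ref{flagekl}): the collapse of the critical equation to $(1-a)(1+c)^{Q-1}=(1+3a)(1-c/3)^{Q-1}$ is correct, and uniqueness is then immediate from the strict monotonicity of $(1+c)/(1-c/3)$, with the boundary derivatives $-Qa$ and $Q(1-a)/(4(1+3a))$ supplying the sign change; the edge cases $a=0$ and $a=1$ are also treated correctly. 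The only cosmetic point is that the lemma as stated carries no explicit hypothesis on $s$, whereas you invoke $q(s)\ge 3s$; this is harmless, since the section operates under $s\ge 7$ (indeed the inequality already holds for all $s\ge 3$).
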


\begin{lem} \label{pukl}
 Let $ s\, \ge 6\,$   then OPT$(a, \, \, c, \, \, s\,) \,< \, 4-\delta $   for $   (a, \,  c)\,= \, $
\begin{eqnarray}
 =\,\left(\frac{1}{2}, \, \, \frac{1}{Q} \, \right)\, ,\, \left(\frac{1}{2}, \,  \frac{2}{Q}\,\right),\, 
\,\left(\frac{2}{3}, \, \frac{2}{Q} \,\right) , \, \left(\frac{2}{3}, \, \frac{3}{Q} \,\right), \, 
\left( 1\, \, - \, \,  \frac{7}{15Q}\, , \,  \frac{3}{Q} \right),  \, \, 
 \, \left(1\, - \, \frac{7}{15Q}\, \, ,  \, 1 \,\right)  \nonumber 
\end{eqnarray}
\end{lem}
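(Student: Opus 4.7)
The plan is to handle each of the six points separately by factoring $\mbox{OPT}(a,c,s) = \mbox{OPT}_1(a,s) \cdot [\mbox{OPT}_2(a,c,s) \cdot \mbox{OPT}_3(c,s)]$ and bounding the two pieces uniformly in $s \ge 6$. Throughout I would write $Q = Q(s)$ and note that $Q(6) \approx 6.09$, that $Q$ is increasing in $s$, and that $Q(s)/s \to 1$ as $s \to \infty$. This matters because at points 1--5 the coordinate $c$, and at points 5--6 the coordinate $a$, is itself a function of $Q$.

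For the first factor I would apply $L(a,s) \le M(a,s) = e^{-(1-a)s}$ from (\ref{BAKL}) to obtain $\mbox{OPT}_1(a,s) \le 3 e^{-(1-a)s} + 1$. For points 1--4 with $a \in \{1/2, 2/3\}$ fixed, this is decreasing in $s$ with limit $1$; for points 5--6 with $a = 1 - 7/(15Q)$, the identity $(1-a)s = 7s/(15Q)$ combined with $s/Q \in [0.98, 1)$ for $s \ge 6$ confines $\mbox{OPT}_1$ to a narrow band around $3 e^{-7/15} + 1 \approx 2.89$. Rather than bounding $\mbox{OPT}_2$ and $\mbox{OPT}_3$ individually, which would give an overestimate because the two factors move in opposite directions as $Q$ grows, I would combine them as
\begin{equation*}
\mbox{OPT}_2 \cdot \mbox{OPT}_3 \;=\; \left(\frac{1+c}{1+ac}\right)^{Q} + 3\left(\frac{1-c/3}{1+ac}\right)^{Q}.
\end{equation*}
At points 1--5 where $c = \alpha/Q$ each summand reduces to the form $(1 \pm \beta/(\gamma Q + \delta))^Q$, whose $Q$-behaviour is controlled by the same elementary estimates used in Lemmas \ref{lem2}, \ref{lem3} and which converges to an explicit exponential. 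The resulting limits of $\mbox{OPT}_2 \cdot \mbox{OPT}_3$ for points 1--6 are $e^{1/2}+3e^{-5/6}$, $e + 3e^{-5/3}$, $e^{2/3}+3e^{-2}$, $e + 3e^{-3}$, $1 + 3e^{-4}$ and $e^{7/30}$ respectively; for point 6 the second summand equals $3^{1-Q}/(1-7/(30Q))^Q$ and is negligible.

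Multiplying by the corresponding $\mbox{OPT}_1$ bounds gives limiting $\mbox{OPT}$ values of roughly $2.95,\, 3.28,\, 2.35,\, 2.87,\, 3.05,\, 3.65$ across the six points, each strictly below $4$ with margin at least $0.3$; direct numerical evaluation at $s=6$ (using $Q(6) \approx 6.09$) gives comparable values, and by continuity in $s$ the supremum of $\mbox{OPT}$ on $[6,\infty)$ is attained either at one of these two endpoints or at a bounded intermediate $s$, in each case strictly below $4$. This yields the uniform $\delta > 0$. The main obstacle is the delicate cases of points 2 and 6, where the margin is smallest; in particular the naive product of the individual suprema of $\mbox{OPT}_2$ and $\mbox{OPT}_3$ would actually exceed $4$ there, so the combination of the two factors \emph{before} taking the supremum is essential. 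All remaining verifications are elementary estimates in the style of Sections 3.2 and 3.3, with the numerical margins comfortably accommodating the required $\delta$.
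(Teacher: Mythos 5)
Your core device is exactly the paper's: the paper also multiplies $\mbox{OPT}_2$ into the two summands of $\mbox{OPT}_3$ (it calls them FIRSUM$(a,c,s)=\left(\frac{1+c}{1+ac}\right)^Q$ and SECSUM$(a,c,s)=3\left(\frac{1-c/3}{1+ac}\right)^Q$), bounds FIRSUM at $c=y/Q$ by $\exp(y(1-a))$, bounds $\mbox{OPT}_1$ via monotonicity, and finishes with a numerical evaluation at a small $s$. Your limiting values for the six points are computed correctly. Two differences: the paper disposes of $(1/2,1/Q)$ and $(1-7/(15Q),1)$ without any computation, as endpoint values of the monotone curves of Lemmas \ref{stgekl} and \ref{flagekl}, and — more importantly — it never argues via limits at all.

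That second point is where your write-up has a genuine gap. The suprema you need are over all $s\ge 6$, and the $s\to\infty$ limits you list are \emph{not} upper bounds: SECSUM decreases towards its limit, and at $(1-7/(15Q),3/Q)$ the value of OPT near $s=6$ is about $3.66$ against your quoted limit $\approx 3.05$, so ``comparable values at $s=6$'' plus ``by continuity the supremum is attained at an endpoint or a bounded intermediate $s$, in each case strictly below $4$'' is an assertion, not an argument — continuity gives attainment but says nothing about the value at an intermediate $s$. What closes this is precisely what the paper proves: $\mbox{OPT}_1(a,s)$ is decreasing in $s$ for fixed $a<1$ (respectively increasing to $3e^{-7/15}+1$ when $a=1-7/(15Q)$), SECSUM is decreasing in $s$, and FIRSUM is dominated uniformly by $\exp(y(1-a))$; together these reduce each point to a single numerical check at one small value of $s$ (the paper uses $s=4,5,6$ and gets bounds $3.9$--$3.985$). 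You allude to ``the same elementary estimates used in Lemmas \ref{lem2}, \ref{lem3}'', and indeed those monotonicity statements are the missing content; with them supplied your proof coincides with the paper's, but as written the uniformity in $s$ is not established.
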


\noindent
{\it Proof of Theorem \ref{UNOPT} for $\lambda\le 1-1/d.$ (cf. proof of Theorem \ref{OPT} after Lemma \ref{lem4}.)}
We have $ \lambda \le 1-1/d \Longleftrightarrow  \lambda /(1-\lambda)\le d-1.$
Using Lemma \ref{lagrkl}  we  need to show that for each $P\le d-1$ we have a decomposition 
$P=ac$ such that OPT$(a, c, s)\le 4$ of $4- \delta.$  
Lemma \ref{flagekl}  treats $ 1-7/(15Q)  \le P \le d-1.$ 
Lemma \ref{pukl} together with Lemma \ref{einmi} treat $ 1-7/(15Q)\ge P \ge 1/(2Q).$ 
Finally Lemma \ref{stgekl}  treats $1/(2Q) \ge P > 0.$ Observe that OPT$(0, 0, s)=4$ and
we need to look into the proof of Lemma \ref{lagrkl} to get the  $-\delta$ required
for small $P.$ 
 \qed

\subsection{Proof  of Lemma \ref{flagekl}} 
\noindent
{\bf Lemma \ref{flagekl} \quad (repeated)} 
{\it \begin{eqnarray}
\mbox{ Let } s \ge 7 \mbox{ and let }\,A(c)\,= \, A(c, s) \,\,=\,\, \frac{7}{Q\cdot 10\cdot 3}\cdot c \,\,+1 \, \, - \frac{7}{10 \cdot Q}. \nonumber \\
\mbox{Then } \, 
 \mbox{OPT}(A(c), \, c, \, s\,) \,\mbox{ is strictly increasing in  } \, \,  1 \le c <3. \nonumber 
\end{eqnarray} }

\begin{proof} 
\begin{eqnarray}
\mbox{ Some  notation:} \, \mbox{PPLUS3}(x, y)\,= \, (1\,+ \, x)^y \,+ \, 3\left(1-\frac{x}{3}\right)^y, \,\, x \le 3 \nonumber \\
\mbox{PMINUS}(x, y) \,= \,(1\,+ \, x)^y \,-\, \left(1-\frac{x}{3}\right)^y, \,\, x \le 3 \label{notPPPM}    \\
\frac{d}{dc}\ln \mbox{PPLUS3}(c, Q)\,= \, Q \cdot \frac{\mbox{PMINUS}(c, Q-1)}{\mbox{PPLUS3}(c, Q)}, \, \nonumber \\
A\, \,=\,\, A(c), \, \,   A' \,  =\, \frac{d}{dc}\,A\,= \frac{7}{10\cdot 3 \cdot Q}  \, \,\, , \,  \, \, 
\frac{d}{dc}\ln \mbox{OPT}(A, c, s)\, = \, \nonumber \\
\,=\,  
\frac{3\frac{\exp(As)-1)}{\exp(s)-s-1} \cdot s A'}{\mbox{OPT}_1(A,  s)}\,- \,
Q\cdot \frac{A'c\,+ \,A}{1+Ac}\,+ \, Q \cdot  \frac{\mbox{PMINUS}(c, Q-1)}{\mbox{PPLUS3}(c, Q)} \, >=<0 \nonumber \\
\Longleftrightarrow \, \frac{3\frac{\exp(As)-1}{\exp(s)-1}\cdot A'}{\mbox{OPT}_1(A, s)}\, - \, \frac{A+A'c}{1+Ac}\,+ \, 
 \frac{\mbox{PMINUS}(c, Q-1)}{\mbox{PPLUS3}(c, Q)} \, >=<\,0 \,\,\,\,\,\;\;\, \, \,  \label{flagekl1} \\
\mbox{(Division with } Q\,= \, \frac{s(\exp(s)-1)}{\exp(s)-s-1} \mbox{.)} \nonumber \\
\mbox{ For } c\,= \,3 \mbox{ the derivative is } =0, \,\, A(3, s)=1\, \,  \mbox{( OPT}(1, 3, s)   =4.) \nonumber 
\end{eqnarray}

We split the right-hand-side of (\ref{flagekl1}) into two additive terms. Inequalities (\ref{flagekl2}) and 
(\ref{flagekl3}) imply that the $\frac{d}{dc} $ OPT$(A, c, s)>0.$  For $c=3$ both left-hand-sides are $=0.$ 
\begin{eqnarray}
\frac{3\frac{\exp(As)-1)}{\exp(s)-1} \cdot A'}{\mbox{OPT}_1(A,  s)}\,- \,
 \frac{A'c}{1+Ac}\, > \, 0 \label{flagekl2} \\
- \,   \frac{A}{1+Ac}\,+ \,    \frac{\mbox{PMINUS}(c, Q-1)}{\mbox{PPLUS3}(c, Q)} \,>\,0  \label{flagekl3}
\end{eqnarray}

\noindent
{\it Proof of (\ref{flagekl2}) for $0 \le c <3 \, $ and $ s \ge 7 .$ } 
\begin{eqnarray}
K\,:= \, \frac{\exp(sA)-1}{\exp(s)-1}, \,\, \,  L:= \frac{\exp(sA)-sA -1}{\exp(s)-s-1} \nonumber \\
\mbox{ We need to show } \frac{3KA'}{3L+1}\, \, > \, \frac{A'c}{1+Ac} \, \, 
\Longleftrightarrow \, \, \, \frac{3K}{3L+1}\, \, > \, \frac{c}{1+Ac}  \nonumber \\
\Longleftrightarrow \, \, 3\left( K+ KAc -Lc \right) \,> \,c \, \, \, \, 
\mbox{ (For } c= 3 \mbox{ both sides are } \,=3 .) \label{flagekl21} 
\end{eqnarray}
By (\ref{BAKL}) we have $ L \, \le \,  K$ and   (\ref{flagekl21}) is implied by 
\begin{eqnarray}
3K\left( 1+ Ac -c \right) \,> \,c \, \, \quad \mbox{ (For } c=3 \mbox{ both sides are  } =3.\mbox{)} \label{flagekl22}
\end{eqnarray}
$K \ge 0 $ is increasing and convex,   $1+ Ac -c $ is $>0,$ and increasing for $c>3/2,$  and convex. Therefore the 
left-hand-side of  (\ref{flagekl22}) is convex for $c\ge 3/2.$ Therefore, for $ c\ge 3/2,$  it follows from 
\begin{eqnarray}
\left(\frac{d}{dc} 3K\left( 1+ Ac -c \right) \right)_{|c=3}\, < \,\left( \frac{d}{dc}c \right)_{|c=3}\,= \,1. \label{flagekl23}\\
\frac{d}{dc} 3K\left( 1+ Ac -c \right)\,\, \, \, \, \, 
\,= \, \frac{3\exp(sA)\cdot s \cdot \frac{7}{30Q}}{\exp(s)-1}\cdot(1+Ac-c) \,+ \, \nonumber \\
\frac{3 ( \exp (sA)-1)}{\exp(s)-1}\cdot  \left( \frac{7}{30Q}c + \frac{7}{30Q}c +1- \frac{7}{10Q} -1\right). \nonumber \\
\mbox{ Therefore } \, \frac{d}{dc} 3K\left( 1+ Ac -c \right)_{|c=3}\,= \, \nonumber \\
\frac{7\exp(s)(\exp(s)-s-1)}{10(\exp(s)-1)^2}\,\,+ \, \, \frac{21(\exp(s)-s-1)}{10s(\exp(s)-1)}\, \, \, \, \label{flagekl24}
\end{eqnarray}
For $s=7$ we get that (\ref{flagekl24}) is $<0.995.$ 
Moreover  it is decreasing in $s$ (proof omitted) and (\ref{flagekl23}) holds for all $s \ge 7$ and $c \ge 3/2.$  
For $c\le 3/2 $ we argue as in the proof of Lemma \ref{lem4}(a) cf. the argument following (\ref{grosymugl12}). \\

\noindent
{\it Proof of (\ref{flagekl3}) for $1\, \le c \,<  \,3 \,$ and $ s \ge 5 $ .}
\begin{eqnarray} 
\mbox{ We need to show } \frac{A}{1+Ac}\,\,< \, \frac{\mbox{PMINUS}(c, Q-1)}{\mbox{PPLUS3}(c, Q)} \, \nonumber \\
\Longleftrightarrow  A \, \cdot \, \mbox{PPLUS3}(c, Q)\,< \, (1+Ac) \mbox{PMINUS}(c, Q-1) \nonumber \\
\Longleftrightarrow  A\cdot\left(\mbox{PPLUS3}(c, Q) \,- \, c \cdot \mbox{PMINUS}(c, Q-1)\right) 
\,= \, A\cdot \mbox{PPLUS3}(c, Q-1)\,\nonumber \\ 
< \, \mbox{PMINUS}(c, Q-1) \nonumber \\
\Longleftrightarrow A \,< \, \frac{\mbox{PMINUS}(c, Q-1) }{\mbox{PPLUS3}(c, Q-1)}\,=\, \frac{(1+c)^{Q-1}- (1-\frac{c}{3})^{Q-1}}{(1+c)^{Q-1}+ 3 (1-\frac{c}{3})^{Q-1}}\;\;\;\;\label{flagekl31} \\
\mbox{(For } c=3 \mbox{ both sides of (\ref{flagekl31}) are  }= 1.) \nonumber 
\end{eqnarray}

 For $ c=1 $  inequality ( \ref{flagekl31}) becomes  
\begin{eqnarray}
1-\frac{7}{15Q} \,< \, 
\frac{2^{Q-1}- \left(\frac{2}{3} \right)^{Q-1}}{2^{Q-1}+ 3 \left( \frac{2}{3} \right)^{Q-1}} \,= \, 1- \frac{4\left( \frac{1}{3} \right)^{Q-1}}{1+3\left(\frac{1}{3}\right)^{Q-1}} \nonumber \\
\mbox{As }\, 4\left( \frac{1}{3} \right)^{Q-1}\,< \, \frac{7}{15Q} \mbox{ for } \, \,  Q \ge 5 ,\, \, 
\mbox{ (\ref{flagekl31}) holds for }
 c=1 \mbox{ and } s \ge 5 \, \mbox{ as } Q \ge s. \nonumber  
\end{eqnarray}

To show that (\ref{flagekl31}) holds 
for all $3\,> \,c \ge\, \, 1$ we show 
that the right-hand-side is concave for $c>1.$   
\begin{eqnarray}
\mbox{ Numerator  of } \frac{d}{dc} \frac{\mbox{PMINUS}(c, Q-1) }{\mbox{PPLUS3}(c, Q-1)} \,= \, \nonumber \\
(Q-1)\cdot \left((1+c)^{Q-2} \,+ \, \frac{1}{3} \left( 1- \frac{c}{3}\right)^{Q-2}\right)\cdot 
\left((1+c)^{Q-1} \,+ \, 3 \left( 1- \frac{c}{3}\right)^{Q-1}\right) \,- \, \nonumber \\
\,- \, (Q-1)\cdot \left((1+c)^{Q-1} \,-\,  \left( 1- \frac{c}{3}\right)^{Q-1}\right)\cdot 
\left((1+c)^{Q-2} \,- \,  \left( 1- \frac{c}{3}\right)^{Q-2}\right) \ \,\nonumber \\
\,= \, (Q-1) \cdot  (1+c)^{Q-2}\cdot \left( 1 - \frac{c}{3} \right)^{Q-2}\cdot \left(\left( \frac{1}{3}+1\right)(1+c)
\,+ \, \left(1- \frac{c}{3}\right)\cdot (1+3) \right) \, \, \nonumber \\
\,= \, (Q-1)\cdot(1+c)^{Q-2}\cdot \left( 1 - \frac{c}{3} \right)^{Q-2} \cdot \left( \frac{1}{3}+3+2\right) \;\;\;\;\;\label{flagekl32} \\
\mbox{ We have that  }(1+c)\cdot \left( 1 - \frac{c}{3}\right) 
\mbox{ is decreasing  for }  c > 1, 
\mbox{ and PPLUS3}(c, Q-1) \mbox{ is increasing } . \nonumber  \\
\mbox{Therefore the right-hand-side of
(\ref{flagekl31}) is concave.  } \nonumber  
\end{eqnarray}

\end{proof}

\newpage  
\subsection{ Proof of Lemma \ref{stgekl}}
\quad \\
\quad \\
\noindent
{\bf Lemma \ref{stgekl} \quad (repeated)}
\begin{eqnarray}
\mbox{ Let } s\ge 6 \mbox{ and let }\, A (c)\,\,=\,\, A(c, s) \,\,=\,\, \frac{Q}{2}\cdot c\, \,. \nonumber \\ 
\mbox{ Then }\, \mbox{OPT}(A(c), \, c, \, s) \, \mbox{  is strictly decreasing for }  0\, <c\, \le \, \frac{1}{Q} \nonumber 
\end{eqnarray}

\begin{proof} 
Analogously to (\ref{flagekl2}) and (\ref{flagekl3}) this follows from  (\ref{stegekl1}) and (\ref{stegekl2}.) 
(Notation cf. ( \ref{notPPPM}.)
\begin{eqnarray}
 \mbox{ with } A=A(c),\, \, A' \,= \,\, Q/2 \, \,\frac{3\frac{\exp(As)-1)}{\exp(s)-1}}{\mbox{OPT}_1(A,\, s)}A'\,- \, \frac{A'c}{1+Ac}\,\,< \, \,0, \, 
  \label{stegekl1} \\
- \,   \frac{A}{1+Ac}\,+ \,    \frac{\mbox{PMINUS}(c, Q-1)}{\mbox{PPLUS3}(c, Q)} \,<\,0  \label{stegekl2}.
\end{eqnarray}

\noindent
{\it Proof of (\ref{stegekl1}) for $ s \ge 5.55 \, \,$ and $  0 \,<\, c \, \le \,1/Q$ }
\begin{eqnarray}
K\,:= \, \frac{\exp(sA)-1}{\exp(s)-1}, \,\, \,  L:= \frac{\exp(sA)-sA -1}{\exp(s)-s-1} \nonumber \\
\mbox{ We need to show } \frac{3KA'}{3L+1}\, \, < \, \frac{A'c}{1+Ac} \, \, 
\Longleftrightarrow \, \, \frac{3K}{3L+1}\, \, < \, \frac{c}{1+Ac}  \nonumber \\
\Longleftrightarrow \, \, 3\left( K+ KAc -Lc \right) \,< \,c 
\mbox{ For } c=0 \mbox{ both sides  are } \, =0 .\label{stegekl11} 
\end{eqnarray} 
As $AK\le L$ by (\ref{BAKL}) we get that  (\ref{stegekl11}) is implied by $3 \cdot K < c.$ 
For $c=0$ both sides of  $3 \cdot K < c$  are $0.$ The left-hand-side is convex. 
It is sufficient to show  $3 \cdot K < c.$
for $c=1/Q.$ Plugging in the definition of $1/Q$  for $c$ and $ A(1/Q \, , s)\,=\, 1/2 $ into $K$ we need to show
\begin{eqnarray}
\frac{3\exp(s/2)-1}{\exp(s)-1}< \frac{\exp(s)-s-1}{ s (\exp(s)-1)} \, 
\Longleftrightarrow \, 3 s\exp(s/2) \, < \,\exp(s)-1  \nonumber 
\end{eqnarray}
For $s\ge 6$ the preceding inequality holds by simple consideration. \\

\noindent
{\it Proof of (\ref{stegekl2}) for $s\ge 2 \, \, $ and $c \le 1/Q$}
Analogously to the proof of (\ref{flagekl31}) we need to show 
\begin{eqnarray}
A\,= \, \frac{Q}{2}c \,> \, \frac{\mbox{PMINUS}(c, Q-1) }{\mbox{PPLUS3}(c, Q-1)}\,=\, 
\frac{(1+c)^{Q-1}- (1-\frac{c}{3})^{Q-1}}{(1+c)^{Q-1}+ 3 (1-\frac{c}{3})^{Q-1}}\;\;\;\;\label{stgekl21} \\
\mbox{ For } c=0 \mbox{ both sides of the preceding inequality are } = 0 \nonumber 
\end{eqnarray}
We show,  that $A'\,>$ the derivative wrt. $c$ of the right-hand-side of (\ref{stgekl21}). 
Using (\ref{flagekl32}) we need to show

\begin{eqnarray}
\frac{Q}{2} \cdot \left( (1+c)^{Q-1} + 3\left(1-\frac{c}{3}\right)^{Q-1}\right)^2\, \, > \, \, 
(Q-1)\cdot (1+c)^{Q-2}\cdot \left( 1- \frac{c}{3} \right)^{Q-2}\cdot \frac{16}{3}. \nonumber \\
\mbox{Note }\, Q\cdot (1+c)^{Q-1}\cdot \left( 1- \frac{c}{3} \right)^{Q-1}\cdot \frac{16}{3}
\ge (Q-1)\cdot (1+c)^{Q-2}\cdot \left( 1- \frac{c}{3} \right)^{Q-2}\cdot \frac{16}{3} \nonumber \\
\mbox{ as } \, \, \,  (1+c) \cdot \left(1-\frac{c}{3}\right) \ge 1 \mbox{ for } 0 \le c \le 1/Q < 2 .\nonumber \\
\mbox{Enlarging the right-hand-side  it is sufficient to show } \nonumber \\
3\left( (1+c)^{Q-1} + 3\left(1-\frac{c}{3}\right)^{Q-1}\right)^2\,>\, 
 (1+c)^{Q-1}\cdot \left( 1- \frac{c}{3} \right)^{Q-1}\cdot 32 . \nonumber \\
\Longleftrightarrow 3\left(1+ 3\left(\frac{1-\frac{c}{3}}{1+c}\right)^{Q-1}\right)^2\,> \, 32\cdot \left(\frac{1-\frac{c}{3}}{1+c}\right)^{Q-1} \nonumber 
\end{eqnarray}
Setting $x= \left(\frac{1- \frac{c}{3}}{1+c}\right)^{Q-1}$ it is easy to see 
that the preceding inequality holds for $x \ge 0,$
and therefore clearly for $ c \le 1/Q < 3  .$
\end{proof}

\subsection{Proof of Lemma \ref{pukl} and  Lemma \ref{einmi} }
Lemma \ref{einmi} follows by elementary consideration, see the analogous situation in the 
proof of Lemma \ref{lem2}(a) and Lemma \ref{lem3} (a). 

\noindent
{\bf Lemma \ref{pukl} \quad (repeated)} 
 Let $ s\, \ge 6\,$   then OPT$(a, \, \, c, \, \, s\,) \,< \, 4-\delta $   for $   (a, \,  c)\,= \, $
\begin{eqnarray}
 =\,\left(\frac{1}{2}, \, \, \frac{1}{Q} \, \right)\, ,\, \left(\frac{1}{2}, \,  \frac{2}{Q}\,\right),\, 
\,\left(\frac{2}{3}, \, \frac{2}{Q} \,\right) , \, \left(\frac{2}{3}, \, \frac{3}{Q} \,\right), \, 
\left( 1\, \, - \, \,  \frac{7}{15Q}\, , \,  \frac{3}{Q} \right),  \, \, 
 \, \left(1\, - \, \frac{7}{15Q}\, \, ,  \, 1 \,\right)  \nonumber 
\end{eqnarray}
\begin{proof}
The claim for  $a = \frac{1}{2} , \, \, c\,= \, \frac{1}{Q}$  is included in  Lemma \ref{stgekl}. 

\begin{eqnarray}
\mbox{FIRSUM}(a, c, s)\,= \, (1+c)^Q\mbox{OPT}_2(a, c, s)\,\,= \, \, \left(\frac{1+c}{1+ac}\right)^Q \nonumber \\
\mbox{SECSUM}(a, c, s)\,= \, 3\left(1- \frac{c}{3}\right)^Q \mbox{OPT}_2(a, c, s)\,= \, 
3\left(\frac{1-\frac{c}{3}}{1+ac}\right)^Q
\mbox{ then }\nonumber \\
\mbox{OPT}(a, c, s) \,= \, \mbox{OPT}_1(a, s)\cdot \left[ \mbox{FIRSUM}(a, c, s)\,+ \, \,\mbox{SECSUM}(a, c, s)\right].  \nonumber \\
\mbox{For } x , y\ge 0 \, \mbox{  we have } \mbox{FIRSUM}\left(x, \frac{y}{Q}, s \right) \, =\,\,
\left(\frac{1+ \frac{y}{Q}}{1+x\cdot \frac{y}{Q}}\right)^Q \,= \, \, \, \nonumber \\
=\, \, \left( 1 \, + \, \frac{\frac{y}{Q}(1-x)}{1+ x\frac{y}{Q}}\right)^Q \, \,
\le \, \exp\left( \frac{y(1-x)}{1+x\frac{y}{Q}}\right) \le \,\exp(y(1-x)) \label{bFIR}  \\
\mbox{ We have that  OPT}_1(a, s) \mbox{is decreasing in } s \mbox{ for constant } a < 1. \label{OPT1dec}
\end{eqnarray}

\begin{eqnarray}
\mbox{ Let } \, a=\frac{1}{2}, \, \, c\,= \, \frac{2}{Q}. \nonumber \\
\mbox{ We have by (\ref{bFIR}) } 
\mbox{FIRSUM}(a, c, s) \, < \,\exp(1)  \nonumber \\
\mbox{SECSUM} (a, c, s) \mbox{ is decreasing in } s\ge 0. \nonumber \\
\mbox{(As can be shown by elementary means.)} \, \, \nonumber \\
\mbox{OPT}_1(a, s) \left( \mbox{SECSUM}(a, c, s)+ \exp(1) \right) \,< \,3.913 \,\mbox{ for } s=5 \nonumber \\
\mbox{ and decreasing in } s \mbox{ with  (\ref{OPT1dec})} \nonumber 
\end{eqnarray}

\begin{eqnarray}
\mbox{ Let } \, a=\frac{2}{3}, \, \, c\,= \, \frac{2}{Q}. \nonumber \\ 
\mbox{ We have }  \, \mbox{FIRSUM}(a, c, s) \, < \,\exp(2/3)  \nonumber \\
\mbox{SECSUM} (a, c, s) \mbox{ is decreasing in } s\ge 0.  \, \, \nonumber \\
\mbox{OPT}_1(a, s) \left( \mbox{SECSUM}(a, c, s)+ \exp(2/3) \right) \,< \,3.962 \,\mbox{ for } s=4 \nonumber \\
\mbox{ and decreasing in } s \mbox{ with  (\ref{OPT1dec})} \nonumber 
\end{eqnarray}

\begin{eqnarray}
\mbox{ Let } \, a=\frac{2}{3}, \, \, c\,= \, \frac{3}{Q}. \mbox{ We have } 
\mbox{FIRSUM}(a, c, s) \, < \,\exp(1)  \nonumber \\
\mbox{SECSUM} (a, c, s) \mbox{ is decreasing in } s\ge 2.  \, \, \nonumber \\
\mbox{OPT}_1(a, s) \left( \mbox{SECSUM}(a, c, s)+ \exp(1) \right) \,< \,3.985 \,\mbox{ for } s=6 \nonumber \\
\mbox{ and decreasing in } s \mbox{ with  (\ref{OPT1dec})} \nonumber 
\end{eqnarray}

\begin{eqnarray}
 \mbox{ Let } a=1-7/(15Q), \, c=3/Q .\nonumber \\
\mbox{OPT}_1(a, s) \mbox{  is increasing in } s  \mbox{ to } 3\exp(-7/15)+1. \nonumber \\
\mbox{FIRSUM }(a, c, s), \, \mbox{SECSUM}(a, c, s) \mbox{are both  decreasing in } s. \nonumber \\
(3\exp(-7/15)+1)\left( \mbox{SECSUM}(a, c, s)+ \mbox{FIRSUM}(a, c, s)  \right)
 \,< \,3.9 \mbox{ for } s=4 \nonumber 
\end{eqnarray}

The case $a=1\, - \, \frac{7}{15Q}$ and $c=1\,$ is included in Lemma \ref{flagekl}.  

\end{proof}

\section{Proof of Theorem \ref{UNOPT}   for $d=4,\, \lambda \ge 1-1/d, s \ge 5.$ } \label{LARGE}

We fix $a=1.$  Observe that $B(1/c)$ in the subsequent lemma goes from $1$ to $1-1/(2Q)$ for $c\ge 3.$ 
\begin{lem} \label{lagr}
Let $ B(x) \, = \,  B(x, s) \, = \, 1+3/(2Q)x-1/(2Q). $ Then OPT$(1, B(1/c), c, s) $ is strictly  
decreasing in $c \ge 3.$ 
\end{lem}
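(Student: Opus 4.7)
I plan to follow the two-part splitting strategy used in the proof of Lemma \ref{flagekl}. Writing $B := B(1/c) = 1 - 1/(2Q) + 3/(2Qc)$, we have $B' := dB/dc = -3/(2Qc^2) < 0$ and $B \in [1 - 1/(2Q),\, 1]$ for $c \geq 3$. Since $c \geq d-1 = 3$, the absolute value in $\mbox{OPT}_3$ unfolds as $|1 - c/(d-1)| = c/3 - 1 \geq 0$, so $\mbox{OPT}_3(c, s) = (1+c)^Q + 3(c/3-1)^Q$. Letting $\Phi(c) := \ln \mbox{OPT}(1, B(1/c), c, s)$ and computing its derivative (dividing through by $Q = sq'(s)/q(s)$ as in (\ref{flagekl1})), one obtains a sum of three terms, one per OPT factor. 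Direct substitution at $c = 3$ (using $B = 1$ and $c/3 - 1 = 0$) gives $\Phi'(3) = 0$, which is the analogue of the boundary value $A(3) = 1$ in Lemma \ref{flagekl}.

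For $c > 3$, the plan is to split $\Phi'(c) \leq 0$ into two inequalities, each vanishing at $c = 3$:
\begin{eqnarray*}
\mathrm{(I)} \quad & \dfrac{sB' q'(sB)/q(s)}{3 + q(sB)/q(s)} - \dfrac{QB'}{B+c} & \leq \, 0, \\
\mathrm{(II)} \quad & -\dfrac{Q}{B+c} + \dfrac{Q\bigl[(1+c)^{Q-1} + (c/3-1)^{Q-1}\bigr]}{(1+c)^Q + 3(c/3-1)^Q} & \leq \, 0,
\end{eqnarray*}
whose sum is the full $\Phi'(c) \leq 0$. For (I), dividing through by $B' < 0$ flips the inequality and reduces the claim to $sq'(sB)(B+c) \geq Q[3q(s) + q(sB)]$; this is equality at $c = 3$ (by the definition of $Q$) and becomes strict for $c > 3$ because $B + c$ grows linearly while $q(sB), q'(sB)$ stay bounded (thanks to $B \in [1 - 1/(2Q),\, 1]$, a compact interval bounded away from $0$). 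For (II), cross-multiplying and simplifying using $1 - B = (c-3)/(2Qc)$ and $3(c/3-1)^Q = (c-3)(c/3-1)^{Q-1}$ reduces the claim to
$$(1-B)(1+c)^{Q-1} \geq (3+B)(c/3-1)^{Q-1}.$$
Both sides vanish at $c = 3$.

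Introducing $\Delta(c) := \ln[(1-B)(1+c)^{Q-1}] - \ln[(3+B)(c/3-1)^{Q-1}]$, one checks $\Delta(c) \to +\infty$ as $c \to 3^+$ (since $1 - B$ vanishes only to first order while $(c/3 - 1)^{Q-1}$ vanishes to order $Q - 1 \geq 4$ for $Q \geq s \geq 5$), and $\Delta(c) \to \ln(3^{Q-1}/(8Q))$ as $c \to \infty$, which is positive for $Q \geq 5$. The main obstacle will be certifying $\Delta(c) > 0$ throughout the intermediate range $c \in (3, \infty)$: although the endpoint asymptotics are favorable, one must rule out that $(c/3-1)^{Q-1}$ overtakes $(1-B)(1+c)^{Q-1}$ somewhere in between. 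I would handle this by analyzing $\Delta'(c)$ and either establishing monotonicity or locating and bounding a unique interior extremum, in the spirit of the calculation around (\ref{flagekl32}) that handled the analogous concavity question for Lemma \ref{flagekl}. Combining the strict versions of (I) and (II) gives $\Phi'(c) < 0$ on $(3, \infty)$, so $\mbox{OPT}(1, B(1/c), c, s)$ is strictly decreasing on $[3, \infty)$.
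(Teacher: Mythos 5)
Your setup is sound and, modulo the change of variables, it is the paper's own decomposition: the paper first substitutes $c\mapsto 1/c$, turning the claim into monotone increase of $\mathrm{OPT}(1,B(c),c,s)$ on the compact interval $0<c<1/3$, and then splits the logarithmic derivative exactly as you do into (\ref{gro_teil1}) and (\ref{gro_teil2}). Your computation that the derivative vanishes at $c=3$, your grouping into (I) and (II), and your reduction of (II) to $(1-B)(1+c)^{Q-1}\ge (3+B)(c/3-1)^{Q-1}$ are all correct. However, two steps are genuinely missing. For (I), your justification of strictness for $c>3$ ("$B+c$ grows linearly while $q(sB),q'(sB)$ stay bounded") only controls large $c$; it says nothing about $c$ just above $3$, where you have exact equality $4sq'(s)=4Qq(s)$. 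You need a monotonicity or convexity-plus-endpoint-derivative argument for $sq'(sB)(B+c)-Q\left[3q(s)+q(sB)\right]$; the paper supplies exactly this on the transformed interval, showing $K(1+Bc-c)$ is convex and comparing derivatives at the endpoint, which reduces to the explicit inequality (\ref{last}).

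For (II) the situation is more serious: you yourself flag the entire intermediate range $c\in(3,\infty)$ as "the main obstacle" and only sketch a plan (analyze $\Delta'$, locate an interior extremum) without carrying it out, so the core of the lemma is left unproven. The endpoint limits you compute do not exclude a dip in between. The paper avoids this issue by the reciprocal substitution: on $(0,1/3]$ the inequality (\ref{gro_teil2}) becomes "the linear function $B$ lies below the concave increasing function $\mathrm{PM}(c,Q-1)/\mathrm{PP}(c,Q-1)$" (see (\ref{gro_teil2_umform})), with equality at $c=1/3$, so a single check at $c=0$ finishes it. Your reduced inequality can in fact be closed elementarily without any extremum analysis: using $1-B=\frac{c-3}{2Qc}$, dividing by $c-3>0$, and bounding $(c-3)^{Q-2}\le (1+c)^{Q-2}$ and $3+B\le 4$, it suffices to have $3^{Q-1}(1+c)\ge 8Qc$, hence $3^{Q-1}\ge 8Q$, which holds for $Q\ge 5$ (and $Q>s\ge 5$ here). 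Some such step must be supplied before your argument constitutes a proof.
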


\noindent
{\it Proof of Theorem \ref{UNOPT} for $\lambda\ge 1-1/d.$}
We have $\lambda/(1-\lambda) \ge d-1.$ For each $P \ge d-1$ we have 
$c$ such that $P \, =\, \frac{c}{B(1/c)}.$ As OPT$(1, 1, 3, s)=4$ the Theorem follows. \qed

\noindent
{\it Proof of Lemma \ref{lagr}.}
We  rewrite OPT$(1, B(1/c), c, s)$ first.  
We multiply OPT$_2$  with $c^Q$ and OPT$_3$ with $1/c^Q$ and get (using  $c\ge 3$  to get rid of the absolute value)    
OPT$(1, B(1/c, s), c, s) \,= \,$
\begin{eqnarray}
=\,  \left( 3+ \frac{q(B(1/c, s) s)}{q(s)}\right)\left(\frac{1}{\frac{B(1/c, s)}{c}+1}\right)^Q \left(
 \left(\frac{1}{c}+1\right)^Q +
3\left( \frac{1}{3}- \frac{1}{c}\right)^Q  \right) .\nonumber \\
\mbox{ We substitute } c \mbox{ for } 1/c \mbox{ in the preceding equation.  The claim   follows from  } 
\nonumber \\
\left( 3+ \frac{q(B(c))s)}{q(s)}\right)\left(\frac{1}{B(c)c+1}\right)^Q \left(
 \left(c+1\right)^Q +
3\left(\frac{1}{3}-c\right)^Q  \right) \nonumber \\
 \mbox{increases in } 0\, <\, c \,<\, 1/3. \label{lagro}
\end{eqnarray}

We use the following notation in the sequel:
\begin{eqnarray}
    \OPT_1(b,s)  =  3 + \frac{q(sb)}{q(s)}\, , \, \OPT_2(b,c,s) \, = \,\left( \frac{1}{bc+1} \right)^Q ,          \nonumber           \\
    \OPT_3(c,s)  =  (c+1)^Q+\,3\left(\frac{1}{3}-c\right)^Q , \, c \le \frac{1}{3} \nonumber \\
\OPT(b, c, s)= \OPT_1(b, s)\OPT_2(b, c, s)\OPT_3(c, s). \nonumber \\
\mbox{ For } b=1, c= \frac{1}{3} \mbox{ we have }  \OPT(b, c, s)=4.  \mbox{ We abbreviate  } \nonumber \\
\mbox{PM}(x, y)= (x+1)^y- 3\left(\frac{1}{3}-x\right)^y , \, x \le \frac{1}{3} \nonumber \\
\mbox{PP}(x, y)= (x+1)^y+ 3\left(\frac{1}{3}-x\right)^y, \, x \le \frac{1}{3} \nonumber \\
B\,= B(c, s), \quad  B' = \diff{c}B = \frac{3}{2Q}, \quad q'(x) = \exp(x)-1  \nonumber  \\
q(x)=\exp(x)-x-1, \, \, \, \, \, \diff{c}\ln(\OPT(B,c,s) > = <0  \nonumber \\
\Longleftrightarrow \frac{\frac{B'sq'(sB)}{q(s)}}{3+\frac{q(sB)}{q(s)}}\,-Q\frac{B'c+B}{1+Bc}
 + Q\frac{\mbox{PM}(c, Q-1)}{\mbox{PP}(c, Q)} > 0  \nonumber \\
\Longleftrightarrow \frac{\frac{B'q'(sB)}{q'(s)}}{3+\frac{q(sB)}{q(s)}} - \frac{B'c+B}{1+Bc}
   + \frac{\mbox{PM}(c, Q-1)}{\mbox{PP}(c, Q)}  >=< 0. \mbox{( Division by } Q.) 
\label{gro_ableit}
\end{eqnarray}

\begin{figure}[th]
\centering
\includegraphics[width=0.49\textwidth]{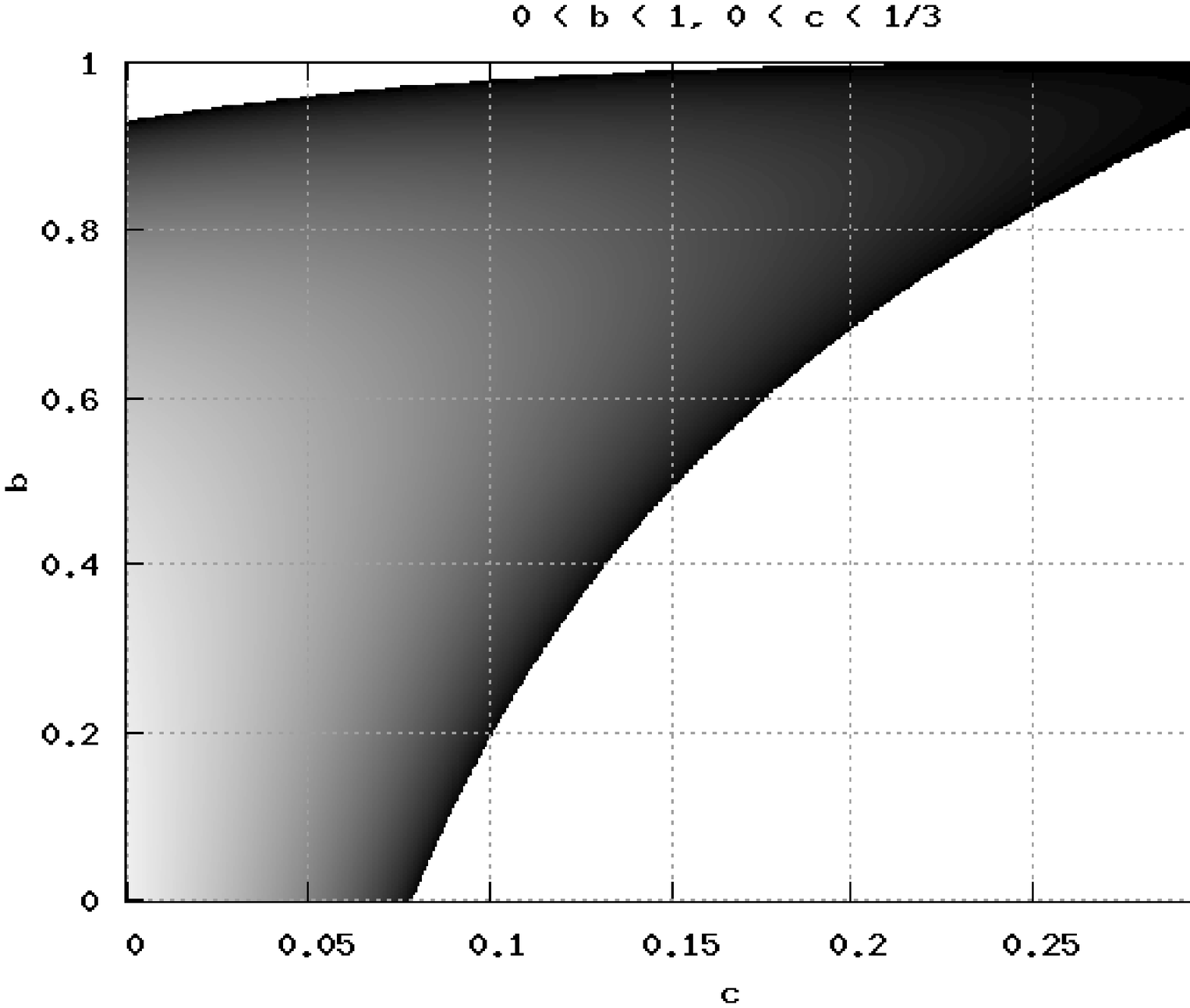} \hfill
\includegraphics[width=0.49\textwidth]{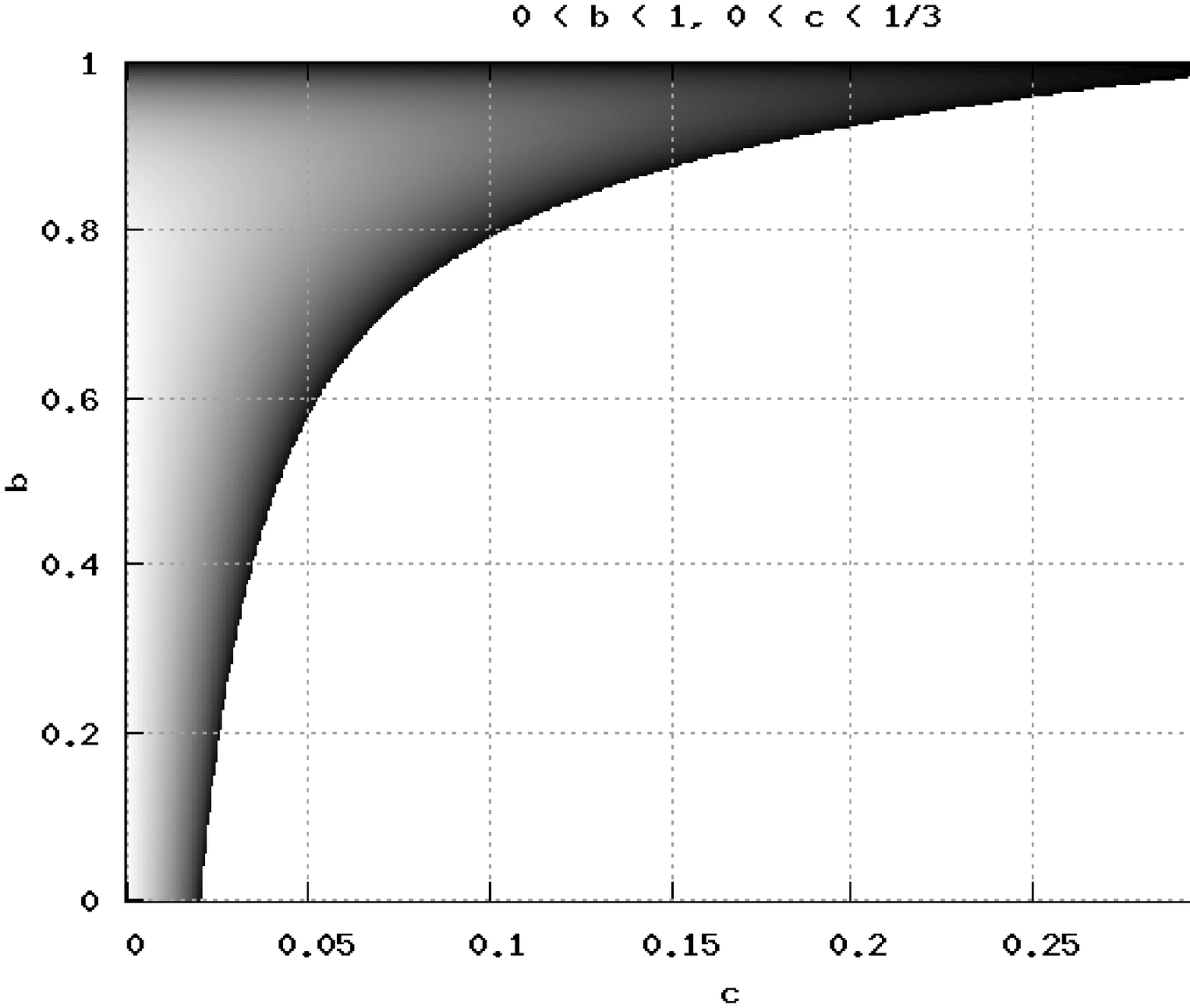}
\caption{OPT$(b,c,s)$ over the rectangle $0\le b \le 1, 0 \le c \le 1/3$ for $s=3$ and $s=14$.}
\end{figure}

For $c=\frac{1}{3}$  we have $\OPT(B, 1/3, s)=4$, and the derivative is $0$.
We split (\ref{gro_ableit}) into two additive terms. 
The following two inequalities
directly imply (\ref{lagro}.) 

\begin{eqnarray}
\frac{\frac{B'q'(sB)}{q'(s)}}{3+\frac{q(sB)}{q(s)}} - \frac{B'c}{1+Bc} & > & 0 \label{gro_teil1}\\
 \frac{\mbox{PM}(c, Q-1)}{\mbox{PP}(c, Q)}- \frac{B}{1+Bc} & > & 0 \label{gro_teil2}
\end{eqnarray}

\noindent
{\it Proof of (\ref{gro_teil1}) for $\, s>2 \, .$}
Let $K = \frac{q'(sB)}{q'(s)}$ and $L= \frac{q(sB)}{q(s)}.$ By (\ref{BAKL}) we have $L\le K,$  
and as $B' > 0$  it is sufficient to show
\begin{equation}
\frac{K}{3+K} > \frac{c}{1+Bc} \Leftrightarrow K(1+Bc-c) > 3c. \nonumber 
\end{equation}

For $c=\frac{1}{3}$ both sides of the preceding inequality are $=1$. 
It is easy to observe that   $K(1+Bc-c)$ is convex in $c$  for $c>1/(3\cdot 2).$ 
(Cf. proof of Lemma \ref{lem4}) and
$3c$ is a linear function. If at $c=\frac{1}{3}$ the derivative of $3c$ is greater than the
derivate of $K(1+Bc-c)$ the second intersection of both sides (if any) lies at some point
$c>\frac{1}{3}$ and the claim holds for $1/(3\cdot 2<c<\frac{1}{3}$. For $c<1/(3\cdot 2) $ we argue as in the
proofs of the  Lemmas mentioned above. 
Therefore it is sufficient to show that at $c=\frac{1}{3}$
\begin{equation}
\diff{c}K(1+Bc-c) < \diff{c}3c. \nonumber 
\end{equation}
We have 
\[ K' = \frac{B's\exp(sB)}{\exp(s)-1}. \] 
and at $c=1/3$
\begin{eqnarray}
K'(\underbrace{1+Bc-c}_{=1}) + \underbrace{K}_{=1}(\underbrace{B'c+B-1}_{=1/2Q}) & < & 3   \nonumber \\
\Leftrightarrow \frac{3(\exp(s)-s-1)\exp(s)}{2(\exp(s)-1)^2}+\frac{\exp(s)-s-1}{2s(\exp(s-1)} & < & 3 \label{last}
\end{eqnarray}
We omit the proof that inequality (\ref{last}) holds for $s\ge 2.$ \\

\noindent
{\it Proof of (\ref{gro_teil2}) for $s\ge 5$.}
As in (\ref{flagekl31}) inequality (\ref{gro_teil2}) is equivalent to 
\begin{equation}
B < \frac{\mbox{PM}(c,Q-1)}{\mbox{PP}(c,Q-1}     \label{gro_teil2_umform}
\end{equation}

The left hand side is a linear function in c and the right 
hand side a strictly increasing, concave
function in $c.$
For $c=\frac{1}{3}$ both sides of (\ref{gro_teil2_umform}) are $1$. So we must show that (\ref{gro_teil2_umform}) holds for $c=0$.
Setting $c=0$ leads to
\begin{equation}
1-\frac{1}{2Q} < \frac{1-3\left( \frac{1}{3} \right)^{Q-1}}{1+3\left( \frac{1}{3} \right)^{Q-1}}
    = \frac{1- \left(\frac{1}{3}\right)^{Q-2}}{1+ \left(\frac{1}{3}\right)^{Q-2}}. \nonumber 
\end{equation}
For $Q=5$ we get $\frac{9}{10} < \frac{13}{14}$. We omit the argument that 
the last inequality holds for all $Q\ge 5$ and therefore as $Q>s$ for all $s \ge5.$

\end{document}